\definecolor{ipegray}{gray}{0.827}
\definecolor{ipedarkgray}{gray}{0.663}
\tikzset{vertex/.style={fill=black,inner xsep=0cm}}
\tikzset{dot/.style={fill=white, line width = 0.3pt, circle,minimum size=2mm, inner sep=2pt, draw}}
\tikzset{line/.style={line width = 0.3pt}}
\tikzset{baseline/.style={line width = 0.5pt}}
\tikzset{task/.style={fill=white!80!black,line width=1.5pt}}
\tikzset{frame/.style={line width=2pt}}
\tikzset{box/.style={line width=0.3pt, fill=white!66.3!black}}
\tikzset{lightbox/.style={line width=0.3pt, fill=white!90!black}}
\tikzset{blockbox/.style={line width=0.3pt, pattern= north east lines}}
\tikzset{profile/.style={in=180, out=0}}
\tikzset{bgtask/.style={line width=1.25pt, fill,pattern=north east lines}}
\tikzset{taskbox/.style={fill=white!95!black,line width=1.5pt}}
\theoremstyle{plain}
\newtheorem{thm}{\protect\theoremname}
\newtheorem{pro}[thm]{\protect\propositionname}
\theoremstyle{plain}
\newtheorem{lem}[thm]{\protect\lemmaname}
\theoremstyle{definition}
\newtheorem{defn}[thm]{\protect\definitionname}
\newenvironment{proof}[1][\protect\proofname]{\par
    \normalfont\topsep6\p@\@plus6\p@\relax
    \trivlist
\itemindent\parindent
\item[\hskip\labelsep\scshape #1]\ignorespaces
}{%
\endtrivlist\@endpefalse
}
\providecommand{\proofname}{Proof}
\newcommand{\red}[1]{\textcolor{red}{#1\xspace}}
\newcommand{\blue}[1]{\textcolor{blue}{#1\xspace}}
\providecommand{\definitionname}{Definition}
\providecommand{\lemmaname}{Lemma}
\providecommand{\theoremname}{Theorem}
\providecommand{\observationname}{Observation}
\providecommand{\propositionname}{Proposition}
\begin{document}
\global\long\def\fpoly{63/32+\epsilon}%
\global\long\def\fpolydec{1.969}%
\global\long\def\frapoly{3/2+\epsilon}%
\global\long\def\fqpoly{1.997+\epsilon}%

\global\long\def\E{\mathcal{E}}%
\global\long\def\C{\mathcal{C}}%

\global\long\def\N{\mathbb{N}}%
\global\long\def\Z{\mathbb{Z}}%

\global\long\def\R{\mathcal{R}}%

\global\long\def\T{\mathcal{T}}%

\global\long\def\F{\mathcal{F}}%
\global\long\def\L{\mathcal{L}}%

\global\long\def\S{\mathcal{S}}%
\global\long\def\SB{\mathcal{SB}}%
\global\long\def\H{\mathcal{H}}%

\global\long\def\B{\mathcal{B}}%

\global\long\def\OPT{\mathrm{OPT}}%
\global\long\def\ALG{\mathrm{ALG}}%

\global\long\def\opt{\mathrm{opt}}%

\global\long\def\optb{\OPT_{BOX}}%

\global\long\def\bo{\mathrm{box}}%
\global\long\def\up{\uparrow}%
\global\long\def\sw{\mbox{sw}}%
\global\long\def\down{\downarrow}%
\global\long\def\bottom{\mathrm{bottom}}%

\global\long\def\sw{\mathrm{sw}}%

\global\long\def\tp{\mathrm{top}}%

\global\long\def\stair{\mathrm{stair}}%

\global\long\def\cross{\mathrm{cross}}%

\global\long\def\nocross{\mathrm{no-cross}}%

\global\long\def\rest{\mathrm{rest}}%

\renewcommand{\epsilon}{\varepsilon}%
\global\long\def\midd{\mathrm{mid}}%
\global\long\def\LPSB{\text{LP}_\SB}

\global\long\def\lam{\mathrm{lam}}%
\global\long\def\pile{\mathrm{pile}}%

\def\DEBUG{true}
\ifdefined\DEBUG
\newcommand{\tm}[1]{\red{#1}}
\newcommand{\aw}[1]{\blue{#1}}

\else

\newcommand{\tm}[1]{#1}
\newcommand{\aw}[1]{#1}
\renewcommand{\todo}[1]{}
\fi

\title{Breaking the Barrier of 2 for the Storage Allocation Problem}
\author[1]{Tobias M\"omke\thanks{\texttt{moemke@cs.uni-saarland.de}}}
\author[2]{Andreas Wiese\thanks{\texttt{awiese@dii.uchile.cl}. Partially supported by FONDECYT Regular grant 1170223.}}
\affil[1]{Saarland University, Saarland Informatics Campus, Germany}
\affil[2]{Department of Industrial Engineering, Universidad de Chile, Chile}

\maketitle

\thispagestyle{empty}
\begin{abstract}
    Packing problems are an important class of optimization problems.
    The probably most well-known problem if this type is knapsack and many generalizations
    of it have been studied in the literature like Two-dimensional Geometric
    Knapsack (2DKP) and Unsplittable Flow on a Path (UFP). For the latter
    two problems, recently the first polynomial time approximation algorithms
    with better approximation ratios than 2 were presented [G\'alvez et
    al., FOCS 2017][Grandoni et al., STOC 2018].   
    In this paper we
    break the barrier of 2 for the Storage Allocation Problem (SAP) which
    is a natural intermediate problem between 2DKP and UFP. We are given
    a path with capacitated edges and a set of tasks where each task has
    a start vertex, an end vertex, a size, and a profit. We seek to select
    the most profitable set of tasks that we can draw as non-overlapping
    rectangles underneath the capacity profile of the edges where the
    height of each rectangle equals the size of the corresponding task.
    This problem is motivated by settings of allocating resources like
    memory, bandwidth, etc. where each request needs a contiguous portion
    of the resource.

    The best known polynomial time approximation algorithm
    for SAP has an approximation ratio of $2+\epsilon$ [M\"omke and Wiese, ICALP 2015] and no better
    quasi-polynomial time algorithm is known. 
    We present a polynomial
    time $(\fpoly)<\fpolydec$-approximation algorithm for the case of uniform edge capacities
    and a quasi-polynomial time $(\fqpoly)$-approximation algorithm for
    non-uniform quasi-polynomially bounded edge capacities. Key to our
    results are building blocks consisting of \emph{stair-blocks, jammed
        tasks, }and \emph{boxes} that we use to construct profitable solutions
    and which allow us to compute solutions of these types efficiently.
    Finally, using our techniques we show that under slight resource augmentation
    we can obtain even approximation ratios of $\frapoly$ in polynomial time and
    $1+\epsilon$ in quasi-polynomial time, both for arbitrary edge capacities.
\end{abstract}

\newpage
\setcounter{page}{1}

\section{Introduction}

Packing problems are a fundamental class of problems in combinatorial
optimization. The most basic packing problem is knapsack where we
are given a knapsack of a certain capacity, a set of items with different sizes
and profits, and we are looking for a subset of items of maximum profit that
fit into the knapsack. Many generalizations of it have
been studied. For example, in the Two-dimensional Geometric Knapsack
problem~(2DKP) the items are axis-parallel rectangles and we seek
to find the most profitable subset of them that fit non-overlappingly
into a given rectangular knapsack. Another generalization is the Unsplittable
Flow on a Path problem (UFP) where we are given a path with capacities
on its edges and each item can be interpreted as a commodity of flow that
needs to send a given amount of flow from its start vertex to its
end vertex in case that we select it. If the path consists of a single
edge then UFP is identical to knapsack.

In this paper, we study the Storage Allocation Problem (SAP) which
is an intermediate problem between 2DKP and UFP: we are given a path
$(V,E)$ where each edge $e\in E$ has a capacity $u_{e}\in\mathbb{N}$,
and a set of tasks~$T$ where each task $i\in T$ is specified by
a size $d_{i}\in\N$, a profit $w_{i}\in\N$, a start vertex $s_{i}\in V$,
and an end vertex $t_{i}\in V$. Let $P(i)$ denote the path between
$s_{i}$ and $t_{i}$ for each $i\in T$. The goal is to select a
subset of tasks $T'\subseteq T$ and define a height level $h(i)\ge0$
for each task $i\in T'$ such that the resulting rectangle $[s_{i},t_{i})\times[h(i),h(i)+d_{i})$
lies within the profile of the edge capacities, and we require that
the rectangles of the tasks in $T'$ are pairwise non-overlapping. Formally, for each task $i\in T'$ we require
that $h(i)+d_{i}\le u_{e}$ for each edge $e\in P(i)$ and additionally
for any two tasks $i,i'\in T'$ we require that if $P(i)\cap P(i')\ne\emptyset$,
then $[h(i),h(i)+d_{i})\cap[h(i'),h(i')+d_{i'})=\emptyset$. 
Note that since we can choose $h(i)$ we can define
the vertical position of the rectangle of each task $i$ but not its
horizontal position.
Again, if $E$ has only one edge then the problem is identical to
knapsack. 

SAP is motivated by settings in which tasks need a contiguous
portion of an available resource, e.g., a consecutive portion of the
computer memory or a frequency bandwidth.  Note that any feasible SAP-solution
$T'$ satisfies that $\sum_{i\in T':e\in P(i)}d_{i}\le u_{e}$ on
each edge $e$.
This is exactly the condition when a solution to UFP is feasible
(UFP and SAP have the same type of input). In SAP we require
additionally that we can represent the tasks in $T'$ as non-overlapping
rectangles. Also, if all edges have the same capacity then SAP can
be seen as a variant of 2DKP in which 
the horizontal coordinate of each item $i$ is fixed and we can choose
only the vertical coordinate. 

For quite some time, the best known polynomial time approximation ratios
for 2DKP and UFP had been $2+\epsilon$~\cite{jansen2007maximizing,amzingUFP2014}.
Recently, the barrier of 2 was broken for both problems and algorithms with strictly better approximation
ratios have been presented~\cite{galvez2017approximating,UFP-improve-2}.
For SAP, the best known approximation
ratio is still $2+\epsilon$~\cite{MW15_SAP}, even if we allow quasi-polynomial
running time (while in contrast for the other two problems better
quasi-polynomial time algorithms had been known earlier~\cite{BCES2006,BGK+15_new,adamaszek2015knapsack}).

\subsection{Our contribution}

In this paper, we break the barrier of 2 for SAP and present a polynomial
time $(\fpoly)< \fpolydec$-ap\-prox\-i\-ma\-tion algorithm for uniform edge capacities
and a quasi-polynomial time $(\fqpoly)$-approximation algorithm for
non-uniform edge capacities in a quasi-polynomial range. 
Key to our results is to identify
suitable building blocks to construct profitable near-optimal solutions
such that we can design algorithms that find profitable solutions
of this type. We call a task \emph{small }if its demand is small compared
to the capacity of the edges on its path and \emph{large }otherwise.
One can show that each edge can be used by only relatively few large tasks which
allows for a dynamic program that finds the best solution with large
tasks only. However, there can be many small tasks using an edge and
hence this approach fails for small tasks. Therefore, we consider
\emph{boxable solutions} in which the tasks are assigned
into rectangular boxes such that each edge is used by only $(\log n)^{O(1)}$
of these boxes, see Fig.~\ref{fig:stair-block}. Using the latter property, we present a quasi-polynomial
time algorithm that essentially finds the optimal boxable solution.
Furthermore, for many types of instances we prove that there exist boxable solutions with high profit.

\begin{figure}
    \begin{centering}
        \includegraphics[height=3cm]{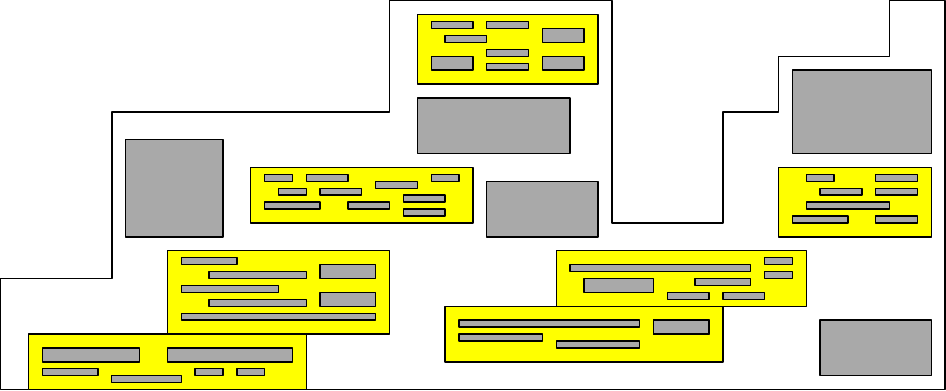} ~~~~~~~\includegraphics[height=3cm]{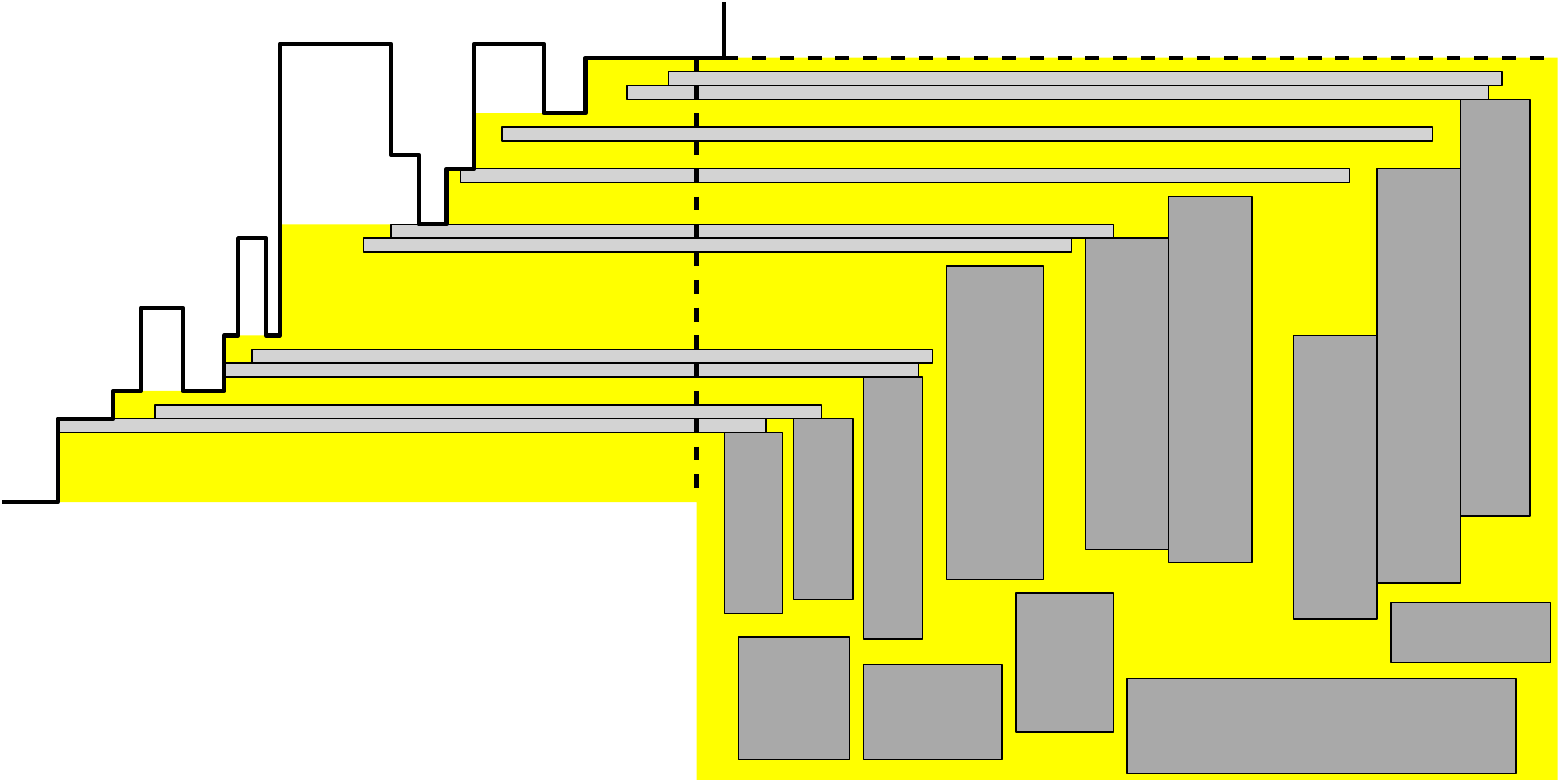} 
        \par\end{centering}
    \caption{\label{fig:stair-block}Left: a boxable solution in which the (gray)
        tasks are assigned into the (yellow) boxes. Right: A stair-block into
        which small tasks (light gray) and large tasks (dark gray) are assigned.
        All small tasks need to cross the vertical dashed line and all large
        tasks need to be placed on the right of it underneath the dashed
        horizontal line. Therefore, the yellow area denotes the space that
        is effectively usable for the tasks that we assign into the stair-block. }
\end{figure}

However, there are instances for which it is not clear how to construct
boxable solutions that yield a better approximation ratio than 2.
This is where our second building block comes into play which are
\emph{stair-blocks}. Intuitively, a stair-block is an area into which
we assign small and large tasks
such that the small tasks are jammed between the large
tasks and the capacity profile of the edges, see Fig.~\ref{fig:stair-block}. 
We prove the crucial insight that if we fail to construct a good boxable solution then this is
because a lot of profit of the optimum is due to small tasks in stair-blocks. 
Therefore, we devise a second algorithm that computes  solutions for such instances, yielding an approximation ratio better than 2. 
The algorithm is based on a configuration-LP with a variable for each
possible set of large tasks in each stair-block and additionally variables
for placing the small tasks in the remaining space. We separate it
via the dual LP in which the separation problem turns out to be a variation
of SAP with large tasks only. Then we sample the set of large tasks according to
the probabilities implied by the LP solution. As a result, there are some small tasks that
we cannot pick anymore since they would overlap the sampled large tasks. For some small tasks 
this will happen with very large probability so most likely we will lose their profit. 
This is problematic if they represent a large fraction of the profit of the LP.
Therefore, we introduce additional constraints that imply that if the latter happens then 
we can use another rounding routine for 
small tasks only that yields enough profit.

\begin{thm}
    \label{thm:qpoly}There is a quasi-polynomial time $(\fqpoly)$-approximation
    algorithm for SAP if the edge capacities are quasi-polynomially bounded
    integers. 
\end{thm}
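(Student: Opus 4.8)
The plan is to run a few sub-algorithms and output the most profitable of the solutions they produce, showing that for every instance at least one of them achieves the claimed ratio. I would start by fixing an optimal solution $\OPT$ and splitting its tasks into \emph{large} tasks $\OPT_L$ (demand exceeding an $\epsilon$-fraction of the smallest capacity on their path) and \emph{small} tasks $\OPT_S$. Since any edge carries only $O(1/\epsilon)$ large tasks, a dynamic program over the path vertices recovers a solution of profit $\ge (1-\epsilon)\,w(\OPT_L)$, so I may assume that nearly all of $w(\OPT)$ comes from small tasks and concentrate on those.

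The first sub-algorithm would compute a near-optimal \emph{boxable} solution. Since in a boxable solution each edge is crossed by only $(\log n)^{O(1)}$ boxes, I expect a quasi-polynomial dynamic program (sweeping the path, guessing the boxes intersecting the current edge and their contents up to a $(1+\epsilon)$-rounding of profits) to return a boxable solution within a $1-\epsilon$ factor of the best boxable solution. The core structural lemma I would then prove is a dichotomy: after discarding an $\epsilon$-fraction of the profit and rearranging, \emph{either} some boxable solution collects at least a $1/(\fqpoly)$-fraction of $w(\OPT)$, \emph{or} a $(1-o(1))$-fraction of $w(\OPT)$ is carried by small tasks sitting inside \emph{stair-blocks}, i.e.\ small tasks jammed between large tasks and the capacity profile. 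I would establish this by a charging argument: take the small tasks of $\OPT$ that resist being packed into few boxes, bundle the large tasks obstructing them together with the jammed small tasks into stair-blocks, and argue that if their total profit fell below the threshold then the remaining small tasks plus a boxed version of the large tasks would already constitute a boxable solution beating the ratio.

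The second sub-algorithm would handle the stair-block case via a configuration LP: one variable per feasible set of large tasks in each stair-block, fractional variables describing placements of small tasks in the leftover space, consistency and capacity constraints, plus extra constraints that cap, for each small task, the probability that it is blocked. The LP has quasi-polynomially many variables, but I would solve it to within $1+\epsilon$ by separating the dual, whose separation problem reduces to a profit-weighted SAP with large tasks only, solvable approximately by the dynamic program above. From a near-optimal LP solution I would independently sample, per stair-block, a large-task configuration with the LP-prescribed probabilities; this fixes the free region for the small tasks, and I would then run a second, independent rounding of the small-task variables, keeping a constant fraction of the LP value of those small tasks that survive with decent probability. The additional constraints ensure that whenever too much small-task profit would instead be destroyed by the sampling, a stand-alone rounding for small tasks alone already recovers enough. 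Tuning the trade-off between the sub-algorithms, together with the thresholds defining ``large''/``small'', so that every case yields profit $\ge w(\OPT)/(\fqpoly)$ finishes the argument; since capacities are quasi-polynomially bounded, all enumerations (stair-blocks, configurations, box positions, LP size) have size $2^{(\log n)^{O(1)}}$, keeping the running time quasi-polynomial.

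I expect the main obstacle to be the quantitative calibration of the dichotomy: it has to be strong enough that the boxable branch already breaks $2$, yet push essentially all remaining profit into stair-blocks, because any profit that escapes both categories is lost outright. The second delicate point is the rounding analysis in the second sub-algorithm — balancing the loss of small tasks blocked by sampled large tasks against the gain of the fallback small-task rounding — which is precisely what the extra LP constraints are designed to make work and what pins the final constant at $\fqpoly$ rather than something smaller.
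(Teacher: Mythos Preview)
Your high-level architecture matches the paper's: a boxable-solution algorithm, a stair-block configuration LP with dual separation and randomized rounding (including the extra constraints to handle small tasks that are almost surely blocked), and a structural dichotomy in between. Two concrete points, however, are wrong or too vague to go through.

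First, the reduction ``so I may assume that nearly all of $w(\OPT)$ comes from small tasks'' is a misreading of the hard case. Recovering $w(\OPT_L)$ exactly only gives you a $2$-approximation when $w(\OPT_L)\approx w(\OPT_S)\approx \opt/2$, and this balanced situation is precisely the regime the paper has to work in; you cannot discard the large tasks. Relatedly, your bound ``any edge carries only $O(1/\epsilon)$ large tasks'' is false for non-uniform capacities: a task that is large relative to its own bottleneck can be tiny relative to a high-capacity edge it also crosses, and the correct bound is $(\log u_e)/\delta^2$ (Lemma~\ref{lem:few-large-tasks}), which is polylogarithmic under the capacity assumption. This is still enough for a quasi-polynomial DP, but the constant-per-edge claim is not true.

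Second, the dichotomy is not a single charging argument. The paper needs substantially more structure to make it quantitative: it splits $\OPT_L$ into tasks with $h(i)<d_i$ and $h(i)\ge d_i$ (only the latter combine easily with $\OPT_S$ into a boxable solution), slices the plane into $(\log n)^{O(1)}$ horizontal corridors at heights $(1+\delta)^k$, and within each corridor classifies the free space into \emph{top}, \emph{bottom}, \emph{sandwich}, and \emph{stair} regions depending on which sides touch large tasks versus the capacity profile. Each region type yields a different candidate solution (several boxable, one stair), and the sandwich case in particular requires randomly deleting half of $\OPT_{L,\down}$ per corridor to turn it into a boxable solution. The factor $1.997$ then comes from a linear program balancing the profits of about a dozen such pieces against six candidate solutions. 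A generic ``bundle the obstructing large tasks with the jammed small tasks'' argument does not produce this decomposition, and without it you will not be able to pin down any constant below $2$.
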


For our polynomial time algorithm for uniform edge capacities the
above building blocks are not sufficient since for example in our
boxable solutions above an edge can be used by more than constantly
many boxes and hence we cannot enumerate all possibilities for those in polynomial time.
Therefore, we identify types of boxable solutions that are more structured
and that allow us to find profitable solutions of these types in polynomial
time. The first such type are boxable solutions in which each edge
is used by only constantly many boxes. A major difficulty is here
that for a small task there are possibly several boxes that we can assign 
it to and if we assign it to the wrong box then it occupies space 
that we should have used for other tasks instead
(in our quasi-polynomial
time algorithm above we use a method to address this which inherently
needs quasi-polynomial time). We solve this issue by guessing the
boxes in a suitable hierarchical order which is \emph{not }the canonical linear
order given by their respective leftmost edges and we assign the tasks
into the boxes in this order. With a double-counting argument we show that with this strategy we obtain a solution
that has essentially at least the profit of the large tasks in the optimal solution of this type
plus half of the profit of the small tasks.
Our second special type
of boxable solutions is the case in which the paths of the boxes form
a laminar family and the sizes of the boxes are geometrically increasing,
see Fig.~\ref{fig:elements-polytime}. Even though there can be $\Omega(\log n)$ such boxes
using an edge, we devise an algorithm with polynomial running time
for this kind of solutions. It is a dynamic program inspired by~\cite{UFP-improve-2}
that guesses the boxes in the order given by the laminar family and
assigns the tasks into them. Finally, there can be small tasks such that
in the optimal solution the large tasks take away so much space
that with respect to the remaining space those small tasks actually become relatively
large. We say that a solution consisting of such small and large tasks forms a \emph{jammed solution }which is our third
type of building block, see Fig.~\ref{fig:elements-polytime}. We extend an algorithm in~\cite{MW15_SAP} for instances with large
tasks only to compute essentially the most profitable jammed solution.
Our key technical lemma shows that for any instance there exists a profitable solution
that uses only the building blocks above and we provide a polynomial time algorithm
that finds such a solution.

\begin{figure}
    \begin{centering}
        \includegraphics[height=3cm]{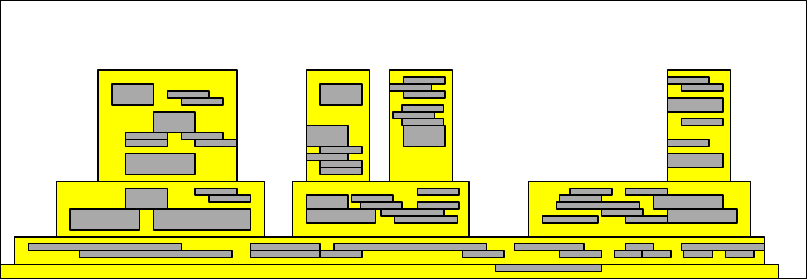}~~~\includegraphics[height=3cm]{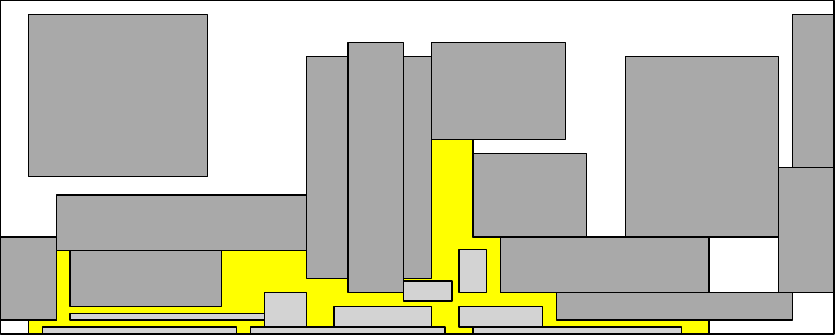}
        \par\end{centering}
    \caption{\label{fig:elements-polytime}Left: a laminar
        boxable solution that consists of boxes of geometrically increasing
        sizes whose paths form a laminar family.
        Right: a jammed solution in which a set
        of small tasks (light gray) that are placed underneath some large
        tasks (dark gray). The small tasks are relatively large compared to
        the (yellow) space underneath the large tasks. }
\end{figure}
\begin{thm}
    \label{thm:poly}There is a polynomial time $(\fpoly)<\fpolydec$-approximation
    algorithm for SAP for uniform edge capacities. 
\end{thm}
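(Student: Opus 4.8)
The plan is to reduce the problem to four independent subproblems — one for \emph{large} tasks via a dynamic program, one for each of the three building blocks (constant-box boxable solutions, laminar geometric boxable solutions, and jammed solutions) — run an algorithm for each, return the most profitable output, and then invoke a structural decomposition lemma guaranteeing that at least one of these four outputs captures more than half of $\OPT$. Concretely, fix a small constant and call a task $i$ \emph{large} if $d_i \ge \epsilon^{O(1)} u$, where $u$ is the (uniform) edge capacity, and \emph{small} otherwise, so that on every edge at most $O(\epsilon^{-O(1)})$ large tasks can be simultaneously selected. Throughout, height levels are rounded to a polynomially-sized grid, which is where the additive $+\epsilon$ in the ratio $(\fpoly)$ comes from.

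Given the crossing bound above, I would first compute in polynomial time, by a sweep-line dynamic program over the $O(n)$ vertices that remembers the $O(1)$ large tasks currently crossing the sweep position together with their (discretized) height levels, the maximum-profit feasible set of large tasks only. Extending the DP of~\cite{MW15_SAP}, each state additionally records the staircase-shaped free profile left \emph{underneath} the currently selected large tasks; packing into that residual profile those small tasks that are relatively large with respect to it yields, essentially, the most profitable \emph{jammed solution}, again in polynomial time (Fig.~\ref{fig:elements-polytime}, right).

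For the small tasks I would then design the two boxable-solution algorithms. For boxable solutions in which every edge is met by only $c = O(1)$ boxes, enumerating all $n^{O(c)}$ box geometries and then assigning tasks in the canonical left-to-right order fails because a small task may fit several boxes and a wrong choice wastes space; instead I guess the boxes in a non-canonical \emph{hierarchical} order, roughly by nesting of their spans rather than by leftmost edge, and greedily assign the small tasks into this sequence. A double-counting argument then shows that the resulting profit is at least the profit of the large tasks of the optimal such solution plus $\tfrac12$ of the profit of its small tasks. For laminar boxable solutions with geometrically increasing box sizes (Fig.~\ref{fig:elements-polytime}, left), I would run a dynamic program in the spirit of~\cite{UFP-improve-2}: process the boxes in the order of the laminar forest, and for each box guess its span, its height interval, and a near-optimal packing of the small tasks assigned to it; the geometric growth of the box sizes keeps the number of relevant configurations, and hence the state space, polynomial even though $\Omega(\log n)$ boxes may cross an edge.

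The heart of the argument — and the step I expect to be the main obstacle — is the structural lemma: for an arbitrary instance there is a feasible solution of profit at least $(\tfrac{32}{63} - \epsilon)\OPT$ that decomposes into (a) a large-task DP solution, (b) a constant-box boxable solution, (c) a laminar geometric boxable solution, and (d) a jammed solution, so that returning the best of the four computed solutions achieves ratio $(\fpoly)$. To prove it I would start from a fixed optimal solution, split its profit into the large tasks and the small tasks, and then argue that the small-task portion can be partitioned: small tasks that already fit into $O(1)$ boxes per edge go to type~(b), those that organize into geometrically nested chains of boxes go to type~(c), and those wedged under large tasks go into jammed blocks (d); whatever resists all of these decompositions must be small tasks living in stair-blocks, and the crux is to show — paralleling the stair-block dichotomy behind Theorem~\ref{thm:qpoly}, but now in the polynomial-time regime — that the profit trapped in stair-blocks cannot simultaneously be a large fraction of $\OPT$ and evade every box/jammed grouping. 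The delicate part is the quantitative bookkeeping: the shifting and rounding losses incurred when turning an arbitrary packing into aligned boxes, the factor-$\tfrac12$ loss on small tasks in case~(b), and the adversarial split of profit between large and small tasks must be charged against one another so that the worst case over all splits is exactly $2 - \tfrac1{32} = \tfrac{63}{32}$; I expect a short auxiliary LP (or an explicit case analysis over the profit-split parameter) will be needed to certify that this is indeed the balanced bound.
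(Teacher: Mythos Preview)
Your overall architecture---compute several candidate solutions from different building blocks and invoke a structural lemma showing one of them captures at least $\tfrac{32}{63}\cdot\opt$, certified by a small LP---matches the paper. Three of your four building blocks (constant-boxable, laminar-boxable, jammed) are exactly the ones the paper uses, and your descriptions of the corresponding algorithms are accurate.

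However, there is a genuine gap in the fourth block and in the structural argument. Your item~(a), a pure large-task DP, is redundant: $\OPT_L$ is already a special case of a $O_\delta(1)$-boxable solution, so it is subsumed by~(b). What you are missing is the paper's \emph{pile boxable solutions}: a stack of $O_\delta(1)$ equal-height boxes whose paths are nested, for which the paper gives a $(1+\epsilon)$-approximation (not merely large $+$ half the small). This block is essential because in the uniform-capacity setting the hard case is not stair-blocks at all---with a flat profile no stair-points can occur---but \emph{sandwich points}, i.e., small tasks squeezed horizontally between a large task in $\OPT_{L,\tp}$ on one side and a large task in $\OPT_{L,\bottom}$ on the other. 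The paper handles these by deleting the smaller of $\OPT_{L,\tp},\OPT_{L,\bottom}$, which turns the sandwich region into a monotone staircase that can be covered by a pile of $1/\delta$ equal-height boxes; crucially, this recovers the full profit of the sandwich small tasks plus half of $\OPT_L$, and that constraint (``$x_{LT}+x_{SMSW}\le z$'' and its symmetric twin) is what drives the LP optimum to $32/63$. With only (b), (c), (d) and a large-task DP you cannot write those two constraints, and the LP optimum degrades.

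So: drop the separate large-task DP, add pile-boxable solutions as your fourth block, and replace the stair-block discussion in your structural sketch with the top/mid/bottom classification of both small and large tasks and the sandwich-point argument. The LP you anticipate is then exactly the $14$-constraint LP in Section~4 of the paper, with optimum $32/63$.
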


We would like to note that we did not attempt to optimize our approximation ratios 
up to the third decimal place but
instead we focus on a clean exposition of our results (which are already quite complicated).
Finally, we study the setting of $(1+\eta)$-resource augmentation 
$1+\eta$ for an arbitrarily small constant $\eta > 0$ while the compared optimal
solution cannot do this. In this case we obtain even better approximation
ratios and improve the factor of 2 for arbitrary edge capacities even
with a polynomial time algorithm. Key for these results is to show that using the resource
augmentation we can reduce the general case to the case of a constant range of edge capacities
and then establish that there are essentially optimal boxable
solutions in which each edge is used by a constant number of boxes. Using our algorithmic tools from 
above this implies the following theorem.
\begin{thm}
    \label{thm:RA}In the setting of $(1+\eta)$-resource augmentation
    there exists a polynomial time $(\frapoly)$-approximation algorithm and
    a quasi-polynomial time $(1+\epsilon)$-approximation algorithm for
    SAP with arbitrary edge capacities. 
\end{thm}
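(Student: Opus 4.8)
The plan is to prove Theorem~\ref{thm:RA} in three steps: (i) use the $(1+\eta)$-resource augmentation to reduce to instances in which the edge capacities span only a constant ratio; (ii) show that on such instances there is a near-optimal boxable solution in which every edge is used by only $O(1)$ boxes; and (iii) invoke the algorithmic machinery developed for Theorems~\ref{thm:qpoly} and~\ref{thm:poly} to (essentially) compute the best solution of this form, combining it for the polynomial-time bound with an algorithm for small tasks only. I expect Step~(i) to be the main obstacle.

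\emph{Step (i).} I would classify each task $i$ according to the scale $\lfloor\log_{1+\eta}b_i\rfloor$ of its bottleneck capacity $b_i:=\min_{e\in P(i)}u_e$, group $\Theta_{\eta,\epsilon}(1)$ consecutive scales into super-groups, and -- after applying a random shift -- discard every $\Theta_{\eta,\epsilon}(1)$-th scale, so that consecutive kept super-groups are separated by an empty scale while only an $\epsilon$-fraction of $\OPT$ is lost. Each kept super-group $g$ then spans only a constant ratio $\gamma=\gamma(\eta,\epsilon)$ of capacities. The key point is that the tasks in super-group $g$ of the solution under consideration can be re-packed into a horizontal strip $R_g$ of the capacity profile stacked above all strips $R_{g'}$ with $g'<g$: every such task has demand at most $\gamma$ times the smallest capacity occurring in super-group $g$, so the strip heights form a geometric progression, and once the constants are chosen so that $\gamma\ge(1+\eta)/\eta$ the extra $\eta u_e$ of room on each edge exactly absorbs the overhead of this re-stacking. (The large tasks, of which only $O(1/\delta)$ cross any edge, can be folded in separately or handled by a dynamic program on the original instance.) After rescaling, each super-group becomes an independent sub-instance with edge capacities within a constant ratio, and the union of near-optimal solutions to the sub-instances has profit at least $(1-\epsilon)\OPT$. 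The delicate part here is to verify that a random shift, a vanishing amount of deletion, and exactly the $(1+\eta)$ factor of augmentation together suffice to give every super-group its own strip on \emph{every} edge simultaneously while keeping the per-super-group capacity range constant, and to handle tasks and edges whose capacity lies near a scale boundary.

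\emph{Step (ii).} On an instance whose capacities span a constant ratio, the large tasks of $\OPT$ already have only $O_\epsilon(1)$ of them crossing any edge, and the space they leave for the small tasks can be covered by $O_\epsilon(1)$ boxes per edge using the box-construction lemmas from the proof of Theorem~\ref{thm:poly}: only $O_\epsilon(1)$ box sizes are needed since all capacities agree up to a constant, and the extra $\eta u_e$ of room allows rounding box heights and positions to a coarse grid. Moreover, in the resource-augmentation model \emph{all} of $\OPT$ can be captured this way -- the jammed-solution case of Theorem~\ref{thm:poly} does not arise, because with a $(1+\eta)$ factor of slack no small task is forced to behave like a large one. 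Hence the optimal boxable solution with $O(1)$ boxes per edge has profit $(1-\epsilon)\OPT$; write $L$ and $S$ for its large-task and small-task profit, so that $L+S\ge(1-\epsilon)\OPT$.

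\emph{Step (iii).} For the quasi-polynomial bound, run the algorithm from the proof of Theorem~\ref{thm:qpoly} that essentially computes the optimal boxable solution (it tolerates $(\log n)^{O(1)}$ boxes per edge, far more than we need); by Step~(ii) it returns profit $(1-\epsilon)\OPT$, a $(1+\epsilon)$-approximation. For the polynomial bound, run two algorithms and output the better one: (a) the hierarchical-order dynamic program for boxable solutions with $O(1)$ boxes per edge from the proof of Theorem~\ref{thm:poly}, which by the accompanying double-counting argument returns essentially $L+S/2$; and (b) a near-optimal algorithm for instances with small tasks only, obtainable by standard configuration-LP rounding since every demand is tiny relative to the capacities on its path (cf.~\cite{MW15_SAP}), which returns $(1-\epsilon)S$. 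Since $\max\{L+S/2,\,S\}\ge\tfrac23(L+S)\ge\tfrac23(1-\epsilon)\OPT$, the better of the two is a $(\frapoly)$-approximation.
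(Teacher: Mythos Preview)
Your three-step plan is exactly the paper's proof: Step~(i) is Lemma~\ref{lem:RA-constant-range}, Step~(ii) is Lemma~\ref{lem:RA-boxable-solutions} (the paper realises it concretely via a grid-line alignment, Lemma~\ref{lem:low-boxes}), and Step~(iii) combines Lemma~\ref{lem:compute-boxed} (for the QPTAS) and Lemma~\ref{lem:compute-constant-boxable} together with the small-task $(1+\epsilon)$-algorithm of~\cite{MW15_SAP} (for the polynomial-time bound), using the identical balancing $\max\{L+S/2,\,S\}\ge\tfrac23(L+S)$.

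The one place your write-up would fail as stated is the parametrisation in Step~(i). Discarding a single scale of width $1+\eta$ between super-groups leaves only a factor-$(1+\eta)$ gap, so the stacked strips for groups $g<G$ have total height $u_e\cdot\gamma/((1+\eta)\gamma-1)$, which tends to $u_e/(1+\eta)$ as $\gamma\to\infty$ and is never $\le\eta u_e$; enlarging the within-group ratio $\gamma$ cannot fix this. What matters is the \emph{gap}, not $\gamma$: the paper takes scales of width $1/\eta$, i.e.\ $T^{(\ell)}=\{i:b(i)\in[\eta^{-\ell},\eta^{-\ell-1})\}$, and removes one such scale out of every $1/\epsilon$, which gives $\max_{i\in\T^{(\ell,k)}}b(i)\le\eta\cdot\min_{i\in\T^{(\ell,k+1)}}b(i)$ and hence lower strips summing to $O(\eta)\cdot u_e$, so the re-stacking fits under $(1+O(\eta))$-augmentation. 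Also, your parenthetical about handling large tasks separately in Step~(i) is unnecessary: the bottleneck-based grouping treats all tasks uniformly, and the $O(1/\delta)$-per-edge bound on large tasks is a \emph{consequence} of the constant capacity range produced by Step~(i), used only in Step~(ii).
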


\subsection{Other related work}

Previous to the mentioned $(2+\epsilon)$-approximation algorithm for
SAP~\cite{MW15_SAP}, Bar-Noy et al.~\cite{bar2001unified} found
a $7$-approximation algorithm if all edges have the same capacities
which was improved by Bar-Yehuda et al. to a $(2+\epsilon)$-approximation~\cite{bar2009resource}.
Bar-Yehuda et al.~\cite{bar2017constant} presented the first constant
factor approximation algorithm for SAP for arbitrary capacities, having
an approximation ratio of $9+\epsilon$. 
A related problem is the
dynamic storage allocation problem (DSA) where in the input we are
given a set of tasks like in SAP and we all need to pack all of them
as non-overlapping rectangles, minimizing the maximum height of a
packed item. The best known approximation ratio for DSA is a $(2+\epsilon)$-approximation
which in particular uses a $(1+\epsilon)$-approximation if all tasks
are sufficiently small~\cite{buchsbaum2004opt}. This improves earlier
results~\cite{kierstead1988linearity,kierstead1991polynomial,gergov1996approximation,gergov1999algorithms}.

For 2DKP for squares there is an EPTAS~\cite{HeydrichWiese17} which
improves earlier PTASs~\cite{JansenSolis-Oba2008,jansen2004maximizing}.
For rectangles, there was a $(2+\epsilon)$-approximation known~\cite{jansen2007maximizing,jansen2004maximizing}
which was improved to a $(17/9+\epsilon)$-approximation was found~\cite{galvez2017approximating}.
Also, there is a PTAS if the profit of each item is proportional
to its area~\cite{bansal2009structural}. Also, there is a QPTAS
for quasi-polynomially bounded input data~\cite{adamaszek2015knapsack}.
For UFP there is a long line of work on the case of uniform edge capacities
\cite{PUW2000,bar2001unified,CCKR11}, the no-bottleneck-assumption
~\cite{CCGK2007,CMS07}, and the general case \cite{BCES2006,SODA-unsplit-flow,CEKApprox2009,bonsma2014constant,amzingUFP2014}
which culminated in a QPTAS~\cite{BCES2006,BGK+15_new}, PTASs for
several special cases~\cite{grandoni2017augment,BGK+15_new}, a $(2+\epsilon)$-
approximation~\cite{amzingUFP2014}, which was improved to a $(5/3+\epsilon)$-approximation~\cite{UFP-improve-2}. 

\section{Overview}

In this section we present an overview of our methodology for our algorithms. Most proofs of
the lemmas in this section can be found in Section~\ref{sec:Compute-stair-solution}.
Let $\epsilon>0$ and assume that $1/\epsilon\in\N$. First, we classify
tasks into large and small tasks. For each task $i\in T$ let $b(i):=\min_{e\in P(i)}u_{e}$
denote the \emph{bottleneck capacity} of $i$. For constants $\mu,\delta>0$
we define that a task $i$ is \emph{large} if $d_{i}>\delta\cdot b(i)$
and \emph{small} if $d_{i}\le\mu\cdot b(i)$. The constants $\delta,\mu$
are chosen to be the values $\delta_{i^{*}}$ and $\mu_{i^{*}}$ due
to the following lemma, which in particular ensures that the tasks
$i$ with $\mu\cdot b(i)<d_{i}\le\delta\cdot b(i)$ contribute only
a marginal amount to the optimal solution $\OPT$ whose weight we
denote by $\opt$.
\begin{lem}
    \label{lem:gap} We can compute a set $(\mu_{1},\delta_{1}),\dotsc,(\mu_{1/\epsilon},\delta_{1/\epsilon})$
    such that for each tuple $(\mu_{k},\delta_{k})$ we have 
    $\epsilon^{O\bigl((1/\epsilon)^{1/\epsilon}\bigr)}\le\mu_{k}\le\epsilon^{10}\delta_{k}^{1/\epsilon}$,
    $\delta_{i}\le\epsilon$ and for one tuple $(\mu_{k^{*}},\delta_{k^{*}})$
    it holds that $w(\OPT\cap\{i\in T\mid\mu_{k^{*}}\cdot b(i)<d_{i}\le\delta_{k^{*}}\cdot b(i)\})\le\epsilon\cdot\opt$. 
\end{lem}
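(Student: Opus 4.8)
The plan is to use a chained pigeonhole argument over $1/\epsilon$ nested "scales". The key observation is that if we partition the task sizes (relative to their bottleneck capacities) into geometrically decreasing bands, then only a $1/\epsilon$-fraction of these bands can each carry more than an $\epsilon$-fraction of $\opt$, since the bands are disjoint and the contributions sum to at most $\opt$. So among any $1/\epsilon$ candidate "gap" intervals that are pairwise disjoint (in the sense that the set of tasks captured by one is disjoint from the set captured by another), at least one must have weight at most $\epsilon \cdot \opt$. I would therefore first fix a parameter sequence, and then argue that the defined intervals $(\mu_k b(i), \delta_k b(i)]$ are disjoint enough for this pigeonhole to apply.

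**Constructing the parameter sequence.**

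First I would set $\delta_1 := \epsilon$ and then define the sequence recursively so that each $\delta_{k+1}$ is polynomially smaller than $\mu_k$, and each $\mu_k$ is chosen as roughly $\epsilon^{10}\delta_k^{1/\epsilon}$ — concretely something like $\delta_{k+1} := \mu_k^{c}$ for a suitable constant, with $\mu_k := \epsilon^{10}\delta_k^{1/\epsilon}$. Unrolling this recursion $1/\epsilon$ times yields the claimed bound $\epsilon^{O((1/\epsilon)^{1/\epsilon})} \le \mu_k$, since iterating a map of the form $x \mapsto x^{\Theta(1/\epsilon)}$ for $1/\epsilon$ steps produces a tower-free but $(1/\epsilon)^{1/\epsilon}$-exponent blowup; the upper bound $\mu_k \le \epsilon^{10}\delta_k^{1/\epsilon}$ and $\delta_k \le \epsilon$ are then immediate from the construction. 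The crucial structural point to verify is that the "middle" intervals are nested so that the task set $\{i : \mu_{k+1} b(i) < d_i \le \delta_{k+1} b(i)\}$ is contained in $\{i : d_i \le \mu_k b(i)\}$ — this follows because $\delta_{k+1} < \mu_k$ by construction — which makes the $1/\epsilon$ task sets $M_k := \{i \in T : \mu_k b(i) < d_i \le \delta_k b(i)\}$ pairwise disjoint.

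**Applying pigeonhole and concluding.**

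Given disjointness of the $M_k$'s, we have $\sum_{k=1}^{1/\epsilon} w(\OPT \cap M_k) \le w(\OPT) = \opt$. Hence there exists an index $k^*$ with $w(\OPT \cap M_{k^*}) \le \epsilon \cdot \opt$, which is exactly the required property. Since the construction of the $(\mu_k,\delta_k)$ is explicit and involves only $1/\epsilon$ arithmetic operations on numbers of polynomial bit-length (the exponents are bounded, and $\epsilon$ is a fixed constant), this can be computed in polynomial — indeed constant — time. I would present the sequence construction, then the disjointness claim as a one-line observation, then the pigeonhole.

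**Main obstacle.**

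The only real subtlety is bookkeeping the exponents so that simultaneously (i) consecutive intervals nest to give disjointness, (ii) $\mu_k \le \epsilon^{10}\delta_k^{1/\epsilon}$ holds with the precise constant $10$ and exponent $1/\epsilon$ (this specific slack is presumably what later lemmas consume, e.g.\ to guarantee "small" tasks are small enough relative to a box carved out between "large" tasks), and (iii) the lower bound on $\mu_k$ after $1/\epsilon$ iterations is no worse than $\epsilon^{O((1/\epsilon)^{1/\epsilon})}$. All three are matched simultaneously by choosing the recursion step to be $\delta_{k+1} := \mu_k$ (or $\mu_k$ times a small polynomial factor in $\epsilon$) and $\mu_k := \epsilon^{10}\delta_k^{1/\epsilon}$; the iteration then multiplies the exponent of $\epsilon$ by a factor of roughly $1/\epsilon$ each round, giving $(1/\epsilon)^{1/\epsilon}$ after $1/\epsilon$ rounds, which is the stated bound. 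So the "hard part" is purely the exponent arithmetic, not any combinatorial insight — the pigeonhole itself is elementary.
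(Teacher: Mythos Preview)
Your proposal is correct and follows essentially the same approach as the paper: define a nested sequence of thresholds with $\delta_{k+1}=\mu_k$ (the paper uses the closed-form $\delta_k=\epsilon^{10k/\epsilon^k}$, $\mu_k=\epsilon^{10(k+1)/\epsilon^{k+1}}$, which unrolls the same recursion), observe that the resulting $1/\epsilon$ ``medium'' bands are disjoint, and apply averaging/pigeonhole over $\OPT$. Your exposition of the exponent arithmetic is in fact more detailed than the paper's, which states the formulas and the disjointness claim in two lines.
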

\begin{proof}
    For each $k\in[1/\epsilon]$ we define
    $\delta_k = \epsilon^{10k/\epsilon^k}$,
    $\mu_{k}=\epsilon^{(10(k+1)/\epsilon^{k+1}}$, 
    and $\OPT_{k}:=\{i\in\OPT\mid\mu_{k}b(i)<d_{i}\le\delta_{k}b(i)\}$.
    Each task in $\OPT$ is contained in at most one set $\OPT_{k}$.
    The average weight of these sets is therefore at most $\epsilon\opt$.
    Let $k^{*}:=\arg\min_{k\in[1/\epsilon]}w(\OPT_{k})$. Then $w(\OPT_{k^{*}})\le\epsilon\opt$,
    since its weight cannot be larger than the average. We conclude that
    the values $\mu_{k^{*}},\delta_{k^{*}}$ satisfy the claimed conditions.
\end{proof}

Let $T_{L}$ and $T_{S}$ denote the sets of large and small input
tasks, respectively. For each edge $e$ let $T_e \subseteq T$ denote the set of
tasks $i\in T$ for which $e\in P(i)$.
We will show later that for many instances there
are profitable solutions that are \emph{boxable} which intuitively
means that the tasks can be assigned into rectangular boxes such that
each edge is used by only few boxes. A \emph{box }$B$ is defined
by a start vertex $s_{B}$, an end vertex $t_{B}$, and a size $d_{B}$.
We define $P(B)$ to be the path of $B$ which is the path between
$s_{B}$ and $t_{B}$. A set of tasks $T'\subseteq T$ \emph{fits
}into $B$ if 
\begin{itemize}\parsep0pt \itemsep0pt
    \item for each $i\in T'$ we have that $P(i)\subseteq P(B)$, and 
    \item there is a value $h(i)\in[0,d_{B})$ for each $i\in T'$ such that
        $(T',h)$ is feasible if each edge $e\in P(B)$ has capacity $d_{B}$, and 
    \item $|T'|=1$ or we have $d_{i}\le\epsilon^{8}\cdot d_{B}$ for each
        $i\in T'$. 
\end{itemize}
We say that a set of boxes $\B$ and a height level assignment $h:\B\rightarrow\N$
forms a feasible solution $(\B,h)$ if the boxes in $\B$ interpreted
as tasks form a feasible solution with $h$ (see Fig.~\ref{fig:stair-block}), i.e., if the set $(T(\B),h')$ is feasible where $T(\B)$ contains
a task $i(B)$ for each $B\in\B$ such that $P(i(B))=P(B)$, $d_{i}=d_{i(B)}$
and $h'(i(B))=h(B)$. 
\begin{defn}
    A solution $(T',h')$ is a \emph{$\beta$-boxable solution }if there
    exists a set of boxes $\B=\{B_{1},\dotsc,B_{|\B|}\}$ and a partition
    $T'=T'_{1}\dot{\cup}\dotsc\dot{\cup}T'_{|\B|}$ such that 
    \begin{itemize}\parsep0pt \itemsep0pt
        \item for each $j\in[|\B|]$, $T'_{j}$ fits into the box $B_{j}$ and if
            $T'_{j}\cap T_{L}\ne\emptyset$ then $|T'_{j}|=1$,
        \item each edge $e\in E$ is used by the paths of at most $\beta$ boxes
            in $\B$, 
        \item there is a height level $h'(B)$ for each box $B\in\B$ such that
            $(\B,h')$ is feasible. 
    \end{itemize}
\end{defn}

In the following lemma
we present an algorithm that essentially computes the optimal $\beta$-boxable solution. We will use it
later with $\beta=(\log n)^{O(1)}$. 
Assume in the sequel that we are given a SAP-instance where $u_{e}\le n^{(\log n)^{c}}$
for some $c\in\N$ for each $e\in E$.

\begin{lem}
    \label{lem:compute-boxed}Let $\beta\in\N$ and let $(T_{\bo},h_{\bo})$
    be a $\beta$-boxable solution. There is an algorithm with running
    time $n^{(\beta\log n/aw{\epsilon})^{O(c)}}$ that computes a $\beta$-boxable
    solution $(T',h')$ with $w(T')\ge w(T_{\bo})/(1+\epsilon)$. 
\end{lem}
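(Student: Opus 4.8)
The overall plan has two parts. First I would prove a structural claim: at the cost of only a $(1+\epsilon)$-factor in profit, the given $\beta$-boxable solution $(T_{\bo},h_{\bo})$ can be replaced by a \emph{canonical} one in which every box is described by parameters drawn from quasi-polynomially many possibilities. Second, I would give a dynamic program over the path that enumerates canonical $\beta$-boxable solutions and returns the best one; since it does at least as well as the canonical version of $(T_{\bo},h_{\bo})$, its profit is at least $w(T_{\bo})/(1+\epsilon)$.

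\textbf{Canonical solutions.} The start and end vertices of a box already lie among the $n$ vertices of $V$, so only box sizes and box height levels need to be discretized. For sizes: every box $B_j$ with $|T'_j|\ge 2$ (whose contained tasks all have size $\le\epsilon^8 d_{B_j}$), as well as every single-task box whose task is small, can have its size changed to a power of $1+\epsilon$ — there are only $O((\log n)^{c+1}/\epsilon)$ of these between $1$ and $n^{(\log n)^c}$. For a multi-task box that is shrunk by a $(1+\epsilon)$-factor I would repack its (tiny) tasks into the smaller box by clearing a uniformly random horizontal strip of height $2\epsilon d_{B_j}$ and sliding everything above it down: since each contained task has height $\le\epsilon^8 d_{B_j}$, in expectation only an $O(\epsilon)$-fraction of the box's profit intersects the strip, so only an $O(\epsilon)$-fraction is lost. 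Single-task boxes whose task is large keep their exact size (there are at most $|T_L|\le n$ such values). For the height levels I would snap all box heights to a geometric grid of $O((\log n)^{c+1}/\epsilon)$ values; the point that needs care is that this has to be done \emph{simultaneously} for all boxes so as to keep them pairwise non-overlapping and below the capacity profile while losing only an $O(\epsilon)$-fraction of the profit, for which I would use a shifting argument over the height axis (choosing a cheap grid offset as in Lemma~\ref{lem:gap}), in the spirit of the structural results underpinning the QPTAS for Unsplittable Flow on a Path. The outcome is a canonical $\beta$-boxable solution whose box sizes lie in a set $\mathcal{D}$ and whose heights lie in a set $\mathcal{H}$, with $|\mathcal{D}|\le n+O((\log n)^{c+1}/\epsilon)$ and $|\mathcal{H}|=O((\log n)^{c+1}/\epsilon)$, of profit at least $w(T_{\bo})/(1+\epsilon)$ after rescaling $\epsilon$.

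\textbf{The dynamic program.} Label the vertices $v_0,\dots,v_{n-1}$ along the path and process the edges from left to right. The state at an edge $e$ records the at most $\beta$ boxes crossing $e$, each by its start vertex, its size in $\mathcal{D}$, and its height level in $\mathcal{H}$, together with, for each such box, the amount of its capacity on $e$ that has already been committed to input tasks (discretized to $O((\log n)^{c+1}/\epsilon)$ geometric levels); this is quasi-polynomially many states. A transition to the next edge $e'$ removes the boxes that end at the shared vertex, introduces the boxes that start there while checking from the recorded sizes and heights that the whole configuration is pairwise non-overlapping and lies below $u_{e'}$, decides which input tasks starting at that vertex to select and into which \emph{compatible} box (one whose path contains the task's path and, for a multi-task box, one of size at least $d_i/\epsilon^8$) to place them while updating the committed-capacity entries, and, whenever a box closes, adds the profit of the tasks placed in it. Each task is decided exactly once, at its start vertex, so no task goes into two boxes; and since inside every box all tasks are tiny relative to its capacity, the task-packing within a box can be handled within a $(1+\epsilon)$-factor by known techniques for SAP with small tasks, which are compatible with the left-to-right sweep via the usual near-fractional arguments for tiny tasks (maintaining and decrementing the committed-capacity counters). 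Over the $\le n$ edges, the states, and the transitions (whose extra cost depends only on $\log n$ and $1/\epsilon$), one verifies the running time is $n^{(\beta\log n/\epsilon)^{O(c)}}$, and the returned solution has profit at least $w(T_{\bo})/(1+\epsilon)$.

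\textbf{Main obstacle.} The delicate step is the discretization of the box height levels: one must re-place \emph{all} boxes at heights from a quasi-polynomially bounded set while simultaneously preserving pairwise non-overlap among boxes and feasibility against the capacity profile, and losing only an $\epsilon$-fraction of the profit — unlike box sizes, where the tininess of the contained tasks lets one repack inside each box independently. A second obstacle is the task-assignment part of the DP: an input task may be compatible with several guessed boxes, so one must argue that committing tasks during the left-to-right sweep, rather than with global foresight, still recovers a $(1-O(\epsilon))$-fraction of the profit that the optimal canonical solution places into its boxes. These two points are where essentially all the work lies and are the reason the algorithm is only quasi-polynomial when $\beta=(\log n)^{O(1)}$.
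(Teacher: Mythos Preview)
Your approach differs from the paper's: you propose a left-to-right sweep DP, while the paper uses a divide-and-conquer recursion at the middle edge in the style of the UFP QPTAS of Bansal et al.\ \cite{BCES2006}. Unfortunately, the sweep DP as you describe it has a real gap.

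The problem is the state you keep per box: ``the amount of its capacity on $e$ that has already been committed to input tasks''. This single number is not enough to transition correctly. When you move from edge $e$ to the next edge $e'$, the committed load in a box $B$ changes by the total size of selected tasks in $B$ that end at the shared vertex, and the DP has no way to compute that number: the state does not record \emph{which} tasks were selected, only their aggregate size on $e$. Two different sets of tasks can have the same total size on $e$ but very different sizes on $e'$. Your phrase ``maintaining and decrementing the committed-capacity counters'' hides exactly this: to decrement correctly you must know the identities (or at least the end-vertices) of all previously selected tasks in each open box, which blows up the state. The hand-wave to ``known techniques for SAP with small tasks \dots\ compatible with the left-to-right sweep via the usual near-fractional arguments'' does not fix this; those techniques work for a \emph{single} uniform-capacity box, not for simultaneously maintaining feasibility in $\beta$ boxes while a task may be routed to any one of them.

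The paper avoids this by recursing at the middle edge $e_0$: it guesses the $\le\beta$ boxes crossing $e_0$ and, for each multi-task box $B$, an \emph{approximate capacity profile} $g'_{B,e_0}$ of the tasks in $B$ that cross $e_0$, a step function with only $(\log n/\epsilon)^{O(c)}$ steps (Lemma~\ref{lem:rounded-profiles}). A separate DP (Lemma~\ref{lem:fill-profiles}) then optimally assigns the tasks crossing $e_0$ into these profiles; its state records the load only at the $(\beta\log n/\epsilon)^{O(c)}$ many step edges, which is precisely the information your sweep was missing. After that the left and right subinstances are independent, and $O(\log n)$ levels of recursion give the claimed running time.

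A smaller point: your height discretization (``snap all box heights to a geometric grid'') is not needed and, as you note yourself, is the hardest part of your outline to justify. Since the edge capacities are at most $n^{(\log n)^c}$, there are only quasi-polynomially many integer heights; the paper simply enumerates those directly for the $\le\beta$ boxes crossing the current middle edge, so no simultaneous snapping argument is required.
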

\begin{figure}[tb]
    \begin{centering}
        \includegraphics[scale=0.5]{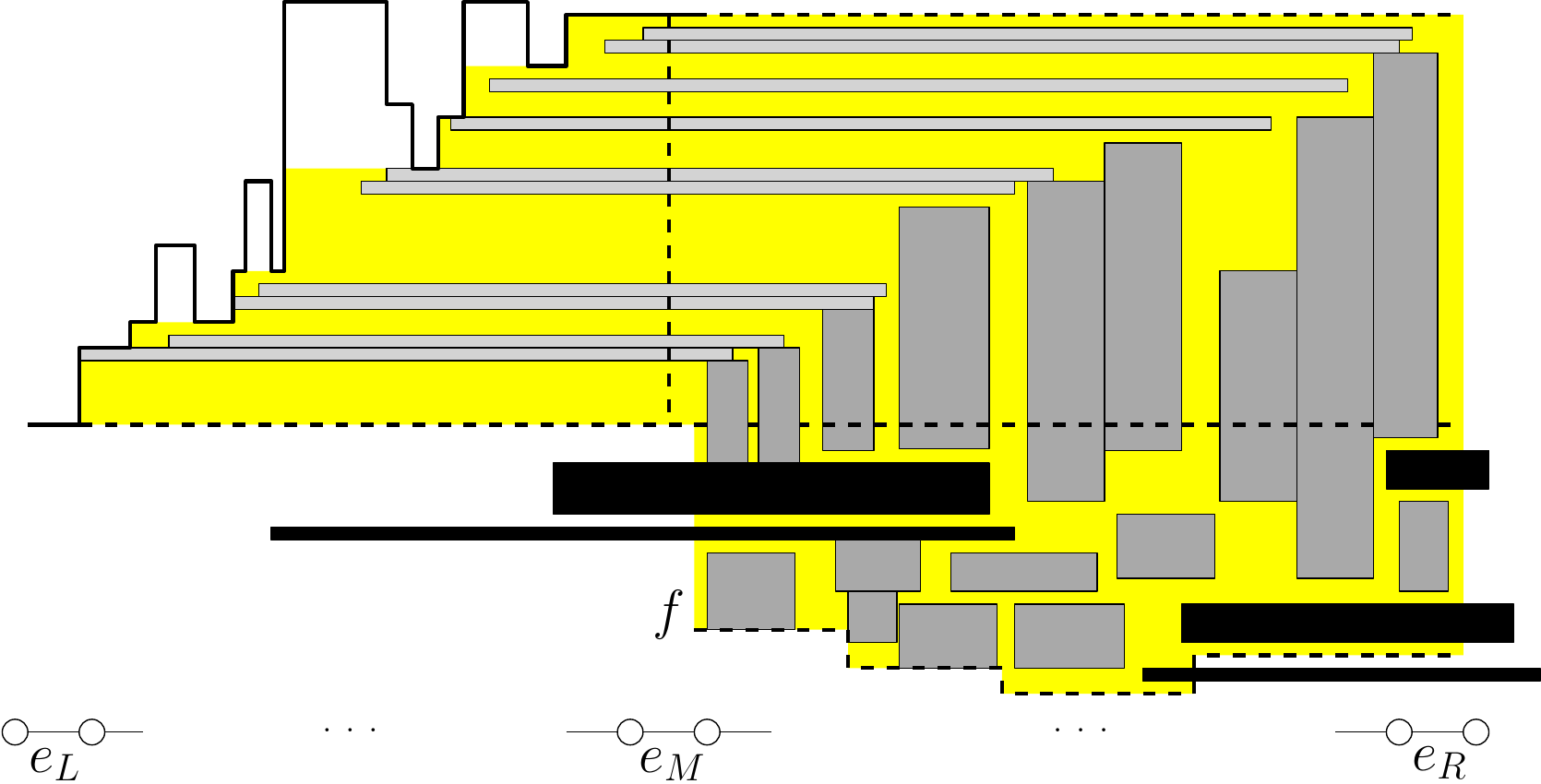} 
        \par\end{centering}
    \caption{\label{fig:stair-block-complicated}
        A stair-block $\protect\SB=(e_{L},e_{M},e_{R},f,T'_{L},h')$.
        The black tasks are the tasks in $T'_{L}$. The yellow area denotes
        the area that is effectively usable for the tasks that we assign to
        $\protect\SB$. The light and dark gray tasks are small and large
        tasks, respectively, that together fit into $\protect\SB$.}
\end{figure}

Our second type
of solutions are composed by \emph{stair-blocks} (see Fig.~\ref{fig:stair-block} and Fig.~\ref{fig:stair-block-complicated}).
Intuitively, a stair-block is an area underneath the capacity profile
defined by a function $f\colon E\rightarrow\N_{0}$ and three edges $e_{L},e_{M},e_{R}$, where $e_{M}$ lies between
$e_{L}$ and $e_{R}$. The corresponding 
area contains all points above each edge between $e_{L}$ and $e_{M}$
whose $y$-coordinate is at least $u_{e_{L}}$ and all points above
each edge $e$ between $e_{M}$ and $e_{R}$ whose $y$-coordinate
is in $[f_{e},u_{e_{M}})$. Additionally, there are some tasks $T'_{L}\subseteq T_{L}\cap(T_{e_{M}}\cup T_{e_{R}})$
and a function $h':T'_{L}\rightarrow\N_{0}$ that assigns height levels
to them where the intuition is that those tasks are given in advance
and fixed. We require that each of them intersects the mentioned area below $u_{e_L}$, i.e., 
for each $i\in T'_L$ we have that $h'(i)+d_i \le u_{e_L}$ and there is an edge $e \in P(i) \cap P_{e_{M},e_{R}}\setminus \{e_{M}\}$
such that $h'(i)+d_i > f_e$   where $P_{e_{M},e_{R}}$ is the path that starts with $e_{M}$ and ends with
$e_{R}$. Also, we require that $f(e)=u_{e_L}$ for $e=e_M$ and each edge $e$ on the left of $e_M$.

Given a stair-block, we will assign tasks $T''$ into the mentioned area such that
we require that all small tasks in $T''$ use $e_{M}$ and for each
large tasks $i\in T''$ we require that $P(i)\subseteq P_{e_{M},e_{R}}$. Due to the former condition, not all points with $x$-coordinate
between $e_{L}$ and $e_{M}$ are actually usable for tasks assigned
to $\SB$ and the usable ones form a staircase shape (see Fig.~\ref{fig:stair-block}).
Formally, we say that a solution $(T'',h'')$ \emph{fits into a stair-block
}$\SB=(e_{L},e_{M},e_{R},f,T'_{L},h')$ if
$P(i)\subseteq P_{e_{M},e_{R}}$ and $f_{e}\le h''(i)\le u_{e_{M}}-d_{i}$
for each $i\in T''\cap T_{L}$ and each $e\in P(i)$, $h''(i')\ge u_{e_{L}}$ and $i' \in T_{e_{M}}$
for each $i'\in T''\cap T_{S}$, and additionally $(T'_{L}\cup T'',h'\cup h'')$
forms a feasible solution. Also, we require that $h''(i)<d_{i}$ for
each $i\in T''\cap T_{L}$ which is a technical condition that we
need later in order to be able to compute a profitable stair solution
efficiently. 
A set of tasks $T''$ fits into a stair-block $\SB$, if there is a function $h''$ such that the solution $(T'',h'')$ fits into $\SB$.
We will need later that the function $f$ is simple and
to this end we say that a stair-block $\SB=(e_{L},e_{M},e_{R},f,T'_{L},h')$
is a \emph{$\gamma$-stair-block }if $f$ is a a step-function with
at most $\gamma$ steps. Note that it can happen that $e_{R}$ lies
on the left of $e_{L}$ and then we define $P_{e_{M},e_{R}}$ to be
the path that starts with $e_{R}$ and ends with $e_{M}$ (one may imagine that Fig.~\ref{fig:stair-block} is mirrored).

We seek solutions that consist of stair-blocks and large tasks that are compatible with each other. To this end,
for a stair-block $\SB=(e_{L},e_{M},e_{R},f,T'_{L},h')$ we define
$P(\SB)$ to be the path starting with the edge on the right of $e_{L}$
and ending with $e_{R}$. A large task $i$ with height $h(i)$ is
\emph{compatible with $\SB$ }if $i\notin T'_{L}$ and intuitively
$i$ does not intersect the area of the stair-block, i.e., if
$h(i)\ge u_{e_{M}}$ or $h(i)+d_{i}\le f_{e}$ for each $e\in P(i) \cap P(\SB)$. 
We say that a task $i\in T_{L}$ with height $h(i)$ is \emph{part
    of $\SB$ }if $i\in T'_{L}$ and $h(i)=h'(i)$. We say that stair-blocks
$\SB=(e_{L},e_{M},e_{R},f,T'_{L},h')$ and $\overline{\SB}=(\bar{e}_{L},\bar{e}_{M},\bar{e}_{R},\bar{f},\bar{T}'_{L},\bar{h}')$
are \emph{compatible }if for each task $i\in\bar{T}'_{L}\cap T'_{L}$
we have $h'(i)=\bar{h}'(i)$, each task $i\in\bar{T'}_{L}\setminus T'_{L}$
is compatible with $\SB$, each task $i\in T'_{L}\setminus\bar{T}'_{L}$
is compatible with $\overline{\SB}$, and there is no task $i\in T$ that 
fits into both $\SB$ and $\overline{\SB}$ (for suitable heights $h''(i)$ and $\bar{h}''(i)$). Intuitively, a stair-solution
consists of a set of stair-blocks and a set of large tasks $T_{L}^{0}$
that are all compatible with each other.
\begin{defn}
    \label{def:stair-solution} A solution $(T'',h'')$ is a \emph{$\gamma$-stair-solution
    }if there exists a set of \emph{$\gamma$-}stair-blocks $\{\SB_{1},\dotsc,\SB_{k}\}$
    and partitions $T''\cap T_{L}=T_{L}^{0}\dot{\cup}T_{L}^{1}\dot{\cup}\dotsc\dot{\cup}T_{L}^{k}$
    and $T''\cap T_{S}=T_{S}^{1}\dot{\cup}\dotsc\dot{\cup}T_{S}^{k}$ such
    that for each $j\in[k]$ we have that $T_{L}^{j}\cup T_{S}^{j}$ fits
    into $\SB_{j}$, for any $j,j'\in[|\SB|]$ the stair-blocks $SB_{j}$
    and $SB_{j'}$ are compatible, for each stair-block $\SB_{j}$ and
    each task $i\in T_{L}^{0}$ with height $h''(i)$ we have that $i$
    is compatible with $\SB_{j}$ or part of $\SB_{j}$, and each edge
    is contained in the path $P(i)$ of at most $\gamma$ tasks $i\in T_{L}^{0}$
    and in the path $P(\SB_{j})$ of at most $\gamma$ stair-blocks $\SB_{j}$.
\end{defn}

Our main structural lemma is that there exists a boxable solution
or a stair solution whose profit is large enough so that we can get
an approximation ratio better than 2. 

\begin{lem}[Structural lemma]
    \label{lem:structure}There exists a $(\log n/\delta^2)^{O(c+1)}$-boxable
    solution $T_{\bo}$ such that $w(T_{\bo})\ge\opt/(\fqpoly)$ or there
    exists a $(\log n/O(\delta))^{O(c+1)}$-stair-solution $T_{\stair}$
    with $w(T_{S}\cap T_{\stair})\ge\frac{1}{\alpha}w(T_{L}\cap T_{\stair})$ for some value $\alpha\ge 1$
    such that $w(T_{\stair}\cap T_{L})+\frac{1}{8(\alpha+1)}w(T_{\stair}\cap T_{S})\ge\opt/(\fqpoly)$.
\end{lem}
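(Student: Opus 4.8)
The plan is to start from the optimal solution $\OPT$ and, using Lemma~\ref{lem:gap}, discard the tasks $i$ with $\mu\cdot b(i)<d_i\le\delta\cdot b(i)$, losing only an $\epsilon\cdot\opt$ fraction; call the remainder $\OPT'$, so that every surviving task is either small or large and $w(\OPT')\ge(1-\epsilon)\opt$. The key dichotomy will be driven by how the profit of $\OPT'$ splits between its large and its small tasks, and — for the small tasks — whether they can be organised into boxes or whether they are ``jammed'' under large tasks in a way that forces a stair-block structure. First I would classify the large tasks $\OPT'\cap T_L$: since each edge carries at most $1/\delta$ large tasks, a standard grid/shifting argument (as in \cite{MW15_SAP,amzingUFP2014}) shows that the large tasks can themselves be packed into $(\log n/\delta^2)^{O(c+1)}$ boxes of one task each, after losing an $\epsilon$ fraction; so if $w(\OPT'\cap T_L)$ is at least, say, half of $w(\OPT')$, we already get a $(\log n/\delta^2)^{O(c+1)}$-boxable solution of profit $\ge\opt/(\fqpoly)$ and are done with the first alternative.

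The substantive case is when most of the profit is in the small tasks $T_S^{*}:=\OPT'\cap T_S$. Here the plan is to look, edge by edge, at where the ``free'' capacity sits relative to the large tasks of $\OPT'$. I would build a family of maximal horizontal strips of the capacity profile that are \emph{not} blocked by large tasks of $\OPT'$ — these are exactly the candidate regions for the small tasks — and show that each such region decomposes into (i) rectangular sub-regions that accept boxes, in which case the small tasks inside contribute to a boxable solution, plus (ii) staircase-shaped sub-regions sitting between a large task and the capacity profile of the edges, which are exactly the stair-blocks of Definition~\ref{def:stair-solution}. The decomposition should be obtainable by a recursive ``guillotine-like'' partition of the free space: whenever a region is wide and of roughly uniform height it becomes a box; whenever it is forced into a monotone-height shape by a large task or a capacity step it becomes a stair-block; the $O(c+1)$ levels of recursion (driven by the $n^{(\log n)^c}$ capacity range, handled in $O(\log_{\,\,\,}n)$-depth geometric bucketing of capacities) give the claimed $(\log n/\delta^2)^{O(c+1)}$ resp.\ $(\log n/O(\delta))^{O(c+1)}$ bound on how many boxes/stair-blocks cross a single edge. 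The small tasks landing in category (i) form a boxable solution; if their total profit is $\ge\opt/(\fqpoly)$ we take the first alternative. Otherwise the small tasks in category (ii), together with the large tasks bounding their stair-blocks, form a $\gamma$-stair-solution $T_{\stair}$ with $\gamma=(\log n/O(\delta))^{O(c+1)}$, and — since the boxable part contributed less than $\opt/(\fqpoly)$ while $w(\OPT'\cap T_L)$ also contributed less than half — a counting argument pins down $w(T_{\stair}\cap T_L)+c'\cdot w(T_{\stair}\cap T_S)\ge\opt/(\fqpoly)$ for the appropriate constant; setting $\alpha:=w(T_L\cap T_{\stair})/w(T_S\cap T_{\stair})$ (and, if this ratio is inconveniently large, moving some of the bounding large tasks into the first, boxable, alternative so that $w(T_S\cap T_{\stair})\ge\frac1\alpha w(T_L\cap T_{\stair})$ holds) turns the bound into the stated $w(T_{\stair}\cap T_L)+\frac{1}{8(\alpha+1)}w(T_{\stair}\cap T_S)\ge\opt/(\fqpoly)$.

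I expect the main obstacle to be the geometric bookkeeping in the decomposition of the free space: one has to guarantee simultaneously that (a) every small task of $\OPT'$ is assigned to exactly one box or one stair-block, respecting the technical ``fits into'' conditions (in particular the requirement that small tasks in a stair-block all cross $e_M$ and lie above $u_{e_L}$, and that the defining function $f$ is a step function with only $\gamma$ steps), (b) the compatibility relations among stair-blocks and with $T_L^0$ of Definition~\ref{def:stair-solution} hold, and (c) the per-edge congestion of boxes and stair-blocks stays polylogarithmic. Achieving all three at once is what forces the recursive, capacity-bucketed partition rather than a one-shot construction, and is where the $\delta^2$ and the $O(c+1)$ exponent come from; the profit-accounting at the end (choosing whether a given large task ``belongs'' to the boxable side or the stair side so as to make $\alpha$ well-behaved) is the second delicate point, but it is essentially an averaging/pigeonhole argument over the at most $1/\delta$ large tasks per edge.
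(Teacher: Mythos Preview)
Your proposal captures the right high-level dichotomy (boxable versus stair) but misses the concrete mechanisms that make the argument go through, and the final accounting is not justified.

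First, you are missing a key structural ingredient: the paper splits the large tasks not only by edge but by \emph{vertical position}, into $\OPT_{L,\down}=\{i:h(i)<d_i\}$ and $\OPT_{L,\up}=\{i:h(i)\ge d_i\}$, and proves (Lemma~\ref{lem:exists-boxed}) that $\OPT_S\cup\OPT_{L,\up}$ is already essentially boxable. This gives a second boxable alternative beyond ``large tasks alone'', and without it the balancing at the end does not reach $1.997$. Your proposal only considers the split by total large-vs-small profit and has no analogue of this.

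Second, your ``guillotine-like recursive partition'' is too vague to carry the geometric bookkeeping you yourself flag as the obstacle. The paper's mechanism is quite specific: it introduces horizontal corridors between lines at heights $(1+\delta)^k$, and within each corridor classifies every point not covered by a large task into one of four types (top, bottom, sandwich, stair) according to which objects the maximal horizontal segment through the point touches. Each type is then handled by a different argument: top/bottom regions are boxable via Lemma~\ref{lem:boxing-solutions}; sandwich regions require a randomised deletion of large tasks (flipping a coin per corridor to remove either all top-tasks or all bottom-tasks, retaining $\OPT_{L,\down}/4$ in expectation) before they become boxable; and only the stair regions genuinely produce stair-blocks, via an explicit staircase of line segments $\L_C$. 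Your sketch has no counterpart to the sandwich case or its randomised handling, and no concrete construction of the stair-blocks satisfying the compatibility and $e_M$-crossing conditions.

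Third, the derivation of the constant $1.997$ and of the $\frac{1}{8(\alpha+1)}$ coefficient is not ``essentially an averaging/pigeonhole argument''. The paper writes down two explicit linear programs (one for each side of a threshold $w(\OPT_{S,\stair})\gtrless\opt/\alpha$) whose variables represent the profit fractions of roughly a dozen subsets ($\OPT_{L,\up}$, $\OPT_{L,\down}$, $\OPT_{S,\mathrm{top},S}$, $\OPT_{S,\mathrm{sw},S}$, $\OPT_{S,\stair}$, \dots), with one constraint per candidate solution, and exhibits primal/dual certificates showing the optimum exceeds $1/1.997$ at $\alpha\approx 8.33$. Your proposal asserts the bound without producing the competing candidate solutions or the balancing, and the move ``if $\alpha$ is inconveniently large, move some large tasks to the boxable side'' does not correspond to anything in the actual argument.
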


If the first case of Lemma~\ref{lem:structure} applies then the
algorithm due to Lemma~\ref{lem:compute-boxed} yields a $(\fqpoly)$-ap\-prox\-i\-ma\-tion.
In the second case the following algorithm yields a $(\fqpoly)$-approximation
which completes the proof of Theorem~\ref{thm:qpoly}.

\begin{lem}
    \label{lem:stair-solution} Let $(T_{\stair},h_{\stair})$ be a $\gamma$-stair
    solution with $w(T_{S}\cap T_{\stair})\ge\frac{1}{\alpha}w(T_{L}\cap T_{\stair})$
    for some value $\alpha\ge1$. There is an algorithm with running time
    $(n \cdot \max_e u_e)^{O_\delta(\gamma^2 \log (\max_e u_e))}$
    that computes a stair solution $(T',h')$
    with $w(T')\ge (1 - O(\epsilon))(w(T_{\stair}\cap T_{L})+\frac{1}{8(\alpha+1)}w(T_{\stair}\cap T_{S}))$.
\end{lem}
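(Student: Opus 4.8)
The plan is to set up a configuration-LP whose variables encode, for each stair-block $\SB_j$, a choice of a feasible set of large tasks $T_L^j$ that fits into $\SB_j$ (together with their heights $h''$), plus auxiliary variables that account for the small tasks that can be placed in the residual staircase area once the large tasks of all stair-blocks are fixed. The LP also has a configuration variable selecting the set $T_L^0$ of ``free'' large tasks (those not inside any stair-block), subject to each edge being crossed by at most $\gamma$ such tasks and at most $\gamma$ stair-block paths. The stair-solution $(T_{\stair},h_{\stair})$ from the hypothesis certifies that this LP has a feasible (integral) solution of objective at least $w(T_{\stair}\cap T_L) + \frac{1}{8(\alpha+1)} w(T_{\stair}\cap T_S)$, after we charge small-task profit into the residual-area variables appropriately. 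The first step is to argue that this LP can be solved, or approximately solved, in the stated running time: the number of configurations per stair-block is $\exp\bigl(\text{poly}(\gamma,\log\max_e u_e,1/\delta)\bigr)$ because a stair-block's large tasks are large relative to a bottleneck capacity that is itself bounded polynomially in $\max_e u_e$, and there are at most $\text{poly}(n)$ candidate stair-blocks; we separate the dual LP, where the separation oracle is exactly an instance of SAP-with-large-tasks-only plus a step-function capacity profile, which we solve by the dynamic program adapted from~\cite{MW15_SAP} in time $(n\cdot\max_e u_e)^{O_\delta(\gamma^2\log(\max_e u_e))}$. This yields a fractional solution of value at least $(1-\epsilon)$ times the optimum of the LP.

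Next I would round. Independently for each stair-block $\SB_j$, sample one large-task configuration according to the fractional weights (and likewise sample $T_L^0$); this gives an expected large-task profit of at least $(1-O(\epsilon))\,w(T_{\stair}\cap T_L)$, and because the constraints guarantee congestion at most $\gamma$ on every edge, the sampled large tasks and stair-blocks are pairwise compatible with constant probability, so by conditioning (or a standard alteration/Lovász-local-lemma-type argument over the $O(\gamma)$-bounded dependency structure) we keep a constant fraction and can rescale the LP by a constant to recover the full $(1-O(\epsilon))$ fraction of $w(T_{\stair}\cap T_L)$ in expectation. The delicate part is the small tasks: once the large tasks are sampled, a small task $i$ that the optimal stair-solution placed in $\SB_j$ is blocked if the sampled large configuration of $\SB_j$ intersects $i$'s slot, and this can happen with probability close to $1$ for some tasks. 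This is exactly the obstacle flagged in the overview. I would handle it by the two-regime argument promised there: the LP carries additional constraints stating that for every ``group'' of small tasks (those sharing a stair-block and a horizontal sub-interval) either their total LP value survives sampling with good probability, or there is enough residual staircase space to re-pack a $\frac{1}{4}$-fraction of them by a separate rounding routine for small tasks only (a UFP-style greedy/LP-rounding in the staircase, using that small tasks are tiny relative to the locally available capacity so grid-rounding loses only $(1+\epsilon)$). Taking the better of ``keep the surviving small tasks from the sampling'' and ``re-pack via the small-task routine'' in each group, a case analysis balances to yield at least a $\frac{1}{8(\alpha+1)}$-fraction of $w(T_{\stair}\cap T_S)$; the factor $\frac{1}{8}$ comes from combining the $\frac12$ loss of the ``bad'' regime with the $\frac14$ of the re-packing, and the $\frac{1}{\alpha+1}$ from the ratio hypothesis $w(T_S\cap T_{\stair})\ge\frac1\alpha w(T_L\cap T_{\stair})$ that lets us trade between the two parts of the objective.

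Finally I would assemble: with constant probability the sampled configuration gives large-task profit $\ge(1-O(\epsilon))w(T_{\stair}\cap T_L)$ and, via the per-group analysis, small-task profit $\ge(1-O(\epsilon))\frac{1}{8(\alpha+1)}w(T_{\stair}\cap T_S)$ simultaneously; conditioning on this event (or derandomizing by the method of conditional expectations over the polynomially many configuration choices) produces the claimed stair-solution $(T',h')$ deterministically within the stated time bound. The main obstacle is making the small-task dichotomy clean — precisely, defining the LP constraints so that ``LP value survives with good probability'' and ``enough residual space exists'' are genuinely exhaustive, and then verifying that the re-packing routine's output is compatible with the already-committed large tasks and stair-blocks; this requires care in how the residual staircase area is carved up per stair-block so that the small-task sub-instances on different stair-blocks do not interfere.
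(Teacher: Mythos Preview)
Your high-level plan --- configuration LP, dual separation via a large-tasks-only DP, sampling, and a two-regime argument for small tasks --- matches the paper's spirit, but there are two concrete gaps that would block the proof as written.

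First, the paper does \emph{not} put the choice of stair-blocks and of $T_L^0$ into the LP. It uses a left-to-right DP whose cells are $(e,\S',T',h')$ to \emph{enumerate} compatible sets of stair-blocks and $T_L^0$-tasks crossing each edge (this is where the $\gamma^2$ in the exponent comes from), and then runs a separate configuration LP \emph{inside each fixed stair-block}. Because stair-blocks are pairwise compatible by definition, each task fits into at most one of them, so these per-block LPs are independent and their rounded solutions can simply be unioned; a short lemma shows this is at least as good as one global LP. Your approach instead samples $T_L^0$ and per-block configurations independently from one big LP and then appeals to ``constant probability of compatibility'' or a local-lemma-type alteration. That step is not justified: nothing in your LP forces the marginals to be jointly realizable with constant probability, and the dependency structure is not obviously $O(\gamma)$-bounded in the right sense. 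The DP sidesteps this entirely.

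Second, your ``additional constraints'' for the small-task dichotomy are too vague; the paper's mechanism is a specific strengthening: for every edge $e$, height $t$, and small task $j\in T_e$, it adds $\sum_{C:(e,t)\in p(C)} y_C + \sum_{t'\le t} x_{j,t'} \le 1$. This is the crucial ingredient. With it, a pair $(j,t)$ is \emph{problematic} if $\sum_{C:p(C)\cap p(j,t)\ne\emptyset} y_C^* > 1-\eta$, and the constraint lets you show that the problematic pairs, scaled by $1/\eta$, form a feasible solution to the small-tasks-only LP. Hence the dichotomy is \emph{global}: either the problematic pairs carry at most $4\eta\cdot\mathrm{opt}_{LP}$ (so you can drop them and the non-problematic ones survive sampling with probability $\ge\eta$), or the small-only rounding already beats $\mathrm{opt}_{LP}$. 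Setting $\eta=\Theta(1/(\alpha+1))$ and combining with the $1/4$ from randomized rounding with alteration yields the $1/(8(\alpha+1))$ factor. Your per-group version of the dichotomy does not obviously give a feasible alternative packing, and you have not identified the constraint that makes the scaling argument valid.
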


\subsection{Uniform edge capacities}

Assume now that all edge capacities are identical, i.e., that there
exists a value $U$ such that $u_{e}=U$ for each edge $e\in E$ but that not necessarily $U\le n^{(\log n)^{c}}$.
For this case we want to design a \emph{polynomial }time $(\fpoly)$-approximation
algorithm. The above building blocks are not sufficient since the
corresponding algorithms need quasi-polynomial time. Therefore, first
we consider special cases of boxable solutions for which we design
polynomial time algorithms. We begin with such an algorithm for $\beta$-boxable
solutions for constant $\beta$ that intuitively collects all the profit
from the large tasks in the optimal $\beta$-boxable
solution and half of the profit of its small tasks. 

\begin{lem}
    \label{lem:compute-constant-boxable}Let $\beta\in\N$ and let $(T_{\bo},h_{\bo})$ be a 
    $\beta$-boxable solution. There is an algorithm
    with running time $n^{O(\beta^{3}/\epsilon)}$ that computes a solution 
    $(T',h')$ with $w(T')\ge w(T_{\bo}\cap T_{L})+(1/2-\epsilon)w(T_{\bo}\cap T_{S})$. 
\end{lem}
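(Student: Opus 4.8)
The plan is to first enumerate a candidate family of boxes matching the given $\beta$-boxable solution $(T_{\bo},h_{\bo})$ and then to fill these boxes with tasks so that we keep \emph{all} of the profit of the large tasks and at least \emph{half} of the profit of the small tasks. Throughout, ``guessing'' means enumerating all polynomially many possibilities and keeping the best resulting solution.

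\emph{Guessing the boxes.} Since $\beta$ is a constant and each edge is used by at most $\beta$ boxes, we can afford to round the sizes $d_B$ and the height levels $h_{\bo}(B)$ of the boxes of $(T_{\bo},h_{\bo})$ to a suitable set of $n^{O(1/\epsilon)}$ values without destroying feasibility of $(\B,h_{\bo})$ or of the assignment of tasks to boxes, and losing only an $O(\epsilon)$-fraction of the profit; for a box that contains a large task we set $d_B$ equal to the size of that task. We then guess the (rounded) boxes of $(T_{\bo},h_{\bo})$ together with, for each box containing a large task of $T_{\bo}$, this large task. Placing each such large task inside its box recovers the full profit $w(T_{\bo}\cap T_L)$; since the guessed boxes placed at their guessed heights form a feasible solution and the contents of each box fit within it, everything we add below will compose into a feasible SAP solution $(T',h')$. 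It remains to fill the small-task boxes and to collect at least $(1/2-\epsilon)\,w(T_{\bo}\cap T_S)$ from them.

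\emph{Assigning the small tasks and the double-counting argument.} We process the small-task boxes in a hierarchical order $B_{\pi(1)},B_{\pi(2)},\dots$ which is \emph{not} the left-to-right order of their leftmost edges. When we reach $B_{\pi(k)}$ we compute a $(1+\epsilon)$-approximate packing of it using only small tasks not yet used by $B_{\pi(1)},\dots,B_{\pi(k-1)}$, and we commit it. This single-box subproblem is an instance of SAP on a path with uniform capacity $d_{B_{\pi(k)}}$ in which every task has size at most $\epsilon^{8}d_{B_{\pi(k)}}$; because all tasks are tiny compared with the capacity, a height assignment can be recovered from a feasible UFP solution by a first-fit argument losing only $O(\epsilon)$, and the LP relaxation of such a small-task UFP instance is essentially integral, so the subproblem admits a polynomial-time $(1+\epsilon)$-approximation. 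To analyze the profit, let $S_j$ be the set of small tasks that $(T_{\bo},h_{\bo})$ places in box $B_j$, so $w(T_{\bo}\cap T_S)=\sum_j w(S_j)$, and let $A_k\subseteq S_{\pi(k)}$ be the subset still available when we process $B_{\pi(k)}$. Since $A_k$ is a feasible packing of $B_{\pi(k)}$, the profit obtained in $B_{\pi(k)}$ is at least $(1-\epsilon)w(A_k)$, and $w(A_k)=w(S_{\pi(k)})-w(S_{\pi(k)}\setminus A_k)$. Every small task used by $T'$ lies in $S_j$ for exactly one $j$ and is used by $T'$ in exactly one box, so $\sum_k w(S_{\pi(k)}\setminus A_k)\le w(T'\cap T_S)$. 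Hence $w(T'\cap T_S)\ge(1-\epsilon)\sum_k w(A_k)\ge(1-\epsilon)\bigl(w(T_{\bo}\cap T_S)-w(T'\cap T_S)\bigr)$, which rearranges to $w(T'\cap T_S)\ge\frac{1-\epsilon}{2-\epsilon}w(T_{\bo}\cap T_S)$; accounting also for the $O(\epsilon)$-loss of the box-rounding step and rescaling $\epsilon$ suitably, together with the large tasks this yields $w(T')\ge w(T_{\bo}\cap T_L)+(1/2-\epsilon)\,w(T_{\bo}\cap T_S)$.

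\emph{Where the work is.} The double-counting above is clean and order-independent; the difficulty is algorithmic. We must guess a $\beta$-bounded family of boxes \emph{and} carry out the greedy per-box assignment within a single polynomial-time procedure, and this is exactly what the hierarchical processing order is for: it has to be chosen so that the ``context'' needed to process a box --- which of the $O(\beta)$ boxes it may share small tasks with have already consumed which of those tasks --- fits into a polynomial-size dynamic-programming state, even though a single box may share small tasks with $\Omega(n)$ other boxes. Making this bookkeeping precise is what pushes the running time up to $n^{O(\beta^{3}/\epsilon)}$. The second ingredient that needs care is the single-box subroutine, i.e., verifying that SAP on one box whose tasks are all an $\epsilon^{8}$-fraction of the capacity is, up to an $O(\epsilon)$-loss, no harder than its UFP relaxation and thus $(1+\epsilon)$-approximable in polynomial time.
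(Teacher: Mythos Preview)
Your high-level plan and the double-counting argument are exactly what the paper does: the inequality $\sum_k w(S_{\pi(k)}\setminus A_k)\le w(T'\cap T_S)$ and the resulting $\frac{1-\epsilon}{2-\epsilon}$ bound match the paper's argument in Section~\ref{sec:polytime-boxable}, and the single-box subroutine is its Lemma~\ref{lem:fill-tasks-into-boxes}.

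However, the two items you flag under ``Where the work is'' are not loose ends but the main technical content of the proof, and as stated your sketch does not yet yield a polynomial-time algorithm. First, your claim that box heights can be rounded to $n^{O(1/\epsilon)}$ values is unjustified: after applying gravity, a box sits on a chain of boxes below it, and a priori such chains can be long, so the set of attainable heights can be super-polynomial. The paper resolves this via a \emph{box-hierarchical decomposition} (Lemma~\ref{lem:box-hierarchical-decomposition}): the boxes are assigned to $\beta$ levels so that the level-$\ell$ subpaths refine the level-$(\ell{-}1)$ subpaths and each subpath carries only $O(\beta/\epsilon)$ boxes of its level; this bounds chain lengths by $(\beta/\epsilon)^\beta$ (Lemma~\ref{lem:short-chains}) and hence the candidate heights by $n^{O_\epsilon(\beta^\beta)}$. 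Second, the same decomposition is what makes the DP state polynomial: a cell stores the current level~$\ell$, a subpath $E'\in\E_\ell$, and for each earlier level the $O(\beta/\epsilon)$ boxes whose paths contain $E'$, split into single-task boxes (with their guessed large task) and multi-task boxes; because the multi-task boxes are filled deterministically in a fixed global order by the subroutine, knowing the boxes suffices to \emph{reconstruct} exactly which small tasks are already consumed, so the state need not record them. This is what gives $n^{O(\beta^3/\epsilon)}$ cells. Your sketch also seems to suggest guessing all boxes up front, but there can be $\Omega(n)$ of them, so without the level-by-level DP this step alone is exponential.
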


Next, we define laminar boxable solutions which are boxable
solutions in which the paths of the boxes form a laminar family and
the sizes of the boxes are geometrically increasing through the levels (see Fig.~\ref{fig:elements-polytime}).
A set of boxes $\B=\{B_{1},\dotsc,B_{|\B|}\}$ with a height
assignment $h:\B\rightarrow\N$ is a \emph{laminar set of boxes
} if 
\begin{itemize}\parsep0pt \itemsep0pt
    \item the paths of the boxes form a laminar family, i.e., for any two boxes
        $B_{k},B_{k'}$ we have that $P(B_{k})\subseteq P(B_{k'})$, $P(B_{k'})\subseteq P(B_{k})$,
        or $P(B_{k})\cap P(B_{k'})=\emptyset$, 
    \item there is a box $B^{*}\in\B$ with $P(B)\subseteq P(B^{*})$ for each
        $B\in\B$, 
    \item for each box $B\in\B$ we have that $d_{B}=(1+\epsilon)^{k}$ for
        some $k\in\mathbb{N}_{0}$, 
    \item for each box $B\in\B$ with $d_{B}=(1+\epsilon)^{k}$ for some integer $k\ge 1$
        there is a box $B'\in\B$ with $P(B)\subseteq P(B')$, $d_{B'}=(1+\epsilon)^{k-1}$, and $h(B_{k})=h(B_{k-1})+d_{B_{k-1}}$.
\end{itemize}
We define $P(\B):=P(B^{*})$. A \emph{$\beta$-laminar boxable solution
}is now a boxable solution whose boxes can be partitioned into sets
$\B=\{\B_{0},\B_{1},\dotsc,\B_{|\B|-1}\}$ such that the boxes in the
sets $\B_{1},\dotsc,\B_{|\B|-1}$ are laminar sets of boxes whose respective
paths $P(\B_{j})$ are pairwise disjoint and each edge is used by
at most $\beta$ boxes from $\B_{0}$. Also, each box in $\B_{0}$ contains exactly
one large task and each box in $\B_{1},\dotsc,\B_{|\B|-1}$ contains only small tasks.
We design a polynomial
time algorithm for finding profitable laminar boxable solutions. 
\begin{lem}
    \label{lem:compute-laminar}Let $(T_{\lam},h_{\lam})$ be a \emph{$\beta$-}laminar
    boxable solution. There is an algorithm with a running time of $n^{O(\beta+1/\epsilon^{2})}$
    that computes a $\beta$-laminar boxable solution $(T',h')$ with $w(T')\ge w(T_{\lam}\cap T_{L})+w(T_{\lam}\cap T_{S})/(2+\epsilon)$. 
\end{lem}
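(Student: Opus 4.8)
The plan is to design a dynamic program that processes the laminar family of boxes in a suitable tree order, guessing the boxes of $\B_1,\dotsc,\B_{|\B|-1}$ one laminar chain at a time and guessing the $\beta$ boxes of $\B_0$ separately, and that assigns small tasks into the laminar boxes greedily in decreasing order of box size while assigning the (single) large task into each $\B_0$-box by a standard search. Since the paths $P(\B_j)$ of the laminar chains are pairwise disjoint, we may handle each chain independently once we have guessed, for each edge, the vertical ``budget'' that the $\B_0$-boxes consume there; there are only $n^{O(\beta)}$ choices for the $\B_0$-boxes that cross any fixed edge, and propagating this information along the path can be done with a left-to-right sweep in the spirit of the UFP dynamic program of~\cite{UFP-improve-2}. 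Within a single laminar chain, the boxes have sizes $(1+\epsilon)^0,(1+\epsilon)^1,\dotsc$ and each box of size $(1+\epsilon)^k$ sits exactly on top of its parent of size $(1+\epsilon)^{k-1}$ (by the fourth bullet of the definition of a laminar set of boxes), so the whole chain is essentially a single ``stacked'' structure: once we fix its bottom box $B^*$ and the branching pattern of which sub-paths carry which sub-chains, the height levels of all boxes are forced. Hence a DP state can be taken to be (the left endpoint we have reached, a box $B$ that is ``currently open'' at that vertex, which level $k$ it has, and the residual capacity left below the $\B_0$-boxes), and transitions either close $B$ at its right endpoint or open a child of $B$ on a sub-path; the number of distinct box paths is $O(n^2)$ and the number of levels is $O(\log_{1+\epsilon} U)=O(\frac{1}{\epsilon}\log U)$, which is polynomial because $U$ is quasi-polynomially... — wait, here $U$ need not be bounded, so instead we first preprocess to assume $U\le n^{O(1/\epsilon)}$ up to a $(1+\epsilon)$-loss by the standard rounding-of-capacities argument (only the at most $n$ distinct task sizes and their $(1+\epsilon)$-powers matter), giving $O(\frac{1}{\epsilon^2}\log n)$ relevant levels. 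This yields the running time $n^{O(\beta+1/\epsilon^2)}$.

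For the profit guarantee, I would argue as follows. The large tasks of $T_{\lam}$ live in $\B_0$; since we enumerate all $n^{O(\beta)}$ possibilities for the $\B_0$-boxes crossing each edge and each such box contains exactly one large task, the DP can be made to recover a feasible placement of \emph{all} of $T_{\lam}\cap T_L$ (we guess the $\B_0$-boxes of $T_{\lam}$ exactly and place their large tasks exactly as in $(T_{\lam},h_{\lam})$), contributing $w(T_{\lam}\cap T_L)$. For the small tasks, within each laminar chain we run the greedy/LP-rounding routine that, given the staircase of usable space formed by the stacked boxes, packs small tasks of total profit at least $1/(2+\epsilon)$ times the profit of any feasible set of small tasks that fits into that chain's boxes — this is exactly the kind of guarantee available for packing small items one-dimensionally into a known region (a small-item knapsack/strip-packing bound: small items waste at most an $\epsilon$-fraction, and a factor $2$ for the one-dimensional choice, cf.\ the $(2+\epsilon)$ analyses in~\cite{MW15_SAP,amzingUFP2014}), applied box by box along the chain. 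Summing over the disjoint chains gives $w(T_{\lam}\cap T_S)/(2+\epsilon)$, and since the large-task profit is obtained exactly and the two parts occupy disjoint regions of the chains vs.\ $\B_0$, the total is $w(T_{\lam}\cap T_L)+w(T_{\lam}\cap T_S)/(2+\epsilon)$ as claimed.

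The main obstacle I anticipate is the bookkeeping that couples the $\B_0$-boxes with the laminar chains: a laminar chain's available space on an edge $e$ is $d_{B^*_{\text{top}}}$ minus whatever the $\B_0$-boxes crossing $e$ occupy below it, and the $\B_0$-boxes can start and end at arbitrary vertices, so we must carry, through the sweep, a description of the ``profile of used space'' contributed by $\B_0$ that is consistent with the guessed $\B_0$-boxes. Because at most $\beta$ of them cross any edge, this profile on any single edge has complexity $O(\beta)$ and changes only at the $O(n)$ endpoints of $\B_0$-boxes; folding it into the DP state multiplies the state space by $n^{O(\beta)}$ but keeps it polynomial. A secondary subtlety is that when a laminar chain branches, a single box of size $(1+\epsilon)^k$ may have several children of size $(1+\epsilon)^{k+1}$ on disjoint sub-paths; the DP must allow ``spawning'' several open boxes at a branch vertex, which is handled by ordering the children left to right and recursing, exactly as in the DP of~\cite{UFP-improve-2}. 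I expect that, modulo these structural bookkeeping details and the routine small-item packing lemma, the argument goes through cleanly.
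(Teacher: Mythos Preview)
Your proposal has a genuine gap in the running-time argument, and your explanation of the factor~$2$ is not the right one.

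\textbf{The running-time gap.} The crux of the polynomial-time bound is controlling how much history the DP must carry through the levels of a laminar chain. When you fill a box $B$ of level~$\ell$ you must exclude all small tasks already placed into boxes \emph{below} $B$ in the chain (otherwise you double-count). A chain can have $\Theta(\log_{1+\epsilon} U)$ levels, so naively the DP state would have to encode the full assignment into all lower boxes, giving $n^{\Theta(\log U/\epsilon)}$ states---quasi-polynomial at best. Your attempted fix, ``preprocess to assume $U\le n^{O(1/\epsilon)}$,'' is not justified in the uniform-capacity setting (no bound on $U$ is assumed), and even if it were, $O(\tfrac{1}{\epsilon^{2}}\log n)$ levels would still yield a quasi-polynomial state space if the DP must remember all of them. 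The paper resolves this with a shifting argument you are missing: delete the tasks in $1/\epsilon$ out of every $1/\epsilon^{2}$ consecutive levels (losing a $(1+\epsilon)$ factor), so that levels are grouped into blocks of size $1/\epsilon^{2}$ separated by empty gaps. Within a group the DP remembers only the $\le 1/\epsilon^{2}$ ancestor boxes; across groups one simply allows a task to be assigned twice and removes duplicates at the end, which costs only another $(1+\epsilon)$ because of the size separation created by the gap. This is exactly what drives the $n^{O(\beta+1/\epsilon^{2})}$ exponent.

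\textbf{The factor-$2$ argument.} The $1/(2+\epsilon)$ for small tasks does not come from a ``one-dimensional knapsack/strip-packing'' bound. It is a double-counting argument of the type used in~\cite{UFP-improve-2} and in the proof of Lemma~\ref{lem:compute-constant-boxable}: for each box $B$, when the algorithm fills $B$ all tasks of $T_{\lam}(B)$ that the algorithm has \emph{not} already taken are still available, so the algorithm's profit in $B$ is at least $(1-\epsilon)w(T_{\lam}(B)\setminus \mathrm{ALG})$; combined with the trivial bound $w(\mathrm{ALG})\ge w(T_{\lam}\cap T_S\cap \mathrm{ALG})$ this gives $2\,w(\mathrm{ALG})\ge (1-\epsilon)w(T_{\lam}\cap T_S)$. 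Your proposal never articulates this, and the ``box by box greedy'' you describe does not by itself imply the bound without it.

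Your treatment of the $\B_0$-boxes (large tasks) and the outer left-to-right sweep over disjoint chains is fine and matches the paper; the two missing ingredients above are what make the lemma nontrivial.
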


The next class of solutions are pile boxable solutions. A set of boxes
$\B=\{B_{1},\dotsc,B_{|\B|}\}$ with a height assignment $h:\B\rightarrow\N$
is called a \emph{$\beta$-pile of boxes} if $|\B|\le\beta$, $P(B_{k})\supseteq P(B_{k+1})$,
$h(B_{k})=(k-1)U/|\B|$ and $d_{B_{k}}=U/|\B|$ for each $k$. 
We define $P(\B):=P(B_{1})$. A \emph{$\beta$-pile boxable
    solution }is, similarly as above, a boxable solution whose boxes can
be partitioned into sets of boxes $\B=\{\B_{0},\B_{1},\dotsc,\B_{|\B|-1}\}$
such that the boxes in the sets $\B_{1},\dotsc,\B_{|\B|-1}$ are $\beta$-piles
of boxes whose respective paths $P(\B_{j})$ are pairwise disjoint
and each edge is used by at most $\beta$ boxes in $\B_{0}$. For
$\beta$-pile boxable solutions we design a polynomial time algorithm
that finds essentially the optimal solution of this type. 
\begin{lem}
    \label{lem:compute-pile}Let $(T_{\pile},h_{\pile})$ be a $\beta$-pile
    boxable solution. There is an algorithm with a running time of $n^{O(\beta+1/\delta)}$
    that computes a $\beta$-pile boxable solution $(T',h')$ with $w(T')\ge w(T_{\pile})/(1+\epsilon)$. 
\end{lem}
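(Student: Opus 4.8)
The plan is to solve this by a dynamic program that sweeps the vertices of the path from left to right, guesses the boxes of the (unknown) optimal $\beta$-pile boxable solution as it first encounters each of them, and routes the input tasks into the boxes that are currently open. Throughout I use that all capacities equal $U$, that every large task has $d_i>\delta U$ so at most $1/\delta$ large tasks cross any edge, and the rigid geometry of piles: a $\beta$-pile with $m\le\beta$ boxes has nested box paths and tiles a prefix $[0,jU/m)$ of $[0,U)$ at an edge contained in exactly $j$ of its boxes, and distinct piles have disjoint horizontal spans so that at most one pile is active at any edge. As in the laminar case I assume (and the definition intends) that each box of $\B_0$ contains a single large task and the pile boxes contain only small tasks; in particular every box size lies in the polynomially large set $\{d_i:i\in T\}\cup\{U/m:m\le\beta\}$ once we shrink each $\B_0$-box to the size of its task, which changes neither feasibility nor profit.

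Next I would normalize $(T_\pile,h_\pile)$ further by pushing every box, together with the tasks inside it, downward as far as feasibility allows. A standard exchange argument for uniform-capacity SAP with at most $1/\delta$ ``thick'' objects per edge (in the style of~\cite{MW15_SAP}) then lets me assume that at every edge the open $\B_0$-boxes sit, laminarly, on top of the floor or of the pile active at that edge, so that the bottom of every box has the form $jU/m+\sum_{i\in S}d_i$ with $j\le m\le\beta$ and $|S|\le 1/\delta$; hence all box bottoms lie in a set $H$ of size $n^{O(1/\delta)}$, and the whole vertical configuration of the at most $\beta+1/\delta$ open boxes at any single edge is one of $n^{O(\beta+1/\delta)}$ possibilities.

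The dynamic program has a cell for every vertex $v$ and every feasible state, where a state records: the pile active at $v$ (its size $m$, the right endpoints of its boxes, and which of its boxes are still open at $v$); the set of open $\B_0$-boxes, each given by its large task and its bottom in $H$; and, for each open box $B$, a counter for the total size of already-assigned small tasks crossing $v$ that lie in $B$, rounded to multiples of $\epsilon^8 d_B/\mathrm{poly}(1/\epsilon)$. A transition to the next vertex decides which open boxes close and which open; when $v$ is the left endpoint of a new pile I guess that pile in one shot (its $m$ and the right endpoints of its $\le\beta$ boxes), and when a new $\B_0$-box opens I guess its large task and its bottom in $H$; and it decides which input tasks start at $v$ and into which open box each is placed, checking that open boxes are pairwise non-overlapping in $[0,U)$, that a small task $i$ placed in $B$ satisfies $P(i)\subseteq P(B)$ and $d_i\le\epsilon^8 d_B$ (and that a $\B_0$-box holds a single large task with $P(i)\subseteq P(B)$), and that the updated counter of $B$ stays below $d_B$. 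The cell value is the maximum profit of placed tasks over all transition sequences reaching it, and the answer is read off at the last vertex. The number of states and of transitions per cell is $n^{O(\beta+1/\delta)}$, so this is the running time.

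For the profit bound, the normalized solution above, together with its own assignment of small tasks to boxes, is realized by a valid transition sequence, so the only loss relative to it is from rounding the counters. Here I invoke the standard facts about SAP with items small relative to the capacity: reserving an $O(\epsilon)$-fraction of each box's capacity accommodates at least a $(1-O(\epsilon))$-fraction of the profit of any feasible assignment of small tasks to a box, and a per-edge volume bound of $(1-\epsilon)d_B$ guarantees an actual feasible placement inside $B$ (which also yields the final heights $h'$ within each box). Consequently the dynamic program outputs a $\beta$-pile boxable solution $(T',h')$ with $w(T')\ge w(T_\pile)/(1+\epsilon)$. The step I expect to be the real obstacle is reconciling the unbounded $U$ — which rules out enumerating box positions directly and forces the normalization confining all box bottoms to the polynomially large set $H$, including the argument that this can be arranged at \emph{every} edge and not just one — with the need to still route the many small tasks into the correct boxes during the sweep, which the rounded counters handle at the price of the $(1+\epsilon)$ factor; everything else is bookkeeping analogous to the dynamic programs for the other block types.
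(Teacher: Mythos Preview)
The central gap is in how you assign small tasks to the pile boxes. Your DP state carries, for each open box $B$, a rounded counter ``total size of already-assigned small tasks crossing $v$ that lie in $B$''. But a left-to-right sweep must \emph{decrement} this load whenever an already-assigned task reaches its right endpoint, and your state does not record which tasks were assigned to $B$, only the rounded aggregate. Without that information the transition cannot compute the new load, so either the DP silently allows infeasible packings (if it lets the counter drop by an arbitrary amount) or it throws away profit (if it never decrements). This is precisely the difficulty that makes UFP nontrivial even when all items are small; a per-edge rounded counter is not enough state to maintain feasibility across the sweep. Even if one tried to patch this by guessing the counter at every vertex, the rounding granularity $\epsilon^{8}d_B/\mathrm{poly}(1/\epsilon)$ accumulates over $\Theta(n)$ transitions and swamps $d_B$.

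The paper avoids this issue by decoupling the three parts. First it guesses a single pile in one shot (at most $\beta$ nested boxes, each specified by two endpoints; their heights are forced to $(k-1)U/|\B_1|$), which is $n^{O(\beta)}$ options. Second, for the small tasks it does \emph{not} sweep at all: it writes a BOX-LP that fractionally assigns small tasks to the guessed boxes subject to the per-edge volume constraint $\sum_{i\in T_e}x_{i,B}d_i\le d_B$, solves it exactly, scales down by $1-2\epsilon$, and rounds by randomized rounding with alteration (Lemma~\ref{lem:assign-tasks-boxes}); the $\epsilon^8$ slack then lets Buchsbaum et al.\ turn the per-edge bound into an actual placement. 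Third, the large tasks are picked by a separate DP over edges with state ``which $\le 1/\delta$ large tasks cross $e$ and at what heights'', after applying anti-gravity so that all large-task heights lie in a set $H$ of size $n^{O(1/\delta)}$. A final outer DP stitches together multiple piles along the path. The LP-plus-rounding subroutine for the small tasks is the ingredient your sweep is missing.
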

\begin{figure}[tb]
    \begin{centering}
        \includegraphics[width=\textwidth]{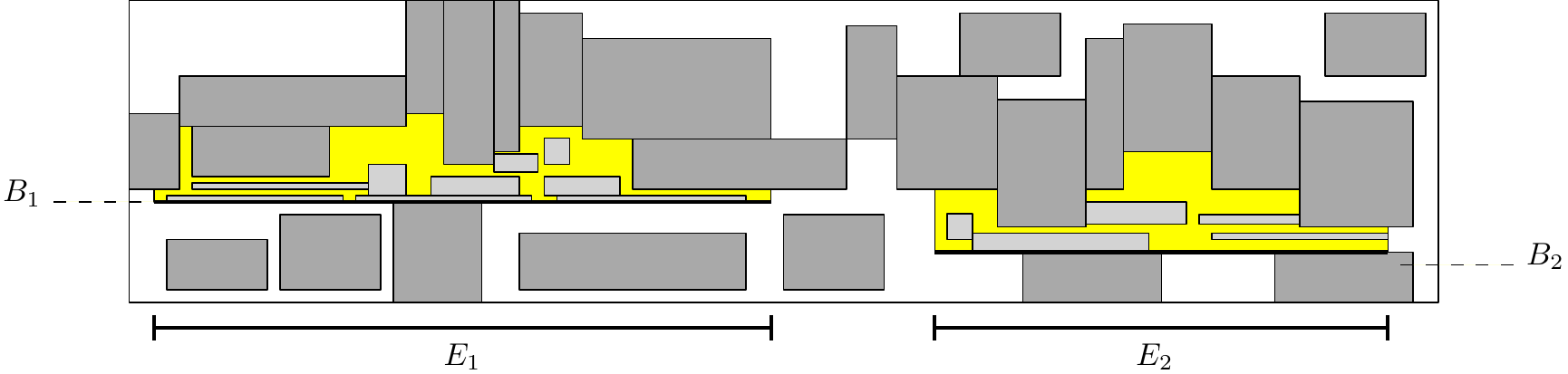} 
        \par\end{centering}
    \caption{\label{fig:jammed-solution-complicated} A jammed solution with two
        subpaths $E_{1},E_{2}$ corresponding horizontal line segments $E_{1}\times B_{1}$
        and $E_{1}\times B_{2}$, and small tasks (light gray) that are jammed
        in the respective yellow areas between the large tasks (dark gray).}
\end{figure}

Finally, we define jammed solutions (which are \emph{not }defined
via boxes)\emph{. }Intuitively, they consist of large and small tasks
such that the small tasks are placed in areas between some horizontal
line segments and the large tasks such that the small tasks are relatively
large compared to the free space on each edge in these areas
(see Fig.~\ref{fig:elements-polytime} and Fig.~\ref{fig:jammed-solution-complicated}).
Formally, given a solution $(T',h')$ where we define $T'_{L}:=T'\cap T_{L}$,
let $E'\subseteq E$ be a subpath, and let $B\ge0$ such that intuitively no task
$i\in T'_{L}$ crosses the line segment $E'\times B$, i.e., for each task $i\in T'_L$
we have that $E'\cap P(i)=\emptyset$ or $h'(i)\ge B$ or $h'(i)+d_i \le B$. The reader
may imagine that we draw the line segment $E'\times B$ in the solution
given by the large tasks $T'_{L}$ and that we are interested in small
tasks that are drawn above $E'\times B$. For each edge $e\in E'$
let $u'_{e}:=\min_{i\in T'_{L}:e\in P(i)\wedge h(i)\ge B}h(i)-B$
and define $u'_{e}:=U-B$ if there is no task $i\in T'_{L}$ with
$e\in P(i)$ and $h(i)\ge B$. A task $i\in T'\cap T_{S}$ is a \emph{$\delta'$-jammed
    tasks for $(T'_{L},E',B,h')$} if
$P(i)\subseteq E'$, $B\le h'(i)\le h'(i)+d_{i}\le u'_{e}$ for each edge $e\in P(i)$, 
and there exists an edge $e'\in P(i)$ such that
$d_{i}>\delta'u'_{e'}$, i.e., intuitively $i$ is relatively large
for the edge capacities $u'$.
\begin{defn}
    \label{def:jammed} A solution solution $(T',h')$ is a \emph{$\delta'$-jammed-solution
    }if there are pairwise disjoint subpaths $E_{1},\dotsc,E_{k}\subseteq E$,
    values $B_{1},\dotsc,B_{k}$, and a partition $T'_{S}:=T'\cap T_{S}=T'_{S,1}\dot{\cup}\dotsc\dot{\cup}T'_{S,k}$
    such that $T'_{S,\ell}$ is a set of \emph{$\delta'$-}jammed tasks
    for $(T'_{L},E_{\ell},B_{\ell},h')$ for each $\ell\in[k]$ with $T'_{L}:=T'\cap T_{L}$. 
\end{defn}

\begin{lem}
    \label{lem:compute-jammed}Let $(T_{\mathrm{jam}},h_{\mathrm{jam}})$
    be a \emph{$\delta'$-}jammed solution. There is an algorithm with
    a running time of $n^{O_\epsilon(1/(\delta \cdot \delta')^3)}$ that computes
    a \emph{$O(\delta')$-}jammed solution $(T',h')$ with $w(T')\ge w(T_{\lam})/(1+\epsilon)$. 
\end{lem}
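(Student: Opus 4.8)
The plan is to extend the dynamic program of \cite{MW15_SAP} for SAP instances consisting of only large tasks. We are in the uniform-capacity setting ($u_e=U$ for every edge), so $b(i)=U$ for every task; hence $d_i>\delta U$ for each large task and at most $1/\delta$ large tasks cross any fixed edge. Before describing the DP I would perform a discretization step: by the standard shifting/rounding arguments of \cite{MW15_SAP}, at the cost of a $(1+\epsilon)$ factor in the profit I would assume that every large task has a height level taken from a grid of $O(1/(\epsilon\delta))$ values (multiples of $\epsilon\delta U$), and that inside each region $E_\ell$ the value $B_\ell$ and the residual profile $u'$ lie on a coarse grid and every $\delta'$-jammed small task has a height level taken from a grid of $O(1/(\epsilon\delta'))$ values inside the residual window of size $U-B_\ell$. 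The point of this step is that, because large tasks occupy a $\delta$-fraction of $U$ and $\delta'$-jammed tasks occupy a $\delta'$-fraction of the residual capacity on some edge, these grids have \emph{constant} size, so no $\log U$ dependence ever enters; the price is that rounding sizes down and residual capacities down can only guarantee that a $\delta'$-jammed task stays $\Omega(\delta')$-jammed, which is exactly why the output is an $O(\delta')$-jammed solution rather than a $\delta'$-jammed one.

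The core observation I would use is that, once the large tasks $T'_L$ and a region $(E_\ell,B_\ell)$ are fixed, the $\delta'$-jammed tasks for $(T'_L,E_\ell,B_\ell,h')$ are precisely the tasks that are ``large relative to the profile $u'$'' in the sub-instance living above the segment $E_\ell\times B_\ell$ and below $u'$; thus within each region they can be handled by exactly the same mechanism as ordinary large tasks, with $1/\delta'$ in the role of $1/\delta$ and the residual profile $u'$ --- which is determined by the $\le 1/\delta$ large tasks crossing each of its edges --- in the role of the capacity profile. The algorithm is then a left-to-right dynamic program over the edges of the path whose state at an edge $e$ records: the set of large tasks crossing $e$ with their snapped heights (at most $1/\delta$ of them); whether $e$ currently lies inside a region $E_\ell$ (at most one, since the $E_\ell$ are pairwise disjoint) and, if so, the snapped value $B_\ell$; and the set of jammed small tasks crossing $e$ with their snapped heights. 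A transition to the next edge guesses which tasks and which region terminate or start; it is admissible iff all recorded height intervals are pairwise non-overlapping, the residual profile implied by the recorded large tasks is consistent with the recorded $B_\ell$ and with the sizes of the recorded jammed tasks, and each recorded jammed task is $\Omega(\delta')$-large relative to the residual capacity on some edge of its path that it has already fully traversed. We maximize the total profit of tasks ever present in the state; since the discretized version of $(T_{\mathrm{jam}},h_{\mathrm{jam}})$ traces out an admissible trajectory, the returned solution has profit at least $w(T_{\mathrm{jam}})/(1+\epsilon)$, and by construction it is an $O(\delta')$-jammed solution. With a state carrying only a number of tasks that is polynomial in $1/\delta$ and $1/\delta'$ and with constant-size height grids, a forward sweep over the $O(n)$ edges runs in time $n^{O_\epsilon(1/(\delta\delta')^3)}$.

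The hard part will be justifying that only few jammed small tasks need to be kept simultaneously in the state at a fixed edge: by definition a $\delta'$-jammed task need only be large relative to $u'$ on \emph{some} edge of its path, not on every edge it crosses, so a priori a fixed edge could be crossed by arbitrarily many jammed tasks that are individually ``thin'' there. I would resolve this exactly as in the large-tasks-only algorithm of \cite{MW15_SAP}: charge each jammed task to a canonical edge of its path where it is large relative to $u'$, use that within a region the residual profile $u'$ has bounded descriptive complexity (it is governed by the $\le 1/\delta$ large tasks crossing each edge, hence decomposes into few ``corridors'' of roughly uniform residual capacity), and add one more shifting step that thins out the per-edge jammed interface while retaining a $(1-O(\epsilon))$ fraction of the jammed profit. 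This is also where all of the loss from $\delta'$ to $O(\delta')$ and from the best-in-class solution to a $(1+\epsilon)$-fraction of it is absorbed; what remains is the routine verification that the DP's states, its transitions, and the discretization are mutually consistent and that the optimal in-class solution indeed survives the discretization as an admissible trajectory.
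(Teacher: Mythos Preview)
Your plan has a genuine gap at exactly the point you flagged as ``the hard part''. A left-to-right sweep over edges cannot work here, because the number of $\delta'$-jammed small tasks crossing a fixed edge $e$ is \emph{not} bounded by any function of $\delta,\delta',\epsilon$ alone: a jammed task may be arbitrarily thin at $e$ as long as it is $\delta'$-large at its own bottleneck edge, and the residual profile $u'$ does \emph{not} have bounded descriptive complexity (it changes every time some large task starts or ends, so it can have $\Omega(n)$ steps, not $O(1/\delta)$). The best general per-edge bound one gets from the argument of Lemma~\ref{lem:few-large-tasks} applied to the residual instance is $O((\log U)/(\delta')^2)$, which would make your DP state size---and hence the running time---depend on $\log U$; since in the uniform-capacity case $U$ is not assumed to be polynomially bounded, this is not polynomial. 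Your proposed ``shifting step that thins out the per-edge jammed interface'' is not justified and I do not see how to make it work: one can construct instances where essentially all jammed profit passes through a single edge via many distinct thin tasks, so you cannot delete most of them while retaining a $(1-O(\epsilon))$-fraction of the profit.

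The paper (and already \cite{MW15_SAP} for the large-only case) avoids this by \emph{not} sweeping left to right. Instead the DP is a recursive divide-and-conquer: in each subproblem one guesses the jammed task $i^*$ of \emph{minimum height} (subject to $h(i^*)\ge h_{\min}$), together with its bottleneck edge $e(i^*)$. The crucial combinatorial fact (Lemma~\ref{lem:num-tasks}) is that any jammed task $i'$ using $e(i^*)$ with $h(i')>h(i^*)$ must satisfy $d_{i'}>(\delta')^2 b(i^*)$, so at most $1/(\delta')^2-1$ such tasks exist. Hence the interface at $e(i^*)$ consists of at most $1/(\delta')^2+1/\delta$ tasks, which can be guessed, after which the instance splits into independent left and right subproblems with a new floor $h_{\min}=h^*(i^*)$. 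The DP cell is $(e_1,e_2,T_1,T_2,h,h_{\min})$ rather than a single-edge state. The polynomial-time refinement then comes from restricting heights to anchor lines (powers of $1+\delta'$) and bounding levels via \cite{MW15_SAP}, which is what makes the output $O(\delta')$-jammed rather than $\delta'$-jammed. So the missing idea in your proposal is precisely this bottleneck-edge recursion together with Lemma~\ref{lem:num-tasks}; the left-to-right sweep with a per-edge jammed-task interface is the wrong DP skeleton.
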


Our key structural lemma for the case of uniform edge capacities shows that
for each instance there exists a solution of one of
the above types for which the respective algorithm finds a solution
of profit at least $\opt/(\fpoly)$. Then Theorem~\ref{thm:poly}
follows from combining Lemmas~\ref{lem:compute-constant-boxable},
\ref{lem:compute-laminar}, \ref{lem:compute-pile}, \ref{lem:compute-jammed},
and \ref{lem:uniform-structural-lemma}.
\begin{lem}[Structural lemma, uniform capacities]
    \label{lem:uniform-structural-lemma}Given a SAP-instance $(T,E)$
    where $u_{e}=U$ for each edge $e\in E$ and some value $U$. There
    exists at least one of the following solutions
    \begin{itemize}\parsep0pt \itemsep0pt
        \item a $O_{\epsilon}(1)$-boxable solution $(T_{\bo},h_{\bo})$
            such that $w(T_{\bo}\cap T_{L})+w(T_{\bo}\cap T_{S})/2\ge\OPT/(\fpoly)$ 
        \item a laminar boxable solution $(T_{\mathrm{\lam}},h_{\mathrm{\lam}})$
            with $w(T_{\mathrm{\lam}}\cap T_{L})+w(T_{\mathrm{\lam}}\cap T_{S})/2\ge\OPT/(\fpoly)$
        \item a $O_{\epsilon}(1)$-pile boxable solution $(T_{\mathrm{pile}},h_{\mathrm{pile}})$
            with $w(T_{\mathrm{pile}})\ge\OPT/(\fpoly)$  
        \item a jammed-solution $(T_{\mathrm{jam}},h_{\mathrm{jam}})$ with $w(T_{\mathrm{jam}})\ge\OPT/(\fpoly)$. 
    \end{itemize}
\end{lem}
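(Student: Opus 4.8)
The plan is to start from the optimal solution $\OPT$ and classify its tasks into several groups according to how they interact, then argue that at least one group carries an $\opt/(\fpoly)$ fraction of the profit \emph{and} can be realized by one of the four special solution types. First I would split $\OPT$ into large tasks $\OPT_L = \OPT \cap T_L$ and small tasks $\OPT_S = \OPT \cap T_S$; by Lemma~\ref{lem:gap} the tasks in the intermediate regime contribute only $O(\epsilon)\opt$ and can be discarded. The large tasks alone admit a dynamic program (essentially from~\cite{MW15_SAP}), since each edge is used by only $O(1/\delta)$ large tasks, so if $w(\OPT_L)$ is a large enough fraction of $\opt$ we are already in the first (boxable) case with each box containing one large task. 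The interesting regime is when $w(\OPT_S)$ dominates. For the small tasks I would invoke the high-level structural machinery already developed for the quasi-polynomial case (Lemma~\ref{lem:structure}): it produces either a boxable solution or a stair-solution of profit $\ge \opt/(\fqpoly)$. The crux is then to convert these into the \emph{polynomially computable} types of Lemma~\ref{lem:uniform-structural-lemma}, exploiting that all capacities equal $U$.

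The key conversion steps, in order, are: (i) take the boxes produced by the structural lemma and argue that, using uniformity of capacities and a shifting/rounding argument on box sizes (rounding each $d_B$ to a power of $1+\epsilon$, losing a $(1+\epsilon)$ factor), the boxes can be reorganized so that on each edge they form either a constant-size family, a laminar family with geometrically increasing sizes, or a pile — this is where the three box-based solution types come from; (ii) show that whenever a small task $i$ is \emph{not} captured by any of these box families, it must be because in $\OPT$ the large tasks above it leave so little room that $i$ is relatively large with respect to the residual capacity $u'_e = U - B$ on some edge, i.e.\ $i$ is a $\delta'$-jammed task — this is exactly the scenario the jammed-solution type is designed for, and it corresponds to the stair-solution branch of Lemma~\ref{lem:structure} where small tasks are jammed between large tasks and the capacity profile; (iii) apply a final averaging argument over the (constantly many) solution types so that the best of them achieves the claimed $\opt/(\fpoly)$ bound, carefully tracking that the stair/jammed branch only loses the factors $\frac{1}{2}$ or $\frac{1}{8(\alpha+1)}$ on the small-task profit that appear in Lemmas~\ref{lem:structure} and~\ref{lem:uniform-structural-lemma}.

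The main obstacle I anticipate is step (ii): proving the dichotomy that every small task of $\OPT$ either fits into one of the three structured box families \emph{or} is $\delta'$-jammed. The difficulty is that a boxable solution produced abstractly may use $\Omega(\log n)$ boxes per edge, and it is not obvious how to repackage those $\Omega(\log n)$ boxes into a laminar geometric family plus a constant-size remainder without losing too much profit; one has to set up a careful potential/charging argument (akin to the laminar decomposition used for UFP in~\cite{UFP-improve-2}) that tracks, per edge, how many "levels" of boxes are needed and shows the geometric structure emerges because box sizes can be rounded. A secondary obstacle is bookkeeping the constants: the bound $\fpoly$ arises from balancing the $\frac12$ loss on small tasks in the boxable/laminar cases against the $\frac{1}{8(\alpha+1)}$ loss in the jammed case (together with a worst-case $\alpha$ from the ratio $w(T_S\cap\cdot)\ge\frac1\alpha w(T_L\cap\cdot)$), and making sure the split of $\OPT$ into "boxable part" and "jammed part" is done at the right threshold so that \emph{one} of the two always retains enough profit. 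I would handle the constant optimization last, after the structural dichotomy is in place, since (as the authors note) the exact third decimal is not the point.
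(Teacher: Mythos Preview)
Your plan has a genuine structural gap: you try to reduce to the quasi-polynomial structural lemma (Lemma~\ref{lem:structure}) and then ``compress'' its output into the four polynomial-time solution types, but this reduction does not go through, and the paper does not attempt it.

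Two concrete problems. First, the stair-solution branch of Lemma~\ref{lem:structure} is an artifact of non-uniform capacities: stair-points exist only where the capacity profile itself creates a staircase. With $u_e\equiv U$ there are no stair-points at all (the paper says this explicitly when it sets up the point classification for the uniform case), so invoking Lemma~\ref{lem:structure} and then mapping its stair branch to jammed solutions is not a valid move --- the two notions are different (jammed tasks are large relative to the residual capacity under large tasks, not relative to the edge profile). Second, your step~(i) is exactly the step you yourself flag as the main obstacle, and there is no known way to take an arbitrary $(\log n)^{O(1)}$-boxable solution and repackage it into a constant-box / laminar-geometric / pile decomposition while keeping $\ge \opt/(\fpoly)$; the charging argument you allude to is not supplied and is not in~\cite{UFP-improve-2} either.

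What the paper actually does is bypass Lemma~\ref{lem:structure} entirely and build the uniform-capacity decomposition from scratch, exploiting uniformity from the first step. It slices off two thin strips $E\times[0,\delta U)$ and $E\times[(1-\delta)U,U)$, classifies small and large tasks by whether they touch these strips, and for tasks in the middle region classifies points into top/sandwich/bottom types (no stair type). Each class is then handled directly: tasks near a strip give laminar or jammed solutions via Lemma~\ref{lem:boxing-solutions}; the free strip lets Lemma~\ref{lem:boxed-solution-uniform} produce an $O_\delta(1)$-boxable solution; sandwich components give pile solutions via Lemma~\ref{lem:uniform-piles-boxes}. This yields roughly a dozen explicit candidate solutions, and the bound $32/63$ comes from an LP over the weights of all these pieces (not from the $\frac{1}{8(\alpha+1)}$ calculus of the quasi-polynomial case). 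So both the decomposition and the final balancing are specific to uniform capacities, and you should build them directly rather than trying to post-process the general structural lemma.
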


\subsection{Resource augmentation}

We consider now again the case of arbitrary edge capacities but under
$(1+\eta)$-resource augmentation. First, we show that due to the latter we can
reduce the general case to the case of a constant range of edge capacities. 
\begin{lem}
    \label{lem:RA-constant-range}If there is an $\alpha$-approximation
    algorithm with a running time of $n^{O(f(\eta,M))}$ for the case of
    $(1+\eta)$-resource augmentation where $\eta < 1$ and $u_{e}\le Mu_{e'}$ for any
    two edges $e,e'$ then there is an $\alpha(1+\epsilon)$-approximation
    algorithm with a running time of $n^{O(f(\eta,1/(\epsilon\eta)))}$
    for the case of $(1+O(\eta))$-resource augmentation. 
\end{lem}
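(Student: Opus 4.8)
The plan is to cut the capacity spectrum into geometrically growing \emph{bands}, to invoke the assumed algorithm separately on the part of the instance that lives in each band (which is legitimate because a single band is a SAP instance whose edge capacities differ by a factor at most $M:=1/(\epsilon\eta)$), and to reassemble the partial solutions by stacking them one above another, charging the interaction between bands to the extra $O(\eta)$ of resource augmentation.

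Concretely, I would fix a base $\lambda>1$ and integers $W$ (band width, in \emph{levels}) and $G<W$ (a ``guard'') with $\lambda^{W}\le M$, $\lambda^{G}=\Theta(1/\eta)$, and $G\le\epsilon W$, reduce to reasoning about the level $\ell(e)=\lfloor\log_{\lambda}u_{e}\rfloor$ of each edge, draw an offset $\rho\in\{0,\dots,W-1\}$ uniformly at random, let band $B_{j}$ be the edges with $\ell(e)\in[\rho+jW,\rho+(j+1)W)$, and assign each task $i$ to the band of its bottleneck $b(i)=\min_{e\in P(i)}u_{e}$. Since $b(i)$ is the path minimum, all of $P(i)$ lies in that band or in higher ones. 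Next I would \emph{kill} every task whose bottleneck level lies in the top $G$ levels of its band; as $\rho$ varies the position of $\ell(b(i))$ inside its band is uniform, so a task is killed with probability $G/W\le\epsilon$, hence some offset (found by trying all $W$) kills tasks of $\OPT$ carrying at most $\epsilon\cdot\opt$ of profit. From then on every surviving task has $b(i)$ a factor $\lambda^{G}=\Theta(1/\eta)$ below the top capacity $c_{j(i)}$ of its band, and $w(\OPT)\ge(1-\epsilon)\opt$, where $\OPT$ now denotes the surviving sub-solution.

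For each $j$ let $I_{j}$ be the sub-instance on the surviving band-$j$ tasks in which every edge of such a task's path that lies strictly above $B_{j}$ has its capacity capped at $c_{j}$; this is still a legal SAP instance, its capacities lie in a window of size $\lambda^{W}\le M$, and the cap forbids no task (demand $\le b(i)\le c_j$). Running the assumed $\alpha$-approximation with $(1+\eta)$-augmentation on $I_{j}$ gives $S_{j}$ with $w(S_{j})\ge\frac1\alpha w(\OPT\cap I_{j})$, where $\OPT\cap I_j$ (the restriction of $\OPT$ to band-$j$ tasks) is feasible for $I_{j}$ since every band-$j$ task has top $h(i)+d_i\le b(i)\le c_j$, so capping only tightens constraints it already met. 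I would output $S=\bigcup_{j}S_{j}$ after shifting each $S_{j}$ upward by $\beta_{j}:=2(1+\eta)\lambda^{\rho+jW-G}$ in order of increasing $j$. Profitwise $\sum_{j}w(S_{j})\ge\frac1\alpha\sum_{j}w(\OPT\cap I_{j})=\frac1\alpha w(\OPT)\ge\frac1{\alpha(1+\epsilon)}\opt$ after rescaling $\epsilon$. For feasibility on an edge $e$ of band $B_{j^{\star}}$: only bands $j'\le j^{\star}$ touch $e$ (a task of a higher band has its whole path above $e$); $S_{j^{\star}}$ uses at most $(1+\eta)u_{e}$; and since each surviving band-$j'$ task $i$ has $h(i)+d_i\le(1+\eta)b(i)<(1+\eta)\lambda^{\rho+(j'+1)W-G}$, all of $S_{j'}$ sits below height $(1+\eta)\lambda^{\rho+(j'+1)W-G}\le(1+\eta)u_{e}\lambda^{-(j^{\star}-j')W-G}$ (using $u_{e}\ge\lambda^{\rho+j^{\star}W}$); summing this geometric series over $j'<j^{\star}$ gives $O(\lambda^{-G})u_{e}=O(\eta)u_{e}$, and the shifts $\beta_{j'}$ (using $\lambda^{W}\ge2$) are exactly large enough to stack these blocks without overlap, all below $\beta_{j^{\star}}=O(\eta)u_{e}$. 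Hence $e$ carries at most $(1+O(\eta))u_{e}$, so $S$ is feasible under $(1+O(\eta))$-resource augmentation, and the running time is $W\cdot(\#\text{bands})\cdot n^{O(f(\eta,M))}=n^{O(f(\eta,1/(\epsilon\eta)))}$.

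The step I expect to be the main obstacle is exactly the feasibility count above: an edge $e$ of large capacity is used by tasks of arbitrarily many lower bands whose demands are \emph{not} individually small relative to $u_{e}$, and it is only the killing of tasks whose bottleneck lies in the top $G$ levels of its band that makes every lower band's footprint on $e$ a factor $\lambda^{-G}=\Theta(\eta)$ thinner, so that the interference is a geometric series summing to $O(\eta)u_{e}$ rather than to $\Omega(u_{e})$. Making the three parameter requirements ``band ratio $\le M$'', ``guard ratio $\Theta(1/\eta)$'', and ``kill probability $\le\epsilon$'' cohere, and dealing with the $O(\log\max_e u_e)$ bands and with edges shared between consecutive bands, is the delicate bookkeeping I am deliberately glossing over here.
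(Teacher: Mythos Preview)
Your approach is essentially the paper's: group tasks by bottleneck level, use an offset/shifting argument to discard an $\epsilon$-fraction of $\OPT$ so that the surviving groups are separated by a capacity gap of order $1/\eta$, solve each group as a bounded-ratio sub-instance, and stack the sub-solutions (the geometric tail from lower bands contributes $O(\eta)\,u_{e}$). The paper's write-up is terser---it hides the stacking in a single sentence---but the mechanism is identical.

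There is, however, a genuine arithmetic obstruction in the parameter requirements you list (and, incidentally, in the lemma \emph{statement} itself, though not in the body of the paper's proof). You ask for $\lambda^{W}\le M=1/(\epsilon\eta)$, $\lambda^{G}=\Theta(1/\eta)$, and $G\le\epsilon W$ simultaneously; but the last two force $\lambda^{W}\ge(\lambda^{G})^{1/\epsilon}=\Theta\bigl((1/\eta)^{1/\epsilon}\bigr)$, which already exceeds $1/(\epsilon\eta)$ once $\epsilon$ and $\eta$ are both small. So the ``delicate bookkeeping'' you flag is not bookkeeping but an impossibility: this method inherently yields a within-band capacity ratio of $(1/\eta)^{\Theta(1/\epsilon)}$, not $1/(\epsilon\eta)$. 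That is exactly what the paper's proof produces (it uses levels of width $1/\eta$ and removes one out of every $1/\epsilon$, giving groups of ratio $(1/\eta)^{1/\epsilon-1}$), and it is also what the downstream Lemma~\ref{lem:RA-boxable-solutions} actually consumes, so the discrepancy is harmless for the application---just replace $1/(\epsilon\eta)$ by $(1/\eta)^{1/\epsilon}$ in the running-time bound and your argument goes through.
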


Next, we show that if we are given an instance with a constant range
of edge capacities, under $(1+\eta)$-resource augmentation we can
guarantee that there is an $(1+\epsilon)$-approximative $O_{\epsilon,\eta}(1)$-boxable
solution. Then Theorem~\ref{thm:RA} follows by combining Lemmas~\ref{lem:compute-boxed},
\ref{lem:compute-constant-boxable}, \ref{lem:RA-constant-range},
and \ref{lem:RA-boxable-solutions} with the $(1+\epsilon)$-approximation algorithm
for sufficiently small tasks in~\cite{MW15_SAP}.
\begin{lem}
    \label{lem:RA-boxable-solutions}Given an instance where $u_{e}\le Mu_{e'}$
    for any two edges $e,e'$ with optimal solution $(T^{*},h^{*})$.
    If we increase the edge capacities by a factor of $1+\eta$, there is
    a $O_{\epsilon,\eta}(1)$-boxable solution $(T',h')$ such that $w(T')\ge w(T^{*})/(1+\epsilon)$.
\end{lem}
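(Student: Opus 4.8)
The plan is to prove Lemma~\ref{lem:RA-boxable-solutions} by taking the (unaugmented) optimal solution $(T^*,h^*)$ and gradually reorganizing it into a boxable solution, using the extra $\eta$ slack in the vertical direction as the budget for the reorganization. First I would invoke Lemma~\ref{lem:RA-constant-range} implicitly in spirit and work directly under the hypothesis $u_e\le M u_{e'}$ for all $e,e'$, so that all capacities lie in a constant-factor window $[u_{\min},M u_{\min}]$. Partition $T^*$ into large tasks $T^*_L=T^*\cap T_L$ and small tasks $T^*_S=T^*\cap T_S$ (dropping the negligible middle-size tasks via Lemma~\ref{lem:gap}, at a $(1+\epsilon)$-factor loss). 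For the large tasks, observe that since $d_i>\delta\cdot b(i)\ge\delta u_{\min}$ and $b(i)\le M u_{\min}$, each edge is used by at most $M/\delta=O_{\epsilon,\eta}(1)$ large tasks; hence each large task can be put into its own singleton box, and these boxes already use each edge only $O_{\epsilon,\eta}(1)$ times.

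The heart of the argument is handling the small tasks. Here I would use the resource augmentation: after scaling capacities to $(1+\eta)u_e$, reserve a horizontal-strip structure. The standard trick (as in the $(1+\epsilon)$-approximation for small tasks in~\cite{MW15_SAP} and in boxable-solution constructions generally) is to discretize the vertical axis of each maximal interval of "constant capacity type" into $O_{\epsilon,\eta}(1)$ height bands of geometrically or arithmetically spaced widths, and then, within the region not occupied by the large tasks, to round each small task's height level $h^*(i)$ up to the bottom of one of these bands. Because the small tasks satisfy $d_i\le\mu b(i)$ with $\mu$ much smaller than $\delta$ (Lemma~\ref{lem:gap} gives $\mu_k\le\epsilon^{10}\delta_k^{1/\epsilon}$), a Steinberg-type / next-fit-decreasing-height packing argument shows that within each band all the small tasks that were originally in that vertical range can be repacked into $O_{\epsilon,\eta}(1)$ sub-boxes spanning that band, where the $\eta$-augmentation absorbs the $O(\mu)$-fraction of wasted space per band and the fact that we round heights upward. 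The free space left by the large tasks, on each edge, is a union of at most $O(M/\delta)$ intervals, so the total number of boxes crossing any edge is $O_{\epsilon,\eta}(1)$; the condition $d_i\le\epsilon^8 d_B$ for boxes with $|T'|>1$ is met because small tasks are tiny relative to a band of width $\Theta_\eta(u_{\min})$.

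The main obstacle I anticipate is the interaction between the small and large tasks: the free space above/around the large tasks is not a single clean rectangle but a staircase-like region whose profile changes whenever a large task starts or ends, so one cannot simply cut it into $O(1)$ horizontal boxes globally. The fix is to cut the path $E$ at every start/end vertex of a large task — but there may be $\Omega(n)$ such vertices, so instead I would only cut at a constant number of them per edge in a charging sense: since each edge sees $O(M/\delta)$ large tasks, locally the profile has $O(M/\delta)$ distinct levels, and one forms boxes whose horizontal extent is a maximal interval on which the relevant large-task profile is constant. A small task entirely inside such an interval goes into the corresponding box; a small task straddling a "profile change" is the problematic case, and this is exactly where the $(1+\eta)$ augmentation is spent a second time: we shift such a straddling task slightly upward (by at most its own height $\le\mu b(i)$) so that it clears the profile change and fits into the higher of the two adjacent boxes, and a counting argument bounds the total displaced volume by $O(\mu)\cdot\mathrm{vol}\le\eta\cdot\mathrm{vol}$. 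Combining, every small task is placed, no task is dropped, so $w(T')\ge w(T^*)/(1+\epsilon)$ (the only loss being the middle-size tasks of Lemma~\ref{lem:gap}), and each edge is crossed by $O_{\epsilon,\eta}(1)$ boxes, giving the claimed $O_{\epsilon,\eta}(1)$-boxable solution.
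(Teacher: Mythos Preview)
Your overall architecture --- drop the medium tasks via Lemma~\ref{lem:gap}, box each large task individually (there are only $O(M/\delta)$ per edge), discretize the vertical axis into $O_{\epsilon,\eta}(1)$ horizontal bands, and push the small tasks into corridors between bands --- matches the paper's approach in spirit. The gap is in how you handle the interaction between small and large tasks inside a band.

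You keep the large tasks at their original heights and then try to carve the free space around them into boxes whose horizontal extent is a maximal interval of constant ``large-task profile''. The trouble is that, because large tasks are \emph{not} aligned to your bands, a large task can occupy only part of a band on some edges; a small task sitting in the remaining part of that band, with $P(i)$ extending past the large task's endpoint, is then in a region that does not decompose into a single rectangular box. Your fix --- shift such a ``straddling'' small task upward by at most its own height $\mu b(i)$ so that it ``clears the profile change'' --- does not work: clearing the large task would in general require shifting up by $\Theta(d_j)\ge\delta b(j)$, not by $O(\mu b(i))$, and moreover a purely vertical shift does nothing about the fact that $P(i)$ is not contained in the path of either of your two candidate boxes.

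The paper resolves this with one extra idea you are missing: it spends part of the $(1+\eta)$ augmentation on the \emph{large} tasks, shrinking each large task's demand by a $(1+O(\eta))$ factor and then snapping its top and bottom edges to the grid lines (Lemma~\ref{lem:low-boxes}(i)). Because every large task spans $\Omega(1/\eta)$ bands, a $(1+O(\eta))$ shrink frees up at least two full bands, enough to align both boundaries. Once the large tasks are grid-aligned, each band is, on every edge, either fully inside some large task or fully free; the free part of each band is then a disjoint union of maximal horizontal corridors, and any small task lying in that band (after the paper's shifting step (ii), which removes an $O(\epsilon)$-fraction of small tasks so none cross a band line) has its entire path inside exactly one such corridor. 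That is what makes the box count $O_{\epsilon,\eta}(1)$ per edge without any case analysis on straddling.
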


\section{\label{sec:Structural-lemma}Structural lemma, arbitrary capacities}

In this section we prove Lemma~\ref{lem:structure}. 
We first limit the number of large tasks per edge that can appear in a feasible solution.

\begin{lem}
    \label{lem:few-large-tasks}For each edge $e$ and each feasible solution $(T_\text{SOL},h_\text{SOL})$ it holds that 
    $|T_{\text{SOL}} \cap T_L \cap T_{e}| \le (\log u_e)/\delta^2$.
\end{lem}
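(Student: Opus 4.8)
\emph{Plan.} The naive approach -- on edge $e$ the selected tasks have total demand at most $u_e$, hence at most $u_e/\min_i d_i$ of them -- is too weak here, because a task being \emph{large} only means $d_i>\delta\,b(i)$ with $b(i)=\min_{e'\in P(i)}u_{e'}$, and the bottleneck $b(i)$ can be far below $u_e$. The idea is to charge each large task through $e$ to its bottleneck edge: I would partition the large tasks in $T':=T_{\text{SOL}}\cap T_L\cap T_e$ according to the order of magnitude of $b(i)$, within each magnitude class further split by whether the bottleneck edge lies to the left or to the right of $e$, and then observe that all tasks in one such part share a common edge, on which the packing constraint finally becomes effective.

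\emph{Key steps.} For each $i\in T'$ fix an edge $\beta(i)\in P(i)$ with $u_{\beta(i)}=b(i)$ (the leftmost such edge, to break ties); since $e\in P(i)$ we have $1\le b(i)\le u_e$. Group $T'$ into sets $G_j:=\{i\in T':u_e\delta^{j+1}<b(i)\le u_e\delta^j\}$ for $j=0,1,\dots$, of which there are at most $\lfloor\log_{1/\delta}u_e\rfloor+1$ nonempty ones. Split $G_j=G_j^L\,\dot{\cup}\,G_j^R$ according to whether $\beta(i)$ is (weakly) left of $e$ or strictly right. The crucial claim is $|G_j^L|,|G_j^R|\le 1/\delta^2$. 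For $G_j^L$: if it is nonempty, pick $i^*\in G_j^L$ whose bottleneck edge $g:=\beta(i^*)$ is rightmost; then for every $i\in G_j^L$ the edge $g$ lies on the subpath of $P(i)$ between $\beta(i)$ and $e$, so $g\in P(i)$. Hence all tasks of $G_j^L$ use $g$, and feasibility of $(T_{\text{SOL}},h_{\text{SOL}})$ on $g$ gives $\sum_{i\in G_j^L}d_i\le u_g=b(i^*)\le u_e\delta^j$; on the other hand each such $i$ has $d_i>\delta\,b(i)>\delta\cdot u_e\delta^{j+1}\ge\delta^2 u_g$, so $|G_j^L|<1/\delta^2$. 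The bound for $G_j^R$ is symmetric. Summing over the groups gives $|T'|\le 2(\lfloor\log_{1/\delta}u_e\rfloor+1)/\delta^2$, and since $\delta\le\epsilon$ is a small constant by Lemma~\ref{lem:gap}, the factor $2(\log_{1/\delta}u_e+1)$ is at most $\log u_e$, which yields the claimed bound $(\log u_e)/\delta^2$.

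\emph{Main obstacle.} The only non-routine point is realizing that, because ``large'' is defined relative to the bottleneck rather than to $e$ itself, the counting must be anchored at the bottleneck edges instead of at $e$; once one sees that an entire magnitude class on a fixed side of $e$ shares the edge $g=\beta(i^*)$, the packing constraint on $g$ does the rest, and what remains is merely bookkeeping of the geometric grouping and of the constants.
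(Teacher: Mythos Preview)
Your argument is correct, but it differs substantially from the paper's. The paper orders the tasks in $T' := T_{\text{SOL}}\cap T_L\cap T_e$ by their \emph{height levels} $h(i)$ and partitions them into consecutive blocks $S_1,S_2,\dots$ of size $1/\delta^2$. The key observation there is that for any $i\in S_{j+1}$, all tasks of $S_1,\dots,S_j$ lie strictly below $i$ on edge $e$, so $h(i)\ge d(S_1\cup\dots\cup S_j)$; combined with $b(i)\ge h(i)$ (from $h(i)+d_i\le u_{e'}$ for every $e'\in P(i)$) and $d_i>\delta\,b(i)$, this forces the minimum demand in $S_j$ to grow like $(1/\delta)^{j-1}$, capping the number of blocks at $\log_{1/\delta}u_e+1$.

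Your approach instead groups by the magnitude of the bottleneck and by the side of $e$ on which the bottleneck lies, then shows all tasks in one such class share a common edge $g$ on which the packing constraint bites. One nice feature of your route is that it never touches the height function: you use only the UFP-type inequality $\sum_{i\in T'\cap T_g} d_i \le u_g$, so your bound holds already for any UFP-feasible set of large tasks, not just SAP-feasible ones. The price is the extra factor~$2$ from the left/right split; but since both the paper's final inequality $(\log_{1/\delta}u_e+1)/\delta^2\le(\log u_e)/\delta^2$ and your $2(\log_{1/\delta}u_e+1)/\delta^2\le(\log u_e)/\delta^2$ are already being absorbed into the looseness of the stated bound (they need $\delta$ small and $u_e$ not tiny), this makes no real difference.
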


\begin{proof}
    We order
    the tasks $i\in(T_\text{SOL} \cap T_L \cap T_{e})$ by their height levels $h(i)$
    from bottom to top. We assume w.l.o.g.\ that $1/\delta^{2}$ is integer
    and split the tasks into blocks $S_{1},S_{2},\dotsc,S_{k}$ of size
    $|S_{j}|=1/\delta^{2}$ for $j\in[k-1]$ and $|S_{k}|\le1/\delta^{2}$,
    following the ordering. Since each tasks has a demand of at least
    $1$, $d(S_{1}):=\sum_{i\in S_{1}}d_{i}\ge1/\delta^{2}$. Therefore,
    for each $i\in S_{2}$, $h(i)\ge1/\delta^{2}$. The bottleneck of
    $i$ cannot be smaller than its height level, which implies $b(i)\ge1/\delta^{2}$.
    Since $i\in T_{L}$, we conclude that $d_{i}\ge1/\delta$.

    More generally, for some index $j$ let $i\in S_{j}$ be the task
    with minimal $d_{i}$. Then every tasks $i'\in S_{j+1}$ has a demand
    $d_{i'}\ge d_{i}/\delta$.
    We conclude that for each $j\in[k]$ and each $i\in S_{j}$, $d_{i}\ge(1/\delta)^{j-1}$.
    Since $d_{i}\le u_{e}$ for each task in $\OPT\cap T_{e}$, for every
    task $i\in S_{k}$ we have $u_{e}\ge d_{i}\ge1/\delta^{k-1}$. 
    We therefore obtain 
    $k \le \log_{1/\delta} u_e + 1 = \log{u_e}/\log{(1/\delta)} + 1$.
    Multiplying with the number of tasks per block, we conclude that
    $|T_{\text{SOL}} \cap T_L \cap T_{e}| \le  (\log{u_e}/\log{(1/\delta)} + 1)/\delta^2 \le (\log u_e)/\delta^2$.
\end{proof}

Consider an
optimal solution $(\OPT,h)$. Define $\OPT_{L}:=\OPT\cap T_{L}$ and
$\OPT_{S}:=\OPT\cap T_{S}$.
Since we assumed the maximum edge capacity to be quasi-polynomially
bounded, Lemma~\ref{lem:few-large-tasks} shows that each edge can be used by at most $1/\delta^2 (\log n)^{O(1)}$ large tasks in $\OPT$. 
Therefore, the tasks in $\OPT_{L}$ alone form a boxable solution. Since
our goal is to improve the approximation ratio of 2 in \cite{MW15_SAP}, we
are done if $w(\OPT_{L})\ge(\frac{1}{2}+\gamma)\OPT$ for some $\gamma>0$.
The reader may therefore imagine that $w(\OPT_{L})\le\OPT/2$ and
hence that $w(\OPT_{S})\ge\OPT/2$.

We partition the large tasks $\OPT_{L}$ into two groups. We define
$\OPT_{L,\down}:=\{i\in\OPT_{L}\mid h(i)<d_{i}\}$ and $\OPT_{L,\up}:=\{i\in\OPT_{L}\mid h(i)\ge d_{i}\}$.
In the next lemma we show that there is a boxable solution that contains
essentially all tasks in $\OPT_{S}\cup\OPT_{L,\up}$. 
\begin{lem}
    \label{lem:exists-boxed} For an arbitrary $0<\epsilon\le1/3$ there
    exists a boxable solution with profit at least $(1-\epsilon)w(\OPT_{S}\cup\OPT_{L,\up})$. 
\end{lem}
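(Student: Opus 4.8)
The plan is to take the optimal solution $(\OPT,h)$ restricted to the tasks $\OPT_S \cup \OPT_{L,\up}$ and show that these tasks can be reorganized into rectangular boxes so that only $(\log n)^{O(1)}$ boxes use any given edge. The key observation driving the construction is that every task $i$ in $\OPT_{L,\up}$ satisfies $h(i) \ge d_i$, and every small task is tiny compared to its bottleneck capacity; this is exactly the regime where the standard ``stacking'' argument for rounding up demand classes works. I would first group the tasks of $\OPT_S \cup \OPT_{L,\up}$ by the dyadic scale of their bottleneck capacity $b(i)$ (or, for small tasks, perhaps by the scale of the free space below them), so that within a group all tasks have comparable heights available, losing only a $(1+\epsilon)$-factor by discarding one carefully chosen scale class as in Lemma~\ref{lem:gap}-type averaging if needed.

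Within a single scale class, the plan is to partition the horizontal extent into $O(\log n)$ height-bands (one for each power of $1+\epsilon$ up to $\max_e u_e$, which is quasi-polynomially bounded, hence $(\log n)^{O(1)}$ bands) and assign each task to the band containing its height level $h(i)$. Because a task in $\OPT_{L,\up}$ has $h(i) \ge d_i$, rounding its height level up to the top of its band only increases it by a $(1+\epsilon)$ factor relative to $h(i) \ge d_i$, so it still fits under the capacity profile after we also shave demands by $1+\epsilon$; the standard trick is to create, for each maximal subpath on which a band is ``active'', a box of the appropriate height and dump all tasks of that band whose path lies inside that subpath into it. For the small tasks we use the third bullet of the definition of ``fits'': since $d_i \le \mu b(i) \le \epsilon^8 d_B$ for the enclosing box $B$ (choosing $\mu$ small enough relative to the band height, which Lemma~\ref{lem:gap} guarantees), an NFDH-style or greedy packing fits essentially all of them into the box with only an $\epsilon$-fraction loss. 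The number of boxes crossing a given edge is then at most (number of scale classes) $\times$ (number of bands) $\times O(1)$, which is $(\log n)^{O(c+1)}$, matching the $\beta = (\log n/\delta^2)^{O(c+1)}$ bound used later.

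The main obstacle I anticipate is the interaction between the small tasks and the large tasks of $\OPT_{L,\up}$ \emph{inside the same box}: a box of a given height band may contain both a large task that occupies the full band height and many small tasks, and naively these conflict. The clean fix is to forbid mixing — make each box contain either one large task or only small tasks (as the definition of $\beta$-boxable solution indeed requires, via ``if $T'_j \cap T_L \ne \emptyset$ then $|T'_j| = 1$''). So for each large task $i \in \OPT_{L,\up}$ we simply create its own singleton box $B_i$ with $P(B_i) = P(i)$ and $d_{B_i} = d_i$, placed at height level (a rounded version of) $h(i)$; the feasibility of $(\OPT,h)$ already certifies these singleton boxes are non-overlapping after the $(1+\epsilon)$ rounding. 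The remaining free space above the line at each height level — more precisely, in each active band not blocked by a large-task box — is where the small tasks go, and the argument is that the small tasks of $\OPT_S$ that lived there in $\OPT$ can be re-packed into $O(\log n)$ boxes per maximal free subpath. The only subtlety is bounding how the small tasks distribute across bands; here one uses that a small task at height $h(i)$ with demand $d_i$ occupies the single band of $h(i)$ up to rounding (since $d_i$ is negligible relative to the band height once $\mu \ll \epsilon$ times a band step), so it lands in exactly one band. Summing the per-edge box counts over all $(\log n)^{O(c)}$ bands and the $O(\log n)$ scale classes gives the claimed $\beta$.

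I would present the proof in that order: (1) reduce to a single scale class by averaging, (2) discretize height levels into $(\log n)^{O(c)}$ bands at cost $(1+\epsilon)$, using $h(i) \ge d_i$ crucially for the $\OPT_{L,\up}$ tasks, (3) create singleton boxes for large tasks and merge each band's small tasks on each maximal active subpath into one box, applying the $d_i \le \epsilon^8 d_B$ packing condition with an NFDH/greedy argument to absorb all but an $\epsilon$-fraction of the small profit, and (4) count boxes per edge to verify the $\beta$-boxable bound. The profit bound $(1-\epsilon) w(\OPT_S \cup \OPT_{L,\up})$ then follows since the only losses are the single discarded scale class (an $\epsilon$-fraction, absorbable by rescaling $\epsilon$) and the $\epsilon$-fraction of small tasks that fail to pack into their box.
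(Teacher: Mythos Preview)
Your plan has the right skeleton (scale classes, horizontal bands, singleton boxes for large tasks) but two of the load-bearing steps do not work as stated.

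First, the treatment of large tasks. You propose to round each $h(i)$ to a nearby power of $1+\epsilon$ and say this ``still fits under the capacity profile after we also shave demands by $1+\epsilon$''. Demands are fixed input data in SAP; you cannot shave them. And rounding heights alone does not preserve feasibility: if two large tasks $i,j$ with $P(i)\cap P(j)\ne\emptyset$ are stacked with $h(j)=h(i)+d_i$, rounding their heights independently to powers of $1+\epsilon$ can make them overlap, and rounding $h(i)$ up can push $h(i)+d_i$ above $b(i)$. The condition $h(i)\ge d_i$ does not by itself buy you the slack to absorb these errors. The paper's proof handles this very differently: it lays down a \emph{uniform} grid of boxes of height $\tau_1$ within each bottleneck scale class, truncates each large task so its top and bottom snap to grid lines (losing at most $2\tau_1$ of its demand), and then recovers the lost demand by scaling up the box heights class by class, using free space manufactured in advance by a shifting step. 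The property $h(i)\ge d_i>\delta b(i)$ is used there to guarantee that large tasks sit in high enough box-classes that the scaling has room to operate below them.

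Second, your geometric height-bands do not accommodate small tasks at low heights. A band at height roughly $(1+\epsilon)^k$ has height $\approx\epsilon(1+\epsilon)^k$, so the box size there is $\approx\epsilon h(i)$. For $d_i\le\epsilon^8 d_B$ you would need $\mu b(i)\le\epsilon^9 h(i)$, but a small task can sit at height $h(i)\ll b(i)$, so this fails. Relatedly, a large-task singleton box straddles many of your bands; the ``maximal active subpath'' boxes in those bands would overlap it, and you give no mechanism to cut them out while keeping the small tasks that were vertically adjacent to the large task. The paper avoids both issues by using uniform-height boxes (so every small task in the scale class is automatically $\epsilon^8$-small relative to its box), and by first shifting everything down into the slack region before aligning, rather than trying to make band boxes and singleton boxes coexist at their original positions.
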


\begin{proof}

    We start with some shifting arguments which repeatedly generate some
    slack. Let $\beta\in[1/\epsilon]$ be an offset. We define $\tau_{1}:=(1/\delta)^{k/\epsilon+\beta}$
    and $\tau_{2}:=(1/\delta)^{(k+1)/\epsilon+\beta}$ for some $k\in\Z$.
    Observe that Lemma~\ref{lem:gap} implies $\mu\tau_{2}\le\epsilon^{8}\tau_{1}$.
    We remove all tasks $i\in\OPT$ from $\OPT$ such that $b(i)\in[\tau_{1},3\tau_{1}/\delta^{3}+2)$.
    Also, we remove all tasks $i\in\OPT_{S}$ such that $[h(i),h(i)+d_{i})\cap[\tau_{1},3\tau_{1}/\delta^{3}+2)\ne\emptyset$.
    Additionally, we remove all tasks $i\in\OPT_{L}$ such that $h(i)\in[\tau_{1},3\tau_{1}/\delta^{3}+2)$
    or $h(i)+d_{i}\in[\tau_{1},3\tau_{1}/\delta^{3}+2)$ Furthermore,
    we remove each task $i\in(\OPT^{k}\cap T_{S})$ with $h(i)<\ell\cdot\tau_{1}<h(i)+d_{i}$
    for some $\ell\in\mathbb{N}$.

    Denote by $REM$ the resulting set of removed tasks. Via a shifting
    argument, there is an offset $\beta\in[1/\epsilon]$ such that $w(REM)\le O(\epsilon)w(\OPT)$.
    Slightly abusing notation, we refer to $\OPT$ as the original optimal
    solution without the tasks in $REM$ and the same respectively for
    $\OPT_{L}$ and $\OPT_{S}$.

    Let $k\in\Z$. We consider the tasks $i\in\OPT$ such that $b(i)\in[\tau_{1},\tau_{2})$
    and denote them by $\OPT^{k}$. Note that $b(i)\ge\tau_{1}/\delta^{3}+2$
    for each $i\in\OPT^{k}$. The reason is that for each task $i\in\OPT^{k}\cap(\OPT_{S}\cup\OPT_{L})$,
    we have $[h(i),h(i)+d_{i})\cap[\tau_{1},\tau_{1}/\delta^{3}+2)=\emptyset$.

    Additional to the structural implications, the removal yields some
    slack which we can use to untangle the interactions between the large
    and small tasks and group the small tasks into boxes. We will use
    the free space for three types of modification.

    For each $\ell\in\mathbb{N}$, let $E_{\ell,1},\dotsc,E_{\ell,s}$
    denote the sets of maximally long subpaths of $E$ such that $u(e)\ge(\ell+1)\cdot\tau_{1}$
    for each $e\in E_{\ell,s'}$ and each $s'\in[s]$.

    For each $\ell$ such that $1\le\ell<\tau_{2}/\tau_{1}$ and each
    $s'$, we define a box $B_{\ell,s'}$ with dimensions $\tau_{1}\times E_{\ell,s'}$.
    We place the box to height level $h(B_{\ell,s'})=\ell\cdot\tau_{1}$.

    (1) For the first type of modification,
    We assign to $B_{1,s'}$ all tasks $i\in\OPT^{k}\cap\OPT_{S}$ such
    that $P(i)\subseteq E_{1,s'}$ and $h(i)<\tau_{1}$.
    Note that each task in $\OPT^{k}$ with $h(i)\le\tau_{1}$ that we
    did not remove with the shifting argument is moved into one of the
    boxes. Furthermore, each moved task $i$ satisfies $d_{i}\le\mu\cdot\tau_{2}\le\epsilon^{8}d(B_{1,s'})$.

    (2) We continue with moving down tasks. 
    We move each task $i\in\OPT^{k}$ with $h(i)>\tau_{1}$ to a new
    position $h'(i)=h(i)-\tau_{1}$.

    After the modification, each $i$ is located entirely inside of some
    box and for each $i\in B_{\ell',s'}$, $d(i)\le\epsilon^{8}d(B_{\ell',s'})$.
    Observe that the set of removed tasks REM contains all small tasks
    crossing the boundary of some box. After moving the tasks, the previously
    empty boxes $B_{2/\delta^{2}-1,s'}$ may be used and all boxes $B_{\ell,s'}$
    for $\ell\in\{2,3,\dotsc,2/\delta^{2}-2\}$ are still empty.

    (3) We finally have to process the large tasks $\OPT^{k}\cap T_{L}$.
    Our plan is to prune the large tasks and to grow them to the original
    size using the remaining slack.
    We round $h'(i)$ to the next larger multiple of $\tau_{1}$ 
    and set $d'_{i}$ to the next smaller multiple of $\tau_{1}$.
    We then remove an additional $\tau_{1}$ from $d'_{i}$.
    Observe that after the pruning, the upper and lower side of $i$
    are located exactly at two boundaries of boxes and the tasks do not
    overlap with other tasks. The value $d_{i}$ is reduced by less than
    $2\tau_{1}$.
    We aim to use the remaining slack obtained by the shifting argument
    in order to scale the large tasks back to their original sizes. The
    most intuitive scaling would multiply all box sizes by some factor
    that creates sufficient space for scaling up the large tasks. There
    is, however, not enough slack to use the same factor for all boxes.
    Instead, we partition the boxes into classes and scale by a factor
    depending on the class. Intuitively, a box with a large index cannot
    contain large tasks with a small bottleneck. For large tasks with
    a large bottleneck, however, a smaller scaling factor is sufficient.

    We define classes $\mathcal{B}_{1},\mathcal{B}_{2},\dotsc,\mathcal{B}_{1/\delta}$
    of boxes such that $\mathcal{B}_{j}$ contains all boxes $B_{\ell,s'}$
    with $(1/\delta)^{j-1}\tau_{1}\le\ell<(1/\delta)^{j}\tau_{1}$. Observe
    that no box in $\mathcal{B}_{1}$, $\mathcal{B}_{2}$ or $\mathcal{B}_{3}$
    contains tasks from $T_{L}\cap\OPT^{k}$.

    Scaling up the size of a box $B_{\ell,s'}$ with $\ell>3/\delta^{3}$
    by a factor $1+\alpha$ means to define a new demand $d'$ with $d'_{B}=(1+\alpha)d_{B}$.
    Scaling a class $\mathcal{B}_{j}$ means to scale all $B\in\mathcal{B}_{j}$
    and additionally to change the height $h(B)$ to $h'(B)=h(B)-\alpha d_{B}$
    for all $B\in\bigcup_{3\le j'\le j}\mathcal{B}_{j'}$. Similarly,
    we set $h'(i)=h(i)-\alpha d_{B}$ for all $i\in B$ with $B\in\bigcup_{3\le j'\le j}\mathcal{B}_{j'}$.
    Intuitively, the scaling moved down all boxes and tasks under the
    scaled box, using the free space within the class $1$, $2$ and $3$
    boxes, and a few of the class $4$ boxes.

    For all $j\ge4$, we scale $\mathcal{B}_{j}$ by a factor $1+3 \cdot \delta^{j-2}$.
    We claim that after the scaling, we can add a demand of $2\tau_{1}$
    to each task from $\OPT^{k}\cap T_{L}$ without overlapping with other
    tasks. Let $i\in\OPT^{k}\cap T_{L}$ be a large tasks with $b_{i}\in[(1/\delta)^{j-1}\tau_{1},(1/\delta)^{j}\tau_{1})$.
    Then the truncated version $i$ spans at least $(1/\delta)^{j-2}-2$
    boxes and each box is located in a class $B_{j'}$ with $j'\le j$.
    Since the scaling monotonously decreases with increasing $j$, each
    box is scaled by a factor at least $1+3 \cdot \delta^{j-2}$, leaving an
    additional space $((1/\delta)^{j-2}-2)\tau_{1}\cdot(3 \cdot \delta^{j-2})\ge2\tau_{1}$

    We finally argue that the overall scaling of all classes only consumes
    the free space. Note that there are $1/\delta-3$ classes $\mathcal{B}_{j}$
    which we have scaled. The scaling of a class $j$ leads to an overall
    increase of demands per edge of less than the scaling factor minus
    one, multiplied with the total demand of $\mathcal{B}_{j}$, which
    is $3 \cdot \delta^{j-2}\cdot(1/\delta)^{j}\tau_{1}\le3\tau_{1}/\delta^{2}$.
    The total demand over all blocks is therefore less than $3\tau_{1}/\delta^{3}$,
    which is exactly the remaining free space.
\end{proof}

Intuitively, if now $w(\OPT_{L,\up})\ge\gamma\OPT$ for some $\gamma>0$
then $w(\OPT_{S}\cup\OPT_{L,\up})\ge(\frac{1}{2}+\gamma)\OPT$ and
we are done, due to Lemma~\ref{lem:compute-boxed} and Lemma~\ref{lem:exists-boxed}.
The reader therefore may imagine that $w(\OPT_{L,\up})=0$ and $w(\OPT_{S})=\OPT/2$
and hence also $w(\OPT_{L,\down})=\OPT/2$.

Next, we define
solutions that either consist of $\OPT_{L,\down}$ or of a subset
of $\OPT_{L,\down}$ and additionally some small tasks. We will prove
that one of the constructed sets or $\OPT_{S}\cup\OPT_{L,\up}$ has
large profit.

In the sequel, we will identify the vertices $\{v_{1},\dotsc,v_{|V|}\}$
of $(V,E)$ with the coordinates $1,\dotsc,|V|$ and a path $P$ between
vertices $v_{i},v_{i'}$ with the closed interval $[i,i']$. For each
task $i\in\OPT$ define its rectangle $R_{i}:=P(i)\times[h(i),h(i)+d_{i}]$.
We will identify a task $i$ with its rectangle $R_{i}$.

We define $(\log n)^{O_{\delta}(1)}$ corridors. We draw a horizontal
line $\ell^{(k)}$ with $y$-coordinate $y=(1+\delta)^{k}$ for each
$k\in\N$. For each $k\in\N$ we define the area $\mathbb{R}\times[\ell^{(k)},\ell^{(k+1)})$
to be the \emph{corridor} $C_{k}$. Consider a task $i\in\OPT_{L}$.
Observe that the rectangle $R_{i}$ has to be intersected by at least
one line $\ell^{(k)}$ since $d_{i}>\delta\cdot b(i)$. Also, observe
that for each edge $e$ and each corridor $C_{k}$ there can be at
most two tasks $i,i'\in\OPT_{L}$ whose respective paths $P(i),P(i')$
use $e$ and whose respective rectangles $R_{i},R_{i'}$ intersect
$C_{k}$. If there are two such task $i,i'$ then for one of them
its rectangle must intersect $\ell^{(k)}$ and for the other its rectangle
must intersect $\ell^{(k+1)}$.

\begin{figure}
    \begin{centering}
        \includegraphics[scale=0.66]{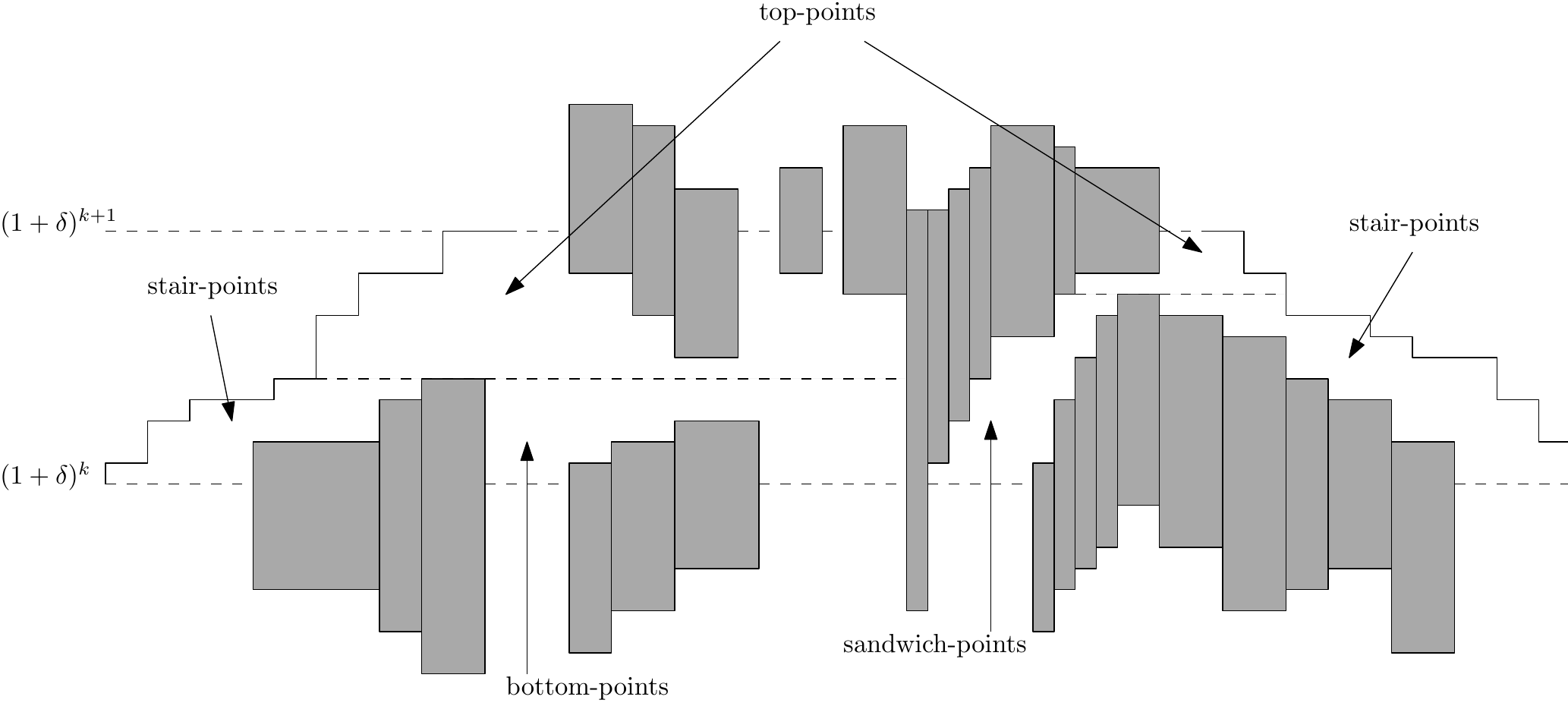} 
        \par\end{centering}
    \caption{\label{fig:point-types}The different types of points within a corridor
        $C_{k}$. Note that the small tasks are not shown in the figure.}
\end{figure}
Let $C_{k}$ be a corridor. For a task $i$ with $R_{i}\cap C_{k}$
we say that $i$ is a \emph{top-task for $C_{k}$ }if $h(i)\in[(1+\delta)^{k},(1+\delta)^{k+1})$,
a \emph{bottom-task for $C_{k}$} if $h(i)+d_{i}\in[(1+\delta)^{k},(1+\delta)^{k+1})$,
and a \emph{cross-task for $C_{k}$} if $h(i)<(1+\delta)^{k}$ and
$h(i)+d_{i} \ge (1+\delta)^{k+1}$. We partition the area of $C_{k}$
that is not used by large tasks into connected components of points. 
For each point $p$, let $\ell_{p}$ denote the maximally
long horizontal line segment that contains $p$ and that neither crosses
a large task nor the capacity profile. We say that $p$ is a \emph{top-point
}, if each endpoint of $\ell_{p}$ touches a top-task or the capacity
profile; $p$ is a \emph{sandwich-point }, if one of the end-points
of $\ell_{p}$ touches a top-task and the other end-point touches
a bottom-task; $p$ is a \emph{stair-point }, if one end-point of
$\ell_{p}$ touches a bottom-task and the other end-point touches
the capacity profile; and $p$ is a \emph{bottom-point}, if both
end-points of $\ell_{p}$ touch a bottom-task, see Fig.~\ref{fig:point-types}.
In the sequel, we are interested in connected components in $C_{k}$
of top-, sandwich-, stair-, and bottom-points. 
\begin{lem}
    \label{lem:few-components-edge}Each edge $e$ can be used by at most
    one connected component of top-points within a corridor $C_{k}$.
    The same statement holds for connected components of sandwich-, stair-,
    and bottom-points. 
\end{lem}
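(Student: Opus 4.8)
The plan is to argue by contradiction, using the geometric description of each point-type in terms of the left and right endpoints of the maximal horizontal segments $\ell_p$. Fix an edge $e$ and a corridor $C_k$, and suppose that $e$ is used by two distinct connected components $X$ and $Y$ of, say, top-points (the arguments for sandwich-, stair-, and bottom-points are entirely analogous, only swapping in the corresponding endpoint conditions). Pick points $p\in X$ and $q\in Y$ whose segments $\ell_p$ and $\ell_q$ both contain the $x$-coordinate of $e$; without loss of generality $\ell_p$ lies below $\ell_q$ at that $x$-coordinate. Since $X\neq Y$, the region between $\ell_p$ and $\ell_q$ above $e$ cannot be entirely free space (otherwise $p$ and $q$ would be connected through the vertical strip over $e$), so there must be a large task $i$ whose rectangle $R_i$ separates them over $e$, i.e.\ with $h(i)$ above $\ell_p$ and $h(i)+d_i$ below $\ell_q$ at $e$ — in particular $R_i\cap C_k\neq\emptyset$, so $i$ is a top-, bottom-, or cross-task for $C_k$.

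The crux is then to trace the right endpoint of $\ell_p$ and the left endpoint of $\ell_q$ (or symmetrically) and derive a contradiction with the defining property of top-points. Recall from the discussion preceding the lemma that for a fixed edge $e$ and corridor $C_k$ there are at most two large tasks using $e$ and intersecting $C_k$, one touching $\ell^{(k)}$ from below and one touching $\ell^{(k+1)}$ from above; this severely limits how the separating task $i$ can be placed. The key step is to show that the separating large task $i$ forces one of the two ``outer'' endpoints of $\ell_p$ or $\ell_q$ to touch a task that is \emph{not} a top-task (it will be a bottom-task or the task $i$ used as a cross/bottom-task for the corridor), contradicting the requirement that every endpoint of a top-point's segment touches a top-task or the capacity profile. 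Concretely, I would follow the segment $\ell_p$ rightward until it first hits an obstacle; if that obstacle is the large task $i$ separating $p$ from $q$, then its lower edge sits at height $h(i)+d_i$, which lies strictly inside $[(1+\delta)^k,(1+\delta)^{k+1})$ only if $i$ is a bottom-task for $C_k$ — and then $\ell_p$'s endpoint touches a bottom-task, contradicting that $p$ is a top-point. If instead $i$ is a cross-task or a task whose lower edge is below the corridor, a short case analysis on which of $\ell^{(k)},\ell^{(k+1)}$ it crosses, combined with the at-most-two-large-tasks-per-(edge, corridor) bound, shows that $p$ and $q$ must in fact be connected, contradicting $X\neq Y$.

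For the sandwich-, stair-, and bottom-point cases, the same skeleton applies with the obvious substitutions: a sandwich-point's segment has one endpoint on a top-task and one on a bottom-task, a stair-point's has one on a bottom-task and one on the capacity profile, and a bottom-point's has both on bottom-tasks. In each case, if two components of that type both use $e$ in $C_k$, a large task must separate representative segments over $e$, and examining which side(s) of that task the segments abut — again using that at most two large tasks touch $(e,C_k)$ and they touch $\ell^{(k)}$ and $\ell^{(k+1)}$ respectively — yields either a violation of the endpoint condition defining the point type or an actual connection of the two components. The main obstacle I anticipate is bookkeeping the case distinctions cleanly: one must handle separately whether the separating large task is a top-, bottom-, or cross-task for $C_k$, and whether the relevant endpoints of $\ell_p,\ell_q$ point toward each other or away, and argue that in every combination one reaches a contradiction; the at-most-two-large-tasks bound is what keeps this finite and, I expect, makes each individual case short.
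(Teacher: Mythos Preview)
Your plan contains a genuine gap at the very first step, and in fact the detour through a ``separating large task'' is never needed.

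You write that if the vertical strip over $e$ between $p$ and $q$ is entirely free space then $p$ and $q$ would lie in the same component. But the components in question are components of \emph{top-points}, not of free space: to conclude that $p$ and $q$ are connected you must show that every free point on that vertical segment is itself a top-point. That is exactly the content of the lemma, so this step is circular. The paper's proof simply asserts (correctly) that every point on the segment from $p$ to $p'$ is a top-point; once that is established there is nothing more to do.

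Moreover, the separating large task you posit cannot exist. You require $h(i)>y_p\ge(1+\delta)^k$ and $h(i)+d_i\le y_q<(1+\delta)^{k+1}$, i.e.\ $R_i$ lies entirely inside $C_k$; but the paper already observed that every large task is crossed by some line $\ell^{(k')}$, so no large task can have both its top and bottom edge strictly inside a single corridor. Hence your case analysis on the type of $i$ is vacuous, and the subsequent tracing of $\ell_p$ toward $i$ makes no sense either (at height $y_p<h(i)$ the segment $\ell_p$ does not meet $i$ at all).

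What actually needs to be argued --- and this is what the paper's one-line proof is implicitly using --- is the following. For any free point $r$ on the vertical segment with $y_p\le y_r\le y_{p'}$, look at what $\ell_r$ hits on (say) the right. If it hits a bottom-task $j$, then $h(j)<(1+\delta)^k\le y_p$ and $h(j)+d_j>y_r\ge y_p$, so $j$ is also present at height $y_p$ and at least as close to $e$; thus $\ell_p$ would end at a bottom-task, contradicting that $p$ is a top-point. If $\ell_r$ hits any task not present at height $y_p$, that task has $h(\cdot)>y_p\ge(1+\delta)^k$ and (since it cannot sit entirely inside $C_k$) $h(\cdot)+d(\cdot)\ge(1+\delta)^{k+1}$, so it is a top-task. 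Capacity-profile endpoints only move inward as $y$ increases. Hence $\ell_r$'s right endpoint touches a top-task or the profile; symmetrically on the left; so $r$ is a top-point. The sandwich, stair, and bottom cases follow by the analogous monotonicity.
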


\begin{proof}
    By contradiction, let us assume that edge $e\in E$ has two points
    $p,p'$ in $C_{k}$ such that $p$ and $p'$ are points of distinct
    components of top points. Then each point on the line segment connecting
    $p$ with $p'$ is also a top point, contradicting that $p$ and $p'$ are in distinct
    components of top points. For connected components of sandwich-, stair-, and bottom-points
    the claim can be shown similarly.
\end{proof}
We go through the different types of points and describe what we do
with the small tasks contained in their respective connected components.
First, let $\OPT_{S,\cross}\subseteq\OPT_{S}$ denote the tasks $i\in\OPT_{S}$
such that $i$ intersect at least two different connected components,
e.g., a connected component of top-points and a connected component
of sandwich-points, or such that $i$ is a cross-task for a corridor
$C_{k}$.

\begin{lem}
    \label{lem:few-crossing-tasks}Each edge is used by at most
    $(\log n)^{O(1)}/\log(1+\delta)$ different tasks in $\OPT_{S,\cross}$. 
\end{lem}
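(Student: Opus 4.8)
The plan is to bound the number of tasks in $\OPT_{S,\cross}$ through a single edge $e$ by charging each such task to a quantity that we can control. Fix an edge $e$. Recall that the corridors $C_k$ partition the plane into $(\log n)^{O_\delta(1)}$ horizontal strips, since heights range in $[1, \max_e u_e] \subseteq [1, n^{(\log n)^c}]$ and consecutive corridor boundaries differ by a factor of $1+\delta$; hence there are at most $O_\delta(c \log^2 n)/\log(1+\delta) = (\log n)^{O(1)}/\log(1+\delta)$ corridors intersecting any fixed vertical line above $e$. So it suffices to argue that $e$ is used by at most a constant number of tasks of $\OPT_{S,\cross}$ whose rectangles intersect a fixed corridor $C_k$.

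First I would dispose of the cross-tasks for $C_k$: a task $i \in \OPT_S$ that is a cross-task for $C_k$ has $h(i) < (1+\delta)^k$ and $h(i)+d_i \ge (1+\delta)^{k+1}$, so $d_i \ge \delta(1+\delta)^k \ge \delta \cdot b(i) / (1+\delta)$ on any edge $e'$ in its path with $u_{e'} < (1+\delta)^{k+1}$; more carefully, since $i$ is small, $d_i \le \mu\, b(i)$, and combining with $d_i > \delta(1+\delta)^k$ forces $b(i) > (\delta/\mu)(1+\delta)^k$, which is huge. Then the vertical extent $[h(i), h(i)+d_i)$ of such a task, having length $\ge \delta (1+\delta)^k$, together with the fact that tasks through $e$ are pairwise vertically disjoint on $e$, means at most $O(1/\delta)$ such cross-tasks can pass through $e$ inside the window of $C_k$ — actually at most one whose rectangle properly spans $C_k$ and is counted to $C_k$. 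I would assign each cross-task for $C_k$ to the corridor it spans and observe that the disjointness of their vertical intervals on $e$ caps the total over all corridors directly at the number of corridors, i.e. $(\log n)^{O(1)}/\log(1+\delta)$.

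The main work is the tasks $i \in \OPT_{S,\cross}$ that intersect at least two distinct connected components within some corridor $C_k$. The key point is that by Lemma~\ref{lem:few-components-edge}, each edge $e$ lies in at most one connected component of each of the four types (top, sandwich, stair, bottom) within $C_k$; so above $e$ inside $C_k$ there are at most four relevant components. A task $i$ through $e$ intersecting two distinct components must therefore cross a boundary between two of these $\le 4$ components along the edge $e$, or cross the boundary of a large task or the capacity profile that separates them — but in all cases its vertical interval on $e$ must contain a point that is the height of such a boundary, and there are only $O(1)$ such boundary heights on $e$ within $C_k$ (the tops/bottoms of the at most two large tasks of $\OPT_L$ through $e$ meeting $C_k$, plus the two corridor lines $\ell^{(k)},\ell^{(k+1)}$, plus the capacity value $u_e$). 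Since the tasks through $e$ have pairwise disjoint vertical intervals on $e$, at most $O(1)$ of them can contain any fixed one of these $O(1)$ heights; hence at most $O(1)$ tasks of $\OPT_{S,\cross}$ through $e$ intersect two distinct components within $C_k$. Summing over all corridors gives the bound $(\log n)^{O(1)}/\log(1+\delta)$.

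The step I expect to be the main obstacle is making precise the claim that a task intersecting two distinct connected components of $C_k$ along $e$ must have, in its vertical interval on $e$, one of the $O(1)$ designated ``boundary heights'': one has to rule out the possibility that the two components are separated only along a different edge of $P(i)$ and that along $e$ the task sits entirely inside one component. I would handle this by noting that a task's rectangle is the product $P(i) \times [h(i), h(i)+d_i)$, so its intersection with $C_k$ restricted to the column over $e$ is $[\max(h(i),(1+\delta)^k), \min(h(i)+d_i,(1+\delta)^{k+1}))$, a single interval; if this interval met only one component over $e$ but $R_i$ met two components of $C_k$ overall, then moving horizontally within $R_i$ from $e$ one would pass from one component to another, which forces $R_i$ itself (hence its interior) to cross the separating large task or capacity profile — and that crossing happens at some edge $e''\in P(i)$, where again $[h(i),h(i)+d_i)$ must straddle a boundary height on $e''$; re-running the counting with $e''$ in place of $e$ and charging $i$ to $e''$ (which also lies on $P(i)$) closes the argument, at the cost of a harmless constant factor.
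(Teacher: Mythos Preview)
Your approach coincides with the paper's: bound the number of corridors by $(\log n)^{O(1)}/\log(1+\delta)$, then argue via Lemma~\ref{lem:few-components-edge} that each edge carries only $O(1)$ tasks of $\OPT_{S,\cross}$ per corridor. The paper compresses this to two lines; you spell it out. Two points deserve correction.

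First, a minor one: your enumeration of the boundary heights above $e$ in $C_k$ is not correct. The type of a point $p$ depends on the endpoints of $\ell_p$, and those can touch large tasks that do not use $e$ at all; so the type above $e$ can change at the top or bottom of a large task lying entirely to one side of $e$. What actually gives the $O(1)$ bound is Lemma~\ref{lem:few-components-edge} directly: at most one component of each of the four types can contain a point over $e$, hence at most four component-intervals above $e$ in $C_k$ and at most three transitions between them.

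Second, the obstacle you flag does not exist, and your proposed patch would not work. You worry that $R_i$ might lie in a single component above $e$ while meeting a second component only above some other edge $e'' \in P(i)$. This cannot happen: for any height $y$ with $(e,y) \in R_i \cap C_k$, the horizontal segment $P(i) \times \{y\}$ lies in $R_i$ and hence crosses neither a large task nor the capacity profile, so $\ell_{(e,y)} = \ell_{(e',y)}$ for every $e' \in P(i)$, and the type at height $y$ is the same over all of $P(i)$. Thus within $R_i \cap C_k$ the type depends only on the height, and if $R_i$ meets two components in $C_k$ it already does so in the column above $e$; your counting then finishes with no further work. Separately, your patch of charging $i$ to $e''$ is not a valid rescue anyway: the lemma asks for the number of $\OPT_{S,\cross}$-tasks through a \emph{fixed} edge $e$, and moving the charge to a different edge of $P(i)$ does not bound that quantity.
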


\begin{proof}
    Since for each edge $e\in E$, $u_{e}\le n^{(\log n)^{O}(1)}$, there
    are at most $\log_{1+\delta}n^{(\log n)^{O}(1)}=\frac{(\log n)^{O(1)}}{\log(1+\delta)}$
    many corridors. Now the claim follows from Lemma~\ref{lem:few-components-edge}. 
\end{proof}
In particular, $\OPT_{L,\down}\cup\OPT_{S,\cross}$ forms a boxable
solution. Intuitively, if $w(\OPT_{S,\cross})\ge\gamma\OPT$ then
this solutions is at least as good as $(\frac{1}{2}+\gamma)\OPT$.
Therefore, the reader may imagine $w(\OPT_{S,\cross}) = 0$.

\paragraph{Top-points.}

Let $C\subseteq C_{k}$ be a connected component of top-points in
$C_{k}$. Let $\OPT_{S}(C)\subseteq\OPT_{S}$ denote the small tasks
in $\OPT_{S}$ that are contained in $C$. Let $h$ denote the minimum
$y$-coordinate of a point in $C$. For each edge $e$ that is used
by $C$ let $h_{e}$ be the value $h(i)$ of the task $i\in\OPT_{L}$
such that $e\in P(i)$ and $h(i)\in[(1+\delta)^{k},(1+\delta)^{k+1})$
and we define $h_{e}:=\min\{(1+\delta)^{k+1},u_{e}\}$ if there is
no such task $i$ (observe that there can be at most one such task
$i$). Observe that the tasks in $\OPT_{S}(C)$ form a feasible solution
for the SAP-instance defined on all edges used by $C$ such that each
edge $e$ has capacity $h_{e}-h$. Therefore, on $\OPT_{S}(C)$ we
apply the following lemma. 
\begin{lem}
    \label{lem:boxing-solutions}Given a solution $(T',h')$ for a SAP instance
    where $\max_{e}u_{e}\le n^{(\log n)^{c}}$ for some constant $c$.
    Then there are sets $T'_{L}\subseteq T'$ and $T'_{S}\subseteq T'$
    with $w(T'_{L}\cup T'_{S})\ge(1-O(\epsilon))w(T')$ and there is an $\eta = O_{\epsilon,\delta}(1)$ such that 
    \begin{enumerate}
        \item for each edge $e$ it holds that $|T'_{L}\cap T_{e}|\le (\log n)^{O_{\epsilon,\delta}(c)}$, and 
        \item for each task $i\in T'_{L}$ there is an edge $e\in P(i)$ with $d_{i}\ge\eta u_{e}$,
        \item there is a boxable solution for $T'_{S}$ in which each edge $e$
            is used by at most $(\log n)^{O_\epsilon(c)}$ boxes, 
        \item these boxes form groups of laminar sets of boxes.
    \end{enumerate}
\end{lem}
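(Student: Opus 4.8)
The plan is to process the given solution $(T',h')$ in a sequence of shifting and rounding steps, each time losing only an $O(\epsilon)$-fraction of the profit, until the remaining tasks split naturally into a set $T'_L$ of tasks that are large relative to some edge on their path and a set $T'_S$ of tasks that are small relative to every edge on their path and that can be packed into few boxes. First I would classify tasks by their \emph{bottleneck capacity} $b(i)=\min_{e\in P(i)}u_e$. Using a shifting argument over $O_\epsilon(1)$ many capacity scales (grouping bottleneck values into geometric buckets of ratio $(1/\delta)^{1/\epsilon}$, as in the proof of Lemma~\ref{lem:exists-boxed}), I can discard an $O(\epsilon)$-fraction of the profit so that no surviving task has bottleneck in a chosen "forbidden" range, which creates vertical slack on every edge whose capacity exceeds the corresponding threshold. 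The surviving tasks whose demand exceeds $\eta u_e$ for some $e\in P(i)$ (for a suitable $\eta=O_{\epsilon,\delta}(1)$, smaller than $\delta$ but still constant) go into $T'_L$; all other surviving tasks go into $T'_S$. Property~2 holds by definition of $T'_L$. For Property~1, I would invoke Lemma~\ref{lem:few-large-tasks} (or rather its proof, which shows that a chain of tasks stacked on one edge has demands growing geometrically) applied with the parameter $\eta$ in place of $\delta$: since each capacity is at most $n^{(\log n)^c}$, at most $\log_{1/\eta}(n^{(\log n)^c})=(\log n)^{O_{\epsilon,\delta}(c)}$ such tasks can use a single edge.

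For Property~3, the job is to box the tasks in $T'_S$. Here I would mimic the box-construction in the proof of Lemma~\ref{lem:exists-boxed}: within each bottleneck bucket $[\tau_1,\tau_2)$ I draw horizontal box boundaries at multiples of $\tau_1$ over the maximal subpaths of edges whose capacity is large enough, and the shifting slack guarantees that no task in $T'_S$ of that bucket crosses a box boundary (those that would have been removed). Since a task in $T'_S$ has demand at most $\eta$ times its bottleneck, it fits comfortably inside one box stratum and satisfies the $d_i\le\epsilon^8 d_B$ requirement of a box after a further constant rescaling. The number of boxes crossing a fixed edge $e$ is bounded by (number of buckets with threshold below $u_e$) times (number of box strata per bucket) $= O_\epsilon(1/\epsilon)\cdot (\log n)^{O_\epsilon(c)}$, giving the claimed $(\log n)^{O_\epsilon(c)}$ bound; I should double-check that the height-levels of the boxes, coming from different buckets, can coexist without overlap, which again follows from the disjointness of the capacity ranges they live in.

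For Property~4, I need the boxes to organize into groups of laminar sets. The subpaths $E_{\ell,s'}$ used to define the boxes within a single bucket are nested across strata $\ell$ precisely because $E_{\ell+1,s'}\subseteq E_{\ell,s'}$ (an edge supporting a higher box also supports all lower ones), and across buckets the higher-bucket subpaths are contained in lower-bucket ones for the same reason; so the full collection of box-paths is laminar, and I would cut it into groups along the maximal elements of this laminar family. The main obstacle I anticipate is making the three types of modification from the proof of Lemma~\ref{lem:exists-boxed} coexist cleanly here: we are no longer free to move the large tasks (they belong to a solution we only want to read off, and in fact we are \emph{discarding} the original large tasks since the ones we keep in $T'_L$ are defined purely by the $\eta u_e$ threshold, not by their position), so I must argue that after deleting all original large tasks the small tasks can be pushed down into boxes using only the shifting slack, without the large tasks as "anchors." I expect this to work because the slack argument is local to each edge and each bucket, but verifying that the box boundaries line up globally — in particular that a single small task with a long path lies inside one box rather than straddling boundaries induced by different edges on its path — is the delicate point, and it is exactly here that one uses that $f$, or rather the box-boundary structure, is forced to be consistent by always cutting at the \emph{global} multiples of $\tau_1$ rather than at positions depending on individual edges.
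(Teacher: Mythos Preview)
Your treatment of Properties~1 and~2 is fine and matches the paper: set $T'_L$ to be those $i$ with $d_i\ge\eta u_e$ for some $e\in P(i)$, and the geometric-growth argument of Lemma~\ref{lem:few-large-tasks} bounds $|T'_L\cap T_e|$.

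The gap is in the box construction for Properties~3 and~4. You import the bottleneck-bucket machinery of Lemma~\ref{lem:exists-boxed}, producing within each bucket a stack of boxes of \emph{uniform} height~$\tau_1$. That yields laminar paths, but it does \emph{not} yield a ``laminar set of boxes'' in the paper's formal sense, which demands that the box sizes be exactly $(1+\epsilon)^k$ for consecutive integers $k$ and that each box of size $(1+\epsilon)^k$ sit directly on top of one of size $(1+\epsilon)^{k-1}$. Your uniform $\tau_1$-strata (and the geometric jump by $(1/\delta)^{1/\epsilon}$ between buckets) do not satisfy this. Since Property~4 is what lets the later algorithm of Lemma~\ref{lem:compute-laminar} process the boxes, this is a real structural mismatch, not just a cosmetic issue.

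The paper's proof avoids the bucketing entirely and is simpler. After removing $T'_L$, every remaining task $i$ satisfies $d_i\le\eta\, b(i)$. Now build a \emph{single} tower of boxes over the whole edge set: $B_0$ at the bottom with $d_{B_0}=1$, and above it $B_j$ with $d_{B_j}\approx(1+\epsilon)^{j}$, stacked so that $h(B_{j+1})=h(B_j)+d_{B_j}$. A standard random-shift argument frees an $O(\epsilon)$-strip in each $B_j$; using this slack one slides each small task into a box whose height exceeds $d_i/\epsilon^{10}$ (here $\eta$ is chosen small enough that such a box exists below the task's bottleneck), and then shrinks the boxes to exact sizes $(1+\epsilon)^j$. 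Finally, each box $B_j$ is cut horizontally into maximal pieces lying under the capacity profile. The resulting pieces automatically have sizes $(1+\epsilon)^j$ and nested paths, so they form laminar sets of boxes by construction; and the number of levels crossing any edge is $O(\log u_e/\log(1+\epsilon))=(\log n)^{O_\epsilon(c)}$. Your worry about large tasks as ``anchors'' is also a red herring here: once $T'_L$ is removed, there are no large tasks left to accommodate, which is precisely why the simpler single-tower construction suffices.
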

\begin{proof}
    Let $T'_L$ be the set of tasks $i$ such that $d_{i}\ge\delta u_{e}$.
    Analogous to the proof of Lemma~\ref{lem:few-large-tasks}, also the first property holds.
    Let $T'' := T' \setminus T'L$.
    We show how to find a subset of $T'_S \subseteq T''$ such that $T'_S$ satisfies the last two properties and $w(T'_S) \ge (1-O(\epsilon))w(T'')$. 
    We are left with an instance where all tasks have demand $d_i \le \eta b(i)$ for each task $i \in T''$. 
    If there are edges $e \in E$ with $u_e = 0$, the instance decomposes into independent sub-instances.
    We therefore assume w.l.o.g.\ that there is no edge $e$ with $u_e = 0$ and find a single laminar set of boxes with height function $\bar{h}$ as follows.

    Our first step is to create an overestimating profile of boxes.
    There is a box $B_0$ with $P(B_0) = E$, $h(B_0) = 0$ and $d_{B_0} = 1$.
    Then, for each $j > 0$, we define the box $B_j$ such that 
    $d_{B_j} = (1+\epsilon)^{j-1} \cdot (1+3\epsilon)$ and $\bar{h}_{B_j} = \sum_{j' < j} ((1+\epsilon)^{j'-1}\cdot (1+3\epsilon))$, and $P(B_{j}) = E$.
    The boxes thus partition the profile into strips.

    We next ensure that no box $B$ has tasks with demand larger than $\epsilon^{10} d_B$.
    Let $T''$ be the set of tasks $i \in T'$ such that drawn at $h'$ it touches a box $B$ with $d_B \ge d_i/\epsilon^{10}$.
    Using a random shift argument, for each box $B$ we free a strip of size $2\epsilon d_B$ of weight at most $O(\epsilon)w(T'')$.
    We use half of the free strips to move up tasks from $T' \setminus T''$.
    Observe that for a box $B_\ell$, $\epsilon d_{B_\ell} > h(B_{\ell'})$ for all $\ell' < \ell - O(\log(1/\epsilon)/\log(1+\epsilon))$.
    If there is an $i \in B_{\ell'}$ with bottleneck at least $d_{B_\ell}$ and $P(B_{\ell}) \cap B_{\ell}(i)$ overlap, we can therefore move $i$ up at least until the free space of $B_{\ell}$.
    By a suitable choice of $\eta$, we can ensure that we can move all tasks $i \in T' \setminus T''$ up into free strips such that afterwards, each task $i$ touches a box
    with demand at least $d_i/\epsilon^{10}$.
    We use the other half to move tasks down within each box $B$ such that for no task $i$, $h(i) <h(B) + d_B < h(i) + d_i$.

    We still have the problem that the profiles of the boxes may exceed the capacity profile.
    To solve the problem, we remove a strip of demand $2\epsilon$ from each box in the same way as we did above.
    We then shrink each box $B_j$ to the size $(1+\epsilon)^j$ and adjust $h(B)$ such that for each $j$, $B_{j+1}$ has height level $h(B_{j+1}) = h(B_j)+d_{B_j}$.
    Note that after the shrinking, the boxes have the demands and heights required for a laminar set of boxes.
    For each $j$, we have moved down the value $h(B_j) + d_{B_j}$  of each box $B_j$ by at least $2\epsilon \cdot \sum_{j' \le j} (1+\epsilon)^j > d(B_j)$.
    Each task is still not larger than $\epsilon^8 d_B$ if it touches $B$.
    We cut each box $B_j$ into boxes $\B_j = \{B_{j,1},\dotsc,B_{j,k_j}\}$ such that each $B_{j,k'}$ is entirely within the capacity profile and $P(B_{j,k'})$ is maximal.
    Now each task $i$ is entirely contained in some box $B$ since moving up $i$ by $d_B$ would exceed the bottleneck of $i$.
\end{proof}

Let $\OPT_{S,\mathrm{top}}\subseteq\OPT_{S}$ denote all tasks in
$\OPT_{S}$ that are contained in a connected component of top-points
in some corridor. We apply Lemma~\ref{lem:boxing-solutions} to each
such connected component and obtain sets $\OPT_{S,\mathrm{top},L},\OPT_{S,\mathrm{top},S}$.
Let $\OPT'_{S,\mathrm{top}}$ be the set of larger weight among $\OPT_{S,\mathrm{top},L}$
and $\OPT_{S,\mathrm{top},S}$ and then $\OPT'_{S,\mathrm{top}}\cup\OPT_{L,\down}$
is a boxable solutions. Intuitively, if $w(\OPT'_{S,\mathrm{top}})\ge\gamma\OPT$
then this solutions is at least as good as $(\frac{1}{2}+\frac{\gamma}{2}-\epsilon)\OPT$.
Therefore, the reader may imagine that $w(\OPT'_{S,\mathrm{top}})=0$.

\paragraph{Bottom-points. }

Let $\OPT_{S,\bottom}\subseteq\OPT_{S}$ denote all tasks in $\OPT_{S}$
that are contained in a connected component of bottom-points in some
corridor. The case is symmetric to the case of top-points in the sense
that if we mirrored $\OPT$ along the $y$-axis then the tasks in
each connected component $C$ of bottom-points would form the solution
to a suitably defined SAP-instance. On each such component $C$ we
can apply Lemma~\ref{lem:boxing-solutions} and define $\OPT_{S,\bottom,L},$
$\OPT_{S,\bottom,S}$, and $\OPT'_{S,\bottom}$ similarly. The reader
may therefore imagine that also $w(\OPT'_{S,\bottom})=0$.

\paragraph{Sandwich-points. }

Next we define a boxable solution containing the small tasks in connected
components of sandwich-points. To this end, imagine that for each
corridor $C_{k}$ we flip a coin, i.e., we define a random variable
$p_{k}\in\{\mathrm{top},\mathrm{bottom}\}$ with $\mathrm{Prob}[p_{k}=\mathrm{top}]=\mathrm{Prob}[p_{k}=\mathrm{bottom}]=1/2$.
If $p_{k}=\mathrm{top}$ then we delete all tasks $i\in\OPT_{L,\down}$
such that the bottom edge of $R_{i}$ is contained in $C_{k}$, i.e.,
such that $h(i)\in[(1+\delta)^{k},(1+\delta)^{k+1})$. If $p_{k}=\mathrm{bottom}$
then we delete all tasks $i\in\OPT_{L,\down}$ such that the top edge
of $R_{i}$ is contained in $C_{k}$, i.e., such that $h(i)+d_{i}\in[(1+\delta)^{k},(1+\delta)^{k+1})$.
Let $\OPT'_{L,\down}\subseteq\OPT_{L,\down}$ denote the tasks $i\in\OPT_{L,\down}$
that we do not delete. Each task $i\in\OPT_{L,\down}$ is thus \emph{not
}deleted with probability $1/4$. 
\begin{pro}
    There are values $\{p_{k}\}_{k\in\N}$ such that for the resulting
    set $\OPT'_{L,\down}$ it holds that 
    \[
        w(\OPT'_{L,\down})\ge\frac{1}{4}w(\OPT{}_{L,\down})\,. 
    \]
\end{pro}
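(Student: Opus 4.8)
The proposition is a first-moment argument over the random choice of the $p_k$'s, followed by derandomization by averaging. Before starting, I would observe that only finitely many corridors matter: since $u_e \le n^{(\log n)^c}$ for every edge $e$, every task $i$ satisfies $h(i)+d_i \le b(i) \le n^{(\log n)^c}$, so only the (quasi-polynomially many) corridors $C_k$ with $(1+\delta)^k \le n^{(\log n)^c}$ can meet a rectangle $R_i$; the coin flips of all other corridors are irrelevant, and we work in the resulting finite probability space in which each relevant $p_k$ is uniform in $\{\mathrm{top},\mathrm{bottom}\}$ and all of them are mutually independent.

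For each task $i \in \OPT_{L,\down}$ let $k_b(i)$ and $k_t(i)$ be the indices of the corridors containing the bottom edge of $R_i$ (at height $h(i)$) and the top edge of $R_i$ (at height $h(i)+d_i$), respectively. The key claim is that $k_b(i) < k_t(i)$, in particular $k_b(i) \ne k_t(i)$. Recall that, since $i$ is large, $R_i$ is crossed by some line $\ell^{(j)}$, i.e.\ $h(i) \le (1+\delta)^j \le h(i)+d_i$, which already forces $k_b(i) \le j \le k_t(i)$. If we had $k_b(i) = k_t(i) = j$, then $(1+\delta)^j \le h(i) \le (1+\delta)^j$ gives $h(i) = (1+\delta)^j$, and the bound $h(i)+d_i < (1+\delta)^{j+1}$ gives $d_i < \delta(1+\delta)^j = \delta h(i)$; but feasibility of $\OPT$ gives $h(i)+d_i \le b(i)$, hence $d_i > \delta\, b(i) \ge \delta\bigl(h(i)+d_i\bigr) > \delta h(i)$, a contradiction. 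This handling of the borderline case, where $h(i)$ sits exactly on a corridor boundary, is the only step that is not completely mechanical, and even it is routine; everything else is bookkeeping.

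Now, by the description of the deletion rule, task $i$ is deleted precisely when $p_{k_b(i)} = \mathrm{top}$ or $p_{k_t(i)} = \mathrm{bottom}$; equivalently, $i \in \OPT'_{L,\down}$ iff $p_{k_b(i)} = \mathrm{bottom}$ and $p_{k_t(i)} = \mathrm{top}$. Since $k_b(i) \ne k_t(i)$, these two events depend on disjoint, hence independent, coin flips and each occurs with probability $1/2$, so $\mathrm{Prob}[i \in \OPT'_{L,\down}] = 1/4$. By linearity of expectation,
\[
    \mathbb{E}\!\left[w(\OPT'_{L,\down})\right] = \sum_{i \in \OPT_{L,\down}} w_i \cdot \mathrm{Prob}[i \in \OPT'_{L,\down}] = \tfrac14\, w(\OPT_{L,\down}).
\]
Hence at least one outcome of the $p_k$'s attains value at least $\tfrac14\, w(\OPT_{L,\down})$, and fixing such an outcome yields the desired values $\{p_k\}_{k \in \N}$.

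If a fully deterministic construction is preferred, I would instead fix the relevant $p_k$'s one at a time by the method of conditional expectations, choosing each value so as not to decrease the conditional expectation of $w(\OPT'_{L,\down})$; after all of them are fixed the conditional expectation is exactly $w(\OPT'_{L,\down})$ and it is still at least $\tfrac14\, w(\OPT_{L,\down})$. Either way, the main (and essentially only) point to get right is the corridor-disjointness of the top and bottom edges of every downward large task, which is precisely the claim proved in the second paragraph.
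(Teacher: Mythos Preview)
Your argument is correct and is exactly the paper's approach; the paper merely asserts (without proof) that each task in $\OPT_{L,\down}$ survives with probability $1/4$ and then takes expectation, so you have supplied the details the paper omits.

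Two small remarks. First, $k_b(i)$ is not defined when $h(i)$ lies below the lowest corridor boundary (e.g.\ $h(i)=0$, which is allowed); but then no $p_k=\mathrm{top}$ rule can delete $i$, so $i$ survives with probability $1/2\ge 1/4$ and the inequality is only helped. Second, the corridor-disjointness step has a shorter version using the ``down'' condition directly rather than the ``large'' condition: if $k_b(i)=k_t(i)=k$ then $d_i<\delta(1+\delta)^k\le\delta\,h(i)<\delta\,d_i$ (the last inequality since $h(i)<d_i$ for $i\in\OPT_{L,\down}$), contradicting $\delta<1$.
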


Consider a corridor $C_{k}$. Let $C$ be a connected component of
sandwich-points in $C_{k}$ (according our definition based on $\OPT_{L,\down}$,
rather than $\OPT'_{L,\down}$). Let $\OPT_{S}(C)$ denote all small
tasks contained in $C$. Assume that $p_{k}=\bottom$, the other case
being symmetric. Let $h$ denote the minimum $y$-coordinate of a
point in $C$. Let $x_{\min}$ and $x_{\max}$ denote the minimum
and maximum $x$-coordinates of points in $C$. Then $x_{\min}$ and
$x_{\max}$ correspond to vertices $v_{\min},v_{\max}$. Like above,
for each edge $e$ that is used by $C$ let $h_{e}$ be the value
$h(i)$ of the task $i\in\OPT_{L,\down}$ such that $e\in P(i)$ and
$h(i)\in[(1+\delta)^{k},(1+\delta)^{k+1})$ and we define $h_{e}:=(1+\delta)^{k+1}$
if there is no such task $i$ (observe that there can be at most one
such task $i$). Then the tasks in $\OPT_{S}(C)$ form a feasible
solution for the SAP-instance defined on all edges used by $C$ such
that each edge $e$ has capacity $h_{e}-h$. We apply Lemma~\ref{lem:boxing-solutions}
to the tasks in $\OPT_{S}(C)$ and obtain two sets $\OPT_{S,S}(C),\OPT_{S,L}(C)$.
Let $\C_{\sw}$ denote the set of connected components of sandwich-points.
We define $\OPT_{S,\mathrm{sw},S}:=\bigcup_{C\in\C_{\sw}}\OPT_{S,S}(C)$
and $\OPT_{S,\mathrm{sw},L}:=\bigcup_{C\in\C_{\sw}}\OPT_{S,L}(C)$.
Then $\OPT_{L,\down}\cup\OPT'_{S,\mathrm{sw},L}$ forms a boxable solution.
Therefore, the reader may imagine that $w(\OPT{}_{S,\mathrm{sw},L})=0$.
Then $\OPT'_{L,\down}\cup\OPT_{S,\mathrm{sw},S}$ forms a boxable solution.
Therefore, if $w(\OPT_{S,\mathrm{sw},S})\ge(\frac{3}{8}+\gamma)\OPT$
then this solution has a profit of at least $w(\OPT'_{L,\down})+w(\OPT_{S,\mathrm{sw},S})\ge\frac{1}{8}\OPT+(\frac{3}{8}+\gamma)\OPT\ge(\frac{1}{2}+\gamma)\OPT$.
Therefore, the reader may imagine that $w(\OPT_{S,\mathrm{sw},S})\le\frac{3}{8}\OPT$
and therefore $w(\OPT_{S,\rest})\ge\frac{1}{8}\OPT$ where $\OPT_{S,\rest}:=\OPT_{S}\setminus\left(\OPT_{S,\cross}\cup\OPT_{S,\mathrm{top}}\cup\OPT_{S,\bottom}\cup\OPT_{S,\mathrm{sw}}\right)$.

\begin{figure}
    \begin{centering}
        \includegraphics[scale=0.66]{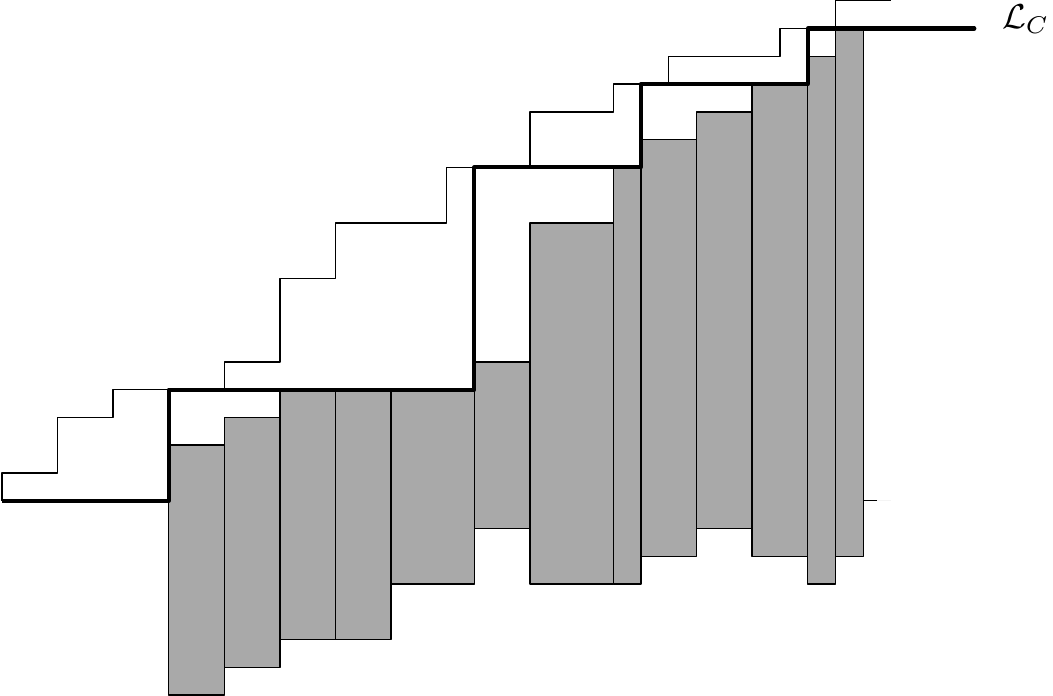} 
        \par\end{centering}
    \caption{\label{fig:line-segments}The line segments $\protect\L_{C}$ that
        partition a connected component of stair points, denote by the bold
        sequence of line segments. Note that the small tasks are not shown
        in the figure.}
\end{figure}

\paragraph{Stair-points. }

Let $C\subseteq C_{k}$ be a connected component of stair-points within
some corridor $C_{k}$. Assume that for each point $p\in C$ we have
that the left endpoint of $\ell_{p}$ touches the capacity profile
and the right endpoint touches a large task. We define a sequence
of line segments $\L_{C}$ that partition $C$, see Fig.~\ref{fig:line-segments}.
We start with the bottom-left point of $C$, denoted by $p_{0}$.
Suppose inductively that we defined $2k'$ points for some $k'\in\N_{0}$.
Let $p_{2k'}=(x_{2k'},y_{2k'})$ denote the last defined point. We
define $p_{2k'+1}=(x_{2k'+1},y_{2k'+1})$ to be the rightmost point
of the form $(x,y_{2k'})$ such that the line segment connecting $p_{2k'}$
and $(x'_{k},y_{k})$ does not cross a large task. Then we define
$p_{2k'+2}=(x_{2k'+1},y_{2k'+1})$ to be the top-most point of the
form $(x_{2k'+1},y)$ such that the line segment connecting $p_{2k'+1}$
and $(x'_{2k'+1},y)$ lies within $C$ and neither crosses the capacity
profile nor the line $\mathbb{R}\times(1+\delta)^{k+1}$, i.e., the
top-boundary of $C_{k}$. We finish once we arrived at a point $p_{k''}$
such that the next point $p_{k''+1}$ would be identical to $p_{k''}$.
We do a symmetric construction for connected components $C'\subseteq C_{k}$
such that the right endpoint of each $\ell_{p}$ touches the capacity
profile and the left endpoint touches a large task.

For each component $C$ within some corridor $C_{k}$ let 
\[
    \OPT_{S,\nocross}^{(k)}(C)\subseteq\OPT_{S,\rest}\cap\OPT_{S}^{(k)}(C)
\]
denote all small tasks in $\OPT_{S,\rest}\cap\OPT_{S}^{(k)}(C)$ that
are not crossed by a vertical line segment in $\L_{C}$. For those
we apply Lemma~\ref{lem:boxing-solutions} and obtain sets $\OPT_{S,\nocross,L}^{(k)}(C),\OPT_{S,\nocross,S}^{(k)}(C)$.
Let 
\[\OPT_{S,\nocross,L}:=\bigcup_{k}\bigcup_{C}\OPT_{S,\nocross,L}^{(k)}(C)
\]
and 
\[
    \OPT_{S,\nocross,S}:=\bigcup_{k}\bigcup_{C}\OPT_{S,\nocross,S}^{(k)}(C)
\]
and let $\OPT'_{S,\nocross}$ denote the more profitable set among
$\OPT_{S,\nocross,L}$ and $\OPT_{S,\nocross,S}$. Note that $\OPT_{L,\down}\cup\OPT'_{S,\nocross}$
forms a boxable solution. Therefore, the reader may imagine that 
\[
    w(\OPT'_{S,\nocross})=w(\OPT_{S,\nocross,L})=w(\OPT_{S,\nocross,S})=0\,.
\]

For each connected component $C$ of stair points let $\OPT_{S,\stair}^{(k)}(C)\subseteq\OPT_{S}^{(k)}(C)$
denote all small tasks in $C$ that are crossed by a vertical line
segment in $\L_{C}$. Let $\OPT_{S,\stair}:=\bigcup_{k}\bigcup_{C}\OPT_{S,\stair}^{(k)}(C)$. 
\begin{lem}
    We have that $\OPT_{L,\down}\cup\OPT_{S,\stair}$ forms a $(\log n/\delta^2)^{O(c+1)}$-stair-solution. 
\end{lem}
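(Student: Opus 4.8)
The plan is to produce one stair-block per vertical line segment of each staircase $\L_C$, ranging over all corridors $C_k$ and all connected components $C$ of stair-points in $C_k$, and to assemble these into a stair-solution together with the large tasks of $\OPT_{L,\down}$. Fix such a $C\subseteq C_k$ and a vertical segment $v$ of $\L_C$ at $x$-coordinate $x_v$. By the construction of $\L_C$, $v$ runs along the left side of a large ``pillar'' task $i_v\in\OPT_{L,\down}$ (a bottom-task for $C_k$), and, together with the pillars that the staircase has already passed, these large tasks form the lower boundary, below $v$, of the region of $C$ to which $v$ belongs. I would set $\SB_v:=(e_L,e_M,e_R,f,Q_v,h)$ where: $e_M$ is the edge immediately left of $x_v$; $Q_v\subseteq\OPT_{L,\down}$ is the set of pillars just described, carrying its $\OPT$-heights; $e_L$ is an edge left of $e_M$, near $v$, whose capacity is at most the smallest $y$-coordinate of $C$; $e_R$ is an edge near $v$ to the right; $f$ is the step-function that on each edge $e$ right of $e_M$ equals the top of the highest pillar of $Q_v$ through $e$ (and $u_{e_L}$ if there is none), and $u_{e_L}$ on $e_M$ and to its left; and $h$ is the restriction of the $\OPT$-height assignment. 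I then set $T_L^0:=\OPT_{L,\down}$ and every $T_L^j:=\emptyset$, and I assign each small task of $\OPT_{S,\stair}$ to the set $T_S^v$ of the stair-block $\SB_v$ whose vertical segment crosses it. The key point that makes this well-defined is that each task of $\OPT_{S,\stair}^{(k)}(C)$ is crossed by exactly one vertical segment $v$ of $\L_C$, lies strictly above the pillar $i_v$, and lies strictly below the top endpoint of $v$: otherwise it would have to poke through the capacity profile, through a pillar, through the corridor boundary, or out of $C$, and none of these is possible for such a task. Hence the sets $T_S^v$ form a partition of $\OPT_{S,\stair}$, and each such task is confined to a narrow height band, a fact I will use repeatedly.

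Next I would verify the four requirements of Definition~\ref{def:stair-solution}. That $T_S^v$ fits into $\SB_v$ is almost immediate once the parameters are fixed: every task of $T_S^v$ is small, it uses $e_M$ because it crosses $v$, its $\OPT$-height is at least $u_{e_L}$ because it lies inside $C$ and $C$ starts at height $u_{e_L}$, and $Q_v\cup T_S^v$ with the $\OPT$-heights is feasible because it is a sub-multiset of $\OPT$; the condition ``$h''(i)<d_i$ for large tasks in $T''$'' is vacuous since no large task is put into any $T_S^v$. Pairwise compatibility, the requirement that every $i\in T_L^0=\OPT_{L,\down}$ be part of or compatible with every stair-block, and the fact that each $\SB_v$ is a genuine step-bounded stair-block all reduce to the observation that all the objects involved (pillars, small tasks, their heights) sit inside the single feasible placement $\OPT$, together with the pinning observation of the previous paragraph: a pillar shared by $Q_v$ and $Q_{v'}$ keeps its common $\OPT$-height; a pillar of one but not the other is a large task at its $\OPT$-height, disjoint from the other stair-block's region and hence compatible with it; a task of $T_L^0$ not in $Q_v$ is likewise disjoint in $\OPT$ from the region of $\SB_v$; and no task can fit into two stair-blocks $\SB_v,\SB_{v'}$ of the same component, because a task that can be placed above the pillar $i_v$ and below the top of $v$ (recall the vertical move defining $v$ stopped at the capacity profile, a pillar, or the corridor top, which pins its feasible heights) cannot also be placed in the disjoint band associated with $v'$.

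It remains to bound $\gamma$. Since $u_e\le n^{(\log n)^c}$ for every edge, $\log u_e\le(\log n)^{c+1}$, so there are at most $(\log n)^{c+1}/\log(1+\delta)=(\log n/\delta)^{O(c+1)}$ corridors. By Lemma~\ref{lem:few-components-edge} each edge lies in at most one connected component of stair-points per corridor; and with $e_L,e_R$ chosen near $v$ (say at the neighbouring vertical segments of $\L_C$, or at the ends of $C$), each edge lies on the path $P(\SB_v)$ for only $O(1)$ of the stair-blocks of a given component, or alternatively one may bound directly the number of relevant pillars using a fixed edge via Lemma~\ref{lem:few-large-tasks}. Multiplying, every edge lies on $P(\SB_v)$ for at most $(\log n/\delta^2)^{O(c+1)}$ stair-blocks; each edge lies on $P(i)$ for at most $(\log u_e)/\delta^2\le(\log n/\delta^2)^{O(c+1)}$ tasks $i\in T_L^0$ directly by Lemma~\ref{lem:few-large-tasks}; and each step-function $f$ has at most one step per pillar of $Q_v$, hence at most $(\log u_e)/\delta^2$ steps, again by Lemma~\ref{lem:few-large-tasks}. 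Therefore $\OPT_{L,\down}\cup\OPT_{S,\stair}$ is a $(\log n/\delta^2)^{O(c+1)}$-stair-solution.

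The main obstacle I expect lies in the bookkeeping compressed into the first paragraph: choosing the edges $e_L,e_M,e_R$ and the step-function $f$ of each $\SB_v$ so that the abstract two-part area of a stair-block --- the part above $u_{e_L}$ to the left of $e_M$ and the part in $[f_e,u_{e_M})$ to the right of $e_M$ --- matches the concrete region of $C$ that $\L_C$ cuts out around $v$; in particular, finding an edge near $v$ of small enough capacity to serve as $e_L$ (so that the per-edge count of stair-blocks stays polylogarithmic, since $P(\SB_v)$ is otherwise long) and checking the formal conditions imposed on a stair-block's large tasks $Q_v$ (namely $h'(i)+d_i\le u_{e_L}$ and the existence of a witness edge right of $e_M$ with $h'(i)+d_i>f_e$). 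One must also treat the mirror-image case in which the capacity profile touches the right rather than the left end of the horizontal segments $\ell_p$, which is handled by the convention in the definition allowing $e_R$ to lie to the left of $e_L$.
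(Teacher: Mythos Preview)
Your approach is essentially the paper's: one stair-block per vertical segment of $\L_C$, small tasks assigned by which vertical segment crosses them, and $T_L^0=\OPT_{L,\down}$. The paper resolves precisely the bookkeeping you flag as the obstacle by making the following concrete choices. For the vertical segment $\ell$, let $\ell'$ be the horizontal segment of $\L_C$ immediately to its left; the left endpoint of $\ell'$ lies on the capacity profile (that is what makes these stair-points), so taking $e_L$ to be the edge just right of that endpoint gives $u_{e_L}$ equal to the height of $\ell'$. The edge $e_R$ is taken to be the rightmost edge used by any small task of $T'_S$. Finally, $T'_L$ is defined as the set of all $i\in\OPT_{L,\down}$ whose rectangle crosses the horizontal line segment $[v,v_R]\times\{u_{e_L}\}$ (equivalently $h(i)\le u_{e_L}\le h(i)+d_i$ and $P(i)$ meets $P_{e_M,e_R}$). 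This last point differs slightly from your $Q_v$: the paper's $T'_L$ may contain bottom-tasks lying to the right of $e_M$ that the staircase has not yet passed, not only the pillars already encountered; this broader choice is what makes the compatibility check with other stair-blocks go through via the corridor case analysis ($\bar k>k$, $\bar k<k$, $\bar k=k$). With these choices in hand, your outline is the paper's proof.
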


\begin{proof}
    Consider
    a vertical line segment $\ell$ in a set $\L_{C}$ for a connected
    component $C$ of stair-points. Assume that for each point $p\in C$
    we have that the left endpoint of $\ell_{p}$ touches the capacity
    profile and the right endpoint touches a large task. We define a stair
    block $\SB=(e_{L},e_{M},e_{R},f,T'_{L},h')$. We define $T'_{S}$
    to be the tasks in $\OPT_{S,\stair}^{(k)}$ that cross $\ell$. Let
    $v$ denote the vertex that represents the $x$-coordinate of $\ell$.
    We define $e_{M}$ to be the edge on the left of $v$. Let $\ell'$
    denote the horizontal line segment in $\L_{C}$ on the left of $\ell$.
    Let $(v',y)$ denote the left endpoint of $\ell'$. We define $e_{L}$
    to be the edge on the right of $v'$. We define $e_{R}$ to be the
    rightmost edge that is contained in the path $P(i)$ of a task $i\in T'_{S}$.
    We define $T'_{L}$ to be all tasks $i\in\OPT{}_{L,\down}$ with $h(i)\le u_{e_{L}}$
    and $h(i)+d_{i}\ge u_{e_{L}}$ and such that $P(i)$ intersects the
    path between (and including) $e_{M}$ and $e_{R}$ (i.e., the tasks
    that intersect the line segment $[v,v_{R}]\times\{u_{e_{L}}\}$ where
    $v_{R}$ is the right vertex of $u_{e_{R}}$).

    We do this operation for each vertical line segment in each set $\L_{C}$
    for each corridor $C_{k}$. We argue that the defined stair-blocks
    are pairwise compatible. A large task $i$ can be contained in the
    set $T'_{L}$ of a stair-block of a corridor $C_{k}$ only if its
    topmost edge is contained in $C_{k}$. Hence, by construction a large
    task can participate in at most one of the stair-blocks defined for
    $C_{k}$. Hence, $T'_{L}\cap\bar{T}'_{L}=\emptyset$ for any two defined
    stair-block $SB=(T'_{S},T'_{L},e_{L},e_{M},e_{R},h)$ and $\overline{SB}=(\bar{T}'_{S},\bar{T}'_{L},\bar{e}_{L},\bar{e}_{M},\bar{e}_{R},h)$.
    Also, by construction we have that $T'_{S}\cap\bar{T}'_{S}=\emptyset$.
    Suppose that there is a task $i\in\bar{T}'_{L}$ such that it holds
    that $P(i)\cap P(SB)\ne\emptyset$. Assume that $\overline{SB}$ was
    defined in a corridor $C_{\bar{k}}$ and that $SB$ was defined in
    a corridor $C_{k}$. If $\bar{k}>k$ then $h(i)\ge u_{e_{M}}$ since
    otherwise the point $(s_{i},h(i)+\eta)$ for some $\eta>0$ is not
    a stair point and thus a point in the line segment $\ell$ defining
    $SB$ is not a stair point either. If $\bar{k}<k$ then $h(i)+d_{i}\le(1+\delta)^{k'+1}\le u_{e_{M}}$.
    If $\bar{k}=k$ then $P(i)\cap P(SB)$ must lie completely on the
    left of $e_{M}$ and by construction of the line segments $\L_{C}$
    it holds that $h(i)+d_{i}\le u_{e_{L}}$.

    Consider a stair-block $SB=(T'_{S},T'_{L},e_{L},e_{M},e_{R},h)$ defined
    in a corridor $C_{k}$ and a large task $i\in\OPT_{L,\down}\setminus T'_{L}$.
    The task $i$ is compatible with $SB$ since if $P(i)\cap P(SB)\ne\emptyset$
    then if $h(i)<u_{e_{M}}$ and $h(i)+d_{i}>u_{e_{L}}$ then $i\in T'_{L}$. 
\end{proof}
Based on our definitions above we define the following solutions:
Let $\OPT^{(0)}:=\OPT_{L}$ which is a boxable solution due to Lemma~\ref{lem:few-large-tasks}.
Let $\OPT^{(1)}$ denote the boxable solution due to Lemma~\ref{lem:exists-boxed}
whose profit is at least $(1-\epsilon)w(\OPT_{S}\cup\OPT_{L,\up})$.
Whenever we applied Lemma~\ref{lem:boxing-solutions} above we obtained
a set of relatively large tasks (e.g., $\OPT{}_{S,\bottom,L}$) and
a set of relatively small tasks (e.g., $\OPT{}_{S,\bottom,S}$). We
can combine all corresponding sets of relatively large tasks and the
tasks in $\OPT_{S,\cross}$ (see Lemma~\ref{lem:few-crossing-tasks})
to one boxable solution $\OPT^{(2)}:=\OPT_{L,\down}\cup\OPT_{S,\cross}\cup\OPT{}_{S,\mathrm{top},L}\cup\OPT{}_{S,\bottom,L}\cup\OPT{}_{S,\sw,L}\cup\OPT_{S,\nocross,L}$.
Similarly, we can combine all corresponding sets of relatively large
tasks to one boxable solution $\OPT^{(3)}:=\OPT_{L,\down}\cup\OPT{}_{S,\mathrm{top},S}\cup\OPT{}_{S,\bottom,S}\cup\OPT_{S,\nocross,S}$.
Also, we have that $\OPT^{(4)}:=\OPT'_{L,\down}\cup\OPT{}_{S,\mathrm{sw},S}$
is a boxable solutions. Finally, we have that $\OPT^{(5)}:=\OPT_{L,\down}\cup\OPT_{S,\stair}$
forms a stair-solution. 

To prove the claimed approximation ratio, suppose first that $w(\OPT_{S,\stair})\ge \frac{1}{\alpha} \cdot\opt$
and therefore, $w(\OPT_{S,\stair})\ge\frac{1}{\alpha}w(\OPT_{L,\down})$.
On the one hand, consider the following the following LP

\begin{alignat*}{1}
    \min z\\
    x_{SSWL}+x_{SNCS}+x_{SNCL}+x_{SZ}+x_{SCr}+x_{SBS}+x_{SBL}+&\\
    x_{STS}+x_{STL}+x_{SSWS}+x_{LU}+x_{LD} & =1\\
    x_{LU}+x_{LD} & \le z\\
    x_{SCr}+x_{SBS}+x_{SBL}+x_{STS}+x_{STL}+x_{SSWS}+x_{SSWL}+x_{SNCS}+x_{SNCL}+x_{SZ}+x_{LU} & \le z\\
    x_{LD}+x_{SCr}+x_{STL}+x_{SBL}+x_{SSWL}+x_{SNCL} & \le z\\
    x_{LD}+x_{STS}+x_{SBS}+x_{SNCS} & \le z\\
    x_{LD}/4+x_{SSWS} & \le z\\
    x_{LD}+\frac{1}{8(\alpha+1)}x_{SZ} & \le z\\
    x_{SZ} & \ge \frac{1}{\alpha}\\
    x_{SSWL},x_{SNCS},x_{SNCL},x_{SZ},x_{SCr},x_{SBS},x_{SBL},x_{STS},x_{STL},x_{SSWS},x_{LU},x_{LD},z & \ge0
\end{alignat*}
where the interpretation is that the worst case is that the optimal
solution $\OPT$ satisfies the equalities $x_{LT}=w(\OPT_{L,\up})/\opt$, $x_{LD}=w(\OPT_{L,\down})/\opt$,
$x_{SBS}=w(\OPT_{S,\bottom,S})/\opt$, etc. For $\alpha=8.33$, the
above LP has a feasible solution given by
$x_{SCr} = x_{SBS} = 1/625$,
$x_{SBL} = x_{STS} = x_{STL} = x_{SSWL} = x_{SNCS} = x_{SNCL} = 0$,
$x_{SSWS} = 193/500$,
$x_{SZ}=7503/62500$,
$x_{LU}=389/250000$,
$x_{LD}=124799/250000$,
with objective value between $0.500804$ and $0.500805$.
On the other hand, if $w(\OPT_{S,\stair})<\alpha\cdot\opt$ then we
consider the following LP.

\begin{alignat*}{1}
    \min z\\
    x_{SSWL}+x_{SNCS}+x_{SNCL}+x_{SZ}+x_{SCr}+x_{SBS}+x_{SBL}+ &\\
    x_{STS}+x_{STL}+x_{SSWS}+x_{LU}+x_{LD} & =1\\
    x_{LU}+x_{LD} & \le z\\
    x_{SCr}+x_{SBS}+x_{SBL}+x_{STS}+x_{STL}+x_{SSWS}+x_{SSWL}+x_{SNCS}+x_{SNCL}+x_{SZ}+x_{LU} & \le z\\
    x_{LD}+x_{SCr}+x_{STL}+x_{SBL}+x_{SSWL}+x_{SNCL} & \le z\\
    x_{LD}+x_{STS}+x_{SBS}+x_{SNCS} & \le z\\
    x_{LD}/4+x_{SSWS} & \le z\\
    x_{SZ} & \le\frac{1}{\alpha}\\
    x_{SSWL},x_{SNCS},x_{SNCL},x_{SZ},x_{SCr},x_{SBS},x_{SBL},x_{STS},x_{STL},x_{SSWS},x_{LU},x_{LD},z & \ge0
\end{alignat*}
The variables have the same interpretation as before. For $\alpha=8.33$,
the above LP has a feasible solution given by 
$x_{SCr} = x_{STS} = x_{STL} = x_{SSWL} = x_{SNCS} = x_{SNCL} = 0$,
$x_{SBS} = x_{SBL} = x_{LU} = 1/625$,
$x_{SSWS} = 193/500$,
$x_{SZ} = 2/25$, and
$x_{LD} = 312/625$,
with objective value $313/625 = 0.5008$.

To certify the quality of the solution, we give dual solutions to the two linear programs.
We use the dual variables $\gamma_0$ and $\gamma_k$ for $k \in [7]$, where $\gamma_7$ only appears in the first of the two dual LPs.

The duals of the LPs are
\begin{align*}
    \max\quad - \gamma_0 + \gamma_6/\alpha& & \max\quad - \gamma_0 - \gamma_6/\alpha& \\ 
    s.t.\qquad \sum_{k=1}^5 \gamma_k + \gamma_7 &\le 1& 
    s.t.\qquad \sum_{k=1}^5 \gamma_k &\le 1\\ 
    \gamma_0 + \gamma_1 + \gamma_3 + \gamma_4 + \gamma_5/4 + \gamma_6 &\ge 0& 
    \gamma_0 + \gamma_1 + \gamma_3 + \gamma_4 + \gamma_5/4 &\ge 0\\ 
    \gamma_0 + \gamma_1 + \gamma_2 &\ge 0& 
    \gamma_0 + \gamma_1 + \gamma_2 &\ge 0\\ 
    \gamma_0 + \gamma_2 + \gamma_3 &\ge 0& 
    \gamma_0 + \gamma_2 + \gamma_3 &\ge 0\\ 
    \gamma_0 + \gamma_2 + \gamma_5 &\ge 0& 
    \gamma_0 + \gamma_2 + \gamma_5 &\ge 0\\ 
    \gamma_0 + \gamma_2 + \gamma_4 &\ge 0& 
    \gamma_0 + \gamma_2 + \gamma_4 &\ge 0\\ 
    \gamma_0 + \gamma_2 + \gamma_3 &\ge 0& 
    \gamma_0 + \gamma_2 + \gamma_3 &\ge 0\\ 
    \gamma_0 + \gamma_2 + \gamma_4 &\ge 0& 
    \gamma_0 + \gamma_2 + \gamma_4 &\ge 0\\ 
    \gamma_0 + \gamma_2 + \gamma_3 &\ge 0& 
    \gamma_0 + \gamma_2 + \gamma_3 &\ge 0\\ 
    \gamma_0 + \gamma_2 &\ge 0& 
    \gamma_0 + \gamma_2 &\ge 0\\ 
    \gamma_0 + \gamma_2 + \gamma_3 &\ge 0& 
    \gamma_0 + \gamma_2 + \gamma_3 &\ge 0\\ 
    \gamma_0 + \gamma_2 + (1/(8 \cdot (\alpha + 1))) \gamma_7 - \gamma_6 &\ge 0& 
    \gamma_0 + \gamma_2 + \gamma_6 &\ge 0\\ 
    \gamma_0 + \gamma_2 + \gamma_3 &\ge 0& 
    \gamma_0 + \gamma_2 + \gamma_3 &\ge 0\\ 
    \gamma_k & \ge 0\, \forall k \in [7]&
    \gamma_k & \ge 0\, \forall k \in [6]
\end{align*}

For $\alpha = 8.33$, the first dual LP attains its maximum by setting
$\gamma_1 = \gamma_3 = \gamma_4 = \gamma_5 = 0$ and
$-\gamma_0 = \gamma_2 = \gamma_6 = \gamma_7 = 1/2$.
The obtained objective value is larger than $0.5008$, which certifies that for $w(\OPT_{S,\stair})\ge \frac{1}{\alpha} \cdot\opt$,
we obtain an approximation ratio of at most $\fqpoly$ overall.

The second dual LP attains its maximum by setting
$\gamma_0 = -13/25$,
$\gamma_1 = \gamma_3 = \gamma_4 = \gamma_5 = 4/25$,
$\gamma_2 = 9/25$, and
$\gamma_6 = 100$.
The objective value is larger than $1/1.997$, which certifies that also in the case  $w(\OPT_{S,\stair})\le \frac{1}{\alpha} \cdot\opt$,
we obtain an approximation ratio of at most $\fqpoly$ overall.

\section{\label{sec:Profitable-pile-jammed-solutions}Structural lemma, uniform
    capacities}

In this section we prove Lemma~\ref{lem:uniform-structural-lemma}.
Consider an optimal solution $(\OPT,h)$. Like in Section~\ref{sec:Structural-lemma}
we define $\OPT_{L}:=\OPT\cap T_{L}$ and $\OPT_{S}:=\OPT\cap T_{S}$.

Recall that our goal is to improve the approximation ratio of 2. Observe
that $\OPT_{L}$ alone is a $1/\delta$-boxable solution and hence
if $w(\OPT_{L})\ge\frac{1}{(\fpoly)}\opt$ then we obtain a $1/\delta$-boxable
solution with the desired properties. Similarly, the set $\OPT_{S}$
yields a pile boxable solution with exactly $1$ box $B$ with $P(B)=E$
and $d_{B}=U$. Therefore, if $w(\OPT_{S})\ge\frac{1}{(\fpoly)}\opt$
then we are done. The reader may imagine that $w(\OPT_{S})=w(\OPT_{L})=\frac{1}{2}\opt$.
We split the small tasks in $\OPT_{S}$ into three sets $\OPT_{S,\mathrm{top}},\OPT_{S,\mathrm{mid}},\OPT_{S,\bottom}$.
Intuitively, we draw a strip of height $\delta U$ at the bottom of
the capacity profile and a strip of height $\delta U$ at the top
of the capacity profile and assign to $\OPT_{S,\mathrm{top}}$ all
tasks in $\OPT_{S}$ whose top edge lies in the top strip, we assign
to $\OPT_{S,\bottom}$ all tasks in $\OPT_{S}$ whose bottom edge
lies in the bottom strip, and we assign to $\OPT_{S,\mathrm{mid}}$
all other small tasks. Formally, we define $\OPT_{S,\mathrm{top}}:=\{i\in\OPT_{S}\mid h(i)+d_{i}>(1-\delta)U\}$,
$\OPT_{S,\bottom}:=\{i\in\OPT_{S}\mid h(i)<\delta U\}$, and $\OPT_{S,\mathrm{mid}}:=\OPT_{S}\setminus\left(\OPT_{S,\mathrm{top}}\cup\OPT_{S,\bottom}\right)$.

\paragraph{Small tasks at bottom and large tasks.}

We define solutions that consists of $\OPT_{L}$ and subsets of $\OPT_{S,\bottom}$.
For each edge $e$ let $u'_{e}:=\min_{i\in T'_{L}:e\in P(i)}h(i)$
and define $u'_{e}:=U$ if there is no task $i\in T'_{L}$ with $e\in P(i)$. One may think of $u'$ as a pseudo-capacity profile
for which $\OPT_{S,\bottom}$ is a feasible solution. We apply Lemma~\ref{lem:boxing-solutions}
and obtain sets $\OPT_{S,\bottom,L}$, $\OPT_{S,\bottom,S}$ with
$w(\OPT_{S,\bottom,L}\cup\OPT_{S,\bottom,S})\ge(1-\epsilon)w(\OPT_{S,\bottom})$. 

We have that for each task $i\in\OPT_{S,\bottom,L}$ there is an edge
$e\in P(i)$ with $d_{i}>\delta u'_{e}$, therefore $\OPT_{S,\bottom,L}$
is a set of $\delta$-jammed tasks for $(\OPT_{L},E,0,h)$ and hence
$\OPT_{L}\cup\OPT_{S,\bottom,L}$ forms a $\delta$-jammed-solution.
Also, $\OPT_{L}\cup\OPT_{S,\bottom,S}$ forms a laminar boxable solution.
Hence, if $w(\OPT{}_{S,\bottom,S})$ or $w(\OPT{}_{S,\bottom,L})$
is sufficiently large then we are done. Therefore, the reader may
imagine that $w(\OPT_{S,\bottom,S})=w(\OPT_{S,\bottom,L})=0$.

\paragraph{Small tasks at top and large tasks.}

We mirror $\OPT$ along the $y$-axis and do a symmetric construction
with $\OPT_{S,\mathrm{top}}$: we apply Lemma~\ref{lem:boxing-solutions}
which yields sets $\OPT_{S,\mathrm{top},L},\OPT_{S,\mathrm{top},S}$
and a $\delta$-jammed solution $\OPT_{L}\cup\OPT_{S,\mathrm{top},L}$
and a laminar boxable solution $\OPT_{L}\cup\OPT_{S,\mathrm{top},S}$.
Like before,
the reader may imagine that $w(\OPT_{S,\mathrm{top},S})=w(\OPT_{S,\mathrm{top},L})=0$
and hence $w(\OPT_{S,\mathrm{mid}})=w(\OPT_{S})=\frac{1}{2}\opt$.

\paragraph{Small and large tasks in the middle.}

Next, we split the large tasks $\OPT_{L}$ into three sets $\OPT_{L,\mathrm{top}}$,
$\OPT_{L,\mathrm{mid}}$, and $\OPT_{L,\bottom}$. Intuitively, $\OPT_{L,\mathrm{top}}$
consists of all tasks in $\OPT_{L}$ whose top edge lies in the strip
of height $\delta U$ at the top of the capacity profile, $\OPT_{L,\bottom}$
consists of all tasks in $\OPT_{L}$ whose bottom edge lies in the
strip of height $\delta U$ at the bottom of the capacity profile,
and $\OPT_{L,\mathrm{mid}}$ contains all remaining tasks in $\OPT_{L}$.
Formally, $\OPT_{L,\mathrm{top}}:=\{i\in\OPT_{L}|h(i)+d_{i}>(1-\delta)U\}$,
$\OPT_{L,\bottom}:=\{i\in\OPT_{L}|h(i)<\delta U\}$, and $\OPT_{L,\mathrm{mid}}:=\OPT_{L}\setminus\left(\OPT_{L,\mathrm{top}}\cup\OPT_{L,\bottom}\right)$.

Observe that no task in $\OPT\setminus\left(\OPT_{L,\bottom}\cup\OPT_{S,\bottom}\right)$
touches the rectangle $E\times[0,\delta U)$. Using this, we define
a boxable solution $T_{\bo}$ that will consist essentially of all tasks
in the set $\OPT\setminus\left(\OPT_{L,\bottom}\cup\OPT_{S,\bottom}\right)$.
Intuitively, we will use the free space $E\times[0,\delta U)$ in
order to untangle the interaction between the large and small tasks.
More precisely, $T_{\bo}$ will be a $O_{\delta}(1)$-boxable solution
in which all small tasks are assigned into boxes of height $\Theta_{\delta}(U)$
each. More formally, we apply the following lemma to $\OPT\setminus\left(\OPT_{L,\bottom}\cup\OPT_{S,\bottom}\right)$
and denote by $(T'_{\bo},h'_{\mathrm{box}})$ the resulting solution. 

\begin{lem}
    \label{lem:boxed-solution-uniform}Given a solution $(T',h')$ such
    that no task touches the rectangle $E\times[0,\delta U)$. Then there
    exists a $O_{\delta}(1)$-boxable solution 
    $(T',h_\bo)$.
\end{lem}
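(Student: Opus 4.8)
The plan is to keep the large tasks of $T'$ almost where they are, give each of them its own box, and repack all the small tasks into $O_\delta(1)$ horizontal boxes of height $\Theta_\delta(U)$, using the empty strip $E\times[0,\delta U)$ as the slack that the repacking needs. For the large tasks there is essentially nothing to do: every $i\in T'\cap T_L$ has $d_i>\delta b(i)=\delta U$, and since the rectangles of $T'$ all lie in the height window $[0,U]$, at most $1/\delta$ of them can use a fixed edge; putting each large task into a box of its own dimensions and position therefore yields $O_\delta(1)$ boxes per edge, each trivially satisfying the box conditions (one large task per box).

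For the small tasks I fix a coarse grid of horizontal lines at heights $0,H_0,2H_0,\dots$ with $H_0=\Theta(\delta^2U)$, so that there are $O_\delta(1)$ grid lines inside $[0,U]$ and, since $\mu\le\epsilon^{10}\delta^{1/\epsilon}$ by Lemma~\ref{lem:gap}, every small task has $d_i\le\mu U\ll H_0$ and hence meets at most two consecutive rows. Using that the whole configuration lies in $[\delta U,U]$, I then spend the $\delta U$ of empty space on two modifications. First, I grid-align the large tasks: processing them from the bottom up, I round each large task's bottom down to a grid line and grow it upward until its top is a grid line, shifting the part of the configuration below it down and the part above it up accordingly; because each edge carries $\le1/\delta$ large tasks and each grows by less than $2H_0$, with $H_0$ chosen small enough the total displacement on any edge stays within the budget and the configuration still fits in $[0,U]$, and afterwards every large task covers a union of full rows over its path, so over any single edge each row is either entirely free of large tasks or entirely covered by one of them. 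Second, at each grid line $\ell$ and over each maximal subpath on which $\ell$ is not in the interior of a large task, I insert a buffer band of height $h_b=\Theta(\delta^3U)$ (shifting the content above it up), again staying within the slack budget, and I move every small task that currently crosses a grid line $\ell$ into the buffer at $\ell$ — this is legal because such a task is vertically disjoint from every large task containing $\ell$ in its interior and hence shares no edge with any of them, so it lies over a subpath where the buffer is present, $h_b\ge d_i$, and at most one such task uses a given edge across a given line. After this, every small task lies strictly inside one full row or one buffer band, and I take as small‑task boxes: for each row and each maximal subpath on which the row is free of large tasks, the box equal to that subpath times that row (height $H_0$); and for each buffer band its whole rectangle (height $h_b$). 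These boxes are pairwise disjoint, disjoint from the large-task boxes, have height $\Theta_\delta(U)$, each edge lies in $O_\delta(1)$ of them (at most one per row plus at most one per buffer), and a small task contained in one of them has $d_i\le\mu U\le\epsilon^8H_0$, respectively $d_i\le\mu U\le\epsilon^8h_b$, as required; since any small task lying strictly inside row $R$ must, by grid-alignment of the large tasks, have $R$ free of large tasks over all of $P(i)$, it indeed fits into one of these boxes. Together with the large-task boxes this is a $O_\delta(1)$-boxable solution with task set $T'$.

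The step I expect to be the main obstacle is the slack bookkeeping in the second paragraph: the two rounds of shifting — grid-aligning the large tasks and inserting the buffer bands — must together consume at most the $\delta U$ of vertical room that the empty strip supplies, which forces a coupled choice of the grid spacing $H_0$ and buffer height $h_b$ and genuinely uses both the fact that $d_i>\delta U$ for large tasks (so only $\le1/\delta$ stack on an edge) and the bound $\mu\le\epsilon^{10}\delta^{1/\epsilon}$ on small tasks. The subtle point inside the first round is that a small task lying above two growing large tasks on two different edges of its path is forced upward by the larger of the two displacements on \emph{both} edges, so one must be careful (for example by processing in a fixed order and bounding the accumulated displacement, as in the class-by-class scaling in the proof of Lemma~\ref{lem:exists-boxed}) that these over-shifts do not add up beyond the $\delta U$ budget; the remaining checks — that a small task straddles at most one grid line, that grid-aligning the large tasks does not make a (shifted) small task overlap a large task, and that moving a line-crossing small task into its buffer causes no collision — are the routine geometric verifications of this kind of shifting-and-boxing argument (compare the proofs of Lemma~\ref{lem:exists-boxed} and Lemma~\ref{lem:boxing-solutions}), here cleaner because the capacities are uniform and the slack is a single clean budget of $\delta U$.
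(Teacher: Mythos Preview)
Your proposal is correct and follows essentially the same approach as the paper: use the empty strip $E\times[0,\delta U)$ as slack to align the large tasks with a horizontal grid of $O_\delta(1)$ lines, put each large task in its own grid-aligned box, and then make sure every small task lies strictly inside a single row-box. The only real difference is in how the small tasks crossing grid lines are handled: the paper uses a finer grid of height $\delta^4 U$ and a uniform scaling of all box heights by $1/(1-\delta/2)$ to create $\Theta(\delta^6 U)$ of free space per row, into which any boundary-crossing small task can be shifted, whereas you use a coarser grid of height $\Theta(\delta^2 U)$ and explicitly insert $\Theta(\delta^3 U)$ buffer bands at each grid line to absorb the crossing tasks. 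Both mechanisms are standard and both fit in the $\delta U$ budget; the paper's scaling is slightly cleaner because it avoids the bidirectional-shifting bookkeeping you flag as the main obstacle, but your buffer construction is equally valid.
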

\begin{proof}
    For each $k \in \mathbb{N}_0$, we define a box $B_k$ which is determined by the rectangle $E \times [k \cdot \delta^4 U,(k+1) \cdot \delta^4 U)$.
    Thus each large tasks crosses at least $1/\delta^3$ boxes. Analogous to the proof of Lemma~\ref{lem:exists-boxed}, using $O(\delta^2)U < \delta U/2$ space from the rectangle $E\times[0,\delta U)$, we can ``align'' the large rectangles with the boundaries of the boxes: 
    Let $h''$ be the new height function after the modification.
    For a task $i \in T' \cap T_L$, let box $B$ be the box such that $h''(B) \le h''(i)< h''(B) + d_{B}$ and $B'$ the box such that $h''(B') \le h''(i) + d_i < h''(B') + d_{B'}$.
    Then there is no task $i' \in T'$ such that $i'$ drawn at $h''(i)$ touches the rectangle $[P(i) \times [h''(B),h''(i))$ or $[P(i) \times [h''(i) + d_i, h''(B')+d_{B''})$.

    We scale the demands of all boxes by a factor $1/(1-\delta/2)$, using the remaining free space of size $\delta U/2$.
    Then each box has a free space of at least $\delta^4/(1-\delta/2)U - \delta^4 U \ge \delta^6 U$.
    Since the tasks in $T' \cap T_S$ have a demand of at most $\epsilon^{10}\delta^{1/\epsilon}U$, the additional space is sufficient to move the tasks to the middle of the boxes such that there is no task $i$ and no box $B$ with
    $h(i)< h(B) \le h(i) + d_i$. The new height function is $h_\bo$.
\end{proof}

Also, we apply Lemma~\ref{lem:boxed-solution-uniform} to the solution
obtained by taking $\OPT\setminus\left(\OPT_{L,\tp}\cup\OPT_{S,\tp}\right)$
and shifting each task up by $\delta U$ units. Let $(T''_{\bo},h''_{\mathrm{box}})$
denote the resulting solution. 
Assume w.l.o.g.~that $w(\OPT_{L,\mathrm{top}})\le w(\OPT_{L,\bottom})$.
Observe that if $w(\OPT_{L,\mathrm{mid}})\ge\gamma\opt$ then 
\begin{align*}
    w(T'\cap T_{L})\ge w(\OPT_{L,\bottom})+w(\OPT_{L,\mathrm{mid}})& \ge \frac{1}{2}w(\OPT_{L,\mathrm{top}}\cup\OPT_{L,\bottom})+w(\OPT_{L,\mathrm{mid}})\\
    &\ge\frac{1+\gamma}{2}w(\OPT_{L})
\end{align*}
and $w(T'\cap T_{S})\ge w(\OPT_{S}\setminus\OPT_{S,\bottom})$. Therefore,
the reader may imagine now that $w(\OPT_{L,\mathrm{mid}})=0$.

\subsection{Types of points in the middle}

Similarly as in Section~\ref{sec:Structural-lemma} we distinguish
points in the rectangle $E\times[0,U]$ into different types and identify
each task $i\in\OPT$ with a rectangle $R_{i}:=P(i)\times[h(i),h(i)+d_{i}]$.
For each point $p$ let $\ell_{p}$ denote the maximally long horizontal
line segment in $E\times[0,U]$ that contains $p$ and that does not
touch the relative interior of a task in $\OPT_{L,\mathrm{top}}\cup\OPT_{L,\bottom}$.
We say that $p$ is a \emph{top-point }if no endpoint of $\ell_{p}$
touches a task in $\OPT_{L,\bottom}$, $p$ is a \emph{sandwich-point
}if one of the end-points of $\ell_{p}$ touches a task in $\OPT_{L,\mathrm{top}}$
and the other end-point touches task in $\OPT_{L,\bottom}$, and $p$
is a \emph{bottom-point }if no endpoint of $\ell_{p}$ touches a task
in $\OPT_{L,\mathrm{top}}$ and at least one endpoint of $\ell_{p}$
touches a task in $\OPT_{L,\bottom}$. Note that here
we do not define stair-points since the edges have uniform capacities.
Let $\C_{\mathrm{top}},\C_{\sw},\C_{\bottom}$ denote the set of connected
components of top- sandwich-, and bottom-points, respectively.

Similarly as in Lemma~\ref{lem:few-components-edge} each edge is
used by at most three connected components in $\C_{\mathrm{top}}\cup\C_{\sw}\cup\C_{\bottom}$. 
\begin{lem}\label{lem:few-components-edge-uniform}
    Each edge $e$ can be used by at most one connected component of top-points,
    by at most one connected component of sandwich-points, and by at most
    one connected component of bottom-points. 
\end{lem}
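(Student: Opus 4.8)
The plan is to mimic the proof of Lemma~\ref{lem:few-components-edge} but now with the three types of points (top, sandwich, bottom) defined for the uniform-capacity setting. The key observation is the same structural fact: within any two points on a common edge $e$, the horizontal line segment connecting them is contained in the region $E\times[0,U]$ that is not in the relative interior of any task in $\OPT_{L,\tp}\cup\OPT_{L,\bottom}$, and moreover the ``type label'' is determined by what the endpoints of the maximal horizontal segments $\ell_p$ touch, which does not change as we slide along a purely vertical-free horizontal connection at a fixed $x$-coordinate.

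First I would argue for top-points. Suppose for contradiction that some edge $e$ is used by two distinct connected components $C,C'$ of top-points, witnessed by points $p\in C$ and $p'\in C'$ lying above $e$. Since both $p$ and $p'$ avoid the relative interiors of tasks in $\OPT_{L,\tp}\cup\OPT_{L,\bottom}$, and both have the property that their maximal horizontal line segments $\ell_p,\ell_{p'}$ have no endpoint touching a task in $\OPT_{L,\bottom}$, I would show that every point $q$ on the vertical segment between $p$ and $p'$ above $e$ is again a top-point: $\ell_q$ is contained between $\ell_p$ and $\ell_{p'}$ (here one must check that $\ell_q$ does not hit a new $\OPT_{L,\bottom}$ task, which follows because if it did, the same task would block the region strictly between $p$ and $p'$ and hence $p$ or $p'$ would not be a top-point). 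This produces a connected path of top-points from $p$ to $p'$, contradicting that $C\ne C'$. The arguments for sandwich-points and bottom-points are analogous: in each case the defining condition on the endpoints of $\ell_\cdot$ is ``monotone'' along a vertical connection within the feasible strip, so connectivity is preserved.

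The main obstacle I expect is making precise the claim that the type of a point cannot change along the vertical connection between two points of the same type on a common edge --- in particular ruling out that a task in $\OPT_{L,\tp}\cup\OPT_{L,\bottom}$ ``pokes in'' between $p$ and $p'$ and breaks the segment. Here I would use that $p$ and $p'$ both avoid relative interiors of such tasks, so any such task would have its boundary between them; but then $\ell_p$ (or $\ell_{p'}$) would be truncated at that boundary and its endpoint would touch that task, changing the classification. A clean way to handle this is to observe that the set of points of a fixed type that lie above a fixed edge $e$ forms a union of vertical intervals, and two such intervals in the same component either coincide or are separated by a point of a different type or by a task interior; a short case analysis over the three types shows this cannot happen for two points claimed to be in distinct components. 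Since the statement only asserts the ``at most one component per type per edge'' bound, no quantitative estimate is needed, and the proof reduces to this connectivity observation, exactly parallel to Lemma~\ref{lem:few-components-edge}.
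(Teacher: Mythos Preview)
Your approach is exactly the paper's (which just writes ``Same proof as for Lemma~\ref{lem:few-components-edge}''): take two points $p,p'$ of the same type above the same edge $e$, claim the vertical segment between them stays within that type, and conclude they lie in one component.

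Your added justification, however, is aimed at the wrong obstacle. You worry about a task on edge $e$ ``poking in'' between $p$ and $p'$ and argue such a task would force $\ell_p$ or $\ell_{p'}$ to touch it. But the endpoints of $\ell_q$ are determined by tasks that typically do \emph{not} use $e$ at all --- they sit off to the left or right --- and your argument says nothing about why such an off-edge blocker cannot change from an $\OPT_{L,\tp}$ task to an $\OPT_{L,\bottom}$ task (or vice versa) as $q$ slides vertically. What actually drives the claim, just as in the corridor proof of Lemma~\ref{lem:few-components-edge}, is a monotonicity of the blockers: every $i\in\OPT_{L,\bottom}$ has $h(i)<\delta U$, so at any height $y\ge\delta U$ at which $i$ is present it is also present at every smaller such height, and dually every $i\in\OPT_{L,\tp}$ extends up past $(1-\delta)U$; hence as $y$ increases within $[\delta U,(1-\delta)U]$ a blocker can switch from $\OPT_{L,\bottom}$ to $\OPT_{L,\tp}$ but never back. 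The paper does not spell this out either, but that --- not the on-edge ``poking in'' --- is the content behind the one-line proof.
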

\begin{proof}
    Same proof as for Lemma~\ref{lem:few-components-edge}.
\end{proof}

Let $\OPT_{S,\cross}\subseteq\OPT_{S,\midd}$ denote the tasks in
$\OPT_{S,\midd}$ that intersect at least two different connected
components, e.g., a connected component of top-points and a connected
component of sandwich-points. 
\begin{lem}
    Each edge is used by at most $2$ different tasks in $\OPT_{S,\cross}$. 
\end{lem}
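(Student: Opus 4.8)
The plan is to mimic the proof of Lemma~\ref{lem:few-components-edge-uniform}, using the observation that a task in $\OPT_{S,\cross}$ must witness the boundary between two adjacent connected components on any edge it uses. First I would fix an edge $e$ and consider the vertical segment $\{e\}\times[0,U]$ cut into maximal intervals by the connected components of top-, sandwich-, and bottom-points (together with the slices occupied by tasks of $\OPT_{L,\tp}\cup\OPT_{L,\bottom}$). By Lemma~\ref{lem:few-components-edge-uniform}, on the edge $e$ there is at most one interval coming from a top-component, at most one from a sandwich-component, and at most one from a bottom-component, so along $\{e\}\times[0,U]$ these three kinds of intervals appear in some order and the ``large'' slices of $\OPT_{L,\tp}\cup\OPT_{L,\bottom}$ fill in the rest.

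Next I would argue that a task $i\in\OPT_{S,\cross}$ with $e\in P(i)$ contributes to one of only two ``interfaces'' on edge $e$. Such a task $i$ is small and in $\OPT_{S,\midd}$, so its rectangle $R_i$ lies strictly between the bottom strip and the top strip, hence it cannot overlap the interior of any task of $\OPT_{L,\tp}\cup\OPT_{L,\bottom}$ on $e$; thus $R_i\cap(\{e\}\times[0,U])$ lies inside one of the (at most) three intervals above, say the top-, sandwich-, or bottom-interval. But $i\in\OPT_{S,\cross}$ means $i$ meets at least two distinct connected components; since on edge $e$ there is only one interval of each type, the two components it meets must be met on \emph{different} edges, and on $e$ itself $R_i$ sits in exactly one interval --- this alone is not yet the bound, so the real point is a parity/adjacency argument: the at most three intervals on $\{e\}\times[0,U]$ are linearly ordered (bottom, then sandwich, then top, with large slices interspersed), giving at most two adjacent pairs of consecutive point-intervals, and any crossing task passing through $e$ must, near $e$, lie in one fixed interval while its ``other half'' forces it to straddle the boundary with a neighbouring interval; each such boundary can be straddled by at most one small task on $e$ because two small tasks straddling the same horizontal boundary on the same edge would overlap. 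Hence at most two crossing tasks use $e$.

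I would therefore carry out the steps in the order: (1) restrict to edge $e$ and enumerate the at most three point-intervals plus the large-task slices, citing Lemma~\ref{lem:few-components-edge-uniform}; (2) show a task in $\OPT_{S,\cross}\cap T_e$ has its $e$-slice inside exactly one of these three intervals and cannot touch a large slice because it lies in $\OPT_{S,\midd}$; (3) observe there are at most two internal boundaries between consecutive point-intervals, and that being in $\OPT_{S,\cross}$ forces the task to be ``attached'' to one such boundary near $e$ (its defining horizontal segment $\ell_p$ exits the current component through a neighbour); (4) conclude that two distinct crossing tasks attached to the same boundary on $e$ would overlap, so at most one per boundary, giving the bound $2$. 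The main obstacle is step~(3): carefully justifying that a crossing task, on a given edge, can be charged to one specific inter-component boundary rather than floating freely within an interval --- this requires being precise about how the connected components partition the non-large area along $\{e\}\times[0,U]$ and that the only way a small middle task can ``see'' two components is by sitting across the interface of two of them somewhere, and on edge $e$ it can see at most the two interfaces present there.
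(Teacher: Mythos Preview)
Your skeleton is right and matches what the paper's one-line proof intends: at most three components on edge $e$ (Lemma~\ref{lem:few-components-edge-uniform}) $\Rightarrow$ at most two inter-component boundaries on $\{e\}\times[0,U]$ $\Rightarrow$ at most two crossing tasks. But there is a genuine gap in your step~(3), and it stems from a misconception in step~(2).

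You assert that on edge $e$ the vertical slice of a crossing task ``sits in exactly one interval'' and that the two components it meets ``must be met on different edges''. This is false, and seeing why makes the argument immediate. Fix $i\in\OPT_{S,\cross}$ with $e\in P(i)$ and any height $y\in[h(i),h(i)+d_i]$. For any two edges $e',e''\in P(i)$ the horizontal segment from $(e',y)$ to $(e'',y)$ lies inside $R_i$, which is disjoint from every other task in $\OPT$ and in particular from $\OPT_{L,\tp}\cup\OPT_{L,\bottom}$. Hence all points $(e',y)$ with $e'\in P(i)$ lie on the \emph{same} maximal segment $\ell$, so they have the same type and belong to the same component. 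Thus the component containing a point of $R_i$ depends only on its height, and the set of components met by $R_i$ equals the set met by the single slice $\{e\}\times[h(i),h(i)+d_i]$. A crossing task therefore straddles a component boundary on \emph{every} edge of its path, including $e$; there are at most two such boundaries on $\{e\}\times[0,U]$, each straddled by at most one task since tasks on $e$ have disjoint vertical slices. Your ``attached to a boundary near $e$'' workaround is both unnecessary and, as stated, insufficient for step~(4): two tasks merely ``attached'' to the same boundary need not overlap unless both actually straddle it on $e$.
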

\begin{proof}
    Follows from Lemma~\ref{lem:few-components-edge-uniform}.
\end{proof}

In particular, $\OPT_{L}\cup\OPT_{S,\cross}$ forms a $O(1/\delta)$-boxable
solution. Therefore, if $w(\OPT_{S,\cross})$ is sufficiently large
then we are done. Therefore, the reader may imagine that $w(\OPT_{S,\cross})=0$.

\subparagraph{Top points.}

Consider a connected component $C$ of top-points. Let $\OPT_{S,\midd,\mathrm{top}}(C)\subseteq\OPT_{S,\midd}$
denote the tasks in $\OPT_{S,\midd}$ contained in $C$. We apply
Lemma~\ref{lem:boxing-solutions} and obtain the sets $\OPT_{S,\midd,\mathrm{top},S}(C)$ and $\OPT_{S,\midd,\mathrm{top},L}(C)$.
Let $\OPT_{S,\midd,\mathrm{top},S}:=\bigcup_{C\in\C_{\mathrm{top}}}\OPT_{S,\midd,\mathrm{top},S}(C)$
and $\OPT_{S,\midd,\mathrm{top},L}:=\bigcup_{C\in\C_{\mathrm{top}}}\OPT_{S,\midd,\mathrm{top},L}(C)$.
We obtain that $\OPT_{L,\mathrm{top}}\cup\OPT_{L,\bottom}\cup\OPT_{S,\midd,\mathrm{top},L}$
is a $\delta$-jammed solution and $\OPT_{L,\mathrm{top}}\cup\OPT_{L,\bottom}\cup\OPT_{S,\midd,\mathrm{top},S}$
is a laminar boxable solution. Intuitively, if $w(\OPT_{S,\midd,\mathrm{top},L})$
or $w(\OPT_{S,\midd,\mathrm{top},S})$ is sufficiently large then
we are done. The reader may therefore imagine that both quantities
are zero. 

\subparagraph{Bottom points.}

We do a symmetric operation for all connected components $C$ of bottom-points,
obtaining respective sets $\OPT_{S,\midd,\bottom,S},\OPT_{S,\midd,\bottom,L}$,
a $\delta$-jammed solution 
\[
    \OPT_{L,\mathrm{top}}\cup\OPT_{L,\bottom}\cup\OPT_{S,\midd,\bottom,L}\,,
\]
and a laminar boxable solution 
\[
    \OPT_{L,\mathrm{top}}\cup\OPT_{L,\bottom}\cup\OPT_{S,\midd,\bottom,S}\,.
\]
Like before, the reader may imagine that $w(\OPT_{S,\midd,\bottom,S})=w(\OPT_{S,\midd,\bottom,L})=0$. 

\subparagraph{Sandwich points.}

Finally, let $\OPT_{S,\midd,\sw}$ denote all points in $\OPT_{S,\midd}$
that are contained in a connected component in $\C_{\sw}$. We assume
first that $w(\OPT_{L,\mathrm{top}})\ge w(\OPT_{L,\bottom})$. We
define a solution consisting of some tasks in $\OPT_{S,\midd,\sw}$
and additionally all tasks in $\OPT_{L,\mathrm{top}}$. Let $C\in\C_{\sw}$
and denote by $\OPT_{S,\midd}(C)$ the set of tasks in $\OPT_{S,\midd}$
that are contained in $C$. Let $E'$ denote the maximally long subpath
between two vertices $v,v'$ such that for each $x\in[v,v']$ the
vertical line through $x$, i.e., $\{x\}\times\mathbb{R}$, has non-empty
intersection with $C$. Note that the rectangle $E'\times[0,\delta U]$
has empty intersection with each tasks in $\OPT_{L,\mathrm{top}}\cup\OPT_{S,\midd}$.
Intuitively, we use this free space in order to push all tasks in
$\OPT_{S,\midd}(C)$ down by $\delta U$ units. Then they all fit
into $O_{\epsilon}(1)$ boxes that have non-empty intersection with
the tasks in $\OPT_{L,\mathrm{top}}$. 
\begin{lem}
    \label{lem:uniform-piles-boxes}
    Given $C \in \C_{\sw}$. 
    There is a pile of boxes 
    $\B=\{B_{1},\dotsc,B_{1/\delta}\}$ (with a height assignment 
    $h\colon \B\rightarrow\N_{0}$) such that 
    $P(B)\subseteq E'$
    and there are pairwise disjoint sets $T_{1},\dotsc,T_{|\B|}\subseteq\OPT_{S,\midd}(C)$ such
    that for each $j$, the tasks in $T_{j}$ fit into $B_{j}$. 
    The weight of tasks in the boxes is at least $\sum_{j}w(T_{j})\ge(1-\epsilon)w(\OPT_{S,\midd}(C))$.
\end{lem}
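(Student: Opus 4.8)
The plan is to exploit the empty rectangle $E' \times [0,\delta U)$ exactly as in the proof of Lemma~\ref{lem:boxed-solution-uniform}, but restricted to the subpath $E'$ and the component $C$. First I would shift every task in $\OPT_{S,\midd}(C)$ down by $\delta U$ units; this is feasible because each such task $i$ satisfies $P(i) \subseteq E'$ (since $C$ projects onto $E'$ by definition of $E'$) and $h(i) \ge \delta U$ (since $i \in \OPT_{S,\midd}$ is not a bottom-task, so $h(i) \ge \delta U$ by the definition of $\OPT_{S,\bottom}$), and the rectangle $E' \times [0,\delta U)$ is empty. After the shift, all tasks in $\OPT_{S,\midd}(C)$ lie in $E' \times [0, (1-\delta)U)$, so they fit into the $1/\delta$ horizontal strips $E' \times [(j-1)\delta U, j\delta U)$ for $j \in [1/\delta]$.

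Next I would define the pile of boxes: for $j \in [1/\delta]$ let $B_j$ be a box with $P(B_j) = E'$, height level $h(B_j) = (j-1)\delta U = (j-1)U/|\B|$, and demand $d_{B_j} = \delta U = U/|\B|$, which matches the definition of a $(1/\delta)$-pile of boxes (with $|\B| = 1/\delta$ and the nesting $P(B_k) \supseteq P(B_{k+1})$ trivially satisfied since all paths equal $E'$). The only obstacle to assigning tasks cleanly into these boxes is that a (shifted) task may straddle the boundary between two consecutive strips; as in the proof of Lemma~\ref{lem:boxed-solution-uniform}, I would handle this by a standard random-shift / shifting argument: pick a random offset for the strip boundaries (or equivalently delete, for a random choice, all tasks crossing a boundary line $E' \times \{j\delta U\}$), losing in expectation at most an $\epsilon$-fraction of $w(\OPT_{S,\midd}(C))$, and keep the best offset. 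Each surviving task then lies entirely within one strip and hence fits into the corresponding box $B_j$; collecting the surviving tasks in strip $j$ as $T_j$ gives pairwise disjoint sets with $\sum_j w(T_j) \ge (1-\epsilon)w(\OPT_{S,\midd}(C))$.

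The one point that needs a little care — and which I expect to be the main (though minor) obstacle — is verifying that the deleted tasks really have small total weight: this requires that the strip height $\delta U$ is large compared to the task demands, i.e.\ that small tasks have demand at most $\epsilon^{10}\delta^{1/\epsilon}U \ll \delta U$ (by Lemma~\ref{lem:gap}), so that a random shift of the boundaries by a uniformly random amount in $[0,\delta U)$ hits any fixed task with probability $O(d_i/(\delta U)) = O(\epsilon)$ in expectation over boundaries — actually one uses a single random shift common to all boundaries and argues the fraction of weight crossing \emph{some} boundary is $O(\epsilon)$ in expectation. Then by averaging there is a concrete shift achieving the bound, and additionally (exactly as in Lemma~\ref{lem:boxed-solution-uniform}) one uses a sliver of the remaining slack inside each strip to recenter each task so that no task $i$ has $h(i) < h(B_j) \le h(i)+d_i$ for its box $B_j$ — but since the lemma statement only asks that $T_j$ \emph{fits into} $B_j$ in the sense of the box definition (which permits $h(i) \in [0,d_{B_j})$ and $d_i \le \epsilon^8 d_{B_j}$, both immediate here), even this recentering is optional. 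This completes the construction and the weight bound follows.
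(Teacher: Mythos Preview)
Your argument is correct for the lemma as literally stated, and the shift-down-by-$\delta U$ plus random-offset shifting is exactly the mechanism the paper uses. However, your construction differs from the paper's in one important respect: you set $P(B_j)=E'$ for every $j$, so your pile occupies the entire rectangle $E'\times[0,U)$. The paper instead exploits the observation that the tasks in $\OPT_{L,\bottom}$ form a \emph{monotone} profile on $E'$ (at most one such task per edge, since two of them sharing an edge would force one to have demand below $\delta U$), and then lets each box $B_j$ start only at the leftmost edge where every $\OPT_{L,\mathrm{top}}$ task present has $h(\cdot)\ge j\delta U$. Thus $P(B_1)\supseteq P(B_2)\supseteq\cdots$ shrinks from the $\OPT_{L,\mathrm{top}}$ side, and the resulting pile is geometrically disjoint from all tasks in $\OPT_{L,\mathrm{top}}$.

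What this buys the paper is precisely what is used immediately after the lemma: one combines the piles over all components $C\in\C_{\sw}$ with the large tasks $\OPT_{L,\mathrm{top}}$ to form a single pile boxable solution of profit $(1-\epsilon)w(\OPT_{L,\mathrm{top}}\cup\OPT_{S,\midd,\sw})$. With your boxes, the pile on $E'$ leaves no room for any $\OPT_{L,\mathrm{top}}$ task using an edge of $E'$---and such tasks necessarily exist since $C$ is a sandwich component---so the boxes together with the $\OPT_{L,\mathrm{top}}$ boxes would not form a feasible $(\B,h)$. In short: your proof establishes the lemma verbatim, but the paper's shrinking-path construction is what makes the lemma applicable; if you are reconstructing the full argument you will need the monotone-profile step.
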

\begin{proof}
    We use the key insight that the tasks $\OPT_{L,\bottom}$ create a monotonous profile for sandwich points:
    if $i \in \OPT_{L,\bottom}$ has height $h(i)$, there is no task in $i'$ with $h(i') < h(i)$ and $P(i) \cap P(i') \neq \emptyset$.
    W.l.o.g., we assume that the profile of $\OPT_{L,\bottom}$ is increasing from left to right.
    Since we remove all tasks from $\OPT_{L,\bottom}$, we can use the entire space between $E'$ and the profile determined by $\OPT_{L,\mathrm{top}}$ for the pile of boxes.

    For each index $j$, we set the demand of box $B_j$ to $\delta U$ and $h(B_j) = \delta U \cdot (j-1)$.
    The box $B_1$ has $P(B_1) = E'$.
    For box $B_j$, let $e \in E'$ be the  left-most edge such that $\delta U \cdot j \le h(i)$ for all (i.e., one or zero) tasks $i \in T_{L,\mathrm{top}}$ with $e \in P(i)$.
    Then $P(B_j) = P_{e,e'}$ where $e'$ is the right-most edge of $E'$.
    Observe that the tasks from $T_{L,\mathrm{top}}$ determine an increasing profile and thus do not overlap with the boxes.

    Since the box $B_1$ is empty, we can move all tasks $\OPT_{S,\midd}(C)$ down by $\delta U$.
    After the moving, we have that for each task $i$ in box $B$, $P(i) \subseteq P(B)$, because $P(i)\subseteq E'$ and the boxes extend maximally to the left.
    There can still be tasks $i$ with $h(i) < h(B) \le h(i) + d_i$, i.e., tasks that cross the top boundary of a box.
    We apply a standard shifting argument to remove a strip of tasks of weight at most an $\epsilon$-fraction of the total weight. We can move the remaining tasks into the interior of the boxes.
    The construction determines the sets $T_1,\dotsc,T_{|\B|}$.
\end{proof}
Let $\OPT_{S,\midd,\sw}:=\bigcup_{C\in\C_{\sw}}\OPT_{S,\midd,S}(C)$.
We apply Lemma~\ref{lem:uniform-piles-boxes} to each component $C\in\C_{\sw}$
and hence we obtain a pile boxable solution whose profit is at least
$(1-\epsilon)w\left(\OPT_{L,\mathrm{top}}\cup\OPT_{S,\midd,\sw}\right)$.
In a similar way we can construct a pile boxable solution of profit
at least 
\[
    (1-\epsilon)w\left(\OPT_{L,\bottom}\cup\OPT_{S,\midd,\sw}\right)\,.
\]
\begin{lem}
    There is a pile boxable solution $(T'_{\sw},h'_{\sw})$ with profit
    at least 
    \[
        (1-\epsilon)w\left(\OPT_{L,\mathrm{top}}\cup\OPT_{S,\midd,\sw}\cup\OPT_{S,\cross}\right)
    \]
    and a pile boxable solution $(T''_{\sw},h''_{\sw})$ with profit at
    least 
    \[
        (1-\epsilon)w\left(\OPT_{L,\bottom}\cup\OPT_{S,\midd,\sw}\cup\OPT_{S,\cross}\right)\,.
    \]
\end{lem}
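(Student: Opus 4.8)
The plan is to start from the two pile boxable solutions constructed in the two preceding paragraphs --- the one of profit at least $(1-\epsilon)w(\OPT_{L,\mathrm{top}}\cup\OPT_{S,\midd,\sw})$ obtained by piling against the profile of $\OPT_{L,\mathrm{top}}$, and the symmetric one against $\OPT_{L,\bottom}$ --- and to graft the crossing small tasks $\OPT_{S,\cross}$ onto them as single-task boxes in the top set $\B_0$. Since a box is allowed to contain a single task of arbitrary height relative to the box, each $i\in\OPT_{S,\cross}$ becomes its own box $B(i)$ with $P(B(i))=P(i)$ and $d_{B(i)}=d_i$. The degree bookkeeping is the easy part: every edge is used by at most two tasks of $\OPT_{S,\cross}$, and by fewer than $1/\delta$ tasks of $\OPT_L$ (hence of $\OPT_{L,\mathrm{top}}$, resp. of $\OPT_{L,\bottom}$) because every large task has demand larger than $\delta U$; so after the grafting each edge is used by at most $1/\delta+2=O(1/\delta)$ boxes of $\B_0$, while the piles, which we leave untouched, each use an edge exactly $1/\delta$ times. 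Thus the resulting box family is a $O(1/\delta)$-pile boxable candidate, and the only nontrivial point is to produce a feasible height assignment.

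For the heights, assign to every box $B(i)$ with $i\in\OPT_{S,\cross}$ and to every single-task box of $\OPT_{L,\mathrm{top}}$ (resp. $\OPT_{L,\bottom}$) the height its task has in $\OPT$. Since $\OPT_{L,\mathrm{top}}\cup\OPT_{S,\cross}\subseteq\OPT$ and $\OPT$ is feasible, these boxes are pairwise non-overlapping; moreover none of them meets the ground strip $E\times[0,\delta U)$ that the proof of Lemma~\ref{lem:uniform-piles-boxes} frees up (the boxes of $\OPT_{S,\cross}$ because $\OPT_{S,\cross}\subseteq\OPT_{S,\midd}$, the boxes of $\OPT_{L,\mathrm{top}}$ by definition). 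By Lemma~\ref{lem:few-components-edge-uniform} the shadows $E'(C)$ of the sandwich components $C\in\C_{\sw}$ are pairwise edge-disjoint, and over each of its edges the pile of $C$ occupies exactly the vertical window between the ground and the profile of $\OPT_{L,\mathrm{top}}$. Hence the whole feasibility question reduces to: does the rectangle $R_i$ of a crossing task $i\in\OPT_{S,\cross}$ intersect this window over $P(i)\cap E'(C)$ for some $C$?

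This is the step I expect to be the main obstacle. The tools to resolve it are that $i$ is small ($d_i\le\mu U\ll\delta U$); that inside $E'(C)$ the tasks of $\OPT_{L,\bottom}$ form a monotone staircase (the very property that drives the proof of Lemma~\ref{lem:uniform-piles-boxes}); and that, since $i\notin\OPT_{S,\midd}(C)$ is a crossing task, along $P(i)$ the region just above $R_i$ cannot remain a sandwich-point, so $i$ must leave $C$ across the side of some large task of $\OPT_{L,\mathrm{top}}\cup\OPT_{L,\bottom}$. Combining these, a crossing task that reaches into $E'(C)$ is pinned against that monotone staircase over only the at most two edges of $E'(C)$ that it uses, so we can accommodate it by modifying the pile of $C$ locally over those $O(1)$ edges --- restricting the paths of the pile boxes that would otherwise cover the height band of $i$ (equivalently, not pushing down the tasks of $\OPT_{S,\midd}(C)$ that sit over those edges and leaving them in $\B_0$ at their $\OPT$-heights). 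The profit thereby removed from the pile is confined to tasks of $\OPT_{S,\midd}(C)$ over $O(1)$ edges per component and is absorbed into a standard shifting/charging argument, so the pile still carries a $(1-\epsilon)$-fraction of $w(\OPT_{S,\midd,\sw})$.

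Carrying this out for every sandwich component, and then once more for the mirror image where one piles against $\OPT_{L,\bottom}$ instead, yields the two claimed $O(1/\delta)$-pile boxable solutions $(T'_{\sw},h'_{\sw})$ and $(T''_{\sw},h''_{\sw})$: the first contains $\OPT_{L,\mathrm{top}}$, the pile-packed subset of $\OPT_{S,\midd,\sw}$, and all of $\OPT_{S,\cross}$, of total profit at least $(1-\epsilon)w(\OPT_{L,\mathrm{top}}\cup\OPT_{S,\midd,\sw}\cup\OPT_{S,\cross})$; the second does the same with $\OPT_{L,\bottom}$ in place of $\OPT_{L,\mathrm{top}}$. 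The genuine difficulty, as noted, is the clean argument that the few crossing small tasks per edge can coexist with a pile that otherwise fills the whole vertical window, while only sacrificing an $\epsilon$-fraction of profit.
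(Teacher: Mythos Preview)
Your overall strategy --- start from the pile solution of Lemma~\ref{lem:uniform-piles-boxes} and graft on $\OPT_{S,\cross}$ --- is the right one, but the resolution you give for the overlap problem rests on a claim that is false. You assert that a crossing task uses ``only the at most two edges of $E'(C)$'', and then propose to trim the pile locally over those edges. In fact the opposite is true: if $i\in\OPT_{S,\cross}$ intersects the sandwich component $C$ at all, then its \emph{entire} path $P(i)$ lies inside $E'(C)$. The reason is that for any fixed height $y\in[h(i),h(i)+d_i)$, every point of $R_i$ at height $y$ lies on one common maximal horizontal segment $\ell_p$ (since $R_i$ meets no task of $\OPT_{L,\mathrm{top}}\cup\OPT_{L,\bottom}$), so all of them share the same type and lie in the same connected component; hence if one point is in $C$, so are all of them, and every edge of $P(i)$ projects into $E'(C)$. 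Your local-surgery argument therefore has nothing to stand on.

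The observation $P(i)\subseteq E'(C)$ actually points to a much cleaner fix than the one you attempt. Since such a crossing task satisfies $i\in\OPT_{S,\midd}$ (so $h(i)\ge\delta U$), $P(i)\subseteq E'(C)$, and $d_i\le\mu U\le\epsilon^8\delta U$, you can simply feed it into the construction of Lemma~\ref{lem:uniform-piles-boxes} together with the tasks of $\OPT_{S,\midd}(C)$: the same move-down-by-$\delta U$ step and the same shifting argument apply verbatim, and the check that $P(i)\subseteq P(B_j)$ after the shift only uses that the $\OPT_{L,\mathrm{top}}$ task above $i$ sits at height $\ge h(i)+d_i\ge j\delta U$, which holds for any task of $\OPT_{S,\midd}$ with $P(i)\subseteq E'$. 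No single-task boxes for $\OPT_{S,\cross}$, no pile surgery, and no extra $\epsilon$-loss are needed for these tasks.

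What remains is the (degenerate) case of a crossing task that meets a top-component and a bottom-component but no sandwich component; here the argument above does not directly apply. You should be aware, however, that the paper does not prove this lemma and, more to the point, does not use the $\OPT_{S,\cross}$ term downstream: the LP constraints corresponding to $T'_\sw$ and $T''_\sw$ are $x_{LT}+x_{SMSW}\le z$ and $x_{LB}+x_{SMSW}\le z$, with no $x_{SMCr}$. The weaker statement --- profits at least $(1-\epsilon)w(\OPT_{L,\mathrm{top}}\cup\OPT_{S,\midd,\sw})$ and $(1-\epsilon)w(\OPT_{L,\bottom}\cup\OPT_{S,\midd,\sw})$ --- already follows from the paragraph preceding the lemma and suffices for the claimed $63/32$ bound.
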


Intuitively, since $w(\OPT_{L,\mathrm{mid}})=0$ we have that $w(\OPT_{L,\mathrm{top}})\ge\opt/4$
or $w(\OPT_{L,\bottom})\ge\opt/4$. Also, since $w(\OPT_{S,\mathrm{mid},\tp})=w(\OPT_{S,\mathrm{mid},\bottom})=w(\OPT_{S,\cross})=0$
and $w(\OPT_{S,\mathrm{mid}})=\opt/2$ we have that $w(\OPT_{S,\mathrm{mid},\sw})=\opt/2$.
Hence, $(T',h')$ or $(T'',h'')$ satisfies the claim of the lemma.
\begingroup
    \allowdisplaybreaks
    Formally, our candidate solutions are 
    \begin{align*}
        \OPT^{(1)}&:=\OPT_{S}\, ,\\
        \OPT^{(2)}&:=\OPT_{L}\cup\OPT{}_{S,\bottom,S}\, ,\\
        \OPT^{(3)}&:=\OPT_{L}\cup\OPT{}_{S,\bottom,L}\, ,\\ 
        \OPT^{(4)}&:=\OPT_{L}\cup\OPT{}_{S,\mathrm{top},S}\, \\
        \OPT^{(5)}&:=\OPT_{L}\cup\OPT{}_{S,\mathrm{top},L}\, ,\\ 
        \OPT^{(6)}&:=T'_{\bo}\, ,\\
        \OPT^{(7)}&:=T''_{\bo}\, ,\\ 
        \OPT^{(8)}&:=\OPT_{L}\cup\OPT_{S,\cross}\, ,\\
        \OPT^{(7)}&:=\OPT_{L,\mathrm{top}}\cup\OPT_{L,\bottom}\cup\OPT_{S,\midd,\mathrm{top},S}\, ,\\
        \OPT^{(8)}&:=\OPT_{L,\mathrm{top}}\cup\OPT_{L,\bottom}\cup\OPT_{S,\midd,\mathrm{top},L}\, ,\\
        \OPT^{(9)}&:=\OPT_{L,\mathrm{top}}\cup\OPT_{L,\bottom}\cup\OPT_{S,\midd,\bottom,S}\, ,\\
        \OPT^{(10)}&:=\OPT_{L,\mathrm{top}}\cup\OPT_{L,\bottom}\cup\OPT_{S,\midd,\bottom,L}\, ,\\
        \OPT^{(11)}&:=T'_{\sw}\, \mbox{, and}\\ 
        \OPT^{(12)}&:=T''_{\sw}\, .
    \end{align*}
\endgroup
We have that
$\OPT^{(1)},\OPT^{(9)},$ and $\OPT^{(10)}$ are pile boxable solutions,
$\OPT^{(2)},\OPT^{(4)},\OPT^{(7)}$, and $\OPT^{(9)}$ are laminar
boxable solutions, $\OPT^{(3)},\OPT^{(5)},\OPT^{(8)}$, and $\OPT^{(10)}$
are $\delta$-jammed solutions, and finally $\OPT^{(7)}$ and $\OPT^{(8)}$
are $O_{\delta}(1)$-boxable solutions.

If the claim of the lemma was not true, then there would be an instance
where a solution $z$ to the following linear program that satisfies
$z<32/63$.
\begin{alignat*}{1}
    \min z\\
    \mathrm{s.t.}\qquad x_{SBS}+x_{SBL}+x_{STS}+x_{STL}+x_{SMTS}+x_{SMTL}\\
    +x_{SMBS}+x_{SMBL}+x_{SMSW}+x_{SMCr}+x_{LT}+x_{LB}+x_{LM} & =1\\
    x_{SBS}+x_{SBL}+x_{STS}+x_{STL}+x_{SMTS}+x_{SMTL}+x_{SMBS}+x_{SMBL}+x_{SMSW}+x_{SMCr} & \le z\\
    x_{LT}+x_{LB}+x_{LM}+x_{SBS}/2 & \le z\\
    x_{LT}+x_{LB}+x_{LM}+x_{SBL} & \le z\\
    x_{LT}+x_{LB}+x_{LM}+x_{STS}/2 & \le z\\
    x_{LT}+x_{LB}+x_{LM}+x_{STL} & \le z\\
    x_{LT}+x_{LM}+x_{STS}/2+x_{STL}/2+x_{SMTS}/2+x_{SMTL}/2+x_{SMBS}/2+x_{SMBL}/2+x_{SMSW}/2 & \le z\\
    x_{LT}+x_{LM}+x_{SBS}/2+x_{SBL}/2+x_{SMTS}/2+x_{SMTL}/2+x_{SMBS}/2+x_{SMBL}/2+x_{SMSW}/2 & \le z\\
    x_{LT}+x_{LB}+x_{LM}+x_{SMCr} & \le z\\
    x_{LT}+x_{LB}+x_{LM}+x_{SMTL} & \le z\\
    x_{LT}+x_{LB}+x_{LM}+x_{SMTS}/2 & \le z\\
    x_{LT}+x_{LB}+x_{LM}+x_{SMBL} & \le z\\
    x_{LT}+x_{LB}+x_{LM}+x_{SMBS}/2 & \le z\\
    x_{LT}+x_{SMSW} & \le z\\
    x_{LB}+x_{SMSW} & \le z\\
    x_{LT},x_{LB},x_{LM},x_{SBS},x_{SBL},x_{STS},x_{STL},x_{SMTS},x_{SMTL},x_{SMBS},x_{SMBL},x_{SMSW},x_{SMCr} & \ge0
\end{alignat*}
and an optimal solution $\OPT$ such that $x_{LT}=w(\OPT_{L,\tp})/\opt$,
$x_{LB}=w(\OPT_{L,\bottom})/\opt$, $x_{SBS}=w(\OPT_{S,\bottom,S})/\opt$,
etc.. However, the optimal value to the above LP is $32/63$ where
an optimal solution is given by $x_{SBS}=x_{STS}=x_{SMTS}=x_{SMBS} = 2/63$,
$x_{SBL}=x_{STL}=x_{SMTL}=x_{SMBL}=x_{SMCr} =1/63$, 
$x_{LB}=13/63$, $x_{LM}=18/63$,
$x_{LT}=0$, and $x_{SMSW}=19/63$.
To certify the optimality, we obtain the following dual linear program with variables $\gamma_0$ and $\gamma_k$ for $k \in [14]$.
\begin{align*}
    \max\quad - \gamma_0\\
    s.t.\qquad \gamma_1 + \gamma_2 + \sum_{k=1}^{14} \gamma_k &\ge 1\\
    \gamma_0 + \gamma_1 + \gamma_2/2 + \gamma_7/2 &\ge 0\\
    \gamma_0 + \gamma_1 + \gamma_3 + \gamma_7/2 &\ge 0\\
    \gamma_0 + \gamma_1 + \gamma_4/2 + \gamma_6/2 &\ge 0\\
    \gamma_0 + \gamma_1 + \gamma_5 + \gamma_6/2 &\ge 0\\
    \gamma_0 + \gamma_2 + \gamma_3 + \gamma_4 + \gamma_5 + \gamma_6 + \gamma_7 + \gamma_8 +\gamma_9 +\gamma_{10} +\gamma_{11} +\gamma_{12} + \gamma_{13} &\ge 0\\
    \gamma_0 + \gamma_2 + \gamma_3 + \gamma_4 + \gamma_5 + \gamma_8 + \gamma_9 +\gamma_{10} +\gamma_{11} +\gamma_{12} +\gamma_{14}  &\ge 0\\
    \gamma_0 + \gamma_2 + \gamma_3 + \gamma_4 + \gamma_5 + \gamma_6 + \gamma_7 + \gamma_8 +\gamma_9 +\gamma_{10} +\gamma_{11} +\gamma_{12} &\ge 0\\
    \gamma_0 + \gamma_1 + \gamma_6/2 + \gamma_7/2 + \gamma_{10}/2 &\ge 0\\
    \gamma_0 + \gamma_1 + \gamma_6/2 + \gamma_7/2 + \gamma_{9} &\ge 0\\
    \gamma_0 + \gamma_1 + \gamma_6/2 + \gamma_7/2 + \gamma_{12}/2 &\ge 0\\
    \gamma_0 + \gamma_1 + \gamma_6/2 + \gamma_7/2 + \gamma_{11} &\ge 0\\
    \gamma_0 + \gamma_1 + \gamma_6/2 + \gamma_7/2 + \gamma_{13} + \gamma_{14} &\ge 0\\
    \gamma_0 + \gamma_1 + \gamma_8 &\ge 0\\
    \gamma_k&\ge 0\qquad\mbox{for all }k \in [14]
\end{align*}
The dual LP has a solution of value $- \gamma_0 = 32/63$, attainable by setting
$\gamma_0 = -32/63,
\gamma_1 = 29/63, 
\gamma_2 = 2/21, 
\gamma_3 = \gamma_8 = 1/21,
\gamma_4 = \gamma_{10} = \gamma_{12} = 4/63,
\gamma_5 = \gamma_6 = \gamma_{9} = \gamma_{11} = \gamma_{14} = 2/63,
\gamma_7 = 0,
\gamma_{13} = 0$.
Thus $z=32/63$ is the optimal value of the primal LP and the approximation ration is bounded from above by $63/32$.

\section{\label{sec:polytime-boxable}Polynomial time algorithm for boxable
    solutions}

Consider a $\beta$-boxable solution $(T_{\bo},h_{\bo})$ and a set
$\B$ of corresponding boxes. In this section we prove Lemma~\ref{lem:compute-constant-boxable},
i.e., we present a polynomial time algorithm that computes a solution
$(T',h')$ whose profit is at least $w(T')\ge w(T_{\bo}\cap T_{L})+(1/2-\epsilon)w(T_{\bo}\cap T_{S})$.
First, we prove that there is a $\beta$-boxable solution $(T'_{\bo},h'_{\bo})$
with boxes $\B'$ and essentially the same profit as $(T_{\bo},h_{\bo})$
such that $(T'_{\bo},h'_{\bo})$ has the following structure (see
Fig.~\ref{fig:short-chains}): there is a hierarchical decomposition
of $E$ with $\beta$ \emph{levels} such that for each $\ell\in[\beta]$
there is a partition of $E$ into a family of subpaths $\E_{\ell}$
such that for each subpath $E'\in\E_{\ell}$ there is a subpath $E''\in\E_{\ell-1}$
with $E'\subseteq E''$. Also, we assign a \emph{level} $\ell(B)\in[\beta]$
to each box $B\in\B'$ such that on the one hand, there is a subpath
$E'\in\E_{\ell(B)}$ with $P(B)\subseteq E'$, and on the other hand
for each level $\ell\in[\beta]$ and each subpath $E'\in\E_{\ell}$
there are at most $\beta/\epsilon$ boxes $B\in\B'$ of level $\ell$
that satisfy $P(B)\subseteq E'$. If a level assignment function $\ell\colon\B'\rightarrow[\beta]$
and a family $\{\E_{\ell}\}_{\ell\in[\beta]}$ satisfy the above conditions
for a set of boxes $\B'$ then we say that $(\ell,\{\E_{\ell}\}_{\ell\in[\beta]})$
forms a \emph{box-hierarchical decomposition }of $\B'$. 

For technical reasons, we define a \emph{quasi-boxable} solutions
to be the same as boxable solution, except that we change the
third part of the definition of boxable solution and require only
that $|T'|=1$ or we have $d_{i}\le\epsilon^{8}\cdot d_{B}$ for each
$i\in T'$ for a set of tasks $T'$ that fits into a box $B$. We
define $\beta$-quasi-boxable solutions analogous to $\beta$-boxable
solutions. The reason for this definition is that we need to
guess the sizes of the boxes later and want to allow only powers of
$1+\epsilon$ in order to bound the number of possibilities by a polynomial.

We show the following lemma in Section~\ref{sec:hierarchical-decomposition}. 
\begin{lem}
    \label{lem:box-hierarchical-decomposition} There exists a $\beta$-quasi-boxable
    solution $(T'_{\bo},h'_{\bo})$ with boxes $\B'$, $w(T'_{\bo}\cap T_{L})\ge(1-\epsilon)w(T{}_{\bo}\cap T_{L})$,
    and $w(T'_{\bo}\cap T_{S})\ge(1-\epsilon)w(T{}_{\bo}\cap T_{S})$
    such that there is a box-hierarchical decomposition $(\ell,\{\E_{\ell}\}_{\ell\in[\beta]})$
    of $\B'$. Furthermore, in polynomial time we can compute a set $\H$
    such that $h'_{\bo}(B)\in\H$ and $d_{B}\in\H$ for each $B\in\B'$
    and $|\H|\le n^{O_{\epsilon}(\beta^{\beta})}$. 
\end{lem}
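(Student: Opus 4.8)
The plan is to prove the lemma in three stages: (a) pass from a boxable to a quasi‑boxable solution so that we may assume every box size is a power of $1+\epsilon$; (b) delete a set of boxes of small total profit so that the remaining boxes become ``locally sparse''; (c) build the box‑hierarchical decomposition by a top‑down recursion over the $\beta$ levels and read off $\H$ along the way. Throughout we use that every box is either a singleton carrying one large task or carries only small tasks, so all profit losses can be bounded separately for $T_{\bo}\cap T_L$ and for $T_{\bo}\cap T_S$, which is exactly what the two profit guarantees require. For stage~(a) I would round each $d_B$ down to the nearest power of $1+\epsilon$; inside a box the tasks have size at most $\epsilon^{8}d_B$ (or the box is a singleton), so a standard shifting argument removes the tasks in a top $\Theta(\epsilon)$‑strip of every box at a cost of an $\epsilon$‑fraction of both $w(T_{\bo}\cap T_L)$ and $w(T_{\bo}\cap T_S)$, and the remainder is quasi‑boxable with all box sizes in the set $\H_0$ of powers of $1+\epsilon$ that are at most $\max_e u_e$; here $|\H_0|\le O_\epsilon(\log\max_e u_e)$, which is polynomial in the input size in both regimes of interest.

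For stages~(b)–(c), call a maximal subpath of $E$ whose every edge is covered by a box and that no box crosses a \emph{block}; blocks are separated by edges lying in no box. The combinatorial heart is the observation that a set of boxes of width at most $\beta$ that lives inside one block and admits a $\beta$‑level box‑hierarchical decomposition contains at most $g(\beta,\epsilon):=(\beta/\epsilon)^{O(\beta)}$ boxes: a ``brick‑wall'' chain inside a block has no free vertex, hence cannot be subdivided by any $\E_\ell$, so it must be distributed over the $\beta$ levels, each of which takes at most $\beta/\epsilon$ of its boxes, and iterating the subdivision that the recursion below will use gives the bound $g(\beta,\epsilon)$. Thus in stage~(b) I would, by a shifting argument (choosing the split vertices inside each block as the best of $\Theta(g(\beta,\epsilon))$ roughly evenly spaced offsets, and handling the few very wide boxes separately), delete a set of boxes of total profit at most $\epsilon\, w(T_{\bo}\cap T_L)+\epsilon\, w(T_{\bo}\cap T_S)$ so that afterwards every block contains at most $g(\beta,\epsilon)$ boxes; splitting a block requires deleting at once the at most $\beta$ unassigned boxes through the chosen vertex, since a single deletion need not expose a free edge.

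For stage~(c) I set $\E_0:=\{E\}$, let $\E_1$ consist of the blocks together with the free subpaths between them, and then process $k=0,1,\dots,\beta-1$: at step $k$, for each part $E'\in\E_k$ with set $\mathcal U$ of boxes having $P(B)\subseteq E'$ and not yet assigned a level, if $|\mathcal U|\le\beta/\epsilon$ I assign all of $\mathcal U$ level $k$ and let $E'$ be its only child in $\E_{k+1}$; otherwise I order $\mathcal U$ by left endpoint, pick $\Theta(1/\epsilon)$ rank positions splitting the order into $\Theta(1/\epsilon)$ groups of at most a $\Theta(\epsilon)$‑fraction of $|\mathcal U|$ each, assign level $k$ to the (at most $\beta$ per position, hence at most $\beta/\epsilon$ in total) boxes of $\mathcal U$ crossing the vertices at these positions, and, since those vertices are now free of still‑unassigned boxes, cut $E'$ at them into $\Theta(1/\epsilon)$ children in $\E_{k+1}$, each containing only the boxes lying entirely inside it. One then checks: every new cut vertex is interior only to level‑$k$ boxes, which are of level $<k+1$, so the cut is permissible and the $\E_\ell$'s refine one another while every box of level $\ell$ lies in a part of $\E_\ell$; each part of $\E_\ell$ receives at most $\beta/\epsilon$ level‑$\ell$ boxes; the number of unassigned boxes in a part shrinks by a constant factor per step, so, starting from at most $g(\beta,\epsilon)$ in each block, every box has been assigned by step $\beta-1$. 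Shifting the levels to $\{1,\dots,\beta\}$ gives the claimed box‑hierarchical decomposition of the surviving box set $\B'$, losing overall only $O(\epsilon)w(T_{\bo}\cap T_L)$ and $O(\epsilon)w(T_{\bo}\cap T_S)$ (rescale $\epsilon$ at the start). Finally, within each block (which now has at most $g(\beta,\epsilon)$ boxes) slide every box down until its lower edge meets the top of another box of the block or the floor $0$; this keeps $(\B',h'_{\bo})$ feasible and makes every $h'_{\bo}(B)$ a sum of at most $g(\beta,\epsilon)$ elements of $\H_0$, so taking $\H$ to be $\H_0$ together with all such sums gives $|\H|\le |\H_0|^{g(\beta,\epsilon)}\le n^{O_\epsilon(\beta^\beta)}$, and $\H$ is plainly computable in that time.

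The step I expect to be the real work is stage~(b) together with the sparsity bound in stage~(c): one has to pin down the exact threshold $g(\beta,\epsilon)$ that a block must satisfy for a $\beta$‑level decomposition to exist, and then design the deletions so that the removed profit stays below an $\epsilon$‑fraction of \emph{each} task class simultaneously — which is delicate because one cannot split a block by deleting a single box (one must delete the whole bundle of $\le\beta$ boxes through a vertex), and because ``wide'' boxes that span many block positions are crossed by almost every candidate split vertex and therefore need to be treated separately in the shifting. The remaining verifications (the refinement and per‑part counting properties of $\{\E_\ell\}$, the feasibility preservation under rounding and sliding, and the arithmetic of $|\H|$) are routine.
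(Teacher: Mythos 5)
The proof as written has a genuine gap in stages (b) and (c), and the gap is not a detail that can be patched without changing the architecture. Your stage (b) rests on the observation that a block admitting a $\beta$-level box-hierarchical decomposition contains at most $g(\beta,\epsilon)=(\beta/\epsilon)^{O(\beta)}$ boxes; this is false, because the definition of a box-hierarchical decomposition places no bound on the number of parts $|\E_\ell|$. Concretely, with uniform capacities take a single block with $\beta-1$ boxes that each span all $n$ edges plus $n$ boxes of path length one, one per edge: every edge is covered by exactly $\beta$ boxes (so the solution is $\beta$-boxable), and there is a valid $2$-level decomposition (spanning boxes at level $1$, $\E_2$ with one edge-long part per edge and one level-$2$ box per part), yet the block has $\beta-1+n$ boxes, unbounded. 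This breaks your argument twice over. First, the stage~(b) shifting cannot shrink such a block at acceptable cost: every interior vertex is crossed by all $\beta-1$ spanning boxes, so \emph{any} split must delete all of them, and no choice of offset avoids that — ``handling the few very wide boxes separately'' is not a repair, because the spanning boxes are exactly the ones forcing the deletion, and they may carry almost all of the profit. Second, it is also unnecessary: the block needs no splitting at all. Third, your stage (c) with a fixed $\Theta(1/\epsilon)$-ary cut per part terminates only if a block starts with $O(\beta/\epsilon^{\beta})$ boxes (even smaller than the $g$ you quote); in the example, after the spanning boxes are assigned level $1$ the remaining width-$1$ boxes cross no vertex, so subsequent cuts assign nothing, the per-part count drops only by a factor $\epsilon$ per round, and you run out of levels for large $n$. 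The correct decomposition (one box per edge-long part of $\E_2$) simply does not have the bounded-fan-out form your recursion enforces.

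The paper's proof is genuinely different and does not need small blocks. It assigns levels by a greedy left-to-right chain cover (a box gets level $j$ if, among boxes not already assigned levels $<j$, it is the one using the leftmost uncovered edge and reaching farthest right), sets $\E_\ell$ to be the connected components of the union of the level-$\ell$ box paths (which can be as many parts as needed), and deletes a random $\epsilon$-fraction by a shift \emph{along the chain order} rather than along geometric positions — this is precisely what makes the loss per box $O(\epsilon/\beta)$ even for boxes that span many positions, because the greedy construction guarantees each box meets at most one shifted boundary per level. For $\H$, the paper then rounds box demands to powers of $1+\epsilon$ (this part of your proposal is essentially Lemma~\ref{lem:possible-sizes} and is fine), applies gravity, and bounds the length of vertical chains of mutually path-overlapping boxes by $(\beta/\epsilon)^\beta$ using an induction over the levels of the decomposition — a structural consequence of the decomposition itself, not of any block-size bound. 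Your sliding argument for $\H$ inherits the same flaw as stages~(b)--(c), since it uses the (unavailable) bound on the number of boxes per block.
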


We devise an algorithm that intuitively guesses the boxes in $\B'$,
guided by the box-hierarchical decomposition $(\ell,\{\E_{\ell}\}_{\ell\in[\beta]})$
and assigns tasks into the boxes in the order of the levels, i.e.,
first assigns tasks into the boxes of level 1, then into the boxes
of level 2, etc. We first describe our algorithm as an exponential
time recursive algorithm and then afterwards embed it into a polynomial
time dynamic program.

In each recursive call of our algorithm we are given a level $\ell$,
a path $\bar{E}\in\E_{\ell}$, a set of boxes $\bar{\B}$, a height
$h(B)$ for each box $B\in\bar{\B}$, and a set of tasks $\bar{T}$
where the intuition is that $\bar{T}$ are some tasks that were previously
assigned into $\bar{\B}$ (the reader may imagine that $\ell=1$,
$E'\in\E_{1}$, and $\bar{T}=\bar{\B}=\emptyset$). First, we guess
the $O(\beta^{2}/\epsilon)$ boxes $\B''\subseteq\B'$ of level $\ell$
that satisfy $P(B)\subseteq E'$, i.e., for each box $B\in\B''$ we
guess $s_{B},t_{B},d_{B}$, and $h'_{\bo}(B)$. For each box $B\in\B''$
we guess whether in $T'_{\bo}$ it contains only one task $i$ and
for each such box we guess this task $i$. Then, we consider all other
boxes in $\B''$ one by one and assign tasks into them using the following
lemma. 
\begin{lem}
    \label{lem:fill-tasks-into-boxes}Given a box $B$ and a set of tasks
    $T'$ where $d_{i}\le\epsilon^{7}d_{B}$ for each $i\in T'$. In polynomial
    time we can compute a set $T''\subseteq T'$ that fits into $B$ and
    satisfies $w(T'')\ge(1-O(\epsilon))w(T_{\OPT}(B))$ where $T_{\OPT}(B)\subseteq T'$
    is a most profitable subset of $T'$ that fits into $B$.
\end{lem}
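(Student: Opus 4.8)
The plan is to reduce the problem of packing small tasks into a single box $B$ (with capacity $d_B$ on every edge of $P(B)$) to a known approximation result for SAP — or more precisely to a result for the closely related ``column-based'' packing that holds when all items are tiny compared to the capacity. Concretely, since every $i \in T'$ satisfies $d_i \le \epsilon^7 d_B$, the instance restricted to the edges $P(B)$ with uniform capacity $d_B$ is a SAP instance in which all tasks are ``sufficiently small'' in the sense of \cite{MW15_SAP}. First I would invoke the $(1+\epsilon)$-approximation algorithm for SAP with sufficiently small tasks from \cite{MW15_SAP} (the same subroutine that is cited in the paragraph following Lemma~\ref{lem:RA-boxable-solutions}); applied to the instance $(T', P(B))$ with uniform capacity $d_B$, it returns in polynomial time a feasible assignment $(T'', h'')$ with $w(T'') \ge (1-O(\epsilon)) w(T_{\OPT}(B))$, where $T_{\OPT}(B)$ is the most profitable subset of $T'$ that fits into $B$. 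Since feasibility for the uniform-capacity-$d_B$ instance on $P(B)$ together with $P(i)\subseteq P(B)$ and $d_i \le \epsilon^7 d_B \le \epsilon^8 d_B / \epsilon \le \epsilon^8 d_B$ (after possibly tightening constants, or simply noting $\epsilon^7 \le \epsilon^8$ is false — so instead one must check the definition of ``fits'') is exactly the definition of $T''$ fitting into $B$, this yields the claim.

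There is a subtlety I need to handle in the second step: the definition of ``$T''$ fits into $B$'' requires $d_i \le \epsilon^8 d_B$ for each $i \in T''$, whereas the hypothesis only gives $d_i \le \epsilon^7 d_B$. This is not actually a gap, because the lemma is always applied (in Section~\ref{sec:polytime-boxable}) to tasks that are small relative to the \emph{original} box, and the factor-$\epsilon$ slack between $\epsilon^7$ and $\epsilon^8$ is there precisely so that we can afford to shrink $B$ by a factor $\epsilon$ — or, more cheaply, so that we may assume w.l.o.g.\ that the usable box has demand $d_B$ while the items are bounded by $\epsilon^8 d_B$ after rescaling $d_B \mapsto d_B/\epsilon$ loses nothing relevant here. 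The cleanest route is: if some $i \in T'$ has $d_i > \epsilon^8 d_B$, it still satisfies $d_i \le \epsilon^7 d_B$, so $T_{\OPT}(B)$ may legitimately contain such an item; but then I would simply apply the small-task SAP algorithm with the genuine capacity $d_B$ (it does not care about the $\epsilon^8$ threshold, which is a bookkeeping device of the \emph{boxable-solution} definition, not of feasibility), and the output $T''$ automatically satisfies the feasibility constraints $h''(i) + d_i \le d_B$ on $P(B)$, which is all that ``fits into $B$'' requires once $|T''|\ge 2$; the $\epsilon^8$ bound on individual demands then follows because we may discard the $O(\epsilon)$-fraction of profit contributed by items with $d_i > \epsilon^8 d_B$ — and there is no such fraction to discard in the applications, so in the statement as used the hypothesis $d_i \le \epsilon^7 d_B$ can be read as $d_i \le \epsilon^8 d_B$ up to the harmless constant change.

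The main obstacle, and the only place where real work is needed, is verifying that the $(1+\epsilon)$-approximation for small-task SAP from \cite{MW15_SAP} can indeed be black-boxed here: its guarantee is stated for tasks that are small compared to the \emph{bottleneck capacity} $b(i) = \min_{e\in P(i)} u_e$, and here every edge of $P(B)$ has capacity exactly $d_B$, so $b(i) = d_B$ for every $i \in T'$ and $d_i \le \epsilon^7 d_B = \epsilon^7 b(i)$, which is within the regime required by that algorithm provided $\epsilon^7 \le \mu$ for the relevant smallness constant $\mu$ of \cite{MW15_SAP}; by choosing $\epsilon$ small enough (which we are free to do, absorbing the loss into the overall $O(\epsilon)$) this holds. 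Everything else — running time polynomial in $n$, and the $(1-O(\epsilon))$ loss — is inherited verbatim from \cite{MW15_SAP}, so I would state the reduction, cite the external result, and record the constant-tuning remark, without reproving the small-task PTAS.
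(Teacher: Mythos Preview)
Your core approach---reduce to a known near-optimal packing result for tasks that are tiny relative to a uniform capacity---is correct and is morally what the paper does, but you cite a different black box. The paper does not invoke the small-task SAP PTAS of~\cite{MW15_SAP}; instead it assembles the argument from three pieces: (i) solve the \emph{UFP} relaxation on $P(B)$ with uniform capacity $d_B$ via the $(1+O(\epsilon))$-approximation of C\u{a}linescu et al.~\cite{CCKR11}; (ii) lose another $(1+O(\epsilon))$ factor to ensure at most a $(1-\epsilon)$-fraction of $d_B$ is used on every edge (cf.~\cite{grandoni2017augment}); (iii) convert the resulting UFP solution into an actual rectangle packing inside $B$ via the DSA bound of Buchsbaum et al.~\cite{buchsbaum2004opt} (as stated in~\cite[Lemma~9]{bar2017constant}), which succeeds precisely because $d_i \le \epsilon^{7} d_B$ makes the $(h_{\max}/L)^{1/7}$ overhead at most $O(\epsilon)$. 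Your route is shorter but leans on a heavier external result; the paper's route exposes that the only nontrivial step is the UFP-to-SAP conversion and names the specific tool that handles it.

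Your long detour on $\epsilon^{7}$ versus $\epsilon^{8}$ is unnecessary and should be cut. The paper's own proof ignores this discrepancy, and the resolution is immediate: since $T_{\OPT}(B)$ by definition fits into $B$, if $|T_{\OPT}(B)|\ge 2$ then every task in it already has $d_i \le \epsilon^{8} d_B$, so one may pre-filter $T'$ to $\{i\in T': d_i \le \epsilon^{8} d_B\}$ without changing the benchmark; the case $|T_{\OPT}(B)|=1$ is handled separately by the surrounding algorithm anyway. None of your proposed workarounds (rescaling $d_B$, discarding an $O(\epsilon)$-fraction, etc.) are needed.
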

\begin{proof}
    We first solve the Unsplittable Flow on a Path (UFP) version of the problem, i.e., for each edge $e$ we only require that $\sum_{i \in T''_e} d_{i} \le d_B$ and not that the tasks are drawn as boxes.
    For UFP with uniform capacities, C\u{a}linescu et al.~\cite{CCKR11} have shown how to obtain a $(1+O(\epsilon))$-approximation in polynomial time.
    Furthermore, by a weight-loss of factor $(1+O(\epsilon))$, we can assume that that only a $1-\epsilon$ fraction of the capacity $d_B$ is used by small tasks, cf.~\cite{grandoni2017augment}.
    Then the claim follows from Bar-Yehuda et al.~\cite[Lemma 9]{bar2017constant} (based on a result of Buchsbaum et al.~\cite{buchsbaum2004opt}).
\end{proof}

Formally, we fix a global order for all boxes that can appear in the
recursion (e.g., by start vertex $s_{B}$, in case of a tie by end
vertex $t_{B}$, etc.) and order $\B''$ in this order, say $\B''=\{B_{1},\dotsc,B_{|\B''|}\}$.
We apply Lemma~\ref{lem:fill-tasks-into-boxes} to the boxes in this
order. The input tasks $T'$ for each box $B_{k}$ are all tasks $i\in T$
with $P(i)\subseteq P(B_{k})$ and $d_{i}\le\epsilon^{7}d_{B_{k}}$
that we did not assign previously to the boxes $B_{1},\dotsc,B_{k-1}$
and that are not contained in $\bar{T}$. Denote by $\tilde{T}$ the
set of all tasks that we selected (including the tasks assigned into
boxes that contain only one task each). Then we guess all paths $E''\in\E_{k+1}$
such that $E''\subseteq E'$ and recurse, where we have one recursive
call $(\ell+1,E',\bar{T}\cup\tilde{T})$ for each path $E''\in\E_{k+1}$.
We output the union of $\tilde{T}$ and the sets returned by all recursive
calls.

For convenience, assume that there is a dummy level $\ell=0$ such
that $\E_{0}=\{E\}$ and there are no boxes of level 0. We output
the solution given by the call for $\ell=0$, $E'=E$, and $\bar{T}=\bar{\B}=\emptyset$.

\paragraph{Obtained profit.}

We argue that the above algorithm computes a solution with profit
at least $w(T_{\bo}\cap T_{L})+(1/2-\epsilon)w(T_{\bo}\cap T_{S})$.
A key complication is here that when we add tasks to the boxes $\B_{\ell}$
of some level $\ell$ then it could be that our algorithm of Lemma~\ref{lem:fill-tasks-into-boxes}
picks tasks that in $T'_{\bo}$ are assigned to some boxes of level
$\ell'$ with $\ell'>\ell$ and when we guess these boxes of level
$\ell'$ later then in retrospect we would have liked to assign other
tasks into $\B_{\ell}$. We argue that this issue costs us only a
factor of $2$ in the profit due to the small tasks, using a double
counting argument similar as in \cite{UFP-improve-2}. To this end,
we build a new solution $T''_{\bo}$ with $w(T''_{\bo})\ge w(T'_{\bo}\cap T_{L})+\frac{1-\epsilon}{2}w(T'_{\bo}\cap T_{S})$
as follows. We iterate through the boxes of $\B'$ ordered by their
respective levels and within each level we order the boxes by the
global ordering from above. For each box $B$ that in $T'_{\bo}$
contains only one task we assign the same task to $B$ as in $T'_{\bo}$
(unless it was previously assigned to some other box). For each box
$B$ that in $T'_{\bo}$ contains more than one task we assign tasks
into $B$ using the algorithm due to Lemma~\ref{lem:fill-tasks-into-boxes}
where the input consists of all input tasks that would potentially
fit into $B$ and that were not previously assigned to other boxes.
Denote by $T''_{\bo}$ the resulting solution. 
\begin{lem}
    We have that $w(T''_{\bo})\ge w(T'_{\bo}\cap T_{L})+\frac{1-\epsilon}{2}w(T'_{\bo}\cap T_{S})$. 
\end{lem}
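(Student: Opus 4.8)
The plan is to compare the solution $T''_{\bo}$ that we build greedily, level by level, against the reference solution $T'_{\bo}$ from Lemma~\ref{lem:box-hierarchical-decomposition}, and to show via a charging argument that each task of $T'_{\bo}$ is ``paid for'' by $T''_{\bo}$ — in full if it is large, and at rate $\tfrac{1-\epsilon}{2}$ if it is small. Since large tasks sit alone in their boxes, when we reach such a box $B$ in our ordering either the task $i(B)$ of $T'_{\bo}$ is still available, in which case we take it, or it was already assigned to an earlier box, in which case we have already collected $w_i$; either way the large-task profit of $T'_{\bo}$ is collected exactly once, giving the term $w(T'_{\bo}\cap T_L)$. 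The real work is the factor $\tfrac12$ on the small tasks.

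For the small tasks I would argue as follows. Fix a box $B$ that in $T'_{\bo}$ contains more than one task, and let $T'_{\bo}(B)\subseteq T'_{\bo}\cap T_S$ be the tasks of $T'_{\bo}$ assigned to $B$. When $T''_{\bo}$ processes $B$, the input given to Lemma~\ref{lem:fill-tasks-into-boxes} is the set of all tasks that fit into $B$ and were not yet chosen for an earlier box. Every task of $T'_{\bo}(B)$ that is \emph{not} yet chosen is in this input set, so the optimal packing of the input into $B$ has weight at least $w$ of those surviving tasks; hence by Lemma~\ref{lem:fill-tasks-into-boxes} the algorithm collects at least $(1-O(\epsilon))$ times the weight of the not-yet-chosen tasks of $T'_{\bo}(B)$. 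So for each small task $i\in T'_{\bo}\cap T_S$, letting $B$ be the box of $T'_{\bo}$ containing $i$ and $B'$ the box of $T''_{\bo}$ that first picks $i$ (if any): either $B'$ is processed no later than $B$, and then $i$ contributes to the profit collected at $B'$ (charge $i$ to itself); or $B'$ is processed strictly after $B$, or $i$ is never picked, and then $i$ was unavailable when we processed $B$ only because it had already been taken by a still earlier box $B''\ne B'$ — but that contradicts minimality of $B'$ unless $i$ is picked exactly when we process $B$. The clean way to phrase this: for each small $i\in T'_{\bo}$, at the moment $T''_{\bo}$ processes the box $B$ that holds $i$ in $T'_{\bo}$, either $i$ has already been collected, or $i$ is in the input to Lemma~\ref{lem:fill-tasks-into-boxes} for $B$. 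Summing the guarantee of Lemma~\ref{lem:fill-tasks-into-boxes} over all such boxes $B$, and observing that the ``already collected'' small tasks are counted by \emph{some} box's collected profit, each small task of $T'_{\bo}$ is counted in the total collected small-task profit, \emph{but} a task collected at box $B'$ while being assigned to box $B\ne B'$ in $T'_{\bo}$ may be double-counted against two reference boxes; the double-counting factor is at most $2$ because $i$ is charged to its home box $B$ in $T'_{\bo}$ and to the home box (in $T'_{\bo}$) of whichever task $i$ displaces. This is exactly the double-counting of~\cite{UFP-improve-2}: each collected small task is charged to at most two reference boxes, so $\sum_B (\text{collected small profit at }B) \ge \tfrac12 w(T'_{\bo}\cap T_S)$, and after the $(1-O(\epsilon))$ loss from Lemma~\ref{lem:fill-tasks-into-boxes} we get $w(T''_{\bo}\cap T_S)\ge \tfrac{1-\epsilon}{2}w(T'_{\bo}\cap T_S)$, hence the claim.

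The main obstacle is making the double-counting bookkeeping precise: one must set up an injective-up-to-multiplicity map from $T'_{\bo}\cap T_S$ into the multiset of tasks actually collected by $T''_{\bo}$, showing every reference small task maps to a collected task and no collected task is the image of more than two reference tasks. The subtlety is that a collected task $j$ may not itself lie in $T'_{\bo}$ at all (it could be an input task outside $T'_{\bo}$ that Lemma~\ref{lem:fill-tasks-into-boxes} happened to prefer), so the charging must go the other direction — from reference tasks to collected tasks — and the ``$\le 2$'' bound comes from: a reference task $i\in T'_{\bo}(B)$ either is collected (charge to itself) or, when $B$ was processed, its slot's worth of profit was instead spent on other collected tasks whose home boxes in $T'_{\bo}$ are processed no later than $B$; a clean amortized accounting (one token on $i$'s home box, one on the displacer's home box) then yields the factor $2$. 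I would also double-check the ordering hypothesis that boxes are processed by level then global order, which is what guarantees that whenever a task is displaced, the displacing box is processed earlier, so the charging terminates.
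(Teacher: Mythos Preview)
You have correctly identified the two key ingredients: (i) every large task of $T'_{\bo}$ ends up in $T''_{\bo}$ (it lives alone in its box, and when that box is processed the task is either still available or has already been collected), and (ii) for each multi-task box $B$, the tasks of $T'_{\bo}(B)$ that have not yet been chosen are in the input to Lemma~\ref{lem:fill-tasks-into-boxes}, so $w(T''_{\bo}(B))\ge(1-O(\epsilon))$ times their weight. This is exactly what the paper uses.

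Where your argument goes astray is in the bookkeeping. You try to set up a task-level charging scheme (an ``injective-up-to-multiplicity map from $T'_{\bo}\cap T_S$ into the multiset of tasks actually collected''), and you speak of a reference task being ``charged to the home box of whichever task $i$ displaces''. No such map need exist: Lemma~\ref{lem:fill-tasks-into-boxes} only guarantees that the \emph{total weight} collected in $B$ is at least $(1-O(\epsilon))$ times the weight of any feasible subset of the input; it does not single out a ``displacer'' for any particular missing task. The double-counting has to be done at the level of weights, not individual tasks.

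The paper's formalization is cleaner than either of the splits you attempt. Partition $T'_{\bo}\cap T_S$ into $A:=T'_{\bo}\cap T_S\cap T''_{\bo}$ (reference small tasks that \emph{do} appear in $T''_{\bo}$, no matter which box collected them) and $C:=T'_{\bo}\cap T_S\setminus T''_{\bo}$ (those that \emph{never} appear). Trivially $w(A)\le w(T''_{\bo}\cap T_S)$. For $C$: every task of $C\cap T'_{\bo}(B)$ is never collected, hence certainly in the input when $B$ is processed; since $C\cap T'_{\bo}(B)\subseteq T'_{\bo}(B)$ fits into $B$, Lemma~\ref{lem:fill-tasks-into-boxes} gives $w(T''_{\bo}(B))\ge(1-\epsilon)\,w(C\cap T'_{\bo}(B))$, and summing over $B$ yields a second bound $w(T''_{\bo}\cap T_S)\ge(1-\epsilon)\,w(C)$. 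Adding the two inequalities gives $\frac{2}{1-\epsilon}\,w(T''_{\bo}\cap T_S)\ge w(A)+w(C)=w(T'_{\bo}\cap T_S)$, which together with the large-task part is the claim. Splitting on the \emph{global} event ``is the task in $T''_{\bo}$?'' rather than on the \emph{local} event ``was it already collected when its home box was processed?'' makes the availability of the $C$-tasks at their home box automatic and eliminates any need for a charging map.
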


\begin{proof}
    Observe that $T''_{\bo}\cap T_{L}=T'_{\bo}\cap T_{L}$ since each
    large task $i\in T'_{\bo}$ is assigned to a box that contains only
    $i$. For the small tasks we use a double-counting argument. We split
    the tasks in $T'_{\bo}\cap T_{S}$ into the two groups $T'_{\bo}\cap T_{S}\cap T''_{\bo}$
    and $T'_{\bo}\cap T_{S}\setminus T''_{\bo}$. By definition we have
    that $w(T''_{\bo}\cap T_{S})\ge w(T'_{\bo}\cap T_{S}\cap T''_{\bo})$.
    We argue that also $w(T''_{\bo}\cap T_{S})\ge(1-\epsilon)w(T'_{\bo}\cap T_{S}\setminus T''_{\bo})$:
    Consider a box $B\in\B'$. Let $T'_{\bo}(B)$ denote the tasks that
    in $T'_{\bo}$ are assigned to $B$ and let $T''_{\bo}(B)$ denote
    the tasks that we assigned into $B$ when we defined $T''_{\bo}$.
    When we defined $T''_{\bo}(B)$ then all tasks in $T'_{\bo}(B)\cap T_{S}\setminus T''_{\bo}$
    were available. Hence, we have that $w(T''_{\bo}(B))\ge(1-\epsilon)w(T'_{\bo}(B))$.
    We conclude that $w(T''_{\bo}\cap T_{S})=\bigcup_{B\in\B'}w(T''_{\bo}(B))\ge(1-\epsilon)\bigcup_{B\in\B'}w(T'{}_{\bo}(B)\cap T_{S}\setminus T''_{\bo})=(1-\epsilon)w(T''_{\bo}\cap T_{S}\setminus T''_{\bo})$.
    Therefore, $\frac{2}{1-\epsilon}w(T''_{\bo}\cap T_{S})\ge w(T'_{\bo}\cap T_{S})$
    and the claim of the lemma follows. 
\end{proof}
The solution computed by the recourse algorithm above is at least
as profitable as $T''_{\bo}$ since if we guess exactly the boxes
in $\B'$ in each step then we obtain $T''_{\bo}$. At each guessing
step we select the most profitable solution among all possibilities
and, therefore, the computed solution is at least as profitable as
$T''_{\bo}$. Also, by construction it is clear that the computed
solution is feasible.

\subsection{Dynamic program\label{sec:boxable-solutions-DP}}

We can embed the above recursion into a polynomial time DP as follows:
in order to describe a recursive call $(\ell,\bar{E},\bar{T},\bar{\B},h)$
it suffices to know $\ell,\bar{E},\bar{\B},h$ (for which there are
only $n^{O_{\epsilon}(1)}$ possibilities), the at most $n^{O(\beta^{3}/\epsilon)}$
tasks that were previously assigned into boxes $\bar{\B}$ such that
their respective box contains only one task, and the level of each
box $\bar{\B}$. Then, one can reconstruct the tasks that were assigned
in the boxes in $\bar{\B}$ that contain more than one task by simply
running the algorithm from Lemma~\ref{lem:fill-tasks-into-boxes}
in the correct order of the boxes. Therefore, the number of possible
recursive calls is bounded by a polynomial. Also, we can construct
a sub-DP that computes the best partition $\E_{k+1}$ in time $n^{O(1)}$.
Thus, we obtain a polynomial time DP.

Formally, we introduce a DP-cell $(\ell,E',\B_{1}^{S},\dotsc,\B_{\ell-1}^{S},\B_{1}^{L},\dotsc,\B_{\ell-1}^{L},h,T_{L}^{1},\dotsc,T_{L}^{\ell-1})$
for each combination of a level $\ell\in[\beta]$, a subpath $E'\subseteq E$
(the intuition being that $E'\in\E_{\ell}$), for each level $\ell'\in[\ell-1]$
a set of boxes $\B_{\ell'}^{S}$ with $\left|\B_{\ell'}^{S}\right|\le O(\beta^{2}/\epsilon)$
that are intuitively previously guessed boxes of level $\ell'$ containing
more than one task, a set of boxes $\B_{\ell'}^{L}$ with $\left|\B_{\ell'}^{L}\right|\le O(\beta^{2}/\epsilon)$
that are intuitively previously guessed boxes of level $\ell'$ containing
only one task each, a set of tasks $T_{L}^{\ell'}$ that are intuitively
previously guessed tasks that are assigned into the boxes in $\B_{\ell'}^{L}$,
and a function $h:\bigcup_{\ell'=1}^{\ell-1}\B_{\ell'}^{S}\cup\B_{\ell'}^{L}\rightarrow\H$
that assigns a height to each box of the DP-cell. Observe that once
we know the boxes $\bigcup_{\ell'=1}^{\ell-1}\B_{\ell}^{S}$ then
we can reconstruct which small tasks were assigned to them in the
recursion: order them by level and then within each level according
to the global order and apply Lemma~\ref{lem:fill-tasks-into-boxes}
to the boxes in this order. Given such a DP-cell we first guess $O(\beta^{2}/\epsilon)$
boxes $\B_{\ell}^{S},\B_{\ell}^{L}$ that satisfy $P(B)\subseteq E'$
and also guess their heights. Let $h':\bigcup_{\ell'=1}^{\ell}\B_{\ell'}^{S}\cup\B_{\ell'}^{L}\rightarrow\H$
denote the function that maps the guessed a height to each box in
$\B_{\ell}^{S}\cup\B_{\ell}^{L}$ and for each box $B\in\bigcup_{\ell'=1}^{\ell-1}\B_{\ell'}^{S}\cup\B_{\ell'}^{L}$
we have that $h'(B)=h(B)$. For each box $B\in\B_{\ell}^{L}$ we guess
a task $i$ that fits into $B$ and we assign $i$ to $B$. Let $T_{L}^{\ell}$
denote the set of all these guessed tasks. For the other boxes $\B_{\ell}^{S}$
we assign tasks into them using Lemma~\ref{lem:fill-tasks-into-boxes}
where in the input we do not allow tasks that were assigned into the
boxes $\B_{1}^{S},\dotsc,\B_{\ell-1}^{S}$ as argued above. For each
box $B\in\B_{\ell}^{S}$ let $ALG(B)$ denote the tasks assigned to
$B$. Then we use a sub-DP that intuitively guesses the paths $E''\in\E_{\ell+1}$
with $E''\subseteq E'$. More formally, it computes a partition $\E'_{\ell+1}$
of $E'$ that maximizes 
\[
    \sum_{E''\in\E'_{\ell+1}}w(DP(\ell+1,E'',\B_{1}^{S},\dotsc,\B_{\ell-1}^{S},\B_{\ell}^{S},\B_{1}^{L},\dotsc,\B_{\ell-1}^{L},\B_{\ell}^{L},h',T_{L}^{1},\dotsc,T_{L}^{\ell-1},T_{L}^{\ell}))
\]
where for each cell $(\hat{\ell},\hat{E}',\hat{\B}_{1}^{S},\dotsc,\hat{\B}_{\ell-1}^{S},\hat{\B}_{1}^{L},\dotsc,\hat{\B}_{\ell-1}^{L},\hat{h},\hat{T}_{L}^{1},\dotsc,\hat{T}_{L}^{\ell-1})$,
denote by 
\[
    DP(\hat{\ell},\hat{E}',\hat{\B}_{1}^{S},\dotsc,\hat{\B}_{\ell-1}^{S},\hat{\B}_{1}^{L},\dotsc,\hat{\B}_{\ell-1}^{L},\hat{h},\hat{T}_{L}^{1},\dotsc,\hat{T}_{L}^{\ell-1})
\]
the solution stored this cell. Finally, in the cell $(\ell,E',\B_{1}^{S},\dotsc,\B_{\ell-1}^{S},\B_{1}^{L},\dotsc,\B_{\ell-1}^{L},h,T_{L}^{1},\dotsc,T_{L}^{\ell-1})$,
we store the solution 
\[
    \bigcup_{B\in\B_{\ell}^{S}}ALG(B)\cup T_{L}^{\ell}\cup\bigcup_{E''\in\E'_{\ell+1}}DP(\ell+1,E'',\B_{1}^{S},\dotsc,\B_{\ell-1}^{S},\B_{\ell}^{S},\B_{1}^{L},\dotsc,\B_{\ell-1}^{L},\B_{\ell}^{L},h',T_{L}^{1},\dotsc,T_{L}^{\ell-1},T_{L}^{\ell})
\]
for the partition $\E'_{\ell+1}$ that maximizes this profit. Finally,
we output the solution stored in the cell $(0,E,h_{\emptyset})$ where
$h_{\emptyset}:\emptyset\rightarrow\H$ is a dummy function since
in this cell no boxes are specified. The number of DP-cells is bounded
by $n^{O(\beta^{3}/\epsilon)}$. The mentioned sub-DP for computing
the partition $\E'_{\ell+1}$ can be implemented in time $n^{O(1)}$.
Hence, the overall running time is $n^{O(\beta^{3}/\epsilon)}$. This
completes the proof of Theorem~\ref{lem:compute-constant-boxable}.

\subsection{Hierarchical decomposition lemma}

\label{sec:hierarchical-decomposition} In this section, we show Lemma~\ref{lem:box-hierarchical-decomposition}.
We assign levels to the boxes in $\B$. Let $B_{1,1}\in\B$ denote
the box with the leftmost start vertex $s_{B_{1,1}}$. If there are
several boxes with this start vertex then we define $B_{1,1}$ to
be such a box with rightmost end vertex $t_{B_{1,1}}$. We define
that $B_{1,1}$ has level 1. Iteratively, suppose that we defined
that some box $B_{1,k}$ has level 1. Let $e$ denote the leftmost
edge on the right of $t_{B_{1,k}}$ that is used by a box in $\B$
(if there is no such edge then we stop). Let $B_{1,k+1}$ denote the
box in $\B$ that uses $e$ with rightmost end vertex $t_{B_{1,k+1}}$.
We define that $B_{1,k+1}$ has level 1. We continue iteratively.
Let $\B_{1}\subseteq\B$ denote the boxes of level 1. In order to
define the boxes of level 2, we repeat the above process with $\B\setminus\B_{1}$.
Iteratively, suppose that we defined boxes of levels $1,\dotsc,k$
defined by sets $\B_{1},\dotsc,\B_{k}\subseteq\B$. We define boxes
of level $k+1$ by repeating the above process with $\B\setminus\bigcup_{\ell=1}^{k}\B_{\ell}$.
Denote the $j$th box added in level $k$ by $B_{k,j}$.

For each $\ell$ one can show that each edge $e$ is used by 1 or
2 boxes in $\B_{\ell}$, unless no box in $\B\setminus\bigcup_{\ell'=1}^{\ell-1}\B_{\ell}$
uses $e$. Therefore, the above process stops after assigning levels
$1,\dotsc,\beta$ since $(T_{\bo},h_{\bo})$ is $\beta$-boxable.

We next remove some of the boxes and also all tasks in these
boxes. Let $\gamma\in[\beta/\epsilon]$ be an offset (note that
$\beta/\epsilon\in\N$ since $1/\epsilon\in\N$). Let $Q:=\{q=q'\cdot\beta/\epsilon+\gamma\mid q'\in\mathbb{N}_{0}\}$
be the set of multiples of $\beta/\epsilon$ with offset $\gamma$.
For $j\in[\beta]$, let $R_{j}$ be the set of edges $e$ such that
the edge on the left of $e$ is the right-most edge of $P(B_{j,q})$
for some $q\in Q$. We remove all boxes $B_{j',k'}$ with $j'\ge j$
and $k'\in\mathbb{N}$ such that $e\in R_{j}\cap P(B_{j',k'})$. For
each removed box, we also remove the tasks in the box. We claim that
there is a choice of an offset $\gamma$ such that the weight of the
remaining tasks as follows as claimed: $w(T'_{\bo}\cap T_{L})\ge(1-\epsilon)w(T{}_{\bo}\cap T_{L})$
and $w(T'_{\bo}\cap T_{S})\ge(1-\epsilon)w(T{}_{\bo}\cap T_{S})$.

Let $\gamma$ be an integer chosen uniformly at random in the range
$[\beta/\epsilon]$. Consider a $j\in[\beta]$. By the construction
of the set $\B_{j}$, for each box $B\in\B_{j}$ there is at most
one edge in $R_{j}\cap P(B)$. Therefore, for each $k'$, the box
$B_{j,k'}$ has an edge $e\in B_{j',k'}\cap R_{j}$ with probability
at most $\epsilon/(\beta)$. 
For each box $B\in\B_{j'}$ with $j'>j$, there is also at most one
edge in $R_{j}\cap P(B)$, since otherwise we would have added $B$
to $\B_{j}$ instead of $\B_{j'}$. 
Therefore, for each $k'$, the box $B_{j',k'}$ has an edge $e\in B_{j',k'}\cap R_{j}$
with probability at most $\epsilon/\beta$.

Let $T(B)=\{i\in T_{\bo}\mid(T_{\bo},h_{\bo})\mbox{ assigns \ensuremath{i} to \ensuremath{B}}\}$
and $w(B)=w(T(B))$. Let $X_{j}^{L}$ be the random variable with
value $w(\{B\in\B\mid P(B)\cap R_{j}\neq\emptyset\}\cap T_{L})$,
i.e., the expected weight of removed large tasks. Analogously, let
$X_{j}^{S}$ be the random variable with value $w(\{B\in\B\mid P(B)\cap R_{j}\neq\emptyset\}\cap T_{S})$.
By linearity of expectation, $E[X_{j}^{L}]\le\epsilon w(\B\cap T_{L})/\beta$
and $E[X_{j}^{S}]\le\epsilon w(\B\cap T_{S})/\beta$. Since there
are at most $\beta$ levels, the overall expected amount of removed
tasks values $E[\sum_{j}\in[\beta]X_{j}^{L}]$ is at most $\epsilon w(\B\cap T_{L})$
and $E[\sum_{j}\in[\beta]X_{j}^{S}]$ is at most $\epsilon w(\B\cap T_{S})$.
We derandomize the process by taking the best of the $\beta/\epsilon$
choices of $\gamma$.

Let $\bar{\B}$ be $\B$ without the removed tasks and $\bar{\B}_{j}=\B_{j}\cap\bar{\B}$
for each $j\in[\beta]$. Then $T_{\bo}'=\{i\in T_{\bo}\mid B\in\bar{B}\mbox{ and }i\in B\}$,
and $h_{\bo}'=h_{|T'}$. We assign level $\ell(B)=j$ to all boxes
$B\in\bar{B}_{j}$. We now consider the paths spanned by boxes of
$\bar{\B}_{j}$ for a level $j$. Let $E_{j}=\bigcup_{B\in\bar{\B}_{j}}P(B)$
be the set of all edges covered by boxes of $\bar{\B}_{j}$. We define
$\E_{j}$ to be the set of all maximal subpaths of $(V,E)$ with edges
from $E_{j}$. We show that $(\ell,\{\E_{j}\}_{\ell\in[\beta]}$ is
a box-hierarchical decomposition of $\bar{B}$. Note that the removal
of boxes directly implies that a path $E'\in\E_{j}$ cannot have more
than $\beta/\epsilon$ boxes $B\in\bar{\B}_{j}$ with $P(B)\subseteq E'$.

For $E',E''\in\E_{j}$, by construction of $\E_{j}$ either $E'=E''$
or $E'\cap E''=\emptyset$. For $j'<j''$, consider a $E'\in\E_{j'}$
and $E''\in\E_{j''}$ such that $E'\cap E''\neq\emptyset$. We claim
that $E''\subseteq E'$. Suppose there is an edge $e\in E''\setminus E'$.
Then there is a box $B''\in\bar{\B}_{j''}$ with $e\in P(B'')$ and
no box in $\bar{\B}_{j'}$ crosses $e$. If $\B_{j'}$ had a box $B'$
with $e\in P(B')$ and the box was removed, also $B''$ would have
been removed. If, however, $\B_{j'}$ had no box $B'$ with $e\in P(B')$,
$B''$ would have been added to $\B_{j'}$. In both cases we obtain
a contradiction. 
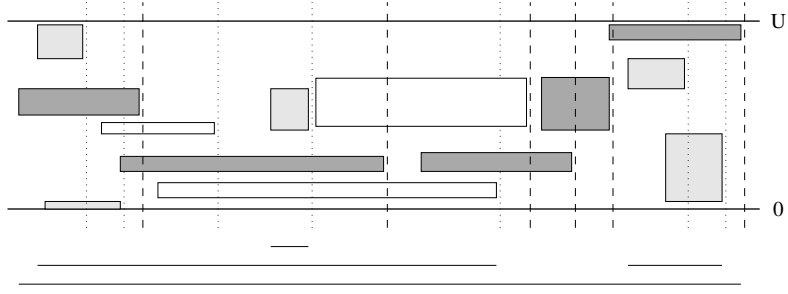
\begin{figure}[tb]
    \centering{}\begin{tikzpicture}[scale=0.5]
        \scriptsize
        \draw[baseline] (0,0) -- (20,0);
        \draw[baseline] (0,5) -- (20,5);
        \node at (20.5,0){0};
        \node at (20.5,5){U};
        \draw[lightbox] (1,0) rectangle (3,0.2);
        \draw[lightbox] (0.8,4) rectangle (2,4.9);
        \draw[lightbox, fill=white] (2.5,2) rectangle (5.5,2.3);
        \draw[lightbox,fill=white] (4,0.3) rectangle (13,0.7);
        \draw[lightbox] (7,2.1) rectangle (8,3.2);
        \draw[lightbox] (16.5,3.2) rectangle (18,4);
        \draw[lightbox] (17.5,0.2) rectangle (19,2);
        \draw[line,dotted] (2.1,-.5) -- (2.1,5.5);
        \draw[line,dotted] (3.1,-.5) -- (3.1,5.5);
        \draw[line,dotted] (5.6,-.5) -- (5.6,5.5);
        \draw[line,dotted] (8.1,-.5) -- (8.1,5.5);
        \draw[line,dotted] (13.1,-.5) -- (13.1,5.5);
        \draw[line,dotted] (18.1,-.5) -- (18.1,5.5);
        \draw[line,dotted] (19.1,-.5) -- (19.1,5.5);
        \draw[box] (0.3,2.5) rectangle (3.5,3.2);
        \draw[box] (3,1) rectangle (10,1.4);
        \draw[lightbox,fill=white] (8.2,2.2) rectangle (13.8,3.48);
        \draw[box] (11,1) rectangle (15,1.5);
        \draw[box] (14.2,2.1) rectangle (16,3.5);
        \draw[box] (16,4.5) rectangle (19.5,4.9);

        \draw[line,dashed] (3.6,-.5) -- (3.6,5.5);
        \draw[line,dashed] (10.1,-.5) -- (10.1,5.5);
        \draw[line,dashed] (13.9,-.5) -- (13.9,5.5);
        \draw[line,dashed] (15.1,-.5) -- (15.1,5.5);
        \draw[line,dashed] (16.1,-.5) -- (16.1,5.5);
        \draw[line,dashed] (19.6,-.5) -- (19.6,5.5);

        \draw[line] (0.3,-2) -- (19.5,-2);
        \draw[line] (0.8,-1.5) -- (13,-1.5);
        \draw[line] (7,-1) -- (8,-1);
        \draw[line] (16.5,-1.5) -- (19,-1.5);

    \end{tikzpicture} \caption{\label{fig:short-chains}Structure of boxes in the proof of Lemma~\ref{lem:box-hierarchical-decomposition}
        for $\beta=4$, $\protect\epsilon=1/2$, and $\gamma=2$. The dark
        boxes together with the large white box form $\protect\B^{1}$. The
        white boxes and their tasks are removed. The lines below the solution
        depict the paths $\protect\E_{1},\protect\E_{2},\protect\E_{3}$.}
\end{figure}

\paragraph*{Limiting the number of demands and heights}

We still have to show the second part of the lemma, i.e., in polynomial
time we can compute a set $\H$ such that $h'_{\bo}(B)\in\H$ and
$d_{B}\in\H$ for each $B\in\B'$ and $|\H|\le n^{(\beta/\epsilon)^{\beta}}$.

We consider separately the demands and heights of the boxes.
\begin{lem}
    \label{lem:possible-sizes} Given a boxable solution $(T_{\bo},h_{\bo})$
    with uniform capacities $U$ and boxes $\B$, 
    there is a quasi-boxable solution $(T_{\bo}',h_{\bo}')$ with boxes
    $\B'$ such that $w(T_{\bo}'\cap T_{L})\ge w(T_{\bo}\cap T_{L})$,
    $w(T_{\bo}'\cap T_{S})\ge(1-\epsilon)w(T_{\bo}\cap T_{S})$, $|\B_{e}|=|\B'_{e}|$
    for each $e\in E$, and for each $B'\in\B'$, either there is exactly
    one task $i\in T_{L}$ in box $B'$ or $d_{B'}=(1+\epsilon)^{k}$
    for some $k\in\mathbb{N}_{0}$. 
\end{lem}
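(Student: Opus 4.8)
The plan is to keep the \emph{position} of every box essentially fixed, so that feasibility of $(\B',h_{\bo}')$ and the equalities $|\B_e|=|\B_e'|$ come for free, and to adjust only the \emph{sizes} of boxes, class by class according to their contents. Classify each $B\in\B$ as follows. If $B$ contains a single large task, set $d_{B'}:=d_B$ and leave $B$ untouched; such a box already meets the first alternative of the conclusion and no task is lost, so these contribute to $w(T_{\bo}'\cap T_L)\ge w(T_{\bo}\cap T_L)$ with equality. If $B$ contains two or more tasks, then by the definition of a boxable solution all of them are small and each has $d_i\le\epsilon^8 d_B$; call these the \emph{multi-small} boxes. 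The only remaining case is a box carrying exactly one small task.

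For a multi-small box $B$, round its demand \emph{down} to $d_{B'}:=(1+\epsilon)^{\lfloor\log_{1+\epsilon}d_B\rfloor}$, so $(1-\epsilon)d_B\le d_{B'}\le d_B$, and keep $h_{\bo}'(B):=h_{\bo}(B)$; since $B'\subseteq B$ as a rectangle, replacing $B$ by $B'$ preserves feasibility and the box count per edge. The tasks of $T_{\bo}$ assigned to $B$ occupy relative heights in $[0,d_B)$ and must be repacked into $[0,d_{B'})$, for which I would run the standard random-shift strip argument inside the box: since every such task has $d_i\le\epsilon^8 d_B$, pick $\phi$ uniformly in $[0,\epsilon d_B)$ and delete all tasks of $B$ crossed by a horizontal line of height $\phi+k\epsilon d_B$ (each task is crossed by at most one such line, so the expected deleted weight is at most an $\epsilon^7$-fraction of the weight in $B$), together with the tasks contained in the least profitable of the at most $1+1/\epsilon$ strips thus created (at most an $\epsilon$-fraction). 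Shifting everything above the emptied strip down by $\epsilon d_B$, the survivors lie in $[0,(1-\epsilon)d_B)\subseteq[0,d_{B'})$ and hence fit into $B'$. Derandomizing each box by its best $\phi$ and summing over multi-small boxes loses at most an $O(\epsilon)$-fraction of the small-task weight (after rescaling $\epsilon$, at most $\epsilon$), and afterwards $d_i\le\epsilon^8 d_B\le\epsilon^8(1+\epsilon)d_{B'}$, comfortably within the slack allowed by the definition of a quasi-boxable solution.

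It remains to handle a box $B$ carrying a single small task $i$, and this is the step I expect to be the main obstacle. Shrinking a single-task box is always feasible, so we may first replace $d_B$ by $\max\{d_i,(1+\epsilon)^{\lfloor\log_{1+\epsilon}d_B\rfloor}\}$; if this value is a power of $(1+\epsilon)$ not smaller than $d_i$ we are done. The troublesome case is $d_i\in\bigl((1+\epsilon)^k,(1+\epsilon)^{k+1}\bigr)$ with $d_B=d_i$, where shrinking to $(1+\epsilon)^k$ would expel $i$ while the required size $d_{B'}:=(1+\epsilon)^{k+1}$ \emph{enlarges} the box by $d_{B'}-d_B<\epsilon(1+\epsilon)^k<\epsilon d_i$. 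Here I would exploit that the capacities are uniform to reorder, on each edge, the boxes of its vertical stack so that the enlarged single-small boxes abut the capacity profile, and charge the total enlargement along an edge---at most $\epsilon$ times the total height of single-small boxes on that edge---against the vertical space freed by rounding the multi-small boxes on that edge down, falling back on discarding a further $\epsilon$-fraction of small-task weight (chosen by a shifting argument over the single-small boxes) when that slack is insufficient; arranging this charging to succeed simultaneously on every edge is the technical heart of the lemma. (If one reads the statement as exempting \emph{every} single-task box from the power-of-$(1+\epsilon)$ requirement, this case collapses to simply setting $d_{B'}:=d_i$.) Collecting everything, $(T_{\bo}',h_{\bo}')$ is a feasible quasi-boxable solution with $w(T_{\bo}'\cap T_L)\ge w(T_{\bo}\cap T_L)$, $w(T_{\bo}'\cap T_S)\ge(1-\epsilon)w(T_{\bo}\cap T_S)$, $|\B_e|=|\B_e'|$ for each $e$, and every box either holds a single large task or has a power-of-$(1+\epsilon)$ size.
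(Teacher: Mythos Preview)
Your treatment of the two easy cases matches the paper exactly: leave every single-large-task box untouched, round every multi-task box down to the nearest power of $1+\epsilon$, and use a horizontal strip argument inside each such box to repack its small tasks with at most an $\epsilon$-fraction loss. The paper's proof is a two-sentence version of your first two paragraphs (it just says: round $d_B$ down, separate $B$ into $1/\epsilon$ strips of width $\epsilon d_B$, delete the small tasks in a minimum-weight strip).

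Where you diverge is the single-small-task case, and your proposed fix there does not work as written. ``Reorder, on each edge, the boxes of its vertical stack'' is not a well-defined operation on a boxable solution: boxes have different paths, so a vertical order that puts $B$ above $B'$ on one edge and below it on another is not realizable by any height assignment. Likewise the charging of enlargement against slack freed by rounding multi-small boxes can fail outright --- there need not be any multi-small box on a given edge --- and the fallback you sketch is too vague to be an argument. So this paragraph is a genuine gap in the proposal.

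The paper's proof does not handle this case either; it simply rounds every non-large-task box down and applies the strip argument, which would discard such a task. The clean resolution is precisely the one you put in parentheses: for any single-task box (small or large) set $d_{B'}:=d_i$, which preserves feasibility, keeps $|\B'_e|=|\B_e|$, and loses no weight. Formally this weakens the conclusion to ``$B'$ contains exactly one task or $d_{B'}$ is a power of $1+\epsilon$'', but that is all the downstream argument in Section~\ref{sec:hierarchical-decomposition} needs: a single-task box is determined by its task, so there are only $n$ choices for it regardless of whether that task is in $T_L$ or $T_S$. In the boxable solutions actually produced by the structural lemmas, every single-task box contains a large task anyway, so the distinction is moot in the paper's applications. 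You should drop the charging paragraph and simply take this route.
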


\begin{proof}
    For each box $B\in\B$, we add a box $B'$ to $\B'$ with $s_{B'}=s_{B}$,
    $t_{B'}=t_{B}$, $h(B')=h(B)$, and $d_{B'}=(1+\epsilon)^{\arg\max\{k\mid(1+\epsilon)^{k}\le d_{B}\}}$.
    Thus $d_{B'}$ is the maximal power of $(1+\epsilon)$ that does not
    exceed $d_{B}$. Observe that $d_{B}-d_{B'}\le\epsilon d_{B}$. To
    fit sufficiently many tasks from $\OPT_{h}(B)$ into $B'$, we use
    a standard shifting argument: we separate $B$ into horizontal strips
    of width $\epsilon d_{B}$ and remove all small tasks from one of
    the strips of minimum weight. 
\end{proof}
In particular, the lemma implies that the boxes $B'\in\B'$ can only
have $O(n)$ different demands where $n$ denotes the number
of bits in the input. We further simplify $(T'_{\bo},h'_{\bo})$
in order to limit the number of choices $h'_{\bo}(B)$. We normalize
$(T'_{\bo},h'_{\bo})$ by applying ``gravity'' to all boxes in $\B'$.
Formally, we move all boxes down until the following condition is
satisfied. For each $B\in\B'$, either $h'(B)=0$ or there is a box
$B'$ such that $P(B)\cap P(B')\neq\emptyset$ and $h'(B)=h'(B')+d_{B'}$.
The height $h(B)$ of a box $B$ is a number in $\{0,1,\dotsc,U-1\}$
and thus in general there are superpolynomially many choices. We show
that since we have a box-hierarchical decomposition, the number of
possible choices is bounded by a polynomial. A tuple of boxes
$(B_{1},B_{2},\dotsc,B_{k})$ with $B_{j}\in\B'$ is a \emph{chain},
if $h(B_{j})=h(B_{j-1})+d_{B_{j-1}}$ for all $j\in\{2,3,\dotsc,k\}$.
Thus the height level of box $B_{j}$ is the upper boundary of $B_{j-1}$.
Note that the length of chains can be considerably larger than $\beta$.
The reason is that boxes can be small compared to $U$ and the first
box of a chain does not necessarily overlap with the last box
of the chain. We want to limit the maximum length of a chain to a
constant.
\begin{lem}
    \label{lem:short-chains} Given a $\beta$-boxable solution $(T'_{\bo},h'_{\bo})$
    with boxes $\B'$ such that there is a box-hierarchical decomposition
    $(\ell,\{\E_{\ell}\}_{\ell\in[\beta]})$ of $\B'$, there is no chain
    $(B_{1},B_{2},\dotsc,B_{k})$ with $k>(\beta/\epsilon)^{\beta}$. 
\end{lem}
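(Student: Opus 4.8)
The plan is to prove the statement by induction on the number of \emph{distinct} levels occurring among the boxes of a chain, using in an essential way that consecutive boxes of a chain overlap in their paths. (After the gravity normalization every box $B$ with $h'_{\bo}(B)>0$ rests on some box $B'$ with $P(B)\cap P(B')\neq\emptyset$ and $h'_{\bo}(B)=h'_{\bo}(B')+d_{B'}$; unrolling this shows that $h'_{\bo}(B)$ is the sum of the demands along a chain $(B_1,\dots,B_r,B)$ with $h'_{\bo}(B_1)=0$ in which consecutive boxes overlap, so it suffices to bound the length of such overlapping chains, and these are anyway the only chains relevant for the subsequent cardinality bound on $\H$.)

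The first, and I expect key, step is a confinement claim: for a chain $C=(B_1,\dots,B_k)$ and $\ell^{*}:=\min_j\ell(B_j)$ the lowest level occurring in $C$, \emph{all} boxes of $C$ are contained in a single subpath $E^{*}\in\E_{\ell^{*}}$. To see this I would use that $\E_{\beta}$ refines $\E_{\beta-1}$ refines $\dots$ refines $\E_{\ell^{*}}$ (each element of $\E_{\ell}$ lies inside an element of $\E_{\ell-1}$), so every box $B$ of $C$, having $\ell(B)\ge\ell^{*}$, has $P(B)$ contained in some element $E'_{B}\in\E_{\ell^{*}}$; since $P(B_{j-1})\cap P(B_j)\neq\emptyset$ and $\E_{\ell^{*}}$ is a partition, $E'_{B_{j-1}}=E'_{B_j}$, and by transitivity all of these subpaths coincide. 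By the defining property of a box-hierarchical decomposition, at most $\beta/\epsilon$ boxes of level exactly $\ell^{*}$ are contained in $E^{*}$, hence $C$ contains at most $t:=\beta/\epsilon$ boxes of level $\ell^{*}$.

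The second step is the recursion. Deleting the at most $t$ boxes of level $\ell^{*}$ from $C$ splits it into at most $t+1$ contiguous sub-sequences, each of which is again an overlapping chain (the identities $h(B_j)=h(B_{j-1})+d_{B_{j-1}}$ and the overlaps are inherited) and all of whose boxes have level $>\ell^{*}$, hence uses strictly fewer distinct levels than $C$. Letting $M(d)$ be the maximum length of a chain whose boxes use at most $d$ distinct levels, the confinement claim gives $M(1)\le t$ and the above gives $M(d)\le t+(t+1)M(d-1)$; this solves to $M(d)\le(1+\beta/\epsilon)^{d}-1$, and since a chain uses at most $\beta$ distinct levels we get $k\le M(\beta)\le(1+\beta/\epsilon)^{\beta}-1$, which rules out chains longer than $(\beta/\epsilon)^{\beta}$ up to the precise value of the constant (and is in any case all that Lemma~\ref{lem:box-hierarchical-decomposition} needs, since the chain length enters the exponent only through $O_{\epsilon}(\cdot)$).

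The main obstacle is conceptual rather than computational: the level sequence $\ell(B_1),\dots,\ell(B_k)$ along a chain need not be monotone, so one cannot naively argue that a chain stays at a fixed level for only $t$ consecutive steps — it is exactly the interleaving of low-level boxes with higher-level sub-chains that forces the recursive (rather than additive) estimate above. The one ingredient that makes the recursion close is the overlap of consecutive boxes, which confines an entire chain to a single element of $\E_{\ell^{*}}$ and thereby caps the number of its lowest-level boxes; establishing (or carefully invoking) that overlap property is the only place where the gravity normalization is used.
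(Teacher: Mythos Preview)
Your argument is correct and, in a sense, dual to the paper's. Both proofs work with overlapping sequences (consecutive boxes share an edge) and induct on the set of levels appearing in the sequence, but you peel off the \emph{lowest} level while the paper peels off the \emph{highest}. Concretely, the paper removes all boxes of the top level $k''$, then \emph{reconnects} the resulting fragments by inserting boxes of level $k''-1$ (using that level $k''-1$ covers everything that level $k''$ does) into a single overlapping sequence $C'$; the inductive hypothesis bounds $|C'|$ by $(\beta/\epsilon)^{k''-1}$, and since each maximal run of level-$k''$ boxes has length at most $\beta/\epsilon$, one gets $|C|\le(\beta/\epsilon)\,|C'|\le(\beta/\epsilon)^{k''}$. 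Your approach instead exploits the refinement $\E_\beta\sqsubseteq\dots\sqsubseteq\E_{\ell^*}$ to confine the whole chain to one cell of $\E_{\ell^*}$, which directly caps the number of lowest-level boxes by $\beta/\epsilon$ and lets you recurse on the (at most $\beta/\epsilon+1$) fragments without any reconnection step.

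Your route is arguably cleaner—the confinement claim replaces the somewhat delicate reconnection argument—at the cost of a marginally weaker constant: you obtain $M(\beta)\le(1+\beta/\epsilon)^\beta-1$ rather than $(\beta/\epsilon)^\beta$. As you note, this is immaterial for Lemma~\ref{lem:box-hierarchical-decomposition}, since $(1+\beta/\epsilon)^\beta\le e^{\epsilon}(\beta/\epsilon)^\beta$ and the chain length enters only through $n^{O_\epsilon(\cdot)}$. Your remark that the overlap of consecutive boxes (supplied by the gravity normalization) is the one place where the geometry of the solution enters is exactly right, and mirrors the paper's passage from ``chain'' to ``overlapping sequence''.
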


\begin{proof}
    We show a slightly stronger claim. Recall that in a chain $(B_{1},B_{2},\dotsc,B_{k})$,
    $P(B_{j})\cap P(B_{j+1})\neq\emptyset$ for $j\in[k-1]$ and $B_{j}\neq B_{j}'$
    for each $j\neq j'$. We call a sequence of boxes with these two properties
    an \emph{overlapping sequence}. We count the number of \emph{all}
    overlapping sequences $(B_{1},B_{2},\dotsc,B_{k})$. For $j\in[\beta]$
    let $\bar{\B}_{j}$ be the set of all boxes $B\in\B'$ with $\ell(B)=j$.
    We show the claim by induction on the number $k'$ of levels $\ell$.
    If $k'=1$, no overlapping sequence can be longer than $\beta/\epsilon$,
    by the definition of box-hierarchical decompositions. 

    Suppose that for boxes $\bigcup_{j<k''}\bar{\B}^{j}$, the length of an
    overlapping sequence is bounded from above by $(\beta/\epsilon)^{k''-1}$.
    We show that for boxes $\bigcup_{j\le k''}\bar{\B}^{j}$, the length
    of overlapping sequences is bounded from above by $(\beta/\epsilon)^{k''}$.
    Let $C$ be an overlapping sequence using boxes only from $\bigcup_{j\le k''}\bar{\B}^{j}$.
    If we now remove all boxes in $\bar{\B}^{k''}$ from $C$. Then $C$
    may fall into a collection of smaller overlapping sequences $C_{1},C_{2},\dotsc,C_{\ell}$
    such that $P(C_{j})\cap P(C_{j'})=\emptyset$ for each $j\neq j'$.
    Suppose $C_{j}$ and $C_{j'}$ are two of the overlapping sequences
    such that there is no $C_{j''}$ between $C_{j}$ and $C_{j'}$. Let
    $P_{j,j'}$ be the path between $C_{j}$ and $C_{j'}$, i.e., $P_{j,j'}\cap P(C_{j})=P_{j,j'}\cap P(C_{j'})=\emptyset$
    and for each edge $e$ between $C_{j}$ and $C_{j'}$, $e\in P_{j,j'}$.
    Due to the laminar structure of the levels $\bar{\B}^{j}$, for each
    removed box $B$ with $P(B)\cap P_{j,j'}\neq\emptyset$ there is a
    set of boxes $\B_{B}\subseteq\bar{\B}^{k''-1}$ with $P(B)\subseteq P(\B_{B})$.
    We can therefore add boxes from $\bar{\B}^{k''-1}$ to $C$ such that
    these boxes with $C_{j}$ and $C_{j}'$ form an overlapping sequence.
    In other words, we ``connect'' $C_{j}$ and $C_{j'}$ to an overlapping
    sequence without tasks from $\bar{\B}^{k''}$. Let $C'$ be the overlapping
    sequence obtain from $C_{1},C_{2},\dotsc,C_{\ell}$ and the connections.
    Additionally, we add boxes from $\bar{\B}^{k''-1}$ to $C'$ at the
    beginning and the end such that $P(C)\subseteq P(C')$.

    By the induction hypothesis, the number of boxes in $C'$ is bounded
    from above by $(\beta/\epsilon)^{k''-1}$. Note that overlapping sequences
    composed only from boxes in $\bar{\B}^{k''}$ can be composed of at
    most $\beta/\epsilon$ boxes. There is at most one such sequence between
    two boxes of $\bigcup_{j}C_{j}$, plus one sequence at the start and
    one at the end. Therefore the total length of $C$ cannot be longer
    than $\beta/\epsilon$ the length of $C'$ and the claim follows. 
\end{proof}
The number of values $h'_{\bo}(B)$ is at most the number of different
chains $(B_{1},B_{2},\dotsc,B_{k})$. For each box $B_{j}$ of a large
task, there are at most $n$ choices, since the dimension of the box
is determined by a single large task. All other boxes $B_{j}$ are
determined by the parameters $s_{B_{j}}$, $t_{B_{j}}$, and $d_{B_{j}}$.
The total number of choices is $O(n^{3})$. Thus the total number
of different chains is bounded from above by $O(n^{3})^{(\beta/\epsilon)^{\beta}}=n^{O_{\epsilon}(\beta^{\beta})}$,
and thus polynomial for constant $\beta$. Note that the height levels
of the boxes are implied by the box demands and the ordering of the
boxes. The number of distinct values $h'(B)$ for boxes $B\in\B'$
cannot be larger than the total number of different chains, $n^{O_{\epsilon}(\beta^{\beta})}$.
Therefore, the set $\H$ contains all of these $n^{O_{\epsilon}(\beta^{\beta})}$
many values and we can compute $\H$ in polynomial time.

\section{\label{sec:Compute-stair-solution}Compute stair solution}

In this section we prove Lemma~\ref{lem:stair-solution}. To this end, we first show how to compute
a solution for a single stair-block $\SB$. Then we devise a dynamic
program that intuitively sweeps the path from left to right, guesses
the stair-blocks and the other large tasks, and then uses the subroutine
for a single stair-block to assign tasks into each stair-block.

Suppose that we are given a $\gamma$-stair-block $\SB=(e_{L},e_{M},e_{R},f,T'_{L},h')$.
Let $\bar{T}$ denote the set of tasks $i\in T$ such that $\{i\}$
fits into $\SB$. In the sequel, we prove the following lemma. 

\begin{lem}
    \label{lem:general-stairblock-solution} Let $T^{*}\subseteq\bar{T}$
    be an unknown set of tasks that fits into $\SB$ such that $w(T_{S}\cap T^{*})\ge\frac{1}{\alpha}w(T_{L}\cap T^{*})$
    for some value $\alpha \ge 1$. There is a $(n \cdot \max_e u_e)^{O_\delta(\log (\max_e u_e))}$ time
    algorithm that computes a set $\hat{T}$ that fits into $\SB$ such that $w(\hat{T})\ge\left(1-O(\epsilon))\right)\left(w(T_{L}\cap T^{*})+\frac{1}{8(\alpha+1)}\cdot w(T_{S}\cap T^{*})\right)$. 
\end{lem}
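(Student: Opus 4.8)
The plan is to treat the single stair-block $\SB=(e_L,e_M,e_R,f,T'_L,h')$ by means of a configuration LP, solved through the ellipsoid method with an extension of the dynamic program for large tasks from \cite{MW15_SAP} as a separation oracle, and to round the resulting fractional solution with a two-case argument. Call a set $L\subseteq\bar T\cap T_L$ together with a height function $h_L$ a \emph{large-configuration} if placing each $i\in L$ at height $h_L(i)$ is admissible inside $\SB$, i.e.\ $f_e\le h_L(i)\le u_{e_M}-d_i$ for each $e\in P(i)$, $h_L(i)<d_i$, and $(T'_L\cup L,h'\cup h_L)$ is feasible. The LP has a variable $x_L\ge 0$ for every large-configuration and a variable $y_i\ge 0$ for every small task $i\in\bar T\cap T_S$ (recall each such $i$ crosses $e_M$ and must sit above $u_{e_L}$). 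We impose $\sum_L x_L\le 1$; for each small task $i$ a linkage constraint $y_i\le\sum_{L\colon L\text{ does not block }i}x_L$, where $L$ \emph{blocks} $i$ if every admissible height $\ge u_{e_L}$ of $i$ is obstructed by a task of $L\cup T'_L$; and for each edge $e$ a residual-capacity constraint bounding $\sum_{i\in\bar T\cap T_S\colon e\in P(i)}d_i y_i$ plus the $x$-weighted vertical extent used by large-configurations at $e$ above $u_{e_L}$ by the height of the staircase region of $\SB$ at $e$. Finally we add the extra constraint $\sum_i w_i y_i\ge\frac1\alpha\sum_L x_L\,w(L)$, which is the constraint hinted at in the overview and which is satisfied by the integral point coming from $T^*$ precisely because of the hypothesis $w(T_S\cap T^*)\ge\frac1\alpha w(T_L\cap T^*)$. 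We maximise $\sum_L x_L w(L)+\sum_i w_i y_i$; since $T^*$ fits into $\SB$, its large part (with the fitting heights) is a large-configuration and its small part gives a feasible $0/1$ choice of the $y_i$, so the LP optimum is at least $w(T_L\cap T^*)+w(T_S\cap T^*)$.

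Next I would solve the LP. It has exponentially many $x_L$ variables, so I solve the dual with the ellipsoid method; its separation problem asks, for a given weight function on $\bar T\cap T_L$, for a large-configuration of weight above a threshold, which is exactly SAP restricted to large tasks on the subpath $P_{e_M,e_R}$ with capacities truncated to $[f_e,u_{e_M})$ and the additional requirement $h(i)<d_i$. By Lemma~\ref{lem:few-large-tasks} at most $(\log u_e)/\delta^2$ large tasks cross any edge, so a left-to-right dynamic program whose state records the $O_\delta(\log\max_e u_e)$ currently active large tasks together with their heights solves this in time $(n\cdot\max_e u_e)^{O_\delta(\log(\max_e u_e))}$, an easy extension of the algorithm of \cite{MW15_SAP}; the height of each task lies in $\{0,\dots,\max_e u_e\}$, which is where the dependence on $\max_e u_e$ enters. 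Hence I obtain an essentially optimal LP solution within the claimed running time, supported on a number of configurations that is polynomial in $(n\cdot\max_e u_e)^{O_\delta(\log\max_e u_e)}$.

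For the rounding, write $\mathrm{OPT}_{\mathrm{LP}}=P_L+P_S$ with $P_L=\sum_L x_L w(L)\ge w(T_L\cap T^*)$ and $P_S=\sum_i w_i y_i\ge w(T_S\cap T^*)$. If $P_S$ is large relative to what the sampling step below can lose, I discard all large tasks: then the small tasks of $T^*$ lying in $\bar T$ form a feasible solution of the ordinary SAP instance on $P_{e_L,e_R}$ whose capacity at each edge is the vertical extent of the staircase region of $\SB$ (with the floor $u_{e_L}$, resp.\ $f_e$, subtracted), so the $(2+\epsilon)$-approximation for SAP of \cite{MW15_SAP} returns a set of small tasks of weight at least $\frac{1}{2+\epsilon}w(T_S\cap T^*)$ that fits into $\SB$ — comfortably above $\frac1{8(\alpha+1)}w(T_S\cap T^*)$ when $\alpha$ is bounded away from a small constant, and a direct argument settles the remaining small values of $\alpha$. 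Otherwise I sample one large-configuration $L$ with probability $x_L$; by the linkage constraints each small task $i$ is left unblocked with probability at least $y_i$, so an expected small-task weight of $P_S$ remains admissible, and the residual-capacity constraints (now read conditionally on $L$) certify a fractional packing of the surviving small tasks of weight $\Omega(P_S)$, which the $(2+\epsilon)$-approximation turns into an integral packing inside $\SB$. Outputting the sampled large tasks together with these small tasks and, over the polynomially many sampled configurations, retaining the best outcome, yields a feasible solution for $\SB$ of weight at least $(1-O(\epsilon))\bigl(w(T_L\cap T^*)+\frac1{8(\alpha+1)}w(T_S\cap T^*)\bigr)$, where the constant $\frac1{8(\alpha+1)}$ results from balancing the losses of the two cases against the bound $w(T_L\cap T^*)\le\alpha\,w(T_S\cap T^*)$.

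The main obstacle is the small-task rounding once a large-configuration has been committed to: conditioning on the sampled $L$ distorts the distribution of the $y_i$, and the surviving tasks, although fractionally packable, still have to be assembled into an integral non-overlapping packing inside the staircase shape of $\SB$ (which is why I route them through an SAP approximation rather than keeping the fractional solution). Designing the LP — in particular the residual-capacity and linkage constraints and the extra $\alpha$-constraint — so that the two cases are mutually exhaustive with thresholds that make the accounting close at exactly $\frac1{8(\alpha+1)}$, and verifying that the derandomisation over the polynomially many sampled configurations stays within the running time bound, is the technical core of the argument.
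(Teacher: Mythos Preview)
Your high-level picture (configuration LP for the large tasks, solve via the dual with a large-task DP as separation oracle, sample a configuration, then round the small tasks, with a fallback ``small-only'' solution) is exactly the paper's plan. But the LP you write down is too weak to support the rounding, and this is where the argument breaks.

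The paper does \emph{not} use one variable $y_i$ per small task; it uses a variable $x_{j,t}$ for every pair (small task, height). Together with point-coverage constraints $\sum_{C:(e,t)\in p(C)}y_C+\sum_{j,t':(e,t)\in p(j,t')}x_{j,t'}\le 1$ this encodes an actual fractional SAP placement of the small tasks, not merely a UFP-style load bound. In particular, after sampling a configuration $\hat C$ one can simply zero out the $x_{j,t}$ with $p(j,t)\cap p(\hat C)\ne\emptyset$ and the remaining $x$ is still feasible for the small-only LP; it is then rounded by randomized rounding with alteration, exploiting that every small task crosses $e_M$ so that overlap can be detected on the single vertical line over $e_M$ (this gives the factor $1/4$). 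Your edge-capacity constraint plus a per-task ``not blocked'' linkage does not give you any such residual fractional placement: after you sample $L$, the surviving $y_i$ need neither respect the residual capacity for that particular $L$ nor encode compatible heights, so handing them to a black-box $(2+\epsilon)$-SAP approximation has no guarantee tied to $P_S$.

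Two further points. First, the ``additional constraints'' alluded to in the overview are not your $\alpha$-inequality; they are the strengthened constraints $\sum_{C:(e,t)\in p(C)}y_C+\sum_{t'\le t}x_{j,t'}\le 1$ for every $e,t$ and every small $j\in T_e$. These are valid precisely because of the technical condition $h(i)<d_i$ on large tasks in a configuration (a configuration that covers $(e,t)$ then covers every $(e,t')$ with $t'\le t$), and they are what makes the ``problematic pairs'' argument go through: if the pairs $(j,t)$ that are almost surely killed by sampling carry a lot of weight, these constraints let you scale their $x^*_{j,t}$ up by $1/\eta$ and still obtain a feasible small-only LP solution, which is the second branch of the dichotomy. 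Second, the paper does not put the ratio $\alpha$ into the LP; instead it guesses $W$ with $w(T_L\cap T^*)\in[W,(1+\epsilon)W)$ and restricts configurations to weight in that window, which pins $\sum_C y_C^* w_C$ to roughly $w(T_L\cap T^*)$ and makes the balance in the final calculation come out to $\frac{1}{8(\alpha+1)}$. Your separation-oracle discussion is essentially right; the gap is entirely in the LP formulation and the small-task rounding.
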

\begin{proof}
    After the argumentation in Section~\ref{sec:Compute-stair-solution} it remains to calculate that 

    \[
        \begin{alignedat}{1}\mathbb{E}\left[\sum_{C\in\C}\bar{y}_{C}w_{C}+\sum_{j\in\bar{T}_{S},t\in\N}\bar{x}_{j,t}w_{j}\right] & \ge\sum_{C\in\C}y_{C}^{*}w_{C}+\mathbb{E}\left[\frac{1}{4}\sum_{j\in\bar{T}_{S},t\in\N}x_{j,t}w_{j}\right]\\
            & \ge\sum_{C\in\C}y_{C}^{*}w_{C}+\frac{1}{4}\eta\sum_{(j,t)\notin T_{S!}}x_{j,t}^{*}w_{j}\\
            & \ge\sum_{C\in\C}y_{C}^{*}w_{C}+\frac{1}{8}\eta\sum_{j\in\bar{T}_{S},t\in\N}x_{j,t}^{*}w_{j}\\
            & \ge\frac{1}{1+\epsilon}w(T_{L}\cap T^{*})+\frac{\eta}{8(1+\epsilon)}w(T_{S}\cap T^{*})\\
            & \ge\frac{1}{1+\epsilon}w(T_{L}\cap T^{*})+\frac{1-O(\epsilon)}{8(\alpha+1)}w(T_{S}\cap T^{*}).
        \end{alignedat}
    \]
\end{proof}

First, we guess $w(T_{L}\cap T^{*})$ up to a factor $1+\epsilon$,
i.e., we guess a value $W$ such that $w(T_{L}\cap T^{*})\in[W,(1+\epsilon)W)$.
One can show that $(n/\epsilon)^{O(1)}$ many guesses for $W$ suffice. Our
algorithm is based on a linear program that uses configurations for
the sets of large tasks that fit into $\SB$. Formally, we define
a pair $C=(\bar{T}',\bar{h}')$ to be a \emph{configuration }if $\bar{T}'\subseteq\bar{T}$,
$w(\bar{T}')\in[W,(1+\epsilon)W)$, $\bar{h}'$ is a function $\bar{h}':\bar{T}'\rightarrow\N$
such that $\bar{h}'(i)<d_{i}$ for each task $i\in\bar{T}'$, and
$(\bar{T}',\bar{h}')$ fits into $\SB$. Let $\C$ denote the set of all configurations. We introduce a variable $y_{C}$ for each configuration
$C\in\mathcal{C}$. Intuitively, $y_{C}=1$ indicates that the computed
solution contains exactly the set of large tasks in $C$, each of
them drawn at the height level determined by $C$. For each small
task $j\in\bar{T}_{S}:=\bar{T}\cap T_{S}$ and each $t\in\{0,\dotsc,b(j)-d_{j}\}$
we introduce a variable $x_{j,t}$ indicating whether $j$ is contained
in the solution and drawn at height $t$. Note that we do not need
variables $x_{j,t}$ for $t>b(j)-d_{j}$ since the upper edge of $j$
has to have a height of at most $b(j)$.

We add constraints that ensure that the rectangles corresponding to
the selected tasks do not overlap. To this end, for each small tasks
$j\in\bar{T}_{S}$ and each possible height $t\in\{0,\dotsc,b(j)-d_{j}\}$
we define a ``rectangle'' $p(j,t)=\{(e,t')\mid e\in P(j)\mbox{ and }t\le t'<t+d_{j}\}$.
For a pair $(e,t)$ the reader may imagine that it represents the
point whose $x$-coordinate is the mid-point of the edge $e$ and
whose $y$-coordinate is $t$. Similarly, for a configuration $C=(\bar{T}',\bar{h}')\in\C$
we define the ``points'' covered by $C$ to be $p(C):=\{(e,t') \mid \exists i\in\bar{T}':e\in P(i)\,\mathrm{and}\,\bar{h}'(i)\le t'<\bar{h}'(i)+d_{i}\}$
and $w_{C}:=w(\bar{T}')$.
Denote by $\text{LP}_{\SB}$ the linear program below where for convenience
we assume that all non-existing variables are set to zero. 
\begin{alignat}{2}
    \textstyle \max\sum_{C\in\C}y_{C}w_{C}+\sum_{j\in\bar{T}_{S},t}x_{j,t}w_{j}\\
    \textstyle \text{s.t.}\quad\sum_{C\colon(e,t)\in p(C)}y_{C}+\sum_{j\in\bar{T}_{S},t'\colon(e,t)\in p(j,t')}x_{j,t'} & \le1 &  & \qquad\mbox{for all }e\in P_{e_{M},e_{R}},t\ge0\label{con:point}\\
    \textstyle \sum_{C\colon(e,t)\in p(C)}y_{C}+\sum_{t'\colon t'\le t}x_{j,t'} & \le1 &  & \qquad\mbox{for all }e\in P_{e_{M},e_{R}},t\ge0,j\in\bar{T}_{S}\cap T_{e}\label{con:position-j}\\
    \textstyle \sum_{C\in\mathcal{C}}y_{C} & =1\label{con:conf}\\
    \textstyle \sum_{t\ge0}x_{j,t} & \le1 &  & \qquad\mbox{for all }j\in\bar{T}_{S}\label{con:small}\\
    x_{j,t},y_{C} & \ge0 &  & \qquad\mbox{for all }j\in\bar{T}_{S},t\in\N,t\le b(j)-d_{j},C\in\C\nonumber
\end{alignat}

The first set of constraints \eqref{con:point} expresses intuitively
that no two rectangles overlap. Then \eqref{con:position-j} strengthens
this condition by stating that if a configuration $C$ covers a point
$(e,t)$ then no small task $j$ using $e$ can be selected such that it
covers a point $(e,t')$ with $t'\le t$ (note that if $C$ covers
$(e,t)$ then it also covers each point $(e,t')$ with $t'\le t$).
Constraint \eqref{con:conf} ensures that we select exactly one configuration.
A task $j$ still cannot be drawn at two positions simultaneously,
which we ensure with \eqref{con:small}. 
In the LP above, we introduce constraints \eqref{con:point} and \eqref{con:position-j}
for each $t\ge0$, however, it is sufficient to state those for each
$t$ such that $t\in\N$ since $d_{i}\in\N$ for each task $i\in T$
and we introduce the variables $x_{j,t}$ only for values $t$ with
$t\in\N$. This yields an equivalent formulation with only $(n\max_{e}u_{e})^{O(1)}$
constraints (apart from the non-negativity constraints). The number
of variables in $\text{LP}_{\SB}$ is exponential. However, we can
solve it in polynomial time via a suitable separation oracle for the
dual. 
\begin{lem}\label{lem:solve-LP-SB}
    There is an algorithm with running time $(n\max_{e}u_{e})^{O_\delta(\log (\max_{e}u_{e}))}$
    that computes an optimal solution to $\text{LP}_{\SB}$. 
\end{lem}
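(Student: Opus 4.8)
The plan is to solve $\LPSB$ via the ellipsoid method applied to its dual. The dual has one variable per constraint of $\LPSB$ — say $\alpha_{e,t}\ge 0$ for \eqref{con:point}, $\beta_{e,t,j}\ge 0$ for \eqref{con:position-j}, a free variable $\mu$ for \eqref{con:conf}, and $\nu_j\ge 0$ for \eqref{con:small} — hence only $(n\max_e u_e)^{O(1)}$ variables, but one constraint per configuration $C\in\C$. The dual constraints arising from the $x_{j,t}$ variables number only $(n\max_e u_e)^{O(1)}$ and are checked directly; the work is in separating the configuration constraints. Reading off coefficients, the constraint for $C=(\bar T',\bar h')$ is
\[
\sum_{(e,t)\in p(C)}\Bigl(\alpha_{e,t}+\textstyle\sum_{j\in\bar T_S\cap T_e}\beta_{e,t,j}\Bigr)+\mu\ \ge\ w_C .
\]
Writing $\omega(e,t):=\alpha_{e,t}+\sum_{j\in\bar T_S\cap T_e}\beta_{e,t,j}$ — a table of $(n\max_e u_e)^{O(1)}$ numbers computable in polynomial time from a given dual candidate — and using that the rectangles of the large tasks of a configuration are pairwise disjoint (so $p(C)$ is their disjoint union), the left side equals $\mu+\sum_{i\in\bar T'}\mathrm{cost}(i,\bar h'(i))$ with $\mathrm{cost}(i,h):=\sum_{e\in P(i)}\sum_{t=h}^{h+d_i-1}\omega(e,t)$.

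Consequently a violated configuration constraint exists if and only if some configuration $(\bar T',\bar h')$ — i.e.\ a set of large tasks with heights that fits into $\SB$ (respecting the stair-block geometry and the fixed tasks $T'_L$, and $\bar h'(i)<d_i$) with $w(\bar T')\in[W,(1+\epsilon)W)$ — satisfies $\sum_{i\in\bar T'}\bigl(w_i-\mathrm{cost}(i,\bar h'(i))\bigr)>\mu$. This is exactly an instance of SAP with large tasks only, in which placing $i$ at height $h$ yields the (possibly negative) profit $w_i-\mathrm{cost}(i,h)$, with the extra requirement that the original profits of the chosen tasks sum to roughly $W$. I would solve it by a dynamic program that sweeps the edges of $P_{e_M,e_R}$ from left to right, where a state at an edge $e$ stores the set of selected large tasks whose path contains $e$ together with their heights, plus a rounded value of the total original profit of the tasks already completed to the left of $e$. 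By Lemma~\ref{lem:few-large-tasks} at most $O_\delta(\log\max_e u_e)$ large tasks cross any edge in a feasible solution, so there are at most $(n\max_e u_e)^{O_\delta(\log\max_e u_e)}$ states per edge; transitions open and close tasks, enforce pairwise disjointness and the stair-block constraints $f_e\le h_i$, $h_i+d_i\le u_{e_M}$, $h_i<d_i$, and add $w_i-\mathrm{cost}(i,h_i)$ whenever a task is closed. To keep the budget bookkeeping polynomial I would first round every profit to a multiple of $\epsilon W/n^{O(1)}$ (which perturbs $\LPSB$ by at most a $1+\epsilon$ factor and is absorbed by the $1-O(\epsilon)$ slack in Lemma~\ref{lem:general-stairblock-solution}), so that only polynomially many partial-profit values arise; the DP then reports a violated configuration whenever the best final state with accumulated profit in $[W,(1+\epsilon)W)$ exceeds $\mu$, and otherwise certifies feasibility.

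Plugging this oracle into the ellipsoid method solves the dual in polynomially many iterations, each costing $(n\max_e u_e)^{O_\delta(\log\max_e u_e)}$ time, hence within the claimed bound; by the standard Gr\"otschel--Lov\'asz--Schrijver argument the polynomially many configurations produced as separating hyperplanes suffice to certify the dual optimum, so I would finish by solving the (polynomial-size) restriction of $\LPSB$ to those configurations with a standard LP solver, which yields an optimal solution of $\LPSB$. The step I expect to be the main obstacle is the separation dynamic program: verifying that the configuration constraint genuinely decomposes as an additive function of the selected tasks (this is where disjointness of the task rectangles enters), and, above all, incorporating the global budget $w(\bar T')\in[W,(1+\epsilon)W)$ into the DP without inflating the number of states past $(n\max_e u_e)^{O_\delta(\log\max_e u_e)}$ — it is for this that the profit rounding, together with the per-edge bound of Lemma~\ref{lem:few-large-tasks}, is needed.
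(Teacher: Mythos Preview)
Your approach matches the paper's: solve the dual via the ellipsoid method, using as separation oracle a left-to-right dynamic program over $P_{e_M,e_R}$ whose state records the set of large tasks crossing the current edge together with their heights, with Lemma~\ref{lem:few-large-tasks} bounding the state space by $(n\max_e u_e)^{O_\delta(\log\max_e u_e)}$. You are in fact more explicit than the paper on one point---the paper's DP description does not address the configuration weight window $w(\bar T')\in[W,(1+\epsilon)W)$, whereas you handle it by rounding profits and adding a budget coordinate to the state (yielding a slightly perturbed LP whose error is absorbed in Lemma~\ref{lem:general-stairblock-solution}).
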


\paragraph*{The rounding algorithm.}

Let $(x^{*},y^{*})$ denote the optimal solution to $\text{LP}_{\SB}$.
We round $(x^{*},y^{*})$ via randomized rounding. First, we sample
a configuration $\hat{C}$ using the distribution determined by $y^{*}$,
i.e., for each configuration $C\in\C$, we obtain $\hat{C}=C$ with
probability $y_{C}$. Define $\bar{y}_{\hat{C}}:=1$ and $\bar{y}_{C}:=0$
for each $C\in\C\setminus\{\hat{C}\}$. Then we construct a new solution
$x$ for the small tasks where intuitively we remove all pairs $(j,t)$
that overlap with $\hat{C}$, i.e., such that $p(j,t)\cap p(\hat{C})\ne\emptyset$.
For each pair of the latter type
we define $x_{j,t}:=0$ and we define $x_{j,t}:=x_{j,t}^{*}$ for
all other pairs $(j,t)$. Observe that $x$ is a solution to the LP
that is obtained by taking $\text{LP}_{\SB}$ and removing all variables
$y_{C}$ and constraint~\eqref{con:conf}. Denote by $\text{LP}'_{\SB}$
the resulting LP. We can round it via randomized rounding with alteration,
using that for two pairs $(j,t),(j',t')$ with $j,j'\in T_{S}$ the
corresponding rectangles $p(j,t),p(j',t')$ overlap if and only if
they overlap on a ``vertical line segment above $e_{M}$'', i.e.,
on $\cup_{t''\in[t,t+d_{j})\cap[t',t'+d_{j'})}(e_{M},t'')$.
\begin{lem}
    \label{lem:RRA}Given a solution $x$ to $\text{LP}'_{\SB}$. In polynomial
    time we can compute an integral solution $\bar{x}$ to $\text{LP}'_{\SB}$ with
    expected value $\sum_{j\in\bar{T}_{S},t}\bar{x}_{j,t}w_{j}\ge\frac{1}{4}\sum_{j\in\bar{T}_{S},t}x_{j,t}w_{j}$
    such that the support of $\bar{x}$ is contained in the support of
    $x$.
\end{lem}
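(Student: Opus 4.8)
The statement we need to prove is Lemma~\ref{lem:RRA}: given a fractional solution $x$ to $\text{LP}'_{\SB}$ (the LP obtained from $\text{LP}_\SB$ by deleting the $y$-variables and constraint~\eqref{con:conf}), we must produce, in polynomial time, an \emph{integral} feasible solution $\bar x$ whose support lies in the support of $x$ and with $\mathbb{E}\bigl[\sum_{j,t}\bar x_{j,t}w_j\bigr]\ge\tfrac14\sum_{j,t}x_{j,t}w_j$. The crucial structural fact, already highlighted in the excerpt, is that for two pairs $(j,t)$ and $(j',t')$ with $j,j'\in T_S$, the rectangles $p(j,t)$ and $p(j',t')$ intersect \emph{if and only if} they intersect on the vertical line above $e_M$ — this is because every small task uses $e_M$ (by the definition of fitting into a stair-block, $T''\cap T_S \subseteq T_{e_M}$), so $e_M\in P(j)\cap P(j')$ always, and two axis-parallel rectangles sharing a common $x$-column overlap iff their $y$-intervals overlap there. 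So the conflict structure among small tasks reduces to an \emph{interval} conflict structure on the $y$-axis: assign to each selected pair $(j,t)$ the interval $I(j,t):=[t,t+d_j)$, and two selected pairs conflict iff their intervals overlap.

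\textbf{The rounding procedure.} The plan is a standard randomized-rounding-with-alteration argument for interval packing. First, for each small task $j$, treat $\sum_t x_{j,t}$ (which is $\le 1$ by~\eqref{con:small}) as a probability of ``activating'' $j$, and conditioned on activation pick the height $t$ with probability $x_{j,t}/\sum_{t'} x_{j,t'}$; do this independently across tasks. Let $(j,t_j)$ range over the activated pairs. This gives a (possibly infeasible) random set $S_0$ of intervals. Then \emph{alter}: scan the activated intervals, say in order of left endpoint $t_j$, and greedily keep an interval only if it does not overlap any previously kept interval; discard otherwise. The surviving set $\bar x$ is feasible for $\text{LP}'_\SB$ — constraint~\eqref{con:point} holds because a point $(e,t)$ is covered by at most the one kept interval whose $y$-range contains $t$ on column $e_M$ (pushed to column $e$), constraint~\eqref{con:position-j} holds trivially since there are no $y_C$ terms in $\text{LP}'_\SB$ and at most one kept interval covers a column below height $t$... wait, \eqref{con:position-j} with no configurations reads $\sum_{t'\le t}x_{j,t'}\le 1$, which is implied by \eqref{con:small}; so feasibility of the kept set is exactly: the kept intervals are pairwise non-overlapping, which is what the greedy scan guarantees. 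And the support is clearly contained in that of $x$.

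\textbf{Bounding the expected profit, and the main obstacle.} The profit bound is where the constant $1/4$ comes from. By linearity of expectation it suffices to show that each pair $(j,t)$ in the support survives the alteration with probability at least $\tfrac14 \cdot \tfrac{x_{j,t}}{\,\text{(prob.\ $j$ activated at $t$)}\,}$... more precisely, $\Pr[(j,t)\text{ kept}]\ge \tfrac14 x_{j,t}$. Condition on $(j,t)$ being activated (probability $x_{j,t}$). It is kept iff no earlier-scanned activated interval overlapping $[t,t+d_j)$ survives; it suffices that no \emph{activated} interval with left endpoint $< t$ overlaps $[t,t+d_j)$ at all. An activated interval $(j',t')$ with $t'<t$ overlaps $[t,t+d_j)$ only if $t'\le t < t'+d_{j'}$, i.e.\ only if $(e_M,t)\in p(j',t')$. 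By constraint~\eqref{con:position-j} applied at $(e_M,t)$ — or rather, summing~\eqref{con:point} over column $e_M$ at height $t$ — the total LP mass $\sum_{(j',t'):\,t'\le t<t'+d_{j'}} x_{j',t'} \le 1$, and hence the expected number of activated such intervals other than $(j,t)$ itself is $\le 1$ (using that the activation of $(j,t)$ contributes its own mass; one must be slightly careful to also rule out activated intervals with left endpoint $\ge t$ scanned before $(j,t)$ — tie-break by task index so that among equal left endpoints there is a fixed order, and those also pile onto the same column constraint). By Markov / the Paley–Zygmund-style first-moment argument, the probability that \emph{no} such conflicting interval is activated is at least $1 - \mathbb{E}[\#\text{conflicts}] $; but that could be as weak as $0$. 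The standard fix, and the delicate point, is to split the conflicting mass into the part from intervals starting strictly before $t$ and the part from the column constraint at the \emph{right} endpoint, each of total LP-mass $\le 1$, so that by independence $\Pr[\text{$(j,t)$ kept}\mid\text{activated}] \ge \tfrac12\cdot\tfrac12 = \tfrac14$ — one factor $\tfrac12$ from ``no activated interval contains the bottom point $(e_M,t)$'' and one from ``no activated interval contains the top point $(e_M,t+d_j-1)$'', using $1-x \ge \prod(1-x_i)$-type bounds or simply $\Pr[\text{Poisson-like sum }=0]\ge 1/2$ when the expected count is $\le 1$. I expect the main obstacle to be making this two-sided argument fully rigorous: precisely identifying which activated intervals can block $(j,t)$ under the left-endpoint scan order, partitioning them into two groups each controlled by one LP constraint of the form~\eqref{con:point} at a single point, and verifying the independence needed to multiply the two $\tfrac12$ factors. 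Everything else — feasibility, support containment, polynomial running time (the support of $x$ has polynomially many pairs since $t$ ranges over $\{0,\dots,b(j)-d_j\}$ and the LP is solved over polynomially many such variables) — is routine.
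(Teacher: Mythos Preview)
Your overall strategy (dependent rounding within each task, then greedy alteration by bottom endpoint, then check survival probability pairwise) matches the paper's, but the survival-probability analysis has a genuine gap.

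First, the ``two-group'' decomposition is not needed and in fact does not arise: under the left-endpoint scan, a pair $(j',t')$ processed before $(j,t)$ has $t'\le t$, and it overlaps $[t,t+d_j)$ iff $t<t'+d_{j'}$, i.e.\ iff $(e_M,t)\in p(j',t')$. So \emph{every} potential blocker contains the single point $(e_M,t)$; there is no second group at the top endpoint. Constraint~\eqref{con:point} at $(e_M,t)$ then bounds the total LP-mass of blockers by $1$.

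Second, the step ``$\Pr[\text{sum}=0]\ge 1/2$ when the expected count is $\le 1$'' is false. For independent Bernoullis with $\sum p_i=1$ one only gets $\prod(1-p_i)$, which is $e^{-1}\approx 0.37$ in the Poisson limit and can be $0$ if some $p_i=1$. So activating with probability $x_{j,t}$ and hoping for a conditional survival probability $\ge 1/4$ (let alone $1/2\cdot 1/2$) does not go through.

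The fix the paper uses is to absorb one factor of $1/2$ into the activation step: activate each $(j,t)$ with probability $x_{j,t}/2$ (still with dependent rounding across heights for fixed $j$). Then the expected number of activated blockers at $(e_M,t)$ is at most $1/2$, and by a union bound (Markov) the conditional probability that $(j,t)$ is discarded in the alteration is at most $1/2$. Independence across tasks gives $\Pr[Y_{j,t}=1]\ge \tfrac12\cdot\tfrac{x_{j,t}}{2}=x_{j,t}/4$. With this one-line change your argument becomes correct and coincides with the paper's.
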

\begin{proof}
    We apply randomized rounding with alteration based on the probabilities
    given by $x$. For each task $j\in\bar{T}_{S}$ and each $t\in\N$
    we define a random variable $X_{j,t}$ via dependent rounding (see
    e.g.,~\cite{BTV99}) such that $\Pr[X_{j,t}=1]=x_{j,t}/2$ and for
    any two heights $t,t'$ we have that $\Pr[X_{j,t}=1\wedge X_{j,t\tm '}=1]=0$.
    After defining the $X_{j,t}$ variables we do an alteration phase
    and define random variables $Y_{j,t}\in\{0,1\}$ such that $Y_{j,t}\le X_{j,t}$
    for each $j\in\bar{T}_{S}$ and each $t\in\N$ such that the pairs
    $(j,t)$ with $Y_{j,t}=1$ form a feasible solution. To this end,
    we order the variables $X_{j,t}$ with $X_{j,t}=1$ non-decreasingly
    by $t$, breaking ties arbitrarily. We define $Y_{j,t}=1$ if and
    only if $X_{j,t}=1$ and $p(j,t)$ is disjoint with $p(j',t')$ for
    each pair $(j',t')$ for which we set $Y_{j',t'}=1$ before. Let $\bar{x}$
    denote the obtained integral solution, i.e., $\bar{x}_{j,t}=1$ if
    and only if $Y_{j,t}=1$ for each $j\in\bar{T}_{S}$ and each $t\in\N$. 

    For each pair $(j,t)$ we have that $\Pr[X_{j,t}=1]=x_{j,t}/2$. The
    probability that $Y_{j,t}=0$ while $X_{j,t}=1$ (i.e., $(j,t)$ is
    discarded in the alteration phase) is at most $1/2$ since then previously
    we must defined $Y_{j',t'}=1$ for some pair $(j',t')$ such that
    $p(j',t')\in(e_{M},t)$ and due to constraint~\eqref{con:point}
    the probability for this is at most $1/2$. Since the random variables
    $X_{j,t}$ for the different tasks $j$ are independent, we conclude
    that $\Pr[Y_{j,t}=1]=\Pr[Y_{j,t}=1|X_{j,t}=1]\Pr[X_{j,t}=1]\ge\frac{1}{2}\cdot\frac{1}{2}=\frac{1}{4}$
    and the claimed bound on the expected profit of $\bar{x}$ follows. 
\end{proof}

Let $(\bar{x},\bar{y})$ denote the resulting solution. Secondly,
we compute the optimal solution to $\text{LP}'_{\SB}$ (hence
ignoring the configurations of large tasks) and round it via Lemma~\ref{lem:RRA},
let $(\bar{x}',\bar{y}')$ denote the resulting solution. In the sequel
we prove that the most profitable solution among $(\bar{x},\bar{y})$
and $(\bar{x}',\bar{y}')$ satisfies the claim of Lemma~\ref{lem:stair-solution}. Since
we sampled $\hat{C}$ according to the distribution given by $y^{*}$,
we have that $\mathbb{E}[w_{\hat{C}}]=\sum_{C\in\C}y_{C}^{*}w_{C}$.
Recall that we discarded all pairs $(j,t)$ such that $p(j,t)$ overlaps
with $p(\hat{C})$. Hence, there are some pairs $(j,t)$ that are
discarded with very high probability. We call such a pair problematic
where formally we say that a pair $(j,t)$ with $j\in\bar{T}_{S}$
and $t\in\N$ is \emph{problematic} if $\sum_{C\in\C:p(C)\cap p(j,t)\ne\emptyset}y_{C}^{*}>1-\eta$
for some value $\eta>0$ to be defined later. Let $T_{S!}$ denote the
set of all problematic pairs. In the following lemma we prove that
their contribution to the profit of $(x^{*},y^{*})$ is only small
and hence we can afford to ignore them, unless $(\bar{x}',\bar{y}')$
already has enough profit. Here we crucially need constraint \eqref{con:position-j}.
We define $\text{opt}_{LP}:=\sum_{C\in\C}y_{C}^{*}w_{C}+\sum_{j\in\bar{T}_{S},t}x_{j,t}^{*}w_{j}$. 
\begin{lem}
    \label{lem:problematic-small-profit}We have that $\sum_{(j,t)\in T_{S!}}x_{j,t}^{*}w_{j}\le4\eta\text{opt}_{LP}$
    or the profit of $(\bar{x}',\bar{y}')$ is at least $\text{opt}_{LP}\ge w(T_{L}\cap T^{*})+w(T_{S}\cap T^{*})$.
\end{lem}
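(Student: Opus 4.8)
The plan is to prove the stated dichotomy by a win/win argument. Write $P:=\sum_{(j,t)\in T_{S!}}x_{j,t}^{*}w_{j}$ and recall $\text{opt}_{LP}=\sum_{C}y_{C}^{*}w_{C}+\sum_{j,t}x_{j,t}^{*}w_{j}$. First I would record the easy lower bound $\text{opt}_{LP}\ge w(T_{L}\cap T^{*})+w(T_{S}\cap T^{*})$: the solution $(T^{*},h^{*})$ fitting into $\SB$ induces a feasible point of $\text{LP}_{\SB}$ (the large tasks of $T^{*}$ with their heights form a configuration of weight $w(T_{L}\cap T^{*})\in[W,(1+\epsilon)W)$, and the small tasks of $T^{*}$ set the corresponding $x_{j,\cdot}$ variables), whose objective value is $w(T_{L}\cap T^{*})+w(T_{S}\cap T^{*})$. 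It then remains to show that if $P>4\eta\,\text{opt}_{LP}$ the solution $(\bar x',\bar y')$ has expected profit at least $\text{opt}_{LP}$.

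Assume $P>4\eta\,\text{opt}_{LP}$. I would construct a feasible solution $\tilde x$ of $\text{LP}'_{\SB}$ with $\sum_{j,t}\tilde x_{j,t}w_{j}\ge 4\,\text{opt}_{LP}$; since $(\bar x',\bar y')$ is obtained by applying Lemma~\ref{lem:RRA} to an optimal solution of $\text{LP}'_{\SB}$, its expected profit is then at least $\tfrac14\cdot 4\,\text{opt}_{LP}=\text{opt}_{LP}$, which is the second case. The construction uses the note that each $p(C)$ is downward closed over every edge, so, writing $[0,H_{C}(e))$ for the column of $p(C)$ over $e$, a pair $(j,t)$ meets $p(C)$ exactly when $H_{C}(e)>t$ for some $e\in P(j)$, and then $(j,t')$ meets $p(C)$ for all $0\le t'\le t$. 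Hence the problematic heights of a fixed small task $j$ form a prefix $[0,T_{j}]$, and evaluating constraint~\eqref{con:position-j} at $t=T_{j}$ and the edge of $P(j)$ achieving $\max_{e}H_{C}(e)$, together with $\sum_{C\colon(e,T_{j})\in p(C)}y_{C}^{*}>1-\eta$ (definition of being problematic), yields $q_{j}:=\sum_{t\le T_{j}}x_{j,t}^{*}\le\eta$: the LP only puts a $\le\eta$-fraction of each small task into its problematic heights. I would then build $\tilde x$ by re-inserting, for every $j$ with $q_{j}>0$, a total fractional amount $q_{j}/\eta\le 1$ of $j$ into the columns of $P(j)$ occupied by large tasks in the $\ge(1-\eta)$-mass of configurations — columns that are empty in $\text{LP}'_{\SB}$ — pushed down so~\eqref{con:point} holds. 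Constraint~\eqref{con:small} holds since $q_{j}/\eta\le 1$, and~\eqref{con:point} holds because at a point $(e,t)$ lying below a $(1-\eta)$-mass of configurations the $x^{*}$-version of~\eqref{con:point} already caps the problematic mass there below $\eta$, so the $1/\eta$-scaling keeps it below $1$ (the $1-O(\epsilon)$ slack for re-packing small tasks without overlap, handled as in the proof of Lemma~\ref{lem:boxing-solutions}, is absorbed into the final bound). The value is $\sum_{j}w_{j}q_{j}/\eta=P/\eta>4\,\text{opt}_{LP}$.

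The main obstacle is the geometric feasibility of this re-insertion: the edge of $P(j)$ witnessing that a pair is problematic may depend on the configuration $C$, so turning the definition of problematic into the clean per-task bound $q_{j}\le\eta$ requires arguing that, over the portion of $P(j)$ relevant to $\SB$, the profile $H_{C}(\cdot)$ is monotone and hence attains its maximum over $P(j)$ at a fixed endpoint edge — this is exactly the staircase geometry of stair-blocks, and it is also what makes packing the re-inserted tasks into the freed columns a routine shifting argument. Everything else is bookkeeping; in particular, for the eventual choice $\eta=\Theta(1/(\alpha+1))$ the first-case bound $P\le 4\eta\,\text{opt}_{LP}$ is at most $\tfrac12\sum_{j,t}x_{j,t}^{*}w_{j}$, which is precisely what the rounding analysis in the proof of Lemma~\ref{lem:general-stairblock-solution} needs.
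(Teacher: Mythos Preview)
Your approach is essentially the paper's: assume $P>4\eta\,\text{opt}_{LP}$, scale the problematic pairs by $1/\eta$ to get a feasible witness $\tilde x$ for $\text{LP}'_{\SB}$ of value $P/\eta$, and conclude via Lemma~\ref{lem:RRA} that $(\bar x',\bar y')$ has profit at least $\tfrac14\cdot P/\eta>\text{opt}_{LP}$; the paper verifies feasibility of $\tilde x$ using \eqref{con:position-j} and \eqref{con:point} in exactly the way you outline, and also leaves the geometric point you flag as ``the main obstacle'' to the stair-block structure. One expository wobble: your description oscillates between ``$1/\eta$-scaling at the original heights'' and ``re-inserting $j$ at new heights into columns empty in $\text{LP}'_{\SB}$''; the paper does only the former (set $\tilde x_{j,t}=x^*_{j,t}/\eta$ on problematic pairs, $0$ elsewhere), and no $(1-O(\epsilon))$ re-packing slack is needed since $\tilde x$ is merely a fractional LP witness, not an actual drawing.
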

\begin{proof}

    Assume that $\sum_{(j,t)\in T_{S!}}x_{j,t}^{*}w_{j}>4\eta\text{opt}_{LP}$.
    We construct a solution $\tilde{x}$ to $\text{LP}'_{\SB}$. We set
    $\tilde{x}_{j,t}:=0$ for each $(j,t)\notin T_{S!}$, i.e., we remove
    all unproblematic pairs, and we set $\tilde{x}_{j,t}:=x_{j,t}^{*}/\eta$
    for each $(j,t)\in T_{S!}$. We claim that $\tilde{x}$ is a feasible
    solution to $\text{LP}'_{\SB}$. For constraint~\eqref{con:point}
    for some pair $(e,t)$ let $T'_{S!}\subseteq T_{S!}$ denote all problematic
    pairs $(j,t)\in T_{S!}$ such that $(e,t)\in p(j,t)$. Let $t'$ be
    the smallest value such that $(e,t')\in p(j,t)$ for all $(j,t)\in T'_{S!}$.
    Then there must be an edge $e'$ such that also $(e',t')\in p(j,t)$
    for all $(j,t)\in T'_{S!}$ and $\sum_{C\colon(e',t')\in p(C)}y_{C}>1-\eta$
    since otherwise some pair in $T'_{S!}$ would not be problematic.
    Hence $\sum_{(j,t)\in T_{S!}}x_{j,t'}<\eta$ and thus $\sum_{(j,t)\in T_{S!}}x_{j,t'}^{*}<1$
    and constraint~\eqref{con:point} holds for $\tilde{x}$. For constraint~\eqref{con:small}
    (which implies constraint~\eqref{con:position-j} in $\text{LP}'_{\SB}$)
    let $j\in\bar{T}_{S}$ and let $t$ denote the maximum value such
    that $(j,t)\in T_{S!}$. Hence, $\sum_{C\colon(e,t)\in p(C)}y_{C}>1-\eta$
    and therefore $\sum_{t\ge0}x_{j,t}\le\eta$ and hence $\sum_{t\ge0}x_{j,t}^{*}\le1$.

    The constraints \eqref{con:position-j} imply that $\tilde{x}$ is
    a feasible solution to $\text{LP}'_{\SB}$. Let $\tilde{x}^{*}$ denote
    the optimal solution to $\text{LP}'_{\SB}$. We have that $\sum_{j\in\bar{T}_{S},t}\tilde{x}_{j,t}^{*}w_{j}\ge\sum_{j\in\bar{T}_{S},t}\tilde{x}_{j,t}w_{j}$.
    Therefore, the profit of our integral solution $(\bar{x}',\bar{y}')$
    is at least $\frac{1}{4}\sum_{j\in\bar{T}_{S},t}\tilde{x}_{j,t}^{*}w_{j}\ge\frac{1}{4}\sum_{(j,t)\in T_{S!}}\tilde{x}_{j,t}w_{j}\ge\frac{1}{4\eta}\sum_{(j,t)\in T_{S!}}x_{j,t}^{*}w_{j}>\text{opt}_{LP}$.
    Since $(x^{*},y^{*})$ is the optimal fractional solution we have
    that $\text{opt}_{LP}\ge w(T_{L}\cap T^{*})+w(T_{S}\cap T^{*})$. 
\end{proof}
Assume now
that the first case of Lemma~\ref{lem:problematic-small-profit}
applies. We argue that then the problematic pairs contribute at most
half of the profit of all pairs (for all small tasks) and hence we
can ignore the problematic pairs.
\begin{lem}
    \label{lem:bound-problematic}Assume that 
    $\eta\le\frac{1}{8}\frac{1/\alpha -O(\epsilon)}{1+1/\alpha-O(\epsilon)}$.
    Then $\sum_{(j,t)\notin T_{S!}}x_{j,t}^{*}w_{j}\ge\frac{1}{2}\sum_{j\in\bar{T}_{S},t}x_{j,t}^{*}w_{j}$.

\end{lem}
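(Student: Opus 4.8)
The plan is to reduce the statement to an upper bound on the ratio between the configuration contribution and the small-task contribution of the optimal LP solution. We are in the first case of Lemma~\ref{lem:problematic-small-profit}, i.e.\ $\sum_{(j,t)\in T_{S!}}x_{j,t}^{*}w_{j}\le 4\eta\,\text{opt}_{LP}$; since the problematic and the non-problematic pairs partition all pairs of the form $(j,t)$ with $j\in\bar{T}_{S}$, it suffices to show $4\eta\,\text{opt}_{LP}\le\tfrac12 S$ where $S:=\sum_{j\in\bar{T}_{S},t}x_{j,t}^{*}w_{j}$. Writing $A:=\sum_{C\in\C}y_C^{*}w_C$ we have $\text{opt}_{LP}=A+S$, so the target inequality is equivalent to $\eta\le\tfrac18\cdot\tfrac{S}{A+S}$, i.e.\ to an upper bound on $A/S$.

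First I would bound $A$ from above. Constraint~\eqref{con:conf} gives $\sum_{C}y_C^{*}=1$ with $y_C^{*}\ge 0$, and by definition every configuration has weight $w_C=w(\bar{T}')\in[W,(1+\epsilon)W)$; hence $A\in[W,(1+\epsilon)W)$. Since $W$ was chosen with $w(T_L\cap T^{*})\in[W,(1+\epsilon)W)$, in particular $W\le w(T_L\cap T^{*})$, so $A<(1+\epsilon)\,w(T_L\cap T^{*})$. Next I would bound $S$ from below: as already observed (and used in Lemma~\ref{lem:problematic-small-profit}), optimality of $(x^{*},y^{*})$ together with the feasible LP solution induced by $T^{*}$ and the heights witnessing that $T^{*}$ fits into $\SB$ gives $\text{opt}_{LP}\ge w(T_L\cap T^{*})+w(T_S\cap T^{*})$. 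Therefore
\[
S=\text{opt}_{LP}-A\ge w(T_L\cap T^{*})+w(T_S\cap T^{*})-A > w(T_S\cap T^{*})-\epsilon\,w(T_L\cap T^{*}),
\]
where the last step uses $A-w(T_L\cap T^{*})\le A-W<(1+\epsilon)W-W=\epsilon W\le\epsilon\,w(T_L\cap T^{*})$. Plugging in the hypothesis $w(T_S\cap T^{*})\ge\tfrac1\alpha w(T_L\cap T^{*})$ yields $S>(\tfrac1\alpha-\epsilon)\,w(T_L\cap T^{*})$.

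Combining the two bounds, $A/S<\tfrac{1+\epsilon}{1/\alpha-\epsilon}$, so $\tfrac{S}{A+S}=\tfrac{1}{1+A/S}>\tfrac{1/\alpha-\epsilon}{1+1/\alpha}$; hence any $\eta\le\tfrac18\cdot\tfrac{1/\alpha-\epsilon}{1+1/\alpha}$ works, and this is exactly what the stated bound $\eta\le\tfrac18\cdot\tfrac{1/\alpha-O(\epsilon)}{1+1/\alpha-O(\epsilon)}$ encodes once the constant hidden in $O(\epsilon)$ is fixed appropriately. Indeed, then $4\eta\,\text{opt}_{LP}\le\tfrac12\cdot\tfrac{1/\alpha-\epsilon}{1+1/\alpha}\,(A+S)<\tfrac12 S$, which is the claim.

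There is no deep obstacle here; the proof is a careful threading of three facts (the configuration-weight bound, the accuracy of the guess $W$, and the LP-optimality lower bound $\text{opt}_{LP}\ge w(T_L\cap T^{*})+w(T_S\cap T^{*})$). The only mildly delicate point — and the reason $O(\epsilon)$ appears at all — is the $(1+\epsilon)$-gap between the LP's configuration weight $A$ and the true value $w(T_L\cap T^{*})$, which comes from the geometric guessing of $W$ and the weight slack allowed in configurations; if this is handled naively one loses slightly more than an additive $\epsilon$, which is why the denominator of the bound on $\eta$ must also carry an $O(\epsilon)$ correction rather than being exactly $1+1/\alpha$.
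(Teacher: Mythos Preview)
Your proposal is correct and follows essentially the same approach as the paper: both reduce the claim to showing $S/(A+S)\ge 1/(\alpha+1)-O(\epsilon)$ via the chain $A\approx w(T_L\cap T^*)$ (from the configuration weight window $[W,(1+\epsilon)W)$), $\text{opt}_{LP}\ge w(T_L\cap T^*)+w(T_S\cap T^*)$, and the hypothesis $w(T_S\cap T^*)\ge\tfrac1\alpha w(T_L\cap T^*)$. The paper's sketch simply ``pretends'' that $A=w(T_L\cap T^*)$ exactly and suppresses the $\epsilon$-bookkeeping you carry out explicitly; your version is a rigorous unfolding of that sketch.
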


\begin{proof}[Proof sketch.]
    Let us pretend that $\sum_{C\in\C}y_{C}^{*}w_{C}=w(T_{L}\cap T^{*})$.
    Then we have that $\sum_{j\in\bar{T}_{S},t}x_{j,t}^{*}w_{j}\ge w(T_{S}\cap T^{*})$
    since $(x^{*},y^{*})$ is the optimal fractional solution. Therefore,
    $\sum_{j\in\bar{T}_{S},t}x_{j,t}^{*}w_{j}\ge\frac{1}{\alpha}\sum_{C\in\C}y_{C}^{*}w_{C}$
    and $(1+\frac{1}{\alpha})\sum_{j\in\bar{T}_{S},t}x_{j,t}^{*}w_{j}\ge\frac{1}{\alpha}\left(\sum_{C\in\C}y_{C}^{*}w_{C}+\sum_{j\in\bar{T}_{S},t}x_{j,t}^{*}w_{j}\right)=\frac{1}{\alpha}\text{opt}_{LP}$.
    This implies that $\sum_{j\in\bar{T}_{S},t}x_{j,t}^{*}w_{j}\ge\text{opt}_{LP}/(\alpha+1)$.
    Since $\sum_{(j,t)\in T_{S!}}x_{j,t}^{*}w_{j}\le4\eta \text{opt}_{LP}\le\frac{\text{opt}_{LP}}{2(\alpha+1)}\le\frac{1}{2}\sum_{j\in\bar{T}_{S},t}x_{j,t}^{*}w_{j}$
    the claim follows. 
\end{proof}
Each non-problematic pair is discarded only with probability at most
$1-\eta$. Therefore, the expected profit of the auxiliary solution
$x$ is at least an $\eta$-fraction of the
profit of the non-problematic
pairs. Due to Lemma~\ref{lem:bound-problematic} we can neglect the
profit due to problematic pairs. For $\eta:=\frac{1-O(\epsilon)}{8(\alpha+1)}$
the claim of Lemma~\ref{lem:general-stairblock-solution} follows
from the following simple calculation.

\[
    \begin{alignedat}{1}\mathbb{E}\left[\sum_{C\in\C}\bar{y}_{C}w_{C}+\sum_{j\in\bar{T}_{S},t\in\N}\bar{x}_{j,t}w_{j}\right] & \ge\sum_{C\in\C}y_{C}^{*}w_{C}+\mathbb{E}\left[\frac{1}{4}\sum_{j\in\bar{T}_{S},t\in\N}x_{j,t}w_{j}\right]\\
        & \ge\sum_{C\in\C}y_{C}^{*}w_{C}+\frac{1}{4}\eta\sum_{(j,t)\notin T_{S!}}x_{j,t}^{*}w_{j}\\
        & \ge\sum_{C\in\C}y_{C}^{*}w_{C}+\frac{1}{8}\eta\sum_{j\in\bar{T}_{S},t\in\N}x_{j,t}^{*}w_{j}\\
        & \ge\frac{1}{1+\epsilon}w(T_{L}\cap T^{*})+\frac{\eta}{8(1+\epsilon)}w(T_{S}\cap T^{*})\\
        & \ge\frac{1}{1+\epsilon}w(T_{L}\cap T^{*})+\frac{1-O(\epsilon)}{8(\alpha+1)}w(T_{S}\cap T^{*}).
    \end{alignedat}
\]

\paragraph{Arbitrary stair-solutions.}

In order to compute a profitable $\gamma$-stair-solution we device
a DP that intuitively sweeps the path from left to right and guesses
the stair-blocks in the optimal stair-solution. For each stair-block
we invoke the algorithm above. Since each edge can be used by at most
$\gamma$ stair-blocks and large tasks we obtain a running time of $n^{(\gamma\log n)^{O(c/\delta)}}$. 
Since we require the stair-blocks to be compatible, there can be no task 
that can be assigned to more than one stair-block,
even if the subproblems for each stair-block is solved independently.
Recall that for a large task $i\in T_{L}$ we required that $h(i)<d_{i}$ 
for its computed height $h(i)$ and hence we cannot assign it into two stair-blocks,
even if it would fit into the respective areas of the stair-blocks.

\subsection{Combining multiple staircase instances}

We aim to design a DP to build a stair solution from separate stair-blocks and large tasks.
The basic idea of the DP is to build the solution from left to right and to guess stair-blocks and large tasks for each edge.
In the DP, we would like to use Lemma~\ref{lem:general-stairblock-solution} to compute solutions for the stair blocks.
There is, however, the following complication.

The condition of of Lemma~\ref{lem:stair-solution} that $w(T_S \cap T_{\text{stair}}) \ge \frac{1}{\alpha} w(T_L \cap T_{\text{stair}})$ considers the entire instance and it is not necessarily true for separate stair-blocks.
We circumvent the problem in two steps.
We first assume that we knew the parameters $(e_L,e_M,e_R,f,T'_L,h')$ of all stair-blocks and 
show that an approach analogous to the proof of Lemma~\ref{lem:general-stairblock-solution} yields a global solution with the properties of Lemma~\ref{lem:general-stairblock-solution}.
We then show how to combine solutions for separate stair-blocks to obtain at least the weight of the global approach.

Based on the combination of separate solutions, we are able to build a DP.

\paragraph*{Algorithm for known stair-blocks}
Let $\S$ be a set of pairwise compatible stair-blocks.
We say that a set of tasks  $T'$ fits into $\S$ if there is a $h'$ such that $(T',h')$ is a stair-solution with stair-blocks $\S$.
Let $\bar{T}$ be a set of tasks $i \in T$ such that $\{i\}$ fits into $\S$.
Suppose we know $\S$.

The idea of the following lemma is that by considering the stair-blocks $\S$ 
separately, we obtain a solution with at least the weight as considering the whole solution in one shot.

Let $\mathcal{A}$ be the algorithm from Lemma~\ref{lem:general-stairblock-solution}, which takes a stair-block $\SB$ and a set of tasks $T'$ as input and computes
a solution $\mathcal{A}(\SB,T')$ for $\SB$ using a subset of the tasks $T'$.
Let $\bar{T}$ be the set of tasks and $\S$ the known set of stair-blocks from before.

In the following lemma, the sets $\bar{T}_\ell$ indeed form a partition of $\bar{T}$, since the stair-blocks from $\S$ are pairwise compatible and thus each task can fit into at most one of them.

\begin{lem}\label{lem:compose-stair-solutions}
    Let $T^{*}\subseteq\bar{T}$ be an unknown set of tasks that fit into $\S = \{\SB_1,\SB_2,\dotsc,\SB_k\}$
    and $w(T_{S} \cap T^{*})\ge\frac{1}{\alpha}w(T_{L}\cap T^{*})$
    for some value $\alpha \ge 1$. 
    Let $\bar{T}_1 \dot{\cup} \bar{T}_2 \dot{\cup} \dotsm \dot{\cup} \bar{T}_k$ be the partition of $\bar{T}$ such that for each $\ell$ and each $i \in \bar{T}_\ell$, $\{i\}$ fits into $\SB_\ell$.
    Then the solution $(\hat{T},\hat{h})$ composed of $\mathcal{A}(\SB_\ell,\bar{T}_\ell)$ for $\ell \in [k]$ is a stair-solution with stair-blocks $\S$ such that
    \[
        w(\hat{T})\ge\left(1-O(\epsilon))\right)\left(w(T_{L}\cap T^{*})+\frac{1}{8(\alpha+1)}\cdot w(T_{S}\cap T^{*})\right)\,.
    \]
\end{lem}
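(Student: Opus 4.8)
The plan is to reduce Lemma~\ref{lem:compose-stair-solutions} to a ``linearity of expectation across the stair-blocks'' argument that mirrors the single-block computation from Lemma~\ref{lem:general-stairblock-solution}. First I would observe that since the stair-blocks in $\S$ are pairwise compatible, no task fits into more than one of them, so $\bar{T}=\bar{T}_1\dot{\cup}\dotsc\dot{\cup}\bar{T}_k$ is genuinely a partition and the optimal set $T^*$ splits as $T^*=T^*_1\dot{\cup}\dotsc\dot{\cup}T^*_k$ with $T^*_\ell:=T^*\cap\bar{T}_\ell$ fitting into $\SB_\ell$. The compatibility also guarantees that the solutions $\mathcal{A}(\SB_\ell,\bar{T}_\ell)$ can be placed simultaneously without conflicts (their tasks are disjoint and each sits inside the area of its own stair-block, and the condition $h(i)<d_i$ for large tasks prevents a task being reused in two blocks), so the composed $(\hat T,\hat h)$ is indeed a valid stair-solution with stair-block set $\S$.

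The crux is the weight bound. For each $\ell$, running $\mathcal{A}$ on $\SB_\ell$ with task set $\bar{T}_\ell$ produces a solution whose expected weight I want to compare against $w(T_L\cap T^*_\ell)+\tfrac{1}{8(\alpha+1)}w(T_S\cap T^*_\ell)$ — \emph{but only} when the local ratio condition $w(T_S\cap T^*_\ell)\ge\frac1\alpha w(T_L\cap T^*_\ell)$ holds, which it need not. The key idea here is to define, for each stair-block, a local LP $\mathrm{LP}_{\SB_\ell}$ exactly as in Lemma~\ref{lem:general-stairblock-solution}, and to observe that the restriction of the global optimal solution (with the global guessed weight window $W$) to block $\ell$ is feasible for $\mathrm{LP}_{\SB_\ell}$; hence $\mathrm{opt}_{LP,\ell}\ge w(T_L\cap T^*_\ell)+w(T_S\cap T^*_\ell)$. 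The rounding of Lemma~\ref{lem:RRA} then gives, for each block, a solution of expected weight $\ge \sum_C y^*_C w_C + \tfrac14\sum_{(j,t)\notin T_{S!,\ell}}x^*_{j,t}w_j$, and the problematic-pair bound (Lemma~\ref{lem:problematic-small-profit} and Lemma~\ref{lem:bound-problematic}) is applied \emph{globally}: summing over $\ell$, either the ``small-only'' rounding of the combined instance already yields $\mathrm{opt}_{LP}\ge w(T_L\cap T^*)+w(T_S\cap T^*)$, or the total weight of problematic pairs is at most $4\eta\,\mathrm{opt}_{LP}$, and since the global ratio condition holds we get $\sum_\ell\sum_{(j,t)\notin T_{S!}}x^*_{j,t}w_j\ge\tfrac12\sum_\ell\sum_{j,t}x^*_{j,t}w_j$. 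Plugging $\eta=\tfrac{1-O(\epsilon)}{8(\alpha+1)}$ and repeating the five-line chain of inequalities from the proof of Lemma~\ref{lem:general-stairblock-solution}, now with every sum carrying an extra index $\ell$, yields
\[
    \mathbb{E}[w(\hat T)]\ \ge\ \frac{1}{1+\epsilon}w(T_L\cap T^*)+\frac{1-O(\epsilon)}{8(\alpha+1)}w(T_S\cap T^*)\,,
\]
and derandomizing (or taking the better of the two rounding outcomes) gives the deterministic statement.

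The main obstacle I expect is making the ``local feasibility plus global accounting'' bookkeeping airtight: the problematic-pair argument and the factor-$\tfrac12$ bound on non-problematic weight genuinely need the global ratio $w(T_S\cap T^*)\ge\frac1\alpha w(T_L\cap T^*)$, so one must be careful that the LP values $\sum_C y^*_C w_C$ per block all use a \emph{consistent} guessed weight window $W$ (this is why a global $W$ is guessed once, not one $W_\ell$ per block), and that summing the local constraints \eqref{con:position-j} across blocks still certifies the global bound needed in Lemma~\ref{lem:problematic-small-profit}. A secondary point to check carefully is that $\mathcal{A}(\SB_\ell,\bar{T}_\ell)$ only ever uses tasks from $\bar{T}_\ell$, so the composed solution's disjointness is immediate from the partition — but one should note explicitly that a task $i$ with $i\in T'_L$ of one stair-block (i.e., a large task given in advance and fixed) is never among the $\bar{T}_\ell$ assigned by $\mathcal{A}$, so no double-placement can occur there either. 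Once these points are settled, the lemma follows by the same calculation as Lemma~\ref{lem:general-stairblock-solution} carried out summand-by-summand over the $k$ stair-blocks.
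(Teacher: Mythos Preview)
Your overall strategy is correct and closely parallels the paper's: exploit pairwise compatibility to partition $\bar T$ and $T^*$ across the blocks, observe that the composed solution is feasible, and use the \emph{global} ratio condition $w(T_S\cap T^*)\ge\frac1\alpha w(T_L\cap T^*)$ (rather than any per-block ratio) to drive the problematic-pair accounting. The paper makes this precise by writing a single global LP $\text{LP}_\S$ whose configurations range over all blocks but whose packing constraints are stated separately per block; since no constraint couples variables from different stair-blocks, $\text{LP}_\S$ factorises as the product of the local $\text{LP}_{\SB_\ell}$'s, and Lemmas~\ref{lem:problematic-small-profit} and~\ref{lem:bound-problematic} applied to $\text{LP}_\S$ give the claimed bound. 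The final step is the one you also sketch: the global rounding makes a single binary choice (sample a configuration and round small tasks, versus small-only rounding) uniformly across all blocks, whereas $\mathcal A$ makes that choice independently per block and therefore dominates the global algorithm blockwise.

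One technical point in your write-up is wrong and would cause trouble if implemented as stated: you insist on a \emph{single global} weight window $W$, arguing this is needed for consistency. In fact the paper guesses a separate window $W_\ell$ per stair-block (so that $w(T_L\cap T^*_\ell)\in[W_\ell,(1+\epsilon)W_\ell)$), and this is what you need. A single global window would either couple the blocks through the configuration constraint --- destroying the factorisation you rely on --- or, if imposed literally on each local LP, would be meaningless since the per-block large-task weights can be arbitrary fractions of the total. With per-block windows the aggregation is clean: $\sum_\ell\sum_C y^*_{C,\ell}w_C\le(1+\epsilon)\sum_\ell W_\ell\le(1+\epsilon)^2 w(T_L\cap T^*)$, and then the global ratio hypothesis feeds into Lemma~\ref{lem:bound-problematic} applied to $\text{opt}_{LP}=\sum_\ell\text{opt}_{LP,\ell}$ exactly as in your chain of inequalities. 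So your plan goes through once you replace the single global $W$ by per-block $W_\ell$.
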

\begin{proof}
    We first show how to obtain a solution with the claimed properties by using a large LP.
    We present a \emph{global} algorithm which computes a stair-solution with stair-blocks $\S$ such that
    \[
        w(\hat{T})\ge\left(1-O(\epsilon))\right)\left(w(T_{L}\cap T^{*})+\frac{1}{8(\alpha+1)}\cdot w(T_{S}\cap T^{*})\right)\,.
    \]

    \subparagraph*{Global Algorithm}
    Due to the similarity to the proof of Lemma~\ref{lem:general-stairblock-solution}, we
    outline the steps and refer to the proof of Lemma~\ref{lem:general-stairblock-solution} for details.

    For a stair-block $\SB$, let $T^*_\SB \subseteq T^*$ be the maximal subset of $T^*$ such that for each $i \in T^*_\SB$, $\{i\}$ fits into $\SB$.
    For each stair-block $\SB \in \S$, we guess $w(T_{L}\cap T^{*}_{\SB})$ up to a factor $1+\epsilon$,
    i.e., we guess a value $W_\SB$ such that $w(T_{L}\cap T^{*}_\SB)\in[W_\SB,(1+\epsilon)W_\SB)$.
    Our algorithm is based on a linear program that uses configurations for
    the sets of large tasks that fit into $\S$. 
    Formally, we define
    a pair $C=(\bar{T}',\bar{h}')$ to be a \emph{configuration }if 
    $\bar{T}'\subseteq\bar{T}$;
    for each stair-block $\SB \in \S$, $w(\bar{T}'_\SB)\in[W_\SB,(1+\epsilon)W_\SB)$, where $\bar{T}'_\SB$ is the set of all $i \in \bar{T}'$ such that ${i}$ fits into $\SB$; 
    $\bar{h}'$ is a function $\bar{h}':\bar{T}'\rightarrow\N$
    such that $\bar{h}'(i)<d_{i}$ for each task $i\in\bar{T}'$; and
    $(\bar{T}',\bar{h}')$ fits into $\S$. 
    Let $\C$ denote the set
    of all configurations. We introduce a variable $y_{C}$ for each configuration
    $C\in\mathcal{C}$. 
    For each small task $j\in\bar{T}_{S}:=\bar{T}\cap T_{S}$ and each $t\in\{0,\dotsc,b(j)-d_{j}\}$
    we introduce a variable $x_{j,t}$ indicating whether $j$ is contained
    in the solution and drawn at height $t$. Note that we do not need
    variables $x_{j,t}$ for $t>b(j)-d_{j}$ since the upper edge of $j$
    has to have a height of at most $b(j)$.

    We add the same types of constraints as for a single stair-block, but simultaneously for all stair-blocks of $\S$.
    Denote by $\text{LP}_{\S}$ the linear program below where for convenience
    we assume that all non-existing variables are set to zero. 

    \begin{alignat}{2}
        \max\sum_{C\in\C}y_{C}w_{C}+\sum_{j\in\bar{T}_{S},t}x_{j,t}w_{j}\\
        \text{s.t.}\quad\sum_{C\colon(e,t)\in p(C)}y_{C}+\sum_{\substack{j\in\bar{T}_{S},t'\colon\\\mbox{\footnotesize $j$ fits into $\SB$},\\P(e,t)\in p(j,t')}}x_{j,t'} & \le1 &  & \qquad\mbox{for all }\parbox[t]{5.5cm}{$\SB = (e_L,e_M,e_R,f,T'_L,h') \in \S$,\\ $e\in P_{e_{M},e_{R}},t\ge0$}\label{con:Gpoint}\\
        \sum_{C\colon(e,t)\in p(C)}y_{C}+\sum_{t'\colon t'\le t}x_{j,t'} & \le1 &  & \qquad\mbox{for all }\parbox[t]{5.5cm}{$\SB = (e_L,e_M,e_R,f,T'_L,h') \in \S$,\\$e\in P_{e_{M},e_{R}},t\ge0,j\in\bar{T}_{S}\cap T_{e}$,\\$j$\mbox{ fits into $\SB$}}\label{con:Gposition-j}\\
        \sum_{C\in\mathcal{C}}y_{C} & =1\label{con:Gconf}\\
        \sum_{t\ge0}x_{j,t} & \le1 &  & \qquad\mbox{for all }j\in\bar{T}_{S}\label{con:Gsmall}\\
        x_{j,t},y_{C} & \ge0 &  & \qquad\mbox{for all }j\in\bar{T}_{S},t\in\N,t\le b(j)-d_{j},C\in\C
    \end{alignat}

    Let $(x^*,y^*)$ be a solution to $LP_\S$. 
    Then Lemma~\ref{lem:RRA}, Lemma~\ref{lem:problematic-small-profit}, and Lemma~\ref{lem:bound-problematic} follow by substituting $\SB$ for $\S$.
    Let $\mathcal{T}'$ be the computed solution, composed of subsets $\{T'_1,T'_2,\dotsc,T'_k\}$ that are assigned to $\SB_1,\SB_2,\dotsc,\SB_k$.

    \subparagraph*{Composing the solution from stair-blocks}

    We claim that for each index $\ell$, $w(T'_\ell) \le w(\mathcal{A}(\SB_\ell,T_\ell))$.

    The key insight is that in $LP_\S$, for every pair of distinct stair-blocks $\SB_\ell,\SB_{\ell'}$,
    the feasibility of solutions for $\SB_\ell$ and $\SB_{\ell'}$ are independent in the following sense.
    Every configuration for $\SB_\ell$ can be combined with every configuration of $\SB_{\ell'}$ without influencing the feasibility of the LP solution for the respective other stair-block.
    There is no constraint that contains both $x_{j,t}$ and $x_{j',t'}$ for a tasks $j,j'$ such that $\{j\}$ fits into $\SB_\ell$ and a $\{j'\}$ fits into $\SB_{\ell'}$.

    The global algorithm is therefore equivalent to an algorithm that 
    either applies Lemma~\ref{lem:problematic-small-profit} to all $\bar{T}_\ell,\SB_\ell$, or it applies Lemma~\ref{lem:bound-problematic} to all $\bar{T}_\ell, \SB_\ell$, whichever of the two solutions gives more profit.
    Since $\mathcal{A}$ obtains the best of the two solutions for each stair-block $\SB_\ell$, the lemma follows.
\end{proof}
\paragraph*{The dynamic program}
Let $\bar{\S}$ be an unknown set of pairwise compatible stair-blocks.
Let $\bar{T}$ be a given set of tasks.
Let $T^{*}\subseteq\bar{T}$ be an unknown set of tasks such that there is a stair-solution $(T^*,h^*)$
and $w(T_{S} \cap T^{*})\ge\frac{1}{\alpha}w(T_{L}\cap T^{*})$.

We design a dynamic program which computes a stair-solution $(\hat{T},\hat{h})$ with $\hat{T} \subseteq \bar{T}$ such that there is a $c\in O(1)$ with 
$w(\hat{T})\ge\left(1-c \epsilon)\right)\left(w(T_{L}\cap T^{*})+\frac{1}{8(\alpha+1)}\cdot w(T_{S}\cap T^{*})\right)$. 
The DP builds the solution from left to right, starting from the left-most edge $e_0 \in E$.
A stair-block $\SB$ crosses an edge $e$ if $e \in P({\SB})$.
Guessing a stair-block $\SB = (e_L,e_M,e_R,f,T'_L,h')$ means to guess its parameters $e_L,e_M,e_R,f,T'_L,h'$.
For an edge $e$, the DP guesses all tasks $i \in (T^* \cap T_L \cap T_e)$ together with $h^*(i)$ and all stair-blocks $\SB^1,\SB^2,\dotsc,\SB^k$ crossing $e$ with all heights $h(\SB_j)$.
For a stair-block $\SB$, let $T_\SB$ be the set of tasks $i \in T$ such that $\{i\}$ fits into or is part of $\SB$.
For a stair-block $\SB$ let $(\hat{T},\hat{h})$ be the solution computed by $\mathcal{A}(\SB,\bar{T})$, see Lemma~\ref{lem:compose-stair-solutions}.
We define the weight of a stair-block $\SB=(e_L,e_M,e_R,f,T'_L,h')$ as $w(\SB) := w(\hat{T} \cup T')$.

For each edge $e$, a DP cell is determined by a tuple $(e,\S',T',h')$ with $\S' \subseteq \S_e$, $T' \subseteq (\bar{T} \cap T_e \cap T_L)$, and $h\colon \{T'\} \rightarrow \mathbb{N}_0$. 
We require that the tasks and stair-blocks in $\S'$ are pairwise compatible.
We exclude tuples with tasks $i \in T'$ such that $i \in T''_L$ of some $\SB = (e_L,e_M,e_R,f,T''_L,h'')$ and $\SB \in \S'$.
In the base case of the DP using edge $e_0$, we have
\[
    \text{DP}(e_0,\S',T',h') = w(T') + \sum_{\SB \in \S'} w(\SB)\,.
\]
Suppose that for an edge $e''$, we have computed all DP-values of DP cells on the left of $e''$ and we consider the tuple $(e'',\S'',T'',h'')$.
Let $e'$ be the consecutive edge on the left of $e''$.
A tuple $(e',\S',T',h')$ is compatible with $(e'',\S'',T'',h'')$, if 
each $\SB'' \in \S'' \setminus \S'$ is compatible with each $i' \in T'$ and with each $\SB' \in \S'$; and
each $\SB' \in \S' \setminus \S''$ is compatible with each $i'' \in T''$.

The value $\text{DP}(e'',\S'',T'',h'')$ is the maximal value of the following term over all tuples $(e',\S',T',h')$ compatible with $(e'',\S'',T'',h'')$.
\[
    w(T'' \setminus T') + \sum_{\SB \in \S'' \setminus \S'} w(\SB) + DP(e',\S',T',h')\,,
\]

Then the weight of the computed solution is the maximal value $\text{DP}(e',\S',T',h')$ for $e'$ the right-most edge of $(V,E)$.
\paragraph*{Correctness}
We claim that the computed solution is feasible.
Observe that the compatibility of cells is transitive in the sense that if for three edges $e_1$ left of $e_2$ left of $e_3$, $(e_1,\S_1,T_1,h_1)$ is compatible with $(e_2,\S_2,T_2,h_2)$  and 
$(e_2,\S_2,T_2,h_2)$ is compatible with $(e_3,\S_3,T_3,h_3)$, then $(e_1,\S_1,T_1,h_1)$ is compatible with $(e_3,\S_3,T_3,h_3)$.

No task can be used more than once:
Let $(e'',\S'',T'',h'')$ be a tuple that determines a DP cell.
Each task $i \in T$ of a computed solution fits into or is part of at most one stair-block of $\S''$, due to the definition of compatibility of stair-blocks.
By the definition of compatibility of stair-blocks with large tasks, for no large task $i \in T''$, $\{i\}$ fits into or is part of a stair-block $\SB = (e_L,e_M,e_R,f,T',h')$ of $\S''$. 
Note that due to the definition of $f$, $i \in T''$ cannot be a large tasks such that $P(i) \subset P_{e_L,e_R}$ with $e_M,e_R \notin P(i)$. 
We conclude that the computed solution is feasible.

\paragraph*{Running time}
We first count the number of different stair-blocks.
Let $\SB = (e_L,e_M,e_R,f,T',h')$ be a stair-block.
For each of the three edges, there are at most $2n$ possibilities.
Let $U:= \max_e u_e$. 
The step function $f$ has at most $\gamma$ steps and each step is determined by a vertex of the path and a height $u \in \{0,1,\dotsc,U\}$.
Therefore there are at most $(2n \cdot U)^\gamma$ functions $f$.
To analyze $T'$, we apply Lemma~\ref{lem:few-large-tasks} to bound the number of tasks in $T'$.
The lemma implies that there are at most 
$n^{2/\delta^2 \cdot \log U}$ subsets of tasks that could form $T'$.
For each task $i$, we have to guess its height $h'(i)$, again a number from $\{0,1,\dotsc,U\}$.
The total umber of functions $h'$ is therefore at most $U^{(2/\delta^2 \cdot \log U)}$.

We obtain the following upper bound on the total number of different stair-blocks by multiplying these numbers:
\[
    \sigma := 8 n^3 \cdot (2n \cdot U)^\gamma \cdot n^{2/\delta^2 \log U} \cdot (U)^{2/\delta^2 \log U} \le (n \cdot U)^{O_\delta(\gamma (\log U))}\,.
\]
To count the number of tuples $(e'',\S'',T'',h'')$, we multiply the $2n$ possibilities for $e''$ with the $\binom{\sigma}{\gamma} \le \sigma^\gamma$
subsets of stair-blocks, the at most 
$n^\gamma$ subsets of large tasks, and 
$U^{1/\delta^2 \log U}$
height levels of large tasks, by Lemma~\ref{lem:few-large-tasks}.
Due to the asymptotic notation, the total number of cells is again $\sigma^\gamma$.

Computing the values of $DP$-cells requires to fill the up to $\gamma$ stair-blocks. 
By Lemma~\ref{lem:general-stairblock-solution}, 
the running time for filling a single stair-block is at most $(n \cdot U)^{O_\delta(\log U)}$ and thus we asymptotically still obtain the running time $\sigma^\gamma = (n\cdot U)^{O_\delta(\gamma^2 \log U)}$. 

\paragraph*{Approximation ratio}
Let $e_0$ be the leftmost edge of $(V,E)$ and $e \in E$ arbitrary.
Let $T^*_0 \subseteq (T^* \cap T_L)$ be the set of tasks and let $\S^* \subset \S$ a set of pairwise compatible stair-blocks such that 
each task $i \in T^*_0$ is compatible with each $\SB \in \S^*$ and
$(T^* \setminus T^*_0,h^*)$ fits into of is part of $\S^*$.
We say that $(e',\S',T',h')$ is a tuple of $(T^*,h^*)$, if $\S' = \{\SB \in \S^* \mid e' \in P(\SB)\}$, $T' = T^*_0 \cap T_{e'}$, and $h'(i) = h^*(i)$ for $i \in T'$.
For an edge $e$, let $\S^*(e) = \{\SB \in \S^* \mid P(\SB) \cap P_{e_0,e} \neq \emptyset\}$ and let
$T^*(e) = \{i \in T^* \mid P(i) \cap P_{e_0,e} \neq \emptyset\}$.
We inductively show that the following invariant holds for all tuples $(e',\S',T',h')$ of $(T^*,h^*)$.
\[
    \text{DP}(e',\S',T',h') \ge w(T^*(e')) + \sum_{\SB \in \S^*(e')} w(\SB)\,.
\]

In the base case, one of the choices is the cell $(e_0,\S^*(e_0),T^*(e_0),h^*_{|T^*(e_0)})$.
By the definition of $\text{DP}(e_0,\S^*(e_0),T^*(e_0),h^*_{|T^*(e_0)})$, the invariant is satisfied.
Suppose that for an edge $e'$, $(e',\S',T',h')$ is a tuple of $(T^*,h^*)$ and $DP(e',\S',T',h')$ satisfy the invariant.
Let $e''$ be the edge right of $e'$ and $(e'',\S'',T'',h'')$ a tuple of $(T^*,h^*)$.
We show that also $DP(e'',\S'',\T'',h'')$ satisfies the invariant.
Since $(e',\S',T',h')$ is compatible with $(e'',\S'',T'',h'')$, one of the choices of the DP is to choose the tuple $(e',\S',T',h')$.
Therefore 
\begin{align*}
    \text{DP}(e'',\S'',T'',h'') &\ge w(T'' \setminus T') + \sum_{\SB \in \S'' \setminus \S'} w(\SB) + DP(e',\S',T',h')\\
    &\ge
    w(T'' \setminus T') + \sum_{\SB \in \S'' \setminus \S'} w(\SB) +  w(T^*(e')) + \sum_{\SB \in \S^*(e')} w(\SB)\\ 
    &= w(T^*(e'')) +  \sum_{\SB \in \S^*(e'')} w(\SB)\,,
\end{align*}
and thus the invariant is satisfied.
Let $(e',\S',T',h')$ be the tuple of $(T^*,h^*)$ for the rightmost edge $e' \in (V,E)$.
Then 
\begin{align*}
    DP(e',\S',T',h') &\ge w(T^*_0) +  \sum_{\SB \in \S^*} w(\SB)\\
    &= w(T^*_0) + w(\{i \in T \mid \SB = (u_L,u_M,u_R,f,\tilde{T},\tilde{h}) \in \S^*, \mathcal{A}(\SB,\bar{T})\\
    & = (\hat{T},\hat{h}), i \in\hat{T}\cup\tilde{T}\})\,.
\end{align*}
Since the stair-blocks $\S^*$ are pairwise compatible, the computed sets $\hat{T}$ are pairwise disjoint.
We thus obtain a solution that is at least as good as the composed solution in the lemma statement of Lemma~\ref{lem:compose-stair-solutions}, which implies our claim.

\subsection{Solving $\LPSB$}\label{sec:separating-dual}
In this section we prove Lemma~\ref{lem:solve-LP-SB}.
We cannot directly compute a primal solution due to the exponential number of primal variables.
We therefore compute an optimal dual solution instead.
The exponentially many primal variables translate into exponentially many dual constraints.
Since the primal LP has only $(n \max_e u_e)^{O(1)}$ constraints, the dual has only $(n\max_eu_e)^{O(1)}$ variables and we can use the ellipsoid method to find a dual solution. 
The problem reduces to separating the dual, i.e., for a given assignment of values to dual variables we have to either certify that they give a feasible dual solution or we have to find a violated dual constraint.

Since there is exactly one dual constraint for each primal variable, there are only $(n \max_{e} u_e)^{O(1)}$ many constraints for the variables $x_{j,t}$.
The feasibility of these constraints can easily be checked. 
We therefore omit them in the following analysis and only show how to separate the constraints containing variables $y_C$.
The primal constraints \eqref{con:small}
do not contain variables $y_C$ and therefore do not influence the dual constraints for variables $y_C$.
For \eqref{con:point}, we introduce the dual variables $\alpha_{e,t}$ for all $e \in P_{e_M,e_R}$ and $t \ge 0$.
For \eqref{con:position-j}, we introduce the dual variables $\beta_{e,t,j}$ for all $e\in P_{e_M,e_R}$, $t \ge 0$, and $j \in T_S$.
For \eqref{con:conf}, we introduce the dual variable $\gamma$.

The Dual (D) has the following constraints.
\begin{alignat}{2}
    \sum_{\substack{e \in P_{e_M,e_R}, t \ge 0\colon\\ p(e,t) \in p(C)}} \alpha_{e,t} + \sum_{j \in T_S} \sum_{\substack{e \in E_R, t \ge 0\colon\\ p(e,t) \in p(C)}} \beta_{e,t,j} + \gamma
    &\ge w_C &&\qquad\mbox{for all }C \subseteq T_L\label{con:dual}\\
    \alpha_{e,t},\beta_{e,t,j}&\ge 0&&\qquad\mbox{for all }e \in P_{e_M,e_R}, t \ge 0, j \in T_S
\end{alignat}

To separate $(D)$, we use a dynamic program. 
If the given dual solution is feasible, the dual program finds a configuration that is closest to violating \eqref{con:dual}, which certifies the feasibility.
If the given dual solution is infeasible, the dual program finds a constraint that maximally violates \eqref{con:dual}.
Note that $\gamma$ does not influence the choice of the constraint.
We therefore want to find a configuration $C$ maximizing
\begin{equation}\label{eq:DP-objective}
    w_C - \sum_{\substack{e \in P_{e_M,e_R}, t \ge 0\colon\\ p(e,t) \in p(C)}} \alpha_{e,t} - \sum_{j \in T_S} \sum_{\substack{e \in E_R, t \ge f_e\colon\\ p(e,t) \in p(C)}} \beta_{e,t,j}\,.  
\end{equation}

For a vertex $v$ of $P_{e_M,e_R}$, let $P_{v}$ be the path from $e_M$ to $v$, not including $e_M$.
We compute DP values for subpaths $P_v$.
The DP cell contains $v$ and a set of tasks $S \subseteq T_L$ with $v \in P(i')$ in $C$ for every $i' \in S$.
The tasks $T'$ of the given stair-block $(e_L,e_M,e_R,f,T',h')$ are fixed. We therefore require that 
$S$ contains all tasks $i$ with $e_M$ and $v$ in $P(i)$ and all tasks with $e_R$ and $v$ in $P(i)$.
No other $i$ with $e_M \in P(i)$ or $e_R \in P(i)$ can be contained in $S$.

The DP cell also contains a height levels $h_S\colon S \rightarrow [0,u_{e_L})$.
We require that the tasks $S$ drawn at $h_S$ do not overlap.
By Lemma~\ref{lem:few-large-tasks}, the number of such tasks is at most $1/\delta^2 \log u_{e_L}$. 
Therefore, there are at most $2^{1/\delta^2 \log \max_e u_e} = (max_e u_e)^{1/\delta^2}$ different sets $S$ and for each $S$ there are at most $(\max_e u_e)^{1/\delta^2 \log \max_e u_e}$ functions $h_S$.
All DP values $DP(v,S,h_S)$ are initialized with zero.
Two distinct DP cells $(v,S,h_S)$ and $(v',S',h_{S'})$ are compatible if $v \neq v'$; and for each $i \in S \cap S'$, $h_S(i) = h_{S'}(i)$; and the tasks $(S \cup S') \setminus (S \cap S')$ drawn at $h_S$ and $h_{S'}$ do not overlap.

We build the solution from left to right. Let $v$ be a vertex of $P_{e_M,e_R}$ such that $(v,S,h_S)$ describes a valid DP cell.
Then we define
\[
    \text{Val}(v,S,h_S) := w(S) - \sum_{\substack{e \in P_{e_M,e_R}, t \ge f_e\colon\\ p(e,t) \in p(S,h_s)}} \alpha_{e,t} - \sum_{j \in T_S} \sum_{\substack{e \in P_{e_M,e_R}, t \ge f_e\colon\\ p(e,t) \in p(S,h_S)}} \beta_{e,t,j}\,.
\]
The value $\text{Val}(v,S,h_S)$ expresses the maximum dual value (ignoring $\gamma$) we can obtain if $C$ contains only $S$, drawn at $h_S$.
If $v$ is the left-most vertex of $P_{e_M,e_R}$, we set $\text{DP}(v,S,h_S) := \text{Val}(v,S,h_S)$.
For all other $v$, we build the corresponding DP value as follows.
\[
    \text{DP}(v,S,h_S) := \max_{S',h_{S'}} (\text{Val}(v,S \setminus S',h_{S \setminus S'}) + \text{DP}(v-1,S',h_{S'}))\,.
\]
Such that $(v-1,S',h_{S'})$ specifies a DP cell and
$(v,S,h_{S})$ is compatible with  $(v-1,S',h_{S'})$.
Intuitively, we added $S$ to a configuration $C$ that contains $S'$ and possibly some tasks on the left of $S'$.

The final DP solution $\text{Sol}$ is $\max_{S,h_S} \text{DP}(v,S,h_S)$, where $v$ is the right-most vertex of $P_{e_M,e_R}$ (i.e., the left vertex of $e_R$).
We have to show two properties of the DP: (i) If the value $\text{Sol}$ is larger than $\gamma$, we have found a violation, (ii) otherwise the solution was feasible.

To show (i), we observe that at vertex $v$, the DP chooses sets of non-overlapping tasks $S$ that separate the left side from the right side. 
Since we ensure that $(v,S,h_S)$ is compatible with its predecessor $(v-1,S',h_{S'})$, no overlap can appear. 
We therefore obtain a valid configuration $C$ of large tasks and the dual constraint from \eqref{con:dual} for $C$ is violated.

To show (ii), consider a configuration $C$ such that for $C$, \eqref{con:dual} is violated. 
We argue that the DP finds a configuration that violates \eqref{con:dual} to a degree at least as high as $C$.
Let $i$ be the left-most task of $C$. Then $DP(s_i-1,S,h_S) \ge 0$ for each valid cell composed of $S$ with $s_i \notin P(S)$, where $P(S) := \bigcup_{i \in S} P(i)$.
Let $S$ be the set of tasks in $C$ with $s_i \in P(S)$, drawn at $h_S$ in $C$.
Then one of the cells considered by the DP is $\text{DP}(s_i,S,h_S)$ and its value is at least $\text{Val}(s_i,S,h_S)$.
Observe that the values $\alpha_{e,t}$ and $\beta_{e,t,j}$ added by the DP are independent of all tasks of $C$ not in $S$.
We inductively construct a solution with maximum value that is at least the value determined by $C$.
We conclude that the final DP-value is at least \eqref{eq:DP-objective} for $C$ and thus determines a violated constraint.

\section{\label{sec:Compute-boxed-solution}Compute boxable solution}
In this section, we prove Lemma~\ref{lem:compute-boxed}, i.e., suppose
there is a $\beta$-boxable solution $(T_{\text{box}},h_{\text{box}})$
for a given $\beta\in\mathbb{N}$. 
We present a recursive algorithm computing a solution consisting of
a set of pairwise non-overlapping boxes $\B$, a set of tasks $T'$,
an assignment $f\colon T'\rightarrow\B$ of the tasks in $T'$ to
the boxes, and a height assignment $h\colon\B\rightarrow\N_{0}$ for
the boxes. Let $\mathcal{B}_{e}=\{B\in\mathcal{B}\mid e\in P(B)\}$
for each edge $e\in E$. We require the solution to have the following
properties: 
(i) The number of boxes $|\mathcal{B}_{e}|$ is at most $\beta$ for
each edge $e\in E$; (ii) for each box $B\in\B$, the path of each
task $i\in f^{-1}(B)$ lies between the left and the right boundary
of $B$; and (iii) for each box $B$ and for each edge $e$ crossed
by $B$, $d(f^{-1}(B)\cap T_{e})\le d_{B}$; 
(iv) for each box $B\in\B$ the set $f^{-1}(B)$ contains exactly
one task or it contains only tasks $i$ whose demand $d(i)$ is at
most $\epsilon^{8}d_{B}$, and (v) $(\B,h)$ forms a feasible solution.
We say that such a solution $(\B,T',f)$ is a \emph{relaxed-$\beta$-boxable
    solution. }
Note that due to (iii) it is actually only a relaxation of a \emph{$\beta$-boxable
    solution}. On the other hand, a result by Bar-Yehuda et al.~\cite[Lemma 9]{bar2017constant}
(based on a result of Buchsbaum et al.~\cite{buchsbaum2004opt})
shows that if $\epsilon$ is sufficiently small, then for each box
$B$ a weighted $(1-4\epsilon)$-fraction of the tasks in $f^{-1}(B)$
can be drawn as non-overlapping rectangles within $B$. Using this,
we obtain the following lemma.
\begin{lem}
    Let $(\B,T',f)$ be a relaxed-$\beta$-boxable-solution. Then there
    is a feasible solution $(T'',h)$ with $T''\subseteq T'$ and $w(T'')\ge(1-4\epsilon)w(T')$. 
\end{lem}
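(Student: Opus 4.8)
The plan is to process the boxes of $\B$ one at a time and, for each box $B$, invoke the result of Bar-Yehuda et al.~\cite[Lemma 9]{bar2017constant} (building on Buchsbaum et al.~\cite{buchsbaum2004opt}) to extract a large-weight subset of $f^{-1}(B)$ that can be drawn as pairwise non-overlapping rectangles strictly inside $B$. Concretely, fix a box $B\in\B$ and consider the tasks $f^{-1}(B)$. By property (iii) of a relaxed-$\beta$-boxable solution, these tasks satisfy $d(f^{-1}(B)\cap T_e)\le d_B$ on every edge $e$ crossed by $B$, so they form a feasible UFP-style instance on the subpath $P(B)$ with uniform capacity $d_B$; by property (ii) every such task has its path inside $P(B)$; and by property (iv) either $|f^{-1}(B)|=1$ (trivial, the single task fits since its demand is at most $d_B$) or every task in $f^{-1}(B)$ has demand at most $\epsilon^8 d_B$, in particular $\ll d_B$. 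This is exactly the regime in which the cited lemma applies: for $\epsilon$ sufficiently small, a weighted $(1-4\epsilon)$-fraction of $f^{-1}(B)$ can be packed as non-overlapping rectangles within $B$ together with a height assignment $h_B$ that places each selected task at a level in $[h(B),h(B)+d_B)$.

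Next I would define $T''$ to be the union, over all $B\in\B$, of the subsets produced by the previous step, and define $h$ on $T''$ by letting $h(i):=h_B(i)$ for the box $B=f(i)$ to which $i$ was assigned. Feasibility of $(T'',h)$ follows by combining two observations: within a single box $B$ the selected rectangles are non-overlapping by the cited lemma and all lie in the horizontal strip $P(B)\times[h(B),h(B)+d_B)$; and across boxes, since $(\B,h)$ is a feasible solution (property (v)), the strips $P(B)\times[h(B),h(B)+d_B)$ for distinct $B\in\B$ are pairwise non-overlapping wherever their paths intersect — so any two tasks assigned to different boxes cannot overlap either. Hence the rectangles of $T''$ are pairwise disjoint, and each task $i\in T''$ satisfies $h(i)+d_i\le h(B)+d_B\le u_e$ for every $e\in P(i)$, because the box $B=f(i)$ itself satisfies $h(B)+d_B\le u_e$ on each edge of $P(B)\supseteq P(i)$ (as $(\B,h)$ is feasible). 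Therefore $(T'',h)$ is a feasible SAP-solution.

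Finally I would bound the profit: since the subsets chosen in different boxes are disjoint (the assignment $f$ partitions $T'$), we get
\[
w(T'')=\sum_{B\in\B} w(T''\cap f^{-1}(B))\ge\sum_{B\in\B}(1-4\epsilon)\,w(f^{-1}(B))=(1-4\epsilon)\,w(T'),
\]
which is the claimed bound. The main obstacle — or rather the only nontrivial ingredient — is verifying that the hypotheses of the Bar-Yehuda et al.\ lemma are genuinely met for each box; this is where properties (ii), (iii), and (iv) of a relaxed-$\beta$-boxable solution are used, and in particular where one needs $\epsilon$ small enough that "all tasks have demand at most $\epsilon^8 d_B$" is inside the small-task regime of that lemma. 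Everything else (disjointness of the per-box choices, assembling the global height function, the profit summation) is routine bookkeeping that the partition structure of $f$ and the feasibility of $(\B,h)$ make immediate.
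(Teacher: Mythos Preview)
Your proposal is correct and follows essentially the same approach as the paper: apply the Bar-Yehuda et al.\ lemma (built on Buchsbaum et al.) box by box to extract a $(1-4\epsilon)$-weight subset that packs inside each $B$, and then combine using the feasibility of $(\B,h)$. The paper in fact gives only a one-sentence proof, citing exactly that lemma; your write-up spells out the bookkeeping (properties (ii)--(v), the trivial $|f^{-1}(B)|=1$ case, disjointness across boxes) that the paper leaves implicit.
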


Denote by $\optb$ the most profitable relaxed-$\beta$-boxable solution.
We present now a recursive algorithm computing a relaxed-$\beta$-boxable
solution that is a $(1+\epsilon)$-approximation with respect to $\optb$.
Our reasoning uses several ideas from the QPTAS for Unsplittable Flow
on a Path~\cite{BCES2006}.

After some straight-forward preprocessing we can assume that the length
of the path $P$ is at most $2n$. Our algorithm is recursive with
a recursion depth of $O(\log n)$. In the first iteration, we consider
the edge $e_{0}$ in the middle of the path. We guess the sizes and
the height positions of the at most $\beta$ 
boxes crossing $e_{0}$ in $\optb$. For each guessed box $B$, we
guess whether $|f^{-1}(B)|=1$. Denote by $\B^{L}(e_{0})$ the (``large'')
boxes with this property. For each such box $B$ we guess the single
task contained in $B$ in $\optb$. Now consider all guessed (``small'')
boxes $B$ for which $|f^{-1}(B)|>1$ and denote those boxes by $\B^{S}(e_{0})$.
For each such box $B$, denote by $T^{*}(B)$ all tasks from $\optb$
which are assigned to $B$. The object we are first interested in
is the \emph{capacity profile }of the tasks $T^{*}(B)\cap T_{e_{0}}$
on the whole path, given by $g_{B,e_{0}}(e'):=d(T^{*}(B)\cap T_{e_{0}}\cap T_{e'})$
for each edge $e'\in E$. Observe that $g_{B}$ is non-decreasing
from the leftmost edge of $E$ to $e_{0}$, and non-increasing from
$e_{0}$ to the rightmost edge of $E$, since all tasks in the profile
use $e_{0}$.

Ideally, we would like to exactly guess these profiles $g_{B,e_{0}}$
for all boxes crossing $e_{0}$ in parallel, find the most profitable
way to assign tasks to them, and recurse on the $|E|/2$ edges on
the left and on the right of $e_{0}$, respectively. Unfortunately,
already one single profile $g_{B,e_{0}}$ can be so complex that we
cannot afford to guess it exactly. However, there is an underestimating
profile $g'_{B,e_{0}}$ with at most $(\log n)^{O(1)}$ many steps
such that $g'_{B,e_{0}}$ is bounded from above by $g_{B,e_{0}}$
and there is a subset of $T^{*}(B)\cap T_{e_{0}}$ that fits into
$g'_{B,e_{0}}$ and is almost as profitable as $T^{*}(B)\cap T_{e_{0}}$.
Here we use our assumption that the task demands are in a quasipolynomial
range. 
Additionally, we scale the task weights such that $\max_{i\in T}w(i)=n/\epsilon$.
This way all tasks of weight less than $1$ have a total weight of
less than $n$ and thus can be removed by a loss of at most a factor
$1+\epsilon$ of the approximation ratio. 
\begin{lem}
    \label{lem:rounded-profiles} Let $e_{0}\in E$, $T'\subseteq T_{e_{0}}$
    be a set of tasks, and let $g(e'):=d(T'\cap T_{e'})$ for each edge
    $e'$. There is a profile $g'\colon E\rightarrow\mathbb{R}$ and a
    set of tasks $T''\subseteq T'$ such that $g'$ is a step-function
    with at most $(\log n/\epsilon)^{O(c)}$ 
    many steps, $g'(e')\le g(e')$ for each edge $e'\in E$, $d(T''\cap T_{e'})\le g'(e')$
    for each edge $e'\in E$, and $w(T'')\ge(1-\epsilon)w(T')$. 
\end{lem}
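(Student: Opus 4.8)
The plan is to exploit two structural facts. First, since every task in $T'$ uses $e_0$, the profile $g$ is \emph{unimodal}: it is non-decreasing on the edges to the left of (and including) $e_0$ and non-increasing on the edges to the right, because each task contributes to $g(e')$ on a contiguous sub-path of edges containing $e_0$. Second, by the standing assumption $u_e \le n^{(\log n)^c}$ and since each of the at most $n$ tasks has positive integral demand, $g(e')$ lies in $\{0,1,\dots,n\cdot n^{(\log n)^c}\}$, so the number of distinct powers of $1+\epsilon$ below $\max_{e'} g(e')$ is only $(\log n/\epsilon)^{O(c)}$, and likewise the tasks of $T'$ split into only $(\log n/\epsilon)^{O(c)}$ \emph{demand classes}, class $j$ being $\{i\in T' : d_i\in[(1+\epsilon)^j,(1+\epsilon)^{j+1})\}$. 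I would build $g'$ and $T''$ class by class and take sums/unions at the end.

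Fix a class $j$ and let $g_j(e') := d(\{i \in T'\cap\text{class }j : e'\in P(i)\})$, which is again unimodal with peak at $e_0$ and quasi-polynomially bounded. On the edges left of $e_0$ I define $g'_j$ by leaving $g_j$ unchanged while $g_j \le (1+\epsilon)^{j}/\epsilon^{\Theta(1)}$ (this introduces only $\mathrm{poly}(1/\epsilon)$ steps, since on the left $g_j$ is non-decreasing and increases by $\ge(1+\epsilon)^{j}$ with each new task), and once $g_j$ exceeds this threshold I round $g_j(e')$ \emph{down} to the nearest power of $1+\epsilon$ (this introduces only $O(\log_{1+\epsilon} n)$ further steps); symmetrically on the right. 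Then $g'_j \le g_j$. To choose which class-$j$ tasks to keep, I sort the class by left endpoint, cut the list into consecutive blocks of $\Theta(1/\epsilon)$ tasks, and discard the (constantly many, $\Theta(\epsilon\cdot\text{block size})$) lightest tasks of each block; I do the symmetric thing sorting by right endpoint, and keep a task iff it survives both procedures, losing only an $O(\epsilon)$-fraction of $w(\text{class }j)$ by a union bound. The point of the threshold and the blocking is that on every left-side edge $e'$ the class-$j$ tasks present are exactly a prefix of the left-sorted list, among which the discarding removes enough of a fraction that $(1+\epsilon)^2$ times the kept count is at most the present count; combined with $g_j(e') < (1+\epsilon)\cdot g'_j(e')$ in the rounded regime this forces the kept true class-$j$ demand at $e'$ below $g'_j(e')$, and the $+O(1)$ slack caused by blocking integer counts is dominated by the $\Theta(\epsilon^2)$-multiplicative slack once $g_j$ is above the threshold.

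Finally I would set $g' := \sum_j g'_j$ and let $T''$ be the union over $j$ of the kept class-$j$ tasks. Then $g'$ is a step function with $(\log n/\epsilon)^{O(c)}\cdot\mathrm{poly}(\log n/\epsilon) = (\log n/\epsilon)^{O(c)}$ steps; $g'(e') = \sum_j g'_j(e') \le \sum_j g_j(e') = g(e')$ for every $e'$; $d(T''\cap T_{e'}) = \sum_j d(T''\cap\text{class }j\cap T_{e'}) \le \sum_j g'_j(e') = g'(e')$; and $w(T'') \ge (1-O(\epsilon))w(T')$, which after rescaling $\epsilon$ by a constant is exactly the claim. I expect the main obstacle to be precisely the weight accounting in the per-class selection: one cannot simply round the whole profile down by a $1+\epsilon$ factor and delete the ``top strip'' of each step, because the tasks cut at the top of a step can carry almost all of the weight; the remedy is the two-sided blocking argument above, where dropping the \emph{lightest} task of every block of $\Theta(1/\epsilon)$ consecutive tasks (ordered by endpoint) guarantees the $\Theta(1/(1+\epsilon))$-density reduction needed for feasibility while costing only an $O(\epsilon)$-fraction of the weight, and where keeping the profile exact below a $\mathrm{poly}(1/\epsilon)$ threshold keeps the integer-rounding error harmless.
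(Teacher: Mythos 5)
Your approach is correct but takes a genuinely different route from the paper's. The paper partitions $T'$ by \emph{density} $w(i)/d(i)$ into $(\log n/\epsilon)^{O(c)}$ classes (which requires the weight-rescaling normalization performed just before the lemma so that densities, not only demands, lie in a quasipolynomial range) and, for each density class, invokes a lemma of Bansal et al.~\cite{BCES2006} that directly yields an underestimating profile with only $O(1/\epsilon)$ steps together with a $(1-\epsilon)$-profit sub-collection fitting into it; summing the per-class profiles finishes the proof. You instead partition by \emph{demand class} (powers of $1+\epsilon$), which needs only the quasipolynomial bound on $u_e$ already in force, and build the per-class profile and kept set from scratch via the unimodality of $g_j$ and a two-sided blocking argument. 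The per-edge feasibility check goes through: at a left edge the class-$j$ tasks present form a prefix of the left-sorted list, dropping a constant number from each block of $\Theta(1/\epsilon)$ reduces the kept count by a $1-\Theta(\epsilon)$ factor, and the threshold guarantees $\Omega(1/\epsilon)$ tasks at every edge in the rounded regime so the additive $O(1)$ error is absorbed. One small numerical remark: the multiplicative slack you need the blocking to supply is $\Theta(\epsilon)$ (to beat the roughly $1-2\epsilon$ coming from $1/(1+\epsilon)^2$), not $\Theta(\epsilon^2)$ as you wrote, but this does not change the conclusion. What the paper's route buys is brevity by citation and a per-class step count of $O(1/\epsilon)$ rather than $(\log n/\epsilon)^{O(1)}$; what yours buys is self-containedness and independence from the weight normalization, and it still lands on the required $(\log n/\epsilon)^{O(c)}$ total step bound.
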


\begin{proof}
    Let $\epsilon'=\epsilon/3$ and let $T'^{\ell}$ be the tasks in $T'$
    with $w(i)/d(i)\in[2^{\ell},2^{\ell+1})$. Since we assume the weights
    to be linear and the demands to be quasipolynomial, the maximum and
    minimum values of $w(i)/d(i)$ differ by a factor of at most $(n/\epsilon)^{(\log n)^{O(c)}}$.
    Therefore there are only $(\log n/\epsilon)^{O(c)}$ many different
    sets $T'^{\ell}$. For each set $T'^{\ell}$, due to Bansal et al.~\cite{BCES2006},
    there is a step function $f^{\ell}$ with at most $8/\epsilon'$ many
    steps and set of tasks $\bar{T''}^{\ell}\subseteq T'^{\ell}$ which
    fits into $f^{\ell}$ and has profit at least $(1-\epsilon')w(T'^{\ell})$.
    Since there are only $(\log n/\epsilon)^{O(c)}$ many different sets
    $T'^{\ell}$, we can add the step functions $f^{\ell}$ for different
    $\ell$ together to get a step function $f$ with $(\log n/\epsilon)^{O(1)}$
    many steps and a set of tasks $\bigcup_{\ell}T''^{\ell}$ that fits
    into $f$ with profit at least $(1-3\epsilon')w(T')$. 
\end{proof}
In our algorithm, we guess the underestimating profiles due to Lemma~\ref{lem:rounded-profiles}
for all boxes $B\in\B^{S}(e_{0})$ in parallel. Note that there are
only quasi-polynomially many guesses necessary since the profiles
have only $(\log n/\epsilon)^{O(c)}$ many steps each, and due to
the quasi-polynomially bounded capacities, 
it suffices to consider a set of quasi-polynomial size. As a next
step, we compute a set of small tasks crossing $e_{0}$ and an assignment
of them to the boxes in $\B^{S}(e_{0})$ such that they fit into the
respectively guessed profiles. As the next lemma shows, we can compute
an exact solution for this subproblem in quasi-polynomial time. 

\begin{lem}
    \label{lem:fill-profiles} Given an edge $e_{0}$, a set of at most
    $\beta$ boxes $\B(e_{0})$ crossing $e_{0}$, and a profile $g'_{B}$
    with at most $(\log n/\epsilon)^{O(c)}$ many steps for each $B\in\B(e_{0})$.
    Let $T^{s}\subseteq T_{e_{0}}$ denote the set of all small tasks
    using $e_{0}$. There is an exact algorithm for the problem of selecting
    a subset $T'\subseteq T^{s}$ of maximum total weight such that there
    is an assignment $f\colon T'\rightarrow\B(e_{0})$ with $d(f^{-1}(B)\cap T_{e'})\le g'_{B}(e')$
    for each box $B\in\B(e_{0})$ and each edge $e'\in E$ with a running
    time of $n^{(\beta\log n/\epsilon)^{O(c)}}$. 
\end{lem}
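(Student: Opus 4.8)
The plan is to design a dynamic program that sweeps the path $E$ from left to right and, at each edge $e'$, records how much of the profile $g'_B$ of each box $B \in \B(e_0)$ has already been consumed by tasks assigned to $B$ whose paths end to the left of $e'$. First I would split the boxes into those that ``enter from the left'' of $e_0$ and those that ``enter from the right''; since each profile $g'_B$ is a step function with at most $(\log n/\epsilon)^{O(c)}$ steps and is unimodal (non-decreasing up to $e_0$, non-increasing after), the DP only needs to track a small amount of state. Concretely, for an edge $e'$ to the left of $e_0$, the relevant state is, for each box $B$, the index of the step of $g'_B$ that $e'$ lies in together with the residual capacity $\rho_B \in \{0,1,\dots,g'_B(e')\}$ available for tasks crossing $e'$ inside $B$. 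The key structural point is that because every task in $T^s$ uses $e_0$, once we know the residual capacity available on a particular edge, the residual capacities on all further edges toward $e_0$ are determined by the monotonicity of $g'_B$ and the set of tasks already ``opened'' but not yet ``closed'': a task assigned to $B$ contributes $d_i$ to the load on every edge of $P(i)$, and $P(i) \cap (\text{left part})$ is a suffix of the left part ending at $e_0$.

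The DP cell is therefore indexed by the current edge $e'$, and for each of the $\le \beta$ boxes a tuple consisting of (the step index of $e'$ in $g'_B$, which is implied by $e'$) and the residual capacity $\rho_B$ currently available at $e'$ for that box; since $\rho_B$ is a nonnegative integer bounded by $g'_B(e') \le \max_e u_e$, and capacities are quasi-polynomially bounded, there are $n^{(\log n)^{O(c)}}$ choices per box, hence $n^{(\beta \log n /\epsilon)^{O(c)}}$ cells in total. The transition from $e'$ to its right-neighbor $e''$ (moving toward $e_0$) processes the tasks $i$ with right endpoint exactly at the vertex between $e'$ and $e''$: for each such task we guess to which box $B$ it is assigned (or that it is discarded), and check that $d_i$ does not exceed the residual capacity of $B$ at every edge of $P(i)$ — which by monotonicity reduces to checking feasibility at the single edge where $g'_B$ is smallest on $P(i)$, information carried in the state. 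Crucially, since $T^s$ consists of \emph{small} tasks all using $e_0$, within a fixed box their loads simply add up on the shared edges, so the feasibility condition $d(f^{-1}(B)\cap T_{e'}) \le g'_B(e')$ is exactly the running-sum constraint the DP maintains; there is no geometric packing to worry about here, only the flow-type constraint, which is what makes an \emph{exact} polynomial-in-the-state DP possible. We run the left half and the right half of the sweep independently up to $e_0$ and then combine: a valid global assignment is one where, for each box $B$, the sum of the loads contributed from the left and from the right at $e_0$ is at most $g'_B(e_0) = \max_{e'} g'_B(e')$; we enumerate over the $\le \max_e u_e$ ways to split each box's capacity at $e_0$ between the two sides, of which there are $(\max_e u_e)^{\beta} = n^{(\beta \log n)^{O(c)}}$ combinations, and take the best.

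The main obstacle I anticipate is bounding the state space so that the running time stays $n^{(\beta\log n/\epsilon)^{O(c)}}$ rather than genuinely exponential: naively the residual capacity $\rho_B$ ranges over all integers up to $\max_e u_e$, which is quasi-polynomial, and with $\beta$ boxes this already gives the claimed bound, but one must be careful that we do \emph{not} additionally need to remember the multiset of currently-open tasks (which could be super-polynomially many). This is handled by the observation that an open task's contribution is fully summarized by how much it has already decremented each box's residual capacity — i.e.\ the state $\rho_B$ is a \emph{sufficient statistic} — so we never need to remember which tasks are open, only the aggregate residuals, provided the transition correctly adds $d_i$ to the appropriate box when a task opens and does nothing when it closes (closing is automatic since loads are only defined on $P(i)$ and monotonicity means the residual simply ``reappears'' when we pass the right endpoint of $P(i)$, an effect captured by recomputing residuals from $g'_B$ at each edge). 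A secondary subtlety is verifying that optimality is preserved: since the DP enumerates, at each edge and for each task ending there, every possible box assignment, and the feasibility test at the bottleneck edge of $P(i) \cap (\text{current side})$ is exact, the DP returns the true optimum for the stated subproblem. Combining the two halves by enumerating the capacity split at $e_0$ then yields the exact global optimum, establishing the lemma.
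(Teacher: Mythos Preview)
Your plan has a genuine gap in the ``combine at $e_0$'' step. Every task in $T^s$ crosses $e_0$, so each task has both a left part (from $s_i$ to $e_0$) and a right part (from $e_0$ to $t_i$), and must be assigned to \emph{one} box consistently on both sides. If you run the left sweep and the right sweep independently, each sweep makes its own box assignment for every task, and there is no mechanism forcing the two assignments to agree. Your proposed combination rule --- that the sum of the left and right loads at $e_0$ be at most $g'_B(e_0)$ --- does not make sense here, because for a single task the ``left load at $e_0$'' and the ``right load at $e_0$'' are the same quantity $d_i$, not two independent contributions to be summed.

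Even if you abandon the split and attempt a single left-to-right sweep, the state ``residual capacity $\rho_B$ at the current edge, one number per box'' is not a sufficient statistic once you pass $e_0$. On the right of $e_0$ tasks close at their endpoints $t_i$, and the load at a right-side edge $e'$ depends on \emph{which} tasks assigned to $B$ have $t_i$ beyond $e'$, not just on the total demand assigned to $B$. Concretely, two assignments with identical loads on every left-side edge (and hence identical $\rho_B$ at $e_0$) can have different loads at a right-side edge, so the DP cannot correctly check the right-side constraints from $\rho_B$ alone. Your remark that ``closing is automatic \ldots\ captured by recomputing residuals from $g'_B$'' conflates the profile $g'_B$ with the load: the profile is known, but the load at a right-side edge depends on the assignment and is not recoverable from the current-edge residual. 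The paper resolves this by enlarging the state: it tracks, for each box $B$ and each of the at most $(\log n/\epsilon)^{O(c)}$ step edges of $g'_B$, the cumulative load there, processing tasks in an arbitrary order. That state has $\beta\cdot(\log n/\epsilon)^{O(c)}$ coordinates rather than $\beta$, which is exactly what is needed to certify feasibility on both sides simultaneously and still yields the claimed running time.
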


\begin{proof}
    First, observe that for any of the profiles $g'_{B}$, if a set of
    tasks $T'_{B}$ fits into $g'_{B}$ at all the edges where $g'_{B}$
    has an upper step, then it fits into $g'_{B}$ at all the edges. This
    is because all the tasks under consideration use the middle edge $e_{0}$.
    Formally, we define that $g'_{B}$ has an \emph{upper step} at $e$
    if $e$ is incident to a vertex $v$ such that the value of $g'_{B}$
    at $e$ is greater than the value of $g'_{B}$ at the other edge incident
    to $v$. Then, for any set of tasks $\bar{T}$, if for every edge
    $e'$ where $g'_{B}$ has a step, $d(\bar{T}\cap T_{e'})\le g'_{B}(e')$,
    then for any edge $e'$, $d(\bar{T}\cap T_{e'})\le g'_{B}(e')$. Therefore,
    it suffices to only consider the capacity used up by the tasks at
    upper step edges. We order the tasks in an arbitrary order from $1,2,\ldots,n$.
    We use a dynamic program with a DP-cell defined by 
    \begin{itemize}\parsep0pt \itemsep0pt
        \item an integer $i$ from $1,2,\ldots,n$ which denotes that we are only
            allowed to use tasks $1,2,\ldots,i$ and 
        \item an integer $c_{e_{s},B}$ in the range $0$ to $n^{(\log n)^{O(c)}}$
            for each edge $e_{s}$ of each profile $g'_{B}$ where $g'_{B}$ has
            an upper step, which indicates that the amount of capacity already
            used up at $e_{s}$ by tasks assigned to $B$ is equal to $c_{e_{s},B}$.
    \end{itemize}
    Note that since $c_{e_{s},B}$ can take only $(\beta\log n/\epsilon)^{O(c)}$
    different values (since there only $\beta$ many boxes and each profile
    $g'_{B}$ has only $(\log n/\epsilon)^{O(c)}$ many steps), the total
    number of different DP-cells is bounded by $n^{(\beta\log n/\epsilon)^{O(c)}}$.
    The final solution is then given by the maximum value of all DP-cells.
    To compute the solution for a DP-cell, we enumerate the $|B_{e_{0}}|+1$
    many ways to assign task $i$ to a box, or not assign it to any box
    at all. And for each such way, we compute the profit obtained by adding
    the weight of the task $i$ to the value of the DP-cell for the index
    $i-1$ corresponding to the capacities used up by the remaining tasks
    at the respective step edges after subtracting the demand of $i$
    from the $c_{e_{s},B}$ values of the step edges of $g'_{B}$ for
    the box $B$ we are trying to assign $i$ to. Each computation for
    a DP-cell requires polynomial time, hence the total running time is
    $n^{(\beta\log n/\epsilon)^{O(c)}}$. 
\end{proof}
After guessing the boxes $\B^{L}(e_{0})$ and $\B^{S}(e_{0})$ and
the approximate profiles for the boxes in $\B^{S}(e_{0})$ and computing
tasks for them using Lemma~\ref{lem:fill-profiles} we observe that
the remaining problem splits into two independent subproblems: one
is given by the edges on the left of $e_{0}$ and one is given by
the edges on the right of $e_{0}$. We recurse on both sides. In the
recursion, each arising subproblem is given by a subpath $E'\subseteq E$,
a set of boxes guessed in previous iterations that reach into $E'$,
and for some boxes a set of previously guessed profiles. In each node
of the recursion tree, we enumerate $n^{(\beta\log n/\epsilon)^{O(c)}}$
many guesses for the boxes crossing the middle edge of the respective
subproblem and their profiles. Since we have a recursion depth of
$O(\log n)$, our algorithm has a running time of $n^{(\beta\log n/\epsilon)^{O(c)}}$
overall.

\section{Compute jammed solutions}

Suppose that we are given an instance of SAP with uniform capacities,
i.e., there is an integer $U$ such that $u_{e}=U$ for each edge
$e$. In this section we first present a pseudopolynomial time algorithm
that computes the most profitable jammed solution. Afterwards, we
show how to turn it into a polynomial time algorithm that computes
a $(1+\epsilon)$-approximation to the most profitable jammed solution.
In \cite{MW15_SAP} the authors of this paper presented an algorithm
for instances of SAP with arbitrary edge capacities that for any fixed
$\delta'>0$ computes the optimal solution consisting of only $\delta'$-large
tasks (i.e., tasks $i$ such that $d_{i}>\delta'\cdot\min_{e\in P(i)}u_{e}$)
in pseudo-polynomial time and a $(1+\epsilon)$-approximation in polynomial
time. We extend these techniques in order to compute jammed solutions
for SAP instances with uniform edge capacities.

Let $\delta'>0$ and let $(T^{*},h^{*})$ denote the optimal $\delta'$-jammed
solution. Throughout this section we fix $\delta'$ and write simply
\emph{jammed solution }instead of $\delta'$-jammed solution. For
an integer $B$ we call a solution $(T',h')$ a \emph{$B$-simple
    jammed }if $T'\cap T_{S}$ is a set of jammed tasks for $(T'\cap T_{L},E,B,h')$.
In the sequel, we describe an algorithm that computes the most profitable
$B$-simple jammed solution and then later show how to extend it to
compute the optimal jammed solution. First assume that $(T^{*},h^{*})$
is $B$-simple jammed for some value $B$. Let $T_{S}^{*}:=T^{*}\cap T_{S}$
and $T_{L}^{*}=T^{*}\cap T_{L}$. We guess $B$. For each edge $e\in E$
we define a pseudo-capacity $u_{e}^{*}$ to be the maximum value $u$
with $0\le u\le U-B$ such that $[B,B+u)\cap[h^{*}(i),h^{*}(i)+d_{i})=\emptyset$
for each task $i\in T_{L}^{*}\cap T_{e}$. We define $u_{e}^{*}:=0$
if there is no such value $u\ge0$. For each task $i\in T_{S}$ we
define its bottleneck edge $e(i)$ to be an edge $e\in P(i)$ with
minimum value $u_{e}^{*}$. We define $b(i):=u_{e(i)}^{*}$. Note
that these definition are based on the unknown solution $(T^{*},h^{*})$.

First, we guess the task $i_{0}\in T_{S}^{*}$ with smallest position
$h^{*}(i_{0})$, its bottleneck edge $e(i_{0})$, and all tasks from
$T^{*}$ using its bottleneck edge $e(i_{0})$. Denote the latter
tasks by $T_{0}^{*}$. Since all tasks in $T_{L}$ are large we have
that $|T^{*}\cap T_{L}\cap T_{e(i_{0})}|\le1/\delta$. The next lemma
implies that $|T^{*}\cap T_{S}\cap T_{e(i_{0})}|\le1/(\delta')^{2}$
when using it with $i:=i_{0}$. 
\begin{lem}
    \label{lem:num-tasks} \textup{Consider a task $i\in T_{S}^{*}$.
        There are at most $1/(\delta')^{2}-1$ tasks $i'\in T_{S}^{*}$ such
        that $e(i)\in P(i')$ and $h(i')>h(i)$.} 
\end{lem}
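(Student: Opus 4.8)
The plan is to show that every task $i'\in T_S^*$ lying above $i$ on $i$'s bottleneck edge $e(i)$ must occupy a sizable fraction of the free vertical space above the line $E\times B$ on that edge, so that a one‑dimensional packing argument caps their number. Throughout write $e:=e(i)$ and $b:=b(i)=u^*_{e}$.

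First I would collect the structural facts that follow from $T_S^*$ being a set of $\delta'$-jammed tasks for $(T^*_L,E,B,h^*)$ inside the $B$-simple jammed solution. Since no large task of that solution crosses $E\times B$, any small task $j$ of the solution with $e'\in P(j)$ satisfies $h^*(j)\ge B$ and $[h^*(j),h^*(j)+d_j)\subseteq[B,B+u^*_{e'})$ (otherwise $j$ would overlap the large task bounding the free strip from above, or $u^*_{e'}=U-B$ and the bound is automatic). In particular $h^*(i)\ge B$, $h^*(i)+d_i\le B+b$, and for every such $i'$ we get $h^*(i')+d_{i'}\le B+u^*_{e}$ as well as $h^*(i')+d_{i'}\le B+u^*_{e(i')}=B+b(i')$. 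From the definition of a jammed task I also record $d_i>\delta' u^*_{e''}\ge\delta' b$ for the edge $e''$ witnessing the condition (as $u^*_{e''}\ge\min_{e'\in P(i)}u^*_{e'}=b$), and likewise $d_{i'}>\delta' b(i')$; note $b\ge d_i\ge 1>0$, so no degenerate division occurs.

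Next I would lower-bound $d_{i'}$. Because $i$ and $i'$ both use $e$, are disjoint, and $h^*(i')>h^*(i)$, we have $h^*(i')\ge h^*(i)+d_i\ge B+d_i>B+\delta' b$, i.e.\ $h^*(i')-B>\delta' b$. Combining this with $d_{i'}>\delta' b(i')$ and $b(i')\ge(h^*(i')-B)+d_{i'}$ gives $d_{i'}>\delta'\bigl((h^*(i')-B)+d_{i'}\bigr)>\delta'(\delta' b+d_{i'})$, hence $d_{i'}>(\delta')^2 b$, and keeping the $\delta' d_{i'}$ term even $d_{i'}>\tfrac{(\delta')^2}{1-\delta'}b$. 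Then the packing step: all such $i'$ use $e$, are pairwise disjoint, and lie inside $[h^*(i)+d_i,\,B+b)$, an interval of length at most $b-d_i<(1-\delta')b$; so if there are $N$ of them, $N\cdot\tfrac{(\delta')^2}{1-\delta'}b<\sum d_{i'}<(1-\delta')b$, whence $N<\bigl(\tfrac{1}{\delta'}-1\bigr)^2=\tfrac{1}{(\delta')^2}-\tfrac{2}{\delta'}+1\le\tfrac{1}{(\delta')^2}-1$ since $\delta'\le 1$.

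The main obstacle is purely quantitative: the naive version (each $d_{i'}>(\delta')^2 b$ and the $i'$ packed into a height-$b$ strip) only yields $N<1/(\delta')^2$, which matches the stated bound only under the harmless convention $1/(\delta')^2\in\N$. To get $N\le 1/(\delta')^2-1$ cleanly one has to save the factor $(1-\delta')$ twice — once in the lower bound on each $d_{i'}$ and once in observing that the usable strip has height at most $(1-\delta')b$ — and to use $h^*(i)\ge B$ together with $d_i\ge 1$. Everything else is bookkeeping about the definitions of $u^*_e$, $b(i)$, and the jammed-task condition.
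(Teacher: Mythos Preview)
Your proof is correct and follows essentially the same approach as the paper's: both derive $d_{i'}>(\delta')^2 b(i)/(1-\delta')$ from the jammed-task condition and the fact that $i'$ sits above $i$ on $e(i)$, then invoke a one-dimensional packing argument in the strip of height $b(i)$. Your write-up is in fact more complete than the paper's, which stops at the size lower bound and leaves the packing step (and the integrality assumption on $1/(\delta')^2$) implicit; your extra care with the two $(1-\delta')$ savings gives the bound without that assumption.
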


\begin{proof}
    Let $i'\in T^{*}\cap T_{S}$ be a task with $h(i')>h(i)$ and $e(i)\in P(i')$.
    We have that $d_{i}\le h(i')\le b(i')-d_{i'}$. There is an edge $e\in P(i')$
    such that $d_{i'}>\delta'\cdot u_{e}^{*}$. Thus, $d_{i'}/\delta'>b(i')\ge d_{i}+d_{i'}>\delta'\cdot b(i)+d_{i'}$
    and hence $d_{i'}\ge(\delta')^{2}\cdot b(i)/(1-\delta')>(\delta')^{2}\cdot b(i)$. 
\end{proof}
We guess the at most $1/(\delta')^{2}+1/\delta$ tasks in $|T_{0}^{*}\cap T_{e(i_{0})}|$
and their heights $h^{*}(i)$. The whole problem splits then into
two disjoint subproblems given by the subpath on the left of $e(i_{0})$
and the subpath on the right of $e(i_{0})$. We recurse on both sides.
When recursing on the subpath $E'$ on the left of $e(i_{0})$ we
specify the subproblem by $E'$, by the at most $1/\delta$ tasks
in $T_{L}^{*}$ using $e(i_{0})$ and the at most $1/(\delta')^{2}$
tasks $i\in T_{S}^{*}$ using $e(i_{0})$ with $h(i)\ge h^{*}(i_{0})$,
and the information that each small task in the desired solution to
this subproblem has to have a height of at least $h^{*}(i_{0})$.
We are looking for the most profitable set of tasks $T'$ such that
$T'\cup(T^{*}\cap T_{e(i_{0})})$ forms a $B$-simple jammed solution
with the latter property. The subproblem for the path on the right
of $e(i_{0})$ is characterized similarly.

When continuing with this recursion, each arising subproblem can be
characterized by two edges $e_{1},e_{2}$, by at most $1/(\delta')^{2}+1/\delta$
tasks $T_{1}\subseteq T_{e_{1}}$ together with a placement $h(i)$
for each task $i\in T_{1}$, at most $1/(\delta')^{2}+1/\delta$ tasks
$T_{2}\subseteq T_{e_{2}}$ together with a placement $h(i)$ for
each task $i\in T_{2}$, and an integer $h_{\min}$. For each such
combination we introduce a DP-cell $(e_{1},e_{2},T_{1},T_{2},h,h_{\min})$.
Formally, it models the following subproblem: assume that we committed
to selecting tasks $T_{1}$ and $T_{2}$ and assigning heights to
them as given by the function $h$. Now we ask for the set of tasks
$T'$ of maximum profit such that for each task $i\in T'$ its path
$P(i)$ is contained in the path strictly between $e_{1}$ and $e_{2}$
(so excluding $e_{1}$ and $e_{2}$) and assigning heights to them
such that $h(i)\ge h_{\min}$ for each task $i\in T'\cap T_{S}$ such
that $T'\cup T_{1}\cup T_{2}$ forms a $B$-simple jammed solution.

Given a DP-cell $C=(e_{1},e_{2},T_{1},T_{2},h,h_{\min})$ where the
reader may imagine that $T_{1}=T^{*}\cap T_{e_{1}}$, $T_{2}=T^{*}\cap T_{e_{2}}$,
$h=h^{*}|_{T_{1}\cup T_{2}}$, and $h_{\min}=\min_{i\in T_{1}\cup T_{2}}h(i)$.
Let $i^{*}\in T_{S}^{*}$ be the task with minimum height $h^{*}(i^{*})$
with the property that its path $P(i^{*})$ is contained in the path
strictly between $e_{1}$ and $e_{2}$ and that $h^{*}(i^{*})\ge h_{\min}$.
We guess $i^{*}$, $h^{*}(i^{*})$, and $e(i^{*})$. According to
Lemma~\ref{lem:num-tasks} there can be at most $1/(\delta')^{2}-1$
tasks $i\in T_{S}^{*}$ using $e(i^{*})$ such that $h^{*}(i)>h^{*}(i^{*})$.
We also guess all those tasks (note that some of them might be included
in $T_{1}$ and $T_{2}$) together with their respective heights according
to $h^{*}$. Then, we guess the at most $1/\delta$ tasks in $T_{L}^{*}\cap T_{e(i^{*})}$
and their heights according to $h^{*}$. Denote by $\bar{T}^{*}$
the set of all guessed tasks. We enumerate only guesses for which
$T_{1}\cup T_{2}\cup\bar{T}^{*}$ together with heights given by $h$
and the guessed heights forms a feasible solution, and where $h^{*}(i^{*})\ge h_{\min}$.
Also, if $T_{1}\cap\bar{T}^{*}$ or $T_{2}\cap\bar{T}^{*}$ then the
heights of these tasks are already defined by $h$ so we do not guess
new heights for them. With the guessed tasks $\bar{T}^{*}$ and their
guessed heights we associate a solution that is given by the union
of the set $\bar{T}^{*}\setminus(T_{1}\cup T_{2})$ and the solutions
in the cells $(e_{1},e(i^{*}),T_{1},\bar{T}^{*},h',h^{*}(i^{*}))$
and $(e(i^{*}),e_{2},\bar{T}^{*},T_{2},h'',h^{*}(i^{*}))$ where the
assignments $h'$ and $h''$ are obtained by inheriting from $h$
the values for the tasks in $T_{1}$ and $T_{2}$, respectively, and
taking the guessed values for the tasks in $\bar{T}^{*}$. In $C$
we store the most profitable solution over all guesses for $\bar{T}^{*}$
and the heights of the tasks in $\bar{T}^{*}$.

One can show that if all guesses are correct then the computed solution
equals $(T^{*},h^{*})$. Also, by construction the computed solution
is feasible. Since the DP maximizes the profit of the computed solution
in each step, this implies that the computed solution has a profit
of at least $w(T^{*})$. The running time of the DP is bounded by
$(nU)^{O(1/(\delta')^{2}+1/\delta)}$ since the latter quantity bounds
the number of DP-cells and number of guesses needed for computing
the value of one DP-cell.

\subsection{Polynomial running time\label{subsec:DP-polytime}}

We want to turn the above algorithm into a polynomial time routine.
First observe that in $(T^{*},h^{*})$ we can assume that all tasks
$i\in T_{L}^{*}$ with $h(i)+d(i)\le B$ are pushed down as much as
possible (the reader may imagine that we apply ``gravity'' to those
tasks). Similarly, we can increase $E\times B$ until it coincides
with the bottom edge of a task in $T^{*}$. Then we push down all
tasks $i$ with $h(i)\ge B$ and also $E\times B$ as much as possible
by the same amount, until $B$ coincides with the top edge of a task
in $T_{L}^{*}$. Since each edge is used by at most $1/\delta$ tasks
in $T_{L}^{*}$ we can assume that $B$ is the sum of the sizes of
at most $1/\delta$ input tasks. Denote by $H$ the set of all such
values and observe that $|H|\le n^{O(1/\delta)}$. Note that for each
task $i\in T_{L}^{*}$ with $h(i)\le B$ we can assume that $h(i)\in H$.

We introduce a polynomial number of heights which we call \emph{anchors}
\emph{lines}. Those consist of $H$ and additionally of all values
of the form $h_{1}+h_{2}$ where $h_{1}\in H$ and $h_{2}$ is a power
of $1+\delta'$ between 1 and $U$ (the reader may imagine that $h_{1}=B$).
Denote by $H_{0}$ this set of values. For the tasks $i\in T_{L}^{*}$
with $h(i)>B$ we can assume that they are pushed upwards until $h(i)\in H_{0}$
or $h(i)=U-h_{1}$ for some $h_{1}\in H$ (i.e., we cannot push $i$
further up because its top edge touches the bottom edge of some task
$i'$ above that we cannot push further up either). Note that after
pushing up the tasks in $T_{L}^{*}$ like this the tasks in $T_{S}^{*}$
are still $O(\delta')$-jammed for $(T^{*}\cap T_{L},E,B,h^{*})$.

W.l.o.g.~from now on we restrict ourselves to solutions in which
for each task in $T_{S}$ and each task $i\in T_{L}$ with $h(i)\ge B$
the height of its top edge equals the height of an anchor line or
its top edge touches the bottom edge of some other task. We call such
solutions \emph{top-aligned} solutions\emph{. }In a given solution,
we say that a task is of level \emph{$1$ }if the height of its top
edge equals an anchor line. Recursively, a task $i$ is of level $\ell+1$
if its top edge touches the bottom edge of a task in level $\ell$
and $i$ is not of any level $\ell'<\ell$. Observe that for the heights
of tasks of level $\ell$ there are only $|H_{0}|\cdot n^{\ell}$
possible values that are obtained by recursively defining $H_{\ell+1}:=H_{\ell}\cup\{h_{\ell}-d_{i}|h_{\ell}\in H_{\ell},i\in T\}$
for each $\ell$. Using an argumentation from \cite{MW15_SAP} we
show that by losing a factor $1+\epsilon$ we can restrict ourselves
to solutions in which the level of each task is bounded by a constant
and hence the number of possible task heights are bounded by a polynomial. 
\begin{lem}[\cite{MW15_SAP}]
    Given $\epsilon>0,$ there is a universal constant $c(\epsilon)$
    such that there exists a set $\bar{T}_{S}^{*}\subseteq T_{S}^{*}$
    with $w(\bar{T}_{S}^{*})\ge(1-\epsilon)w(T_{S}^{*})$ such that there
    is a solution containing all tasks in $T_{L}^{*}\cup\bar{T}_{S}^{*}$
    in which each task in $\bar{T}_{S}^{*}$ is of level at most $c(\epsilon)$. 
\end{lem}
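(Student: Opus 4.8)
The plan is to adapt the level-shifting argument of \cite{MW15_SAP} used there to bound the recursion depth of their algorithm for $\delta$-large tasks. I would take $c(\epsilon)=\Theta(1/\epsilon)$, start from the top-aligned optimal solution $(T^{*},h^{*})$ considered above (so every task in $T_{S}^{*}$ has a well-defined level in $\{1,2,\dots\}$, and every small task whose top edge is not at an anchor line touches the bottom edge of some other task), and partition $T_{S}^{*}$ by level modulo $c$: for $r\in\{0,\dots,c-1\}$ let $L_{r}\subseteq T_{S}^{*}$ be the small tasks whose level is congruent to $r$ modulo $c$. Since $L_{0},\dots,L_{c-1}$ partition $T_{S}^{*}$, there is a residue $r^{*}$ with $w(L_{r^{*}})\le w(T_{S}^{*})/c\le\epsilon\cdot w(T_{S}^{*})$; I would discard the tasks in $L_{r^{*}}$ and set $\bar{T}_{S}^{*}:=T_{S}^{*}\setminus L_{r^{*}}$. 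This costs only an $\epsilon$-fraction of $w(T_{S}^{*})$ (as $L_{r^{*}}\subseteq T_{S}^{*}$), and the large tasks $T_{L}^{*}$ are left untouched.

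The heart of the argument is to re-align the surviving small tasks so that their levels drop to $O(c)$. The key structural fact is that every discarded task \emph{exposes an anchor line}: if $i\in L_{r^{*}}$ has bottleneck edge $e$, then $d_{i}>\delta' u_{e}^{*}$ and $h^{*}(i)+d_{i}\le B+u_{e}^{*}$, so $d_{i}>\delta'\,(h^{*}(i)-B)$, and hence the interval $[h^{*}(i)-B,\,h^{*}(i)-B+d_{i})$ strictly contains a power $(1+\delta')^{k}$; consequently $[h^{*}(i),h^{*}(i)+d_{i})$ contains the anchor line $B+(1+\delta')^{k}\in H_{0}$, and this anchor line lies strictly below the top edge of $i$. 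I would then sweep the surviving small tasks from the highest top edge downward and push each of them upward (the tasks of $T_{L}^{*}$ stay fixed, so the pseudo-capacities $u^{*}$ and therefore the $\delta'$-jammedness are preserved, matching the $O(\delta')$ slack already present in Lemma~\ref{lem:compute-jammed}). When the task directly below the hole left by some $i\in L_{r^{*}}$ is pushed up, it can rise only until its top edge reaches the anchor line exposed inside that hole, so it becomes a task of level $1$ for the new alignment; hence no maximal chain of mutually touching surviving tasks can run across a hole, and since along every original chain some task was discarded within any $c$ consecutive levels, every surviving small task ends up with level at most $c=c(\epsilon)$ (minor off-by-ones absorbed into the constant). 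Feasibility is retained since tasks only move up into space that was just vacated, and Lemma~\ref{lem:num-tasks} still bounds the number of surviving small tasks per edge.

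The hard part will be making the re-alignment rigorous: pushing one task up can force tasks on neighbouring edges to be pushed as well, so one must choose the sweep order and the push amounts carefully so that this cascade terminates and, at termination, genuinely no chain of length exceeding $c$ survives; in particular one has to rule out that fragments of distinct original chains get re-concatenated into a long new chain. The anchor-line-in-the-hole observation is precisely the invariant that prevents such concatenation, and I would formalise it by induction along the sweep order, maintaining that every processed surviving task either sits on an anchor line or rests on an already-processed task of bounded level. This is exactly the bookkeeping done in \cite{MW15_SAP} for their depth bound, which can be reused after translating ``recursion depth'' to ``level'' and ``$\delta$-large tasks'' to the $\delta'$-jammed tasks of $T_{S}^{*}$; the remaining verifications — that the discarded weight is at most $\epsilon\,w(T_{S}^{*})$, and that passing to top-aligned solutions loses nothing — are immediate, the latter having been established earlier in this subsection.
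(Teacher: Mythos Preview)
Your approach is essentially correct --- you are outlining the shifting argument that underlies the cited result in \cite{MW15_SAP} --- but the paper takes a much shorter route. Instead of re-deriving the level-bounding argument in place, the paper simply observes that $T_{S}^{*}$ is a feasible SAP solution for the auxiliary instance with edge capacities $u_{e}^{*}$, and that the $\delta'$-jammed condition means every task in $T_{S}^{*}$ is $O(\delta')$-large in that instance. The bounded-level result from \cite{MW15_SAP} (stated there for instances consisting only of large tasks) then applies as a black box: its anchor lines --- edge capacities and powers of $1+\delta'$ --- correspond, after shifting by $B$, to anchor lines in $H_{0}$, so the level bound transfers directly. Your proof unpacks the contents of that black box (the residue-class deletion and the anchor-line-in-the-hole observation) and then defers the delicate re-alignment cascade back to \cite{MW15_SAP} anyway; the paper's one-line reduction avoids opening the box at all.
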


\begin{proof}
    We use a result in \cite{MW15_SAP} for SAP-instances with arbitrary
    capacities whose input tasks are all $\delta''$-large, i.e., each
    task $i$ satisfies that $d_{i}>\delta''\cdot\min_{e\in P(i)}u_{e}$.
    There, the anchor lines are defined as the capacities of the input
    edges and all powers of $1+\delta''$ between 1 and $\max_{e}u_{e}$,
    a task is of level \emph{$1$ }if the height of its top edge equals
    an anchor line, and recursively, a task $i$ is of level $\ell+1$
    if its top edge touches the bottom edge of a task of level $\ell$
    and $i$ is not of any level $\ell'<\ell$. The mentioned result states
    that there exists a universal constant $c(\epsilon)$ such that there
    exists a $(1-\epsilon)$-approximative solution such that each task
    task is of level at most $c(\epsilon)$. We interpret the set $T_{S}^{*}$
    as a solution for an instance of only $O(\delta')$-large tasks with
    edge capacities $u_{e}^{*}$. Then, the result in \cite{MW15_SAP}
    implies the claim of the lemma. 
\end{proof}
We change the above algorithm such that for guessing $B$ we allow
only the values in $H_{c(\epsilon)}$, for guessing the height of
any large task we allow only values in $H\cup H_{0}\cup\{U-h_{1}|h_{1}\in H\}$,
and for the height of any small task we allow only values in $H_{c(\epsilon)}$.
Then, the number of DP-cells is bounded by $n^{O(c(\epsilon)(1/(\delta')^{2}+1/\delta))}$
and this quantity also bounds the overall running time.

\subsection{Several subpaths\label{subsec:Several-subpaths}}

Finally, we extend the above algorithm to the case of jammed solutions
that are not necessarily $B$-simple jammed solution for some value
$B$. First we describe an exact pseudo-polynomial time algorithm
for this case and then describe how to turn it into a polynomial time
routine by losing a factor of $1+\epsilon$.

Suppose that the optimal jammed solution is defined by disjoint subpaths
$E_{1},\dotsc,E_{k}\subseteq E$ and corresponding values $B_{1},\dotsc,B_{k}$.
W.l.o.g.~assume that the paths $E_{1},\dotsc,E_{k}$ form a partition
of $E$, i.e., $E=E_{1}\dot{\cup}\dotsc\dot{\cup}E_{k}$ since if an
edge $e\in E$ is not contained in any subpath $E_{k'}$ then we can
create a new subpath $E_{k'}$ that contains only $e$ we define $B_{k'}:=0$.
We describe a DP that intuitively sweeps the path from left to right
and guesses the subpaths $E_{1},\dotsc,E_{k}$ and their corresponding
values $B_{1},\dotsc,B_{k}$. Our DP has a cell of the form $(E',T',h')$
for each subpath $E'\subseteq E$ that contains the rightmost edge
of $E$ and a set of at most $1/\delta$ tasks $T'\subseteq T_{L}\cap T_{e_{L}}$
where $e_{L}$ is the leftmost edge of $E'$, and a function $h'$.
Intuitively, the cell $(E',T')$ represents the subproblem of computing
a set of tasks $T''$ such that for each task $i\in T''$ we have
that $P(i)\subseteq E'$ and $T''\cup T'$ forms a jammed solution.
Suppose we are given a cell $(E',T')$ where the reader may imagine
that $E'=E_{k'}\cup E_{k'+1}\cup\dotsc\cup E_{k}$ for some integer $k'$.
We intuitively guess $E_{k'+1}\cup\dotsc\cup E_{k}$, $B_{k'}$, and
the tasks in $T_{L}$ using the leftmost edge of $E_{k'+1}$. Formally,
we enumerate over all subpaths $E''\subseteq E'$ that contain the
rightmost edge of $E'$, over all sets of at most $1/\delta$ large
tasks $T''$ that use the leftmost edge of $E''$, and over all assignments
of heights $h'':T''\rightarrow\N$ such that $(T'\cup T'',h'\cup h'')$
forms a feasible solution (where $h'\cup h''$ is defined as the function
$h$ such that $h(i)=h'(i)$ for each task $i\in T'$ and $h(i)=h''(i)$
for each task $i\in T''$), and over all values $B\in\{0,\dotsc,U\}$.
Then, we search for a $B$-simple jammed solution in an artificial
new instance defined by a path $\tilde{E}$ that consists of the edges
$E'\setminus E''$, an artificial edge $\hat{e}_{1}$ on the left
of $E'\setminus E''$, and an artificial edge $\hat{e}_{2}$ on the
right of $E'\setminus E''$. We define that all large tasks using
the leftmost edge of $E'\setminus E''$ also use $\hat{e}_{1}$ (but
no small task uses $\hat{e}_{1}$) and all large tasks using the rightmost
edge of $E'\setminus E''$ also use $\hat{e}_{2}$ (but no small task
uses $\hat{e}_{2}$). For this new instance, we use the pseudopolynomial
time DP above and compute the solution $\bar{T}$ stored in the cell
$(\hat{e}_{1},\hat{e}_{2},T',T'',h'\cup h'',0)$. With this guess
of $E'',T'',h''$ we associate the union of the solution $\bar{T}$,
the solution stored in the cell $(E'',T'')$, and the set $T''\setminus T'$.
In the cell $(E',T')$ we store the most profitable solution over
all guesses $E'',T'',h''$. One can show that if the DP guesses all
tasks and heights according to $(T^{*},h^{*})$ then it computes a
solution of weight $w(T^{*})$.

In order to turn this DP into a polynomial time routine we first apply
the routine from Section~\ref{subsec:DP-polytime} to the small tasks
above each segment $E_{k'}\times B_{k'}$ separately. Then, we push
down all large tasks uniformly and also all small tasks and values
$B_{k'}$ (i.e., we reduce the values $B_{k'}$). We say that a large
task is \emph{rigid} if $h(i)=0$ or if $h(i)=h(i')+d_{i'}$ for some
rigid task $i'$. We stop pushing down a large task $i$ once it has
become rigid. Also, we stop pushing down the small task tasks above
some line segment $E_{k'}\times B_{k'}$ if there is a rigid task
$i\in T_{L}^{*}$ with $P(i)\cap E_{k'}\ne\emptyset$ and $h(i)-B_{k'}$
is a power of $1+\delta$. One can show that the height level of each
task is the sum of at most $1/\delta$ values from $H_{c(\epsilon)}$.
We restrict the DP from Section~\ref{subsec:DP-polytime} to these
values and hence obtain 
an algorithm with a running time of $n^{O(c(\epsilon)(1/\delta)(1/(\delta')^{2}+(1/\delta)))}$
that computes a solution of weight at least $w(T_{L}^{*})+(1-\epsilon)w(T_{S}^{*})$. 
This completes the proof of Lemma~\ref{lem:compute-jammed}.

\section{\label{sec:uniform-pile-solutions}Computing pile boxable solutions}

In this section we describe a polynomial time algorithm that computes
a near-optimal $\beta$-pile boxable solution. Let $(T^{*},h^{*})$
be the optimal $\beta$-pile boxable solution for some value $\beta$
with sets of boxes $\B=\{\B_{1},\dotsc,\B_{|\B|}\}$. First, we provide
an algorithm for the special case that $|\B|=1$ and that $\B_{1}$
is a $\beta$-pile of boxes. Then we generalize our algorithm to the
general case.

Assume that $|\B|=1$ and $\B_{1}=\{B_{1},\dotsc,B_{|\B_{1}|}\}$ is
a $\beta$-pile of boxes. Note that $|\B_{1}|\le\beta$ and that $d_{B}=U/\beta$
for each box $B\in\B_{1}$. We guess the start and the end vertex
of each box $B\in\B_{1}$. Note that due to the definition of $\beta$-pile
of boxes this already defines the vertical placement of the boxes
since they are stacked on top of each other, ordered by the lengths
of their paths. Then we assign small tasks into the boxes $\B_{1}$
using the algorithm due to the following lemma. 
\begin{lem}
    \label{lem:assign-tasks-boxes}Given a set of boxes $\B_{1}=\{B_{1},\dotsc,B_{|\B_{1}|}\}$
    and a set of tasks $\hat{T}_{S}$ such that $d_{i}\le\epsilon^{8}\cdot d_{B}$
    for each box $B\in\B_{1}$. There is a randomized polynomial time
    algorithm that computes a set of tasks $\bar{T}\in T_{S}$ and a partition
    $\bar{T}=\bar{T}_{1}\dot{\cup}\dotsc\dot{\cup}\bar{T}_{|\B_{1}|}$ such
    that for each $\ell\in\{1,\dotsc,|\B_{1}|\}$ the tasks $\bar{T}_{\ell}$
    fit into the box $B_{\ell}$ and $w(\bar{T})\ge(1-\epsilon)w(\hat{T}_{S}^{*})$
    where $\hat{T}_{S}^{*}$ is the most profitable set of tasks that
    fits into the boxes $\B_{1}$. 
\end{lem}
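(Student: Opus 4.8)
The plan is to reduce the box‑assignment problem to a single instance of SAP with a staircase capacity profile and then invoke the known $(1+\epsilon)$‑approximation for SAP on instances whose tasks are all sufficiently small relative to their bottleneck~\cite{MW15_SAP}. Recall that in a $\beta$‑pile of boxes the paths are nested, $P(B_1)\supseteq P(B_2)\supseteq\dots\supseteq P(B_{|\B_1|})$, every box has the same size $c:=U/|\B_1|$, and $h(B_\ell)=(\ell-1)c$, so the boxes partition the rectangle $E\times[0,|\B_1|\,c)$ into a staircase region. I would first read off from the (given) box endpoints, for each edge $e$, the value $m_e:=|\{\ell: e\in P(B_\ell)\}|$, which by nestedness equals the largest index of a box containing $e$, and define the profile $\tilde u_e:=c\cdot m_e$. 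Note $\tilde u$ is a step function taking only $|\B_1|\le\beta$ distinct values with ratio at most $\beta$.

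Next I would establish the equivalence between valid box assignments and \emph{box‑respecting} SAP solutions for $\tilde u$ (those in which no task's rectangle crosses a horizontal line at a multiple of $c$). In one direction, if $\bar T_1\dot{\cup}\dots\dot{\cup}\bar T_{|\B_1|}$ is a valid assignment with within‑box heights $h_\ell(i)\in[0,c)$, $h_\ell(i)+d_i\le c$, then placing each $i\in\bar T_\ell$ at global height $(\ell-1)c+h_\ell(i)$ yields a feasible SAP solution for $\tilde u$: distinct boxes occupy disjoint vertical strips, and for $e\in P(i)$ we have $e\in P(B_\ell)$, hence $m_e\ge\ell$ and the rectangle's top is at most $\ell c\le m_e c=\tilde u_e$; moreover this solution is box‑respecting and in particular $w(\hat T_S^*)=\OPT_{\mathrm{SAP}}^{\mathrm{box}}(\tilde u)\le\OPT_{\mathrm{SAP}}(\tilde u)$. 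Conversely, if $(T',h')$ is box‑respecting for $\tilde u$, then each task $i$ lies in a single strip $[(\ell(i)-1)c,\ell(i)c)$; from $(\ell(i)-1)c\le h'(i)<\tilde u_e=m_e c$ for every $e\in P(i)$ we get $\ell(i)\le m_e$, i.e.\ $P(i)\subseteq P(B_{\ell(i)})$, so assigning $i$ to $B_{\ell(i)}$ with local height $h'(i)-(\ell(i)-1)c$ is valid; the tasks assigned to a given box are non‑overlapping because they were non‑overlapping globally, and $d_i\le\epsilon^8 c=\epsilon^8 d_{B_{\ell(i)}}$.

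Given this, the algorithm is: run the SAP $(1+\epsilon)$‑approximation for sufficiently small tasks from~\cite{MW15_SAP} on the instance $(\hat T_S,\tilde u)$ — each $i\in\hat T_S$ has $d_i\le\epsilon^8 c\le\epsilon^8\min_{e\in P(i)}\tilde u_e$ since $\tilde u_e\ge c$, so it qualifies — obtaining $(T',h')$ with $w(T')\ge(1-\epsilon)\,\OPT_{\mathrm{SAP}}(\tilde u)\ge(1-\epsilon)w(\hat T_S^*)$, and then convert $(T',h')$ into a box‑respecting solution while keeping a $(1-O(\epsilon))$‑fraction of its weight. The equivalence above then produces the partition $\bar T=\bar T_1\dot{\cup}\dots\dot{\cup}\bar T_{|\B_1|}$ with $w(\bar T)\ge(1-O(\epsilon))w(\hat T_S^*)$, and rescaling $\epsilon$ gives the claim. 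The running time is polynomial, dominated by the subroutine of~\cite{MW15_SAP} (which depends only on $n$, not on $U$, since $\tilde u$ has constant range); the randomness enters through that subroutine and/or the conversion step.

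The main obstacle is the conversion step: turning an arbitrary near‑optimal SAP solution for $\tilde u$ into one that respects the \emph{fixed} box boundaries $c,2c,\dots$ while losing only an $O(\epsilon)$ fraction of the weight — since the boxes are pinned to these heights, a plain phase‑shift of the grid is unavailable. I would handle this in one of two ways: either supply the at most $|\B_1|\le\beta$ box boundaries to the algorithm of~\cite{MW15_SAP} as additional anchor lines, so that its output is box‑respecting by construction; or, treating that algorithm as a black box, apply a shifting argument in the spirit of the proof of Lemma~\ref{lem:exists-boxed}: subdivide each box strip into $1/\epsilon$ horizontal sub‑strips, pick one index uniformly at random, delete every task meeting the chosen sub‑strip of any box (each task meets at most two sub‑strips over the random choice, so the expected deleted weight is $O(\epsilon)w(T')$), and use the vacated sub‑strips as slack to push the surviving tasks off the box boundaries. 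Verifying that this re‑placement is always possible, and confirming the precise smallness/profile hypotheses under which~\cite{MW15_SAP} applies, are the points that will need the most care.
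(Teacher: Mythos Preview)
Your approach is sound but differs from the paper's and proves a narrower statement. The paper's proof does not use any pile structure: it writes a direct assignment LP (``BOX-LP'') with variables $x_{i,B}$ and per-box-per-edge capacity constraints $\sum_{i\in T_e}x_{i,B}d_i\le d_B$, scales the optimum down by $1-2\epsilon$, applies dependent rounding over the boxes for each task (so that each task lands in at most one box), then runs a greedy alteration inside each box that discards a task if adding it would exceed $(1-\epsilon)d_B$ on some edge, and finally packs each box geometrically via Buchsbaum et al.~\cite{buchsbaum2004opt}. This works for an \emph{arbitrary} collection $\B_1$. Your staircase reduction hinges on the nested-path, equal-height structure of a $\beta$-pile, which the lemma as stated does not assume --- though in the paper the lemma is only ever applied to piles (Section~\ref{sec:uniform-pile-solutions}) and to single boxes (the laminar section), so your restriction is harmless for the applications, but you should declare it explicitly. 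What your route buys is that the rounding work is offloaded to the~\cite{MW15_SAP} black box; what the paper's route buys is generality and self-containment. Your conversion step does go through via the shifting argument you sketch: picking $j\in[1/\epsilon]$ uniformly and deleting every task meeting sub-strip $j$ of some box loses $O(\epsilon)$ expected weight, after which shifting every surviving task whose base lies above the cleared sub-strip of its box down by $\epsilon c$ pulls each boundary-crosser entirely into the lower box (since $d_i\le\epsilon^8 c\ll\epsilon c$) without creating new overlaps, and the capacity constraint for $\tilde u$ already guarantees $P(i)\subseteq P(B_\ell)$ for the receiving box.
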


\begin{proof}
    We first design a linear program that computes fractionally a set
    of tasks and an assignment of them into the boxes such that for each
    box $B\in\B_{1}$ and each edge $e\in P(B)$ it holds that the total
    (fractional) size of the tasks in box $B$ using $e$ is bounded by
    $d_{B}$. Formally, we solve the following linear program that we
    denote by BOX-LP.

    \[
        \begin{array}{lllll}
            \hfill\max & {\displaystyle \sum_{i,B}w_{i}\cdot x_{i,B}}\\
            \mathrm{s.t.} & \sum_{i\in T_{e}}x_{i,B}\cdot d_{i} & \le d_{B} & \forall B\in\B_{1},\forall e\in P(B) & \label{eq:cap-constr}\eqref{eq:cap-constr}\\
            & \sum_{B}x_{i,B} & \le1 & \forall i\in T\\
            & x_{i,B} & \ge0 & \forall i\in T\:\forall B\in\B_{1}s.\,t.\,P(i)\subseteq P(B)
        \end{array}
    \]
    Let $x^{*}$ denote the optimal solution to BOX-LP and observe that
    ${\displaystyle \sum_{i,B}w_{i}\cdot x_{i,B}^{*}\ge}w(T^{*}\cap T_{S})$
    since $T^{*}\cap T_{S}$ yields a feasible solution to BOX-LP. Next,
    we round it using similar ideas as used in \cite{MW15_SAP} and \cite{CCKR11}.
    First, we decrease each variable $x_{i,B}^{*}$ by a factor of $1-2\epsilon$,
    i.e., we define a new solution $\tilde{x}$ by $\tilde{x}_{i,B}:=(1-2\epsilon)x_{i,B}^{*}$
    for each task $i$ and each box $B$. Then, for each task $i$ independently,
    we define a random variable $Y_{i,B}$ for each box $B\in\B_{1}$.
    We define them such that for each box $B\in\B_{1}$ we have that $\Pr[Y_{i,b}=1]=\tilde{x}_{i,B}$
    and for any two boxes $B,B'\in\B_{1}$ we have that $\Pr[Y_{i,B}=1\wedge Y_{i,B'}=1]=0$.
    Such a distribution can easily be obtained via dependant rounding
    similar to Bertsimas et al.~\cite{BTV99}: let $p$ be a random number
    in $[0,1]$ and suppose the boxes in $\B_{1}$ with $\tilde{x}_{i,B}>0$
    are the boxes $B_{1},\dotsc,B_{k}$. If $\sum_{j'<j}\tilde{x}_{i,B_{j}}\le p<\sum_{j'\le j}\tilde{x}_{i,B_{j}}$
    then we define $Y_{i,B_{j}}:=1$ and $Y_{i,B_{j'}}:=0$ for each $j'\ne j$.

    After defining the $Y_{i,B}$ variables we do an alteration phase
    and define random variables $Z_{i,B}\in\{0,1\}$ such that $Z_{i,B}\le Y_{i,B}$
    for each $i\in T$ and each box $B\in\B_{1}$. We perform this step
    in each box independently. Consider a box $B$. We order the tasks
    with $Y_{i,B}=1$ by their start vertices from left to right, breaking
    ties arbitrarily. In particular, we define $Z_{i,B}$ only after all
    variables $Z_{i',B}$ have been defined for all tasks $i'<i$. We
    define

    \[
        Z_{i,B}:=\begin{cases}
            1 & \mathrm{if\:}Y_{i,B}=1\,\mathrm{and\,}{\textstyle d_{i}+{\sum}_{i'<i\colon P(i)\cap P(i')\ne\emptyset}Z_{i',B}d_{i'}\le(1-\epsilon)d_{B}}\\
            0 & \mathrm{otherwise}
        \end{cases}
    \]
    Observe that for any outcome of the random experiment the tasks $i$
    with $Z_{i,B}=1$ yield a feasible solution. We bound now the expected
    value of this solution. For any box $B$ the variables $Y_{i,B}$
    are independent since the dependant rounding introduced dependencies
    only in the $Y$-variables corresponding to distinct boxes. Since
    for each small task $i$ it holds that $d_{i}\le\epsilon^{8}\cdot d_{B}$
    for each box $B$ and $\sum_{i}\tilde{x}_{i,B}d_{i}\le(1-2\epsilon)d_{B}$
    one can show similarly as in \cite{CCKR11} that $\Pr[Z_{i,B}=0|Y_{i,B}=1]\le\epsilon$.
    Hence, we have that $\mathbb{E}[\sum_{i,B}Z_{i,B}\cdot w_{i}]\ge(1-\epsilon){\displaystyle \sum_{i,B}w_{i}\cdot x_{i,B}^{*}\ge(1-\epsilon)}w(T^{*}\cap T_{S})$.
    For each box $B$ let $\bar{T}(B)$ denote the set of tasks $i$ with
    $Z_{i,B}=1$. Observe that for each box $B$ we have that $d_{i}\le\epsilon^{8}\cdot d_{B}$
    for each small task $i$ and $d(T_{e}\cap\bar{T}(B))\le(1-\epsilon)d_{B}$
    for each edge $e\in P(B)$. Therefore, we can invoke the algorithm
    in \cite{buchsbaum2004opt} to obtain a solution $(\bar{T}(B),\bar{h}_{B})$
    for each box $B$ and in particular the tasks in $\bar{T}(B)$ fit
    into $B$, assuming that $\epsilon$ is sufficiently small (the algorithm
    in \cite{buchsbaum2004opt} guarantees that all tasks in $\bar{T}(B)$
    can be packed into a box with height $(1+O((h_{\max}/L)^{1/7}))L$
    where in our case $h_{\max}\le\epsilon^{8}d_{B}$, $L\le(1-\epsilon)d_{B}$
    and hence $\left(1+O\left(\left(\epsilon^{8}d_{B}/(1-\epsilon)d_{B}\right)^{1/7}\right)\right)(1-\epsilon)d_{B}=(1+O(\epsilon^{8/7}))(1-\epsilon)d_{B}\le d_{B}$
    for sufficiently small $\epsilon$). 
\end{proof}
In the next step, we invoke a dynamic program that selects the large
tasks and defines their heights such that the large tasks are compatible
with the boxes $\B_{1}$, maximizing the total profit from the large
tasks. Intuitively, it sweeps from left to right and for each edge
it guesses the task in $T^{*}\cap T_{L}\cap T_{e}$. Observe that
for $(T^{*},h^{*})$ we can assume w.l.o.g.~that we applied ``anti-gravity''
to the large tasks, i.e., each task $i\in T^{*}\cap T_{L}$ satisfies
that $h(i)+d_{i}=U$ or $h(i)+d_{i}$ cooincides with the height of
another task $i'\in T^{*}\cap T_{L}$. Therefore, $h(i)=U-h_{1}$
for some value $h_{1}\in H$ for each task $i\in T^{*}\cap T_{L}$.

Formally, our DP has a cell $(e,\tilde{T},\tilde{h})$ for each combination
of an edge $e$ and each set of at most $1/\delta$ tasks $\tilde{T}\subseteq T_{L}\cap T_{e}$
and a height assignment function $\tilde{h}:\tilde{T}\rightarrow H$
such that in the solution $(\tilde{T},\tilde{h})$ the tasks in $\tilde{T}$
are compatible with each other and also with the boxes $\B_{1}$.
Note that the number of such cells is bounded by $n^{O(1/\delta)}$.
Given such a cell, we seek the most profitable set of tasks $T'$
such that for each task $i\in T'$ its path $P(i)$ lies completely
on the right of $e$ and there is a height assignment $h'$ for $T'$
such that the tasks in $T'\cup\tilde{T}$ are compatible with each
other (according to $h'$ and $\tilde{h}$) and as well as with the
boxes $\B_{1}$. Given a cell $(e,\tilde{T},\tilde{h})$, let $e'$
denote the edge on the right of $e$ (if there is no such cell then
$(e,\tilde{T},\tilde{h})$ simpy stores the empty set) and we try
all solutions $(\tilde{T}',\tilde{h}')$ that are compatible with
$(\tilde{T},\tilde{h})$ and with $\B_{1}$. We say that solutions
$(\tilde{T},\tilde{h}),(\tilde{T}',\tilde{h}')$ with $\tilde{T}\subseteq T_{e}$
and $\tilde{T}'\subseteq T_{e'}$ for two edges $e,e'$ are \emph{compatible
}if $\tilde{T}\cap T_{e'}\subseteq\tilde{T}'$ and $\tilde{T}'\cap T_{e}\subseteq\tilde{T}$
and $\tilde{h}(i)=\tilde{h}'(i)$ for each $i\in\tilde{T}\cap\tilde{T}'$.
We select the pair $\tilde{T}',\tilde{h}'$ that stores a solution
$(T',h')$ that maximizes $w(T'\cup\tilde{T}'\setminus\tilde{T})$
and store in $(e,\tilde{T},\tilde{h})$ the union of the solutions
given by $\tilde{T}'\setminus\tilde{T},\tilde{h}'$ and the solution
in the cell $(e',\tilde{T}',\tilde{h}')$.

\subsection{General case\label{subsec:General-case}}

Assume now that $|\B|>1$. For the optimal uniform pile solution $(T^{*},h^{*})$
there exists a partition of $E$ into subpaths $E_{1},\dotsc,E_{k}$
such that each subpath $E_{j}$ contains exactly one uniform pile
of boxes $\B_{j}\in\B$. We use a simple dynamic program that intuitively
sweeps the path $E$ from left to right and guesses this partition.
On each subpath we invoke the algorithm from above. Our DP is almost
identical to the DP presented in Section~\ref{subsec:Several-subpaths}.
It has a cell $(E',T',h')$ for each combination of a subpath $E'\subseteq E$
containing the rightmost edge of $E$, and at most $1/\delta$ tasks
$T'$ that all use the leftmost edge of $E'$ where the reader may
imagine that $E'=E_{k'}\cup E_{k'+1}\cup\dotsc\cup E_{k}$ for some value
$k'$. Given such a cell, we enumerate over all subpaths $E''\subseteq E'$
that contain the rightmost edge of $E$ (the reader may imagine that
$E''=E_{k'+1}\cup\dotsc\cup E_{k}$) and the at most $1/\delta$ tasks
$T''$ using the leftmost edge of $E''$ together with their placement.
On $E'\setminus E''$ we invoke the algorithm from above with some
straightforward adjustments that take into account that we already
selected the tasks $T',T''$, i.e., we guess the boxes such that they
are compatible with $T'$ and $T''$ etc. We omit the details here.
This completes the proof of Lemma~\ref{lem:compute-pile}.

\section{Computing laminar boxable solutions}

In this section we describe a polynomial time algorithm that compute
a near-optimal laminar boxable solutions. Like in our description
of pile boxable solutions in Section~\ref{sec:uniform-pile-solutions},
let $(T^{*},h^{*})$ denote a laminar boxable solution with boxes
$\B=\{\B_{1},\dotsc,\B_{|\B|}\}$, the reader may imagine that $T^{*}$
is the optimal laminar boxable solution. First assume that $\B$ consists
of exactly one laminar set of boxes $\B_{1}$ such that there is a
box $B_{0}\in\B_{1}$ with $P(B_{0})=E$. We use the techniques introduced
in \cite{UFP-improve-2} for computing solutions to the Unsplittable
Flow on a Path problem based on assigning small tasks into boxes.

We say that a box $B\in\B_{1}$ is of level $k$ if $d_{B}=(1+\epsilon)^{k}$.
Intuitively, we guess the boxes one by one in the order given by the
laminar structure of $\B_{1}$, i.e., we first guess the (unique)
box $B_{0}$ of level 0. Then we guess all boxes of level 1, then
all boxes of level 2, etc. Whenever we guess a box $B$ of some level
$k$, we assign a set of tasks $T(B)$ into $B$ such that the tasks
$T(B)$ fit into $B$ and such that $T(B)$ does not contain any task
that we previously assigned to a box $B'$ of some level $k'<k$ that
lies ``underneath $B$'', i.e., such that $P(B)\subseteq P(B')$.
Whenever we guess a box $B$ we also guess the large tasks that use
the leftmost or the rightmost edge of $B$.

Note that our way of assigning tasks into the boxes is not optimal
in the sense that it can happen that we assign some tasks to a box
$B$ of some level $k$ and later realize that instead we should have
assigned them to a box $B'$ of level $k+1$ above $B$ and other
tasks into $B$ instead. However, we will show that this issue will
cost us only a factor of 2 in the profit due to the small tasks. On
the other hand, guessing the boxes in this order will allow us to
embed this guessing into a dynamic program, using the fact that when
we assign tasks into a box $B$ we need to know which tasks we assigned
previously into boxes $B'$ with $P(B)\subseteq P(B')$ (in order
to ensure that we do not select a task twice) but we do not need to
know the guessed boxes $B''$ with $P(B'')\cap P(B)=\emptyset$. Also,
when we recurse in the path $P(B)$ of some box $B$ then we need
to know the previously guessed large tasks that use the leftmost and
rightmost edge of $P(B)$, but not the large tasks that use only edges
outside of $P(B)$.

Now we describe our approach formally. First, using a shifting argument
we delete the tasks in the boxes of some levels such that the levels
whose tasks we do not delete are grouped into groups of $1/\epsilon^{2}$
consecutive levels each with at least $1/\epsilon$ levels of deleted
tasks between two such groups. For each box $B\in\B_{1}$ let $\ell(B)$
denote its level and let $T^{*}(B)$ denote the tasks in $T^{*}$
assigned to $B$ in $(T^{*},h^{*})$. Intuitively, the next lemma
shows that there is an offset $\alpha\in\{0,\dotsc,1/\epsilon^{2}-1\}$
such that it suffices to keep only the tasks in boxes whose levels
are contained in the set $L(\alpha)$ which we define to be the set
of all levels $\ell$ such that $\ell\bmod1/\epsilon^{2}\notin\{\alpha\bmod1/\epsilon^{2},\dotsc,\alpha+1/\epsilon-1\bmod1/\epsilon^{2}\}$.
It can be shown with a standard shifting argument.
\begin{lem}
    There is an offset $\alpha\in\{0,\dotsc,1/\epsilon^{2}-1\}$ such that
    $\sum_{B\in\B_{1}:\ell(B)\in L(\alpha)}w(T^{*}(B))\ge(1-\epsilon)\cdot w(T^{*})$. 
\end{lem}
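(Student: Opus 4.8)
The plan is to prove this by a standard averaging (shifting) argument over the $1/\epsilon^2$ possible offsets, using that $1/\epsilon\in\N$ so that $1/\epsilon$ and $1/\epsilon^2$ are integers. First I would record that, in the special case under consideration where $\B$ consists of the single laminar set $\B_1$ with $P(B_0)=E$, the sets $T^*(B)$ for $B\in\B_1$ are pairwise disjoint and their union is all of $T^*$, so $\sum_{B\in\B_1}w(T^*(B))=w(T^*)$. For an offset $\alpha\in\{0,\dots,1/\epsilon^2-1\}$ define the \emph{deleted weight} $D(\alpha):=\sum_{B\in\B_1:\,\ell(B)\notin L(\alpha)}w(T^*(B))$; by the definition of $L(\alpha)$, a box $B$ contributes to $D(\alpha)$ exactly when $\ell(B)\bmod 1/\epsilon^2$ lies in the length-$1/\epsilon$ cyclic window $\{\alpha,\alpha+1,\dots,\alpha+1/\epsilon-1\}\pmod{1/\epsilon^2}$. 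Since $\sum_{B:\ell(B)\in L(\alpha)}w(T^*(B))=w(T^*)-D(\alpha)$, it suffices to exhibit one offset $\alpha$ with $D(\alpha)\le\epsilon\cdot w(T^*)$.

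Second, I would bound the average of $D(\alpha)$ over all $1/\epsilon^2$ offsets by swapping the order of summation. The key counting observation is that for a fixed level $\ell$, the number of offsets $\alpha\in\{0,\dots,1/\epsilon^2-1\}$ for which $\ell$ is deleted is exactly $1/\epsilon$: the condition $\ell\bmod 1/\epsilon^2\in\{\alpha,\dots,\alpha+1/\epsilon-1\}\pmod{1/\epsilon^2}$ is equivalent to $\alpha\in\{\ell-1/\epsilon+1,\dots,\ell\}\pmod{1/\epsilon^2}$, which is a set of $1/\epsilon$ distinct residues modulo $1/\epsilon^2$ (here one uses $1/\epsilon\le 1/\epsilon^2$). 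Hence
\[
  \sum_{\alpha=0}^{1/\epsilon^2-1} D(\alpha) \;=\; \sum_{B\in\B_1} w(T^*(B))\cdot\bigl|\{\alpha:\ell(B)\text{ deleted at }\alpha\}\bigr| \;=\; \frac{1}{\epsilon}\sum_{B\in\B_1} w(T^*(B)) \;=\; \frac{1}{\epsilon}\,w(T^*).
\]
Dividing by the number $1/\epsilon^2$ of offsets, the average value of $D(\alpha)$ equals $\epsilon\cdot w(T^*)$, so there is an offset $\alpha$ with $D(\alpha)\le\epsilon\cdot w(T^*)$, and for this $\alpha$ we get $\sum_{B\in\B_1:\ell(B)\in L(\alpha)}w(T^*(B))=w(T^*)-D(\alpha)\ge(1-\epsilon)w(T^*)$, as claimed.

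This argument is entirely routine; the only point that needs a little care is the cyclic counting of offsets — in particular handling the wraparound of the deleted window at the block boundary $1/\epsilon^2$ so that every residue is covered by exactly $1/\epsilon$ windows — together with checking that the relevant reciprocals are integers, which is guaranteed by the standing assumption $1/\epsilon\in\N$. No genuine obstacle is expected in this lemma; the substantive work in computing a near-optimal laminar boxable solution is carried by the dynamic program that guesses the boxes level by level, not by this shifting step.
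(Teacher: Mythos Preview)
Your proposal is correct and is precisely the ``standard shifting argument'' the paper invokes without spelling out: the paper gives no detailed proof of this lemma, and your double-counting over the $1/\epsilon^{2}$ offsets, observing that each level is excluded by exactly $1/\epsilon$ of them, is the intended one-paragraph argument.
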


We delete all tasks in boxes of levels $\ell\notin L(\alpha)$. Note
that this groups the levels without deleted tasks into groups of the
form $G'(K):=\{\alpha+K/\epsilon^{2}+1/\epsilon,\dotsc,\alpha+K/\epsilon^{2}+1/\epsilon^{2}-1\}$
for each $K\in\Z$. For each level $\ell\in L(\alpha)$ let $K(\ell)$
be the integer such that $\ell\in G'(K(\ell))$ and we define $G(\ell):=G'(K(\ell))$
for short.

We describe now our algorithm. We first present a recursive version
that does not run in polynomial time and then we show how to embed
it into a polynomial time dynamic program. Note that by assumption
for the (unique) box $B_{0}\in\B_{1}$ of level 0 it holds that $P(B_{0})=E$.
We guess its height $h(B_{0})=:\hat{h}$. Note that w.l.o.g.~we can
assume that on all large tasks $i$ with $h(i)<h(B_{0})$ we applied
``gravity'' and hence we can assume w.l.o.g.~that $h(B_{0})\in H$,
i.e., $h(B_{0})$ is the sum of the sizes of at most $1/\delta$ large
input tasks. Therefore, there are only $n^{O(1/\delta)}$ possibilities
for $h(B_{0})$. In our algorithm, whenever we consider a box $B$
we assign only tasks $i$ into $B$ that satisfy $d_{i}\le\epsilon^{8}\cdot d_{B}$.
Since $d_{B_{0}}=1$ and all task sizes are integer, there is no small
tasks $i$ that satisfies that $d_{i}\le\epsilon^{8}\cdot d_{B_{0}}$
(assuming that $\epsilon<1$) and hence we do not assign any task
into $B_{0}$. However, we guess the $O(1/\delta)$ large tasks that
use the leftmost edge of $B_{0}$ or the rightmost edge of $B_{0}$.
Then we guess all boxes of level 1 in $\B_{1}$, denote them by $B_{1},\dotsc,B_{k'}$.
For each box $B'\in\{B_{1},\dotsc,B_{k'}\}$ we guess the $O(1/\delta)$
large tasks that use the leftmost edge or the rightmost edge of $B'$.
Furthermore, for each edge $e\in P(B')$ that is not used by any path
$P(B')$ for some box $B'\in\{B_{1},\dotsc,B_{k'}\}$ we guess the large
tasks using it. We recurse on each box $B\in\{B_{1},\dotsc,B_{k'}\}$.

Recursively, suppose that we are given a box $B\in\B_{1}$ of some
level $\ell$. If $\ell\in L(\alpha)$ then we assign small tasks
into $B$. Assume that $\ell=\alpha+K/\epsilon^{2}+k$ for some $K\in\N_{0}\cup\{-1\}$
and some $k\in\{0,\dotsc,1/\epsilon^{2}-1\}$. The input for this routine
is a set $\hat{T}_{S}\subseteq T_{S}$ that consists of all tasks
$i\in T_{S}$ such that $P(i)\subseteq P(B)$, $d_{i}\le\epsilon^{8}d_{B}=\epsilon^{8}\cdot(1+\epsilon)^{\ell}$
and $i$ was not previously assigned to some box in a level $\ell'\in G(\ell)$
with $\ell'<\ell$, i.e., in a box in a level $\ell'$ in the same
group as $\ell$. We invoke Lemma~\ref{lem:assign-tasks-boxes} in
order to compute a set $T(B)\subseteq\hat{T}_{S}$ that fits into
$B$.

Then, independently whether $\ell\in L(\alpha)$ or not we guess the
boxes $B'$ of level $\ell+1$ with $P(B')\subseteq P(B)$, i.e.,
the boxes of level $\ell+1$ that figuratively are placed on top of
$B$. Additionally, we guess the large tasks that use the leftmost
or the rightmost edge of each guessed box $B'$. Also, for each edge
$e$ that is not contained in the path $P(B')$ of any guessed box
of level $\ell+1$ we guess the large tasks that use $e$. Then we
recurse in each guessed box of level $\ell+1$.

The above algorithm does not run in polynomial time since for the
guesses there are too many options. However, we can embed it into
a polynomial time dynamic program using the following observation:
when we recurse on a box $B$ then for the remainder of the computation
(i.e., the boxes and large tasks whose paths are contained in $P(B)$)
we do not need to know all previously selected tasks but only the
large tasks that use the leftmost or rightmost edge of $P(B)$ and
the small tasks in the boxes $B'$ such that $\ell(B')\in G(\ell(B))$.
The former set of tasks consists of at most $O(1/\delta)$ tasks in
total and the latter set of boxes consists of at most $1/\epsilon^{2}$
boxes. Hence, there are only $n^{O(1/\delta)}$ and $n^{O(1/\epsilon^{2})}$
possibilities for those, respectively. Once we know the latter boxes
we can reconstruct the tasks that were assigned to them by running
the algorithm due to Lemma~\ref{lem:assign-tasks-boxes} again with
the same random bits, i.e., for each of the polynomially many boxes
that might appear in the computation we globally determine and store
a sufficient number of random bits and use the same bits whenever
we need to do a computation for this box. Therefore, we can embed
the above computation into a dynamic program where each cell $C=(\ell,B,\B',\tilde{T},\tilde{h},\hat{h})$
is defined by 
\begin{itemize}\parsep0pt \itemsep0pt
    \item a level $\ell$, 
    \item a box $B$ with $d_{B}=(1+\epsilon)^{\ell}$, 
    \item if $\ell\notin L(\alpha)$ then $\B'=\emptyset$, otherwise $\B'$
        is a set of at most $1/\epsilon^{2}$ boxes such that for each $\ell'\in G(k)$
        with $\ell'<\ell$ there is exactly one box $B'\in\B'$ with $\ell(B')=\ell'$,
        and for each two boxes $B,B'\in\B'$ with $\ell(B)<\ell(B')$ we have
        that $P(B')\subseteq P(B)$, 
    \item a set of $O(1/\delta)$ large tasks $\tilde{T}$, together with a
        height assignment $\tilde{h}:\tilde{T}\rightarrow H$, such that each
        task in $\tilde{T}$ uses the leftmost edge $e_{L}(B)$ of $B$ or
        the rightmost edge $e_{R}(B)$ of $B$. 
    \item the height $\hat{h}$ of the first box $B_{0}$ (note that not necessarily
        $B_{0}\in\B'$ but $\hat{h}$ implies the height of $B$ and the boxes
        in $\B'$) 
    \item such that $(\tilde{T},\tilde{h})$ forms a feasible solution and for
        each $i\in\tilde{T}$ we have that $h(i)+d_{i}\le\hat{h}$ or $h(i)\ge\hat{h}+\sum_{\ell'=0}^{\ell}(1+\epsilon)^{\ell'}$,
        i.e., the tasks in $\tilde{T}$ are compatible with all boxes between
        $B_{0}$ and $B$ (including the boxes that are not contained in $\B$). 
\end{itemize}
Given such a cell $C$, if $\ell\in L(\alpha)$ we first we reconstruct
which tasks were assigned into the boxes $\B'$ by executing the algorithm
from Lemma~\ref{lem:assign-tasks-boxes} on the boxes $\B$ given
by the order of their levels. More precisely, assume that $\B'=\{B_{1},\dotsc,B_{|\B'|}\}$
such that $\ell(B_{k'+1})=\ell(B_{k'})+1$ for each $k'$. First we
run the mentioned algorithm on the box $B_{1}\in\B'$, i.e., the box
with smallest level among all boxes in $\B'$. The candidate set $\hat{T}_{S}$
is the set of all small tasks $i$ such that $d_{i}\le\epsilon^{8}\cdot d_{B_{1}}$
and $P(i)\subseteq P(B_{1})$. Inductively, for filling tasks into
a box $B_{k'}$ the candidate set $\hat{T}_{S}$ is the set of all
small tasks $i$ such that $d_{i}\le\epsilon^{8}\cdot d_{B_{k'}}$,
$P(i)\subseteq P(B_{k'})$, and $i$ was not previously assigned to
a box $B_{k''}$ with $k''<k'$. Finally, we assign tasks into box
$B$ using the algorithm from Lemma~\ref{lem:assign-tasks-boxes}
where the candidate set $\hat{T}_{S}$ is the set of all small tasks
$i$ such that $d_{i}\le\epsilon^{8}\cdot d_{B}$, $P(i)\subseteq P(B)$,
and $i$ was not previously assigned to a box $B'\in\B'$. As mentioned
above, for each of the at most polynomially many boxes arising in
our computation we globally store the random bits used for the computations
for this box and use them for any computation for this box.

Then, we would like to guess the boxes of level $\ell+1$ on top of
box $B$. Unfortunately, there can be more than constantly many such
boxes and hence we cannot guess them all in one step. Therefore, we
devise a second dynamic program that intuitively sweeps the path $P(B)$
from left to right, places the boxes of level $\ell+1$ on top of
$B$ and additionally guesses the large tasks that use their respective
leftmost and rightmost edges. Additionally, the DP guesses the large
tasks that use edges of $P(B)$ that are not contained in the path
of some box $B'$ of level $\ell+1$. Formally, in the second DP we
have one DP-cell $C'=(e,\bar{T},\bar{h})$ for each combination of 
\begin{itemize}\parsep0pt \itemsep0pt
    \item an edge $e\in P(B)$, 
    \item a set $\bar{T}\subseteq T_{e}$ of large tasks, together with a height
        assignment $\bar{h}:\bar{T}\rightarrow H$, such that the tasks in
        $\bar{T}$ are compatible with $B$ and all boxes underneath $B$,
        i.e., for each task $i\in\bar{T}$ it holds that $h(i)+d_{i}<\hat{h}$
        or $h(i)\ge\hat{h}+\sum_{\ell'=0}^{\ell}(1+\epsilon)^{\ell'}$. 
\end{itemize}
Given such a cell $C'$, we intuitively guess whether $e$ is the
first edge of a box $B'$ of level $\ell+1$. If yes, we guess $B'$,
guess the large tasks using the last edge of $P(B')$, and guess the
large tasks using the edge on the right of the last edge of $P(B')$,
together with their respective heights. Otherwise, we guess the large
tasks using the edge on the right of $e$, together with their respective
heights. Formally, we enumerate the following candidate solutions
and store the most profitable such solution in $C'$: take each possible
box $B'$ of level $\ell+1$ such that $e$ is the leftmost edge of
$P(B')$, each set of $1/\delta$ large tasks $\bar{T}'\subseteq T_{e'}$
where $e'$ is the rightmost edge of $P(B')$, and a height assignment
$\bar{h}':\bar{T}'\rightarrow H\cup H'$, each set of $1/\delta$
large tasks $\bar{T}''\subseteq T_{e''}$ where $e''$ is the edge
on the right of $e'$, and a height assignment $\bar{h}'':\bar{T}''\rightarrow H\cup H'$,
such that $(\tilde{T},\tilde{h})$, $(\bar{T},\bar{h})$, $(\bar{T}',\bar{h}')$
and $(\bar{T}'',\bar{h}'')$ are pairwise compatible with each other.
We build a solution that is the union of $(\bar{T},\bar{h})$, the
solution stored in $(e'',\bar{T}'',\bar{h}'')$, and additionally
the solution stored in $(\ell+1,B',\B'\cup\{B\},\bar{T}\cup\bar{T}',\bar{h}\cup\bar{h}',\hat{h})$
if $\ell+1\in L(\alpha)$ and $(\ell+1,B',\emptyset,\bar{T}\cup\bar{T}',\bar{h}\cup\bar{h}',\hat{h})$
otherwise. Then let $e'''$ denote the edge on the right of $e$ and
consider all possible sets of large tasks $\bar{T}'''\subseteq T_{e'''}$
and height assignments $\bar{h}''':\bar{T}'''\rightarrow H\cup H'$
such that $(\tilde{T},\tilde{h})$, $(\bar{T},\bar{h})$ and $(\bar{T}''',\bar{h}''')$
are pairwise compatible with each other. We build a solution that
consists of the union of $(\bar{T},\bar{h})$, $(\bar{T}''',\bar{h}''')$,
and the solution stored in $(e''',T''',\bar{h}''')$. We store in
$(e,\bar{T},\bar{h})$ the most profitable solution among the constructed
candidate solutions.

Denote by $(T',h')$ the most profitable solution stored in a cell
of the form $(0,B_{0},\emptyset,\tilde{T},\tilde{h},\hat{h})$ for
some set of large tasks $\tilde{T}$ and a height assignment $\tilde{h}$,
and a height $\hat{h}$ for $B_{0}$. It might happen that $(T',h')$
contains some small task $i$ several times since $i$ might be assigned
to two boxes $B,B'$ in levels $\ell,\ell'$ such that $G(\ell)\ne G(\ell')$.
In this case we remove the redundant copies of $i$ such that globally
$i$ is assigned to only one box. One can show that this issue costs
us only a factor of $1+O(\epsilon)$ in the approximation ratio with
similar arguments as in~\cite{UFP-improve-2}. Denote by $(T'',h'')$
the computed solution. One can show that this yields a solution with
weight at least $w(T^{*}\cap T_{L})+w(T^{*}\cap T_{S})/(2+O(\epsilon))$
where $T^{*}$ denotes the most profitable pile solution with one
single geometric pile of boxes such that there is a box $B_{0}\in\B_{1}$
with $P(B_{0})=E$. Note that we lose a factor of $2+O(\epsilon)$
w.r.t. the profit from the small tasks in the optimal solution.

\subsection{General case}

We handle the case that $|\B|>1$ in a similar way as the general
case for pile boxable solutions, see Section~\ref{subsec:General-case}.
For the optimal laminar boxable solution $(T^{*},h^{*})$ there exists
a partition of $E$ into subpaths $E_{1},\dotsc,E_{k}$ such that each
subpath $E_{j}$ contains exactly one laminar set of boxes $\B_{j}\in\B$.
We invoke a dynamic program that intuitively sweeps the path $E$
from left to right, guesses this partition step by step, and on each
guessed subpath $E_{j}$ it invokes the algorithm above. We omit the
details here. This completes the proof of Lemma~\ref{lem:compute-laminar}.

\section{Resource augmentation}

In this section we provide the missing proofs for the setting of resource augmentation.

\label{sec:resource-augmentation-lemmas} 

\subsection{Capacity range}

In this section we prove Lemma~\ref{lem:RA-constant-range}.
The claim of the lemma follows by showing how to obtain a capacity
range $u_{e}\le1/{\eta^{1/\epsilon}}u_{e'}$. We group the tasks by
their bottleneck capacities. We define $T^{(\ell)}:=\{j\in T\colon b(j)\in[{1}/{\eta^{\ell}},{1}/{\eta^{\ell+1}})\}$
for each non-negative integer $\ell$. For each $\ell\in\{0,\dotsc,1/\epsilon-1\}$,
we define $\T^{(\ell)}:=T\setminus\bigcup_{\ell'=0}^{\infty}T^{(\ell+\ell'/\epsilon)}$.
By the pigeon hole principle there is a value $\ell^{*}\in\{0,\dotsc,1/\epsilon-1\}$
such that $w(\T^{(\ell^{*})}\cap\OPT)\ge(1-\epsilon)\OPT$.

For some $\ell$, consider the set $\T^{(\ell)}$. Observe that the
tasks can be partitioned into groups $\T^{(\ell,k)}:=\bigcup_{\ell'=1}^{1/\epsilon-1}T^{(\ell+k/\epsilon+\ell')}$
for each $k\in\mathbb{Z}$. Each group $\T^{(\ell,k)}$ yields a new
instance where the task's bottleneck capacities differ by at most
a factor of ${1}/{\eta}^{1/\epsilon}$. Observe that when restricting
the input to the set $\T^{(\ell,k)}$ we can assume w.l.o.g.~that
the maximum edge capacity is bounded by $\max_{i\in\T^{(\ell,k)}}b(i)$.
Likewise, we can assume that the minimum edge capacity is at least
$\min_{i\in\T^{(\ell,k)}}b(i)$. We compute a separate solution $\text{ALG}^{(\ell,k)}$
for each group $\T^{(\ell,k)}$.

The key insight is that by increasing the edge capacities by a factor
$1+\eta$ we can combine these solutions to a global solution $\text{ALG}^{(\ell)}$
for $\T^{(\ell)}$. The reason is that $\max_{i\in\T^{(\ell,k)}}b(i)\le\eta\cdot\min_{i\in\T^{(\ell,k+1)}}b(i)$
for each $k$. We select the most profitable solution among all solutions
$\text{ALG}^{(\ell)}$ to obtain a $(1+\epsilon)$-approximation.
Since we assume $\eta<1$, the resource augmentation $(1+\eta)^{2}$
is at most $1+3\eta$. 

\subsection{Boxable solution}

In this section we prove Lemma~\ref{lem:RA-boxable-solutions}.
Using Lemma~\ref{lem:RA-constant-range} and linear scaling, we assume
from now on that $\min_{e\in E}u_{e}=1$ and $\max_{e\in E}u_{e}\le U\in\mathbb{N}$
with $U\le{1}/{\eta^{1/\epsilon}}=O_{\epsilon,\eta}(1)$. In the following,
we need a slightly strengthened version of Lemma~\ref{lem:gap} which
takes into account the value $\eta$. 

\begin{lem}
    \label{lem:mu-gap} There is a set $(\mu_{1},\delta_{1}),\dotsc,(\mu_{1/\epsilon},\delta_{1/\epsilon})$
    such that for each tuple $(\mu_{k},\delta_{k})$ we have 
    $\epsilon^{O\bigl((1/\epsilon)^{1/\epsilon}\bigr)}\le\mu_{k}\le\epsilon^{10}\delta_{k}^{1/\epsilon}\eta^{2/\epsilon}$,
    $\delta_{i}\le\epsilon$ and for one tuple $(\mu_{k^{*}},\delta_{k^{*}})$
    it holds that $w(\OPT\cap\{i\in T\mid\mu_{k^{*}}\cdot b(i)<d_{i}\le\delta_{k^{*}}\cdot b(i)\})\le\epsilon\cdot\opt$. 
\end{lem}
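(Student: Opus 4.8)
The plan is to follow the proof of Lemma~\ref{lem:gap} almost verbatim, but to choose the sequence $(\mu_k,\delta_k)_{k\in[1/\epsilon]}$ with a larger multiplicative gap between $\mu_k$ and $\delta_k^{1/\epsilon}$ so that the extra factor $\eta^{2/\epsilon}$ is absorbed into the required bound $\mu_k\le\epsilon^{10}\delta_k^{1/\epsilon}\eta^{2/\epsilon}$. Concretely, I would set $\delta_1:=\epsilon$ and then define recursively $\mu_k:=\delta_{k+1}:=\epsilon^{10}\,\eta^{2/\epsilon}\,\delta_k^{1/\epsilon}$ for $k\in[1/\epsilon]$ (this also defines $\delta_{1/\epsilon+1}$, which is needed for $\mu_{1/\epsilon}$). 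Then the upper bound on $\mu_k$ holds with equality, and since $\epsilon^{10}\eta^{2/\epsilon}<1$ and $\delta_k<1$ the sequence $(\delta_k)_k$ is strictly decreasing with $\delta_1=\epsilon$, so $\delta_i\le\epsilon$ for every $i$.

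Next I would verify the lower bound $\mu_k\ge\epsilon^{O((1/\epsilon)^{1/\epsilon})}$. Writing $\delta_k=\epsilon^{c_k}$ with (possibly non-integer) $c_k\ge0$, the recursion reads $c_{k+1}=10+\bigl(2\log_{1/\epsilon}(1/\eta)+c_k\bigr)/\epsilon$, so that $c_{k+1}=O\bigl((1/\epsilon)\,c_k\bigr)$ once $c_k$ dominates the additive term. Starting from $c_1=1$ and iterating at most $1/\epsilon+1$ times, and using that $\eta$ is a fixed constant (hence $\log_{1/\epsilon}(1/\eta)=O(1)$), one gets $c_k\le(1/\epsilon)^{O(1/\epsilon)}\le(1/\epsilon)^{(1/\epsilon)^{1/\epsilon}}$ for all $k\le1/\epsilon+1$, which gives $\mu_k=\delta_{k+1}\ge\epsilon^{O((1/\epsilon)^{1/\epsilon})}$. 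This is the step that requires the most care, since the $\eta^{2/\epsilon}$ factors compound through the recursion; but because we iterate only $1/\epsilon$ times and $\eta$ is a constant, the exponents stay within the claimed range, and the same crude estimates (such as $(1/\epsilon)^{O(1/\epsilon)}\le(1/\epsilon)^{(1/\epsilon)^{1/\epsilon}}$) that are already implicit in the proof of Lemma~\ref{lem:gap} suffice here as well.

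Finally, the weight argument is unchanged. For $k\in[1/\epsilon]$ set $\OPT_k:=\{i\in\OPT\mid\mu_k b(i)<d_i\le\delta_k b(i)\}$. Since $\mu_k=\delta_{k+1}$, the half-open intervals $(\delta_{k+1}b(i),\delta_k b(i)]$ are pairwise disjoint for distinct $k$, so each task of $\OPT$ lies in at most one $\OPT_k$ and therefore $\sum_{k=1}^{1/\epsilon}w(\OPT_k)\le w(\OPT)=\opt$. Hence $k^*:=\arg\min_{k}w(\OPT_k)$ satisfies $w(\OPT_{k^*})\le\epsilon\,\opt$, and the pair $(\mu_{k^*},\delta_{k^*})$ has all the stated properties. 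The whole sequence is clearly computable in polynomial (indeed constant) time, so the lemma follows.
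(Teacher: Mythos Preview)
Your approach is correct and essentially the same as the paper's: both modify the sequence from Lemma~\ref{lem:gap} by inserting an extra $\eta$-dependent factor into the gap between $\mu_k$ and $\delta_k$, and then apply the identical averaging argument. The paper does this via the closed form $\delta_k=\epsilon^{10k/\epsilon^k}\eta^{-k(1/\epsilon)^k}$, while you use the recursion $\mu_k=\delta_{k+1}=\epsilon^{10}\eta^{2/\epsilon}\delta_k^{1/\epsilon}$; these are equivalent up to the harmless choice of initial value.

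One minor point of bookkeeping: in your lower-bound step the chain $c_k\le(1/\epsilon)^{O(1/\epsilon)}\le(1/\epsilon)^{(1/\epsilon)^{1/\epsilon}}$ overshoots and, taken literally, would not yield $\mu_k\ge\epsilon^{O((1/\epsilon)^{1/\epsilon})}$. What you actually want (and what unrolling the recursion $c_{k+1}=10+(2\log_{1/\epsilon}(1/\eta)+c_k)/\epsilon$ gives directly) is $c_k=O_\eta\bigl((1/\epsilon)^{1/\epsilon}\bigr)$, since $c_{k+1}\le (10+2\log_{1/\epsilon}(1/\eta)/\epsilon)\sum_{j<k}(1/\epsilon)^j+(1/\epsilon)^k$; from this the claimed lower bound on $\mu_k$ follows immediately.
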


The lemma follows by replacing $\epsilon^{10k/\epsilon^{k}}$ by $\epsilon^{10k/\epsilon^{k}}/\eta^{k(1/\epsilon)^{k}}$
in the proof of Lemma~\ref{lem:gap}.
For simplicity, in the sequel denote by $\delta$ and $\mu$ the value
$\delta_{k^{*}},\mu_{k^{*}}$ due to Lemma~\ref{lem:mu-gap}. 
In the capacity profile, we draw $\left\lfloor U/(\delta\eta^{2/\epsilon})\right\rfloor =O_{\eta,\epsilon}(1)$
equally spaced horizontal grid lines
with a vertical spacing of $\delta\eta^{1/\epsilon}$. The position
of each line is an integer multiple of $\mu$, see Fig.~\ref{fig:boxes}.
The intuition behind our definition of small and large tasks is that
a large task spans at least ${1}/{\eta}$ of these grid lines while
a small task is by a factor $\epsilon^{8}$ smaller than the height
of a corridor between two adjacent grid lines. This allows us to prove
that, intuitively, at a slight increase in the capacities we can align
the large tasks with the grid lines and enforce that no small task
crosses a grid line. 
\begin{lem}
    \label{lem:low-boxes} For an arbitrary $0<\eta<1/2$, at a factor
    $1/(1-2\eta)$ increase of edge capacities, we can assume that for
    each large task $i$ there are integers $a,b$ such that the rectangular
    box $(s(i),a\cdot\delta)\times(t(i),b\cdot\delta)$ contains the rectangle
    for $i$ and does not touch the rectangle of any other task in the
    solution. At a factor $(1+2\epsilon)$ increase of the approximation
    ratio, we can assume that for each small task $i$ the interval $(h(i),h(i)+d(i))$
    does not contain an integral multiple of $\delta$. 
\end{lem}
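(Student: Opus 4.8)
The plan is to prove the two halves separately, since they spend different budgets: the first half buys a $1/(1-2\eta)$ factor of extra edge capacity to snap the large tasks to grid cells, and the second half buys a $(1+2\epsilon)$ factor of the approximation ratio to discard the small tasks that straddle grid lines. Throughout, let $\delta$ denote the grid spacing, chosen as in the preceding construction so that (by the definition of ``large'') every large task spans at least $1/\eta$ grid cells, hence $d_i\ge\delta/\eta$, while every small task has $d_i\le\mu\,b(i)\le\epsilon\delta$ by the bound on $\mu$ in Lemma~\ref{lem:mu-gap}.

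For the first statement I would repeat the pruning/re-growing argument used in step~(3) of the proof of Lemma~\ref{lem:exists-boxed} and in Lemma~\ref{lem:boxed-solution-uniform}. Starting from the optimal solution, re-embed it vertically by the factor $1/(1-2\eta)$: replace each height $h(i)$ by $h(i)/(1-2\eta)$ while keeping the demands $d_i$ unchanged, and replace each capacity $u_e$ by $u_e/(1-2\eta)$ rounded up to the next grid line. Feasibility is preserved, since whenever two rectangles were stacked with $h(j)\ge h(i)+d_i$ we now have $h(j)/(1-2\eta)\ge h(i)/(1-2\eta)+d_i/(1-2\eta)$, and $d_i/(1-2\eta)-d_i=2\eta d_i/(1-2\eta)\ge 2\delta$ because $d_i\ge\delta/\eta$; thus between consecutive stacked rectangles there is now a vertical gap of at least $2\delta$. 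For a large task $i$ let $a_i\delta$ and $b_i\delta$ be the grid lines obtained by rounding the re-embedded bottom of $i$ down and its re-embedded top up, so $a_i\delta\le h'(i)\le h'(i)+d_i\le b_i\delta$ where $h'$ is the re-embedded height; processing the tasks bottom-to-top on each column and using the $2\delta$ gap, one checks that the box $(s_i,a_i\delta)\times(t_i,b_i\delta)$ encloses $i$, does not touch the rectangle of any other task, and stays below the rounded-up capacity profile. Small tasks remain inside their re-embedded slots and are not moved here.

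For the second statement I would simply delete every small task $i$ for which $(h(i),h(i)+d(i))$ contains an integral multiple of $\delta$. Because $d_i<\delta$, each such task straddles exactly one grid line, so this is well defined, and it remains to bound the deleted weight by $2\epsilon\cdot\opt$. For this I would use a shifting argument over the vertical offset of the grid: the grid may be placed at any offset that is a multiple of $\mu$, and since $d_i\le\epsilon\delta$ for every small task, a fixed small task straddles a grid line for only an $O(\epsilon)$ fraction of the admissible offsets, exactly as in the offset argument in the proof of Lemma~\ref{lem:exists-boxed}. Averaging, some offset makes the straddling small tasks carry at most a $2\epsilon$ fraction of the weight; fixing this offset and relabeling coordinates so that the grid lines are at $\mathbb{Z}\delta$ gives the claim.

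The main obstacle I expect is the bookkeeping in the first part: one must verify that the single factor $1/(1-2\eta)$ simultaneously (i) opens the $\ge 2\delta$ gap needed to fatten every large task by up to one grid cell on each side and (ii) absorbs rounding the capacity profile up to grid lines, which requires $\delta<2\eta u_e$ and thus relies on the relation between $\delta$, $\eta$ and $\epsilon$ fixed in the construction. A secondary point is the interaction of the two parts: the grid (with the good offset from the second argument) should be fixed before the first part, and one should check that the re-embedding in the first part keeps each small task inside essentially the same grid cell, so that the offset chosen for the original solution remains effective after the large tasks have been snapped.
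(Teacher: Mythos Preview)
Your part~(i) has a concrete gap. Scaling heights by $1/(1-2\eta)$ while keeping demands creates, between two stacked tasks, a gap of size $2\eta d/(1-2\eta)$ where $d$ is the \emph{lower} task's demand. So when a large task $i$ sits directly on a small task $j$, the space opened below $i$ is only $2\eta d_j/(1-2\eta)\ll\delta$, and rounding $h'(i)$ down to the grid can swallow $j$: put $j$ at height $0$ with $d_j=\epsilon\delta$ and $i$ just above it; after your scaling $i$'s bottom is still below $\delta$, so $a_i=0$ and the box for $i$ contains $j$. Your phrase ``processing bottom-to-top'' may hint at a repair (shift each large task up by $\delta$ into the gap above it before rounding), but that is not what you actually wrote, and the resulting chain of shifts needs its own argument. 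The paper avoids all of this by the dual move: keep every position and \emph{shrink every demand} by the factor $1-2\eta$. A large task then loses at least $2\delta$ of demand, so it fits in the box formed by the grid strips its \emph{original} rectangle occupied with the top and bottom strips removed; that box lies inside $i$'s original rectangle and therefore cannot touch any other task of the original feasible solution, with no case analysis on neighbours.

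For part~(ii) your offset-shifting argument is a legitimate alternative, but the paper takes a different route that keeps the grid fixed at $\mathbb{Z}\delta$: inside each strip it cuts $1/\epsilon$ equal sub-strips, deletes the small tasks of a lightest sub-strip (cost at most $2\epsilon$ of that strip's small weight), and slides the straddling small tasks into the vacated space. This avoids the coordinate relabeling you flag as a secondary concern and keeps parts~(i) and~(ii) fully decoupled.
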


\begin{proof}
    (i) Increasing the capacities by a factor of $1/(1-2\eta)$ is equivalent
    to reducing the demand of each task by the same factor. Since each
    large task $i$ has a demand of at least $\delta$, it spans at least
    $1/\eta$ strips. Hence reducing its demand by a factor of $1+2\eta$
    causes an absolute decrease of at least $2\delta$ units. Consider
    the position of $i$ in the original drawing of the tasks. Since we
    have reduced $i$'s demand by 2$\delta$, we can draw the task completely
    within the rectangular box formed by the strips used by $i$, cut
    off horizontally at $s(i)$ and $t(i)$, without the topmost strip
    and the bottommost strip. Since the original drawing was feasible
    and the rectangular boxes are contained within the original rectangles
    of the respective tasks, the first part of the lemma follows. \smallskip{}

    \noindent (ii) We want that each small task overlaps with only one
    strip. Consider a small task $i$ that overlaps with two strips. Consider
    the lower strip that $i$ overlaps with. We create space in this strip
    to which we can move $i$ and all other small tasks that overlap the
    top edge of the strip so that these tasks are contained completely
    within this strip. It suffices to create an empty sub-strip of capacity
    $\epsilon\delta$ inside each strip in the instance since the demand
    of any small task is at most $\epsilon^{8}\delta$. Consider any strip
    and divide it into $1/\epsilon$ sub-strips of size $\epsilon\delta$.
    Consider the small tasks $T^{s}$ that are contained completely inside
    the strip. Since each task in $T^{s}$ overlaps at most 2 sub-strips,
    there must be at least one sub-strip such that all the small tasks
    using the sub-strip have a total weight of at most $2\epsilon\cdot w(T^{s})$.
    Hence, removing all the small tasks using this sub-strip creates the
    required empty sub-strip. Since the sets $T^{s}$ for different strips
    are disjoint, the second part of the lemma follows. 
\end{proof}
Due to Lemma~\ref{lem:low-boxes} we can assume that the optimal
solution can be split into a set of rectangular boxes~$\B$. Each
box either coincides with the rectangle $(s(i),a\cdot\delta)\times(t(i),b\cdot\delta)$
of a large task or it is given by a maximally long horizontal corridor
between two adjacent grid lines. In the latter case, the box contains
only tasks $i$ with $d_{i}\le\epsilon^{8}\cdot\delta$.

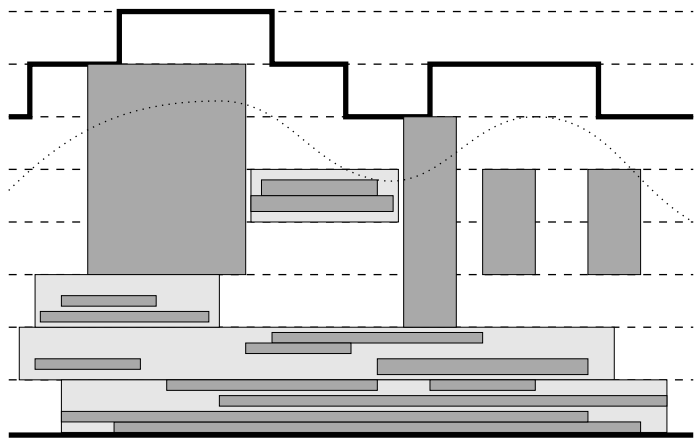
\begin{figure}[tb]
    \centering{}\begin{tikzpicture}[scale=0.7]

        \draw[frame]  (0,-0.05) -- (13,-0.05);
        \draw[baseline,dashed] (0,1) -- (13,1);
        \draw[baseline,dashed] (0,2) -- (13,2);
        \draw[baseline,dashed] (0,3) -- (13,3);
        \draw[baseline,dashed] (0,4) -- (13,4);
        \draw[baseline,dashed] (0,5) -- (13,5);
        \draw[baseline,dashed] (0,6) -- (13,6);
        \draw[baseline,dashed] (0,7) -- (13,7);
        \draw[baseline,dashed] (0,8) -- (13,8);
        \draw[frame] 
        (0,6) 
        --  (0.4,6)
        --  (0.4,7)
        --  (1.1,7) 
        --  (2.1,7)
        --  (2.1,8)
        --   (5,8)
        --   (5,7) 
        -- (6.4,7)
        -- (6.4,6) 
        --   (8,6)
        --   (8,7) 
        --(11.2,7)
        --(11.2,6) 
        --  (13,6);

        \draw[lightbox] (0.2,1) rectangle (11.5,2);

        \draw[box] (0.5,1.2) rectangle (2.5,1.4);

        \draw[box] (1.5,3) rectangle (4.5,7);

        \draw[lightbox] (4.6,4) rectangle (7.4,5);
        \draw[box] (4.8,4.5) rectangle (7,4.8);
        \draw[box] (4.6,4.2) rectangle (7.3,4.5);

        \draw[box] (5,1.7) rectangle (9,1.9);
        \draw[box] (7,1.1) rectangle (11,1.4);
        \draw[box] (4.5,1.5) rectangle (6.5,1.7);

        \draw[box] (7.5,2) rectangle (8.5,6);
        \draw[box] (9,3) rectangle (10,5);
        \draw[box] (11,3) rectangle (12,5);

        \draw[lightbox] (1,0) rectangle (12.5,1);
        \draw[box] (2,0) rectangle (12,0.2);
        \draw[box] (1,0.2) rectangle (11,0.4);
        \draw[box] (4,0.5) rectangle (12.5,0.7);
        \draw[box] (3,0.8) rectangle (7,1);
        \draw[box] (8,0.8) rectangle (10,1);

        \draw[lightbox] (0.5,2) rectangle (4,3);
        \draw[box] (0.6,2.1) rectangle (3.8,2.3);
        \draw[box] (1,2.4) rectangle (2.8,2.6);

        \draw[baseline,dotted] (0,4.6) to[out=45,in=180] (4,6.3) to[out=0, in=170] (7,4.8) to[out=350,in=180] (10,6) to[out=0,in=150] (13,4);

    \end{tikzpicture} \caption{ A solution split into rectangular boxes. The dark rectangles are
        tasks and the light rectangles are boxes. Large tasks form their own
        box. The original capacity profile is drawn as a dotted line. }
    \label{fig:boxes} 
\end{figure}

\subsection{Proof of Theorem~\ref{thm:RA}}

Due to Lemma~\ref{lem:RA-constant-range} it suffices to solve instances
with a constant range of edge capacities and Lemma~\ref{lem:low-boxes}
guarantees that for such instances there exists a $O_{\epsilon,\eta}(1)$-boxable
solution with profit at least $(1-O(\epsilon))\opt$. Therefore, Lemma~\ref{lem:compute-boxed}
yields a $(1+\epsilon)$-approximation algorithm in quasi-polynomial
time. For the polynomial time algorithm under resource augmentation,
let $\opt_{L}:=w(\OPT\cap T_{L})$ and $\opt_{S}:=w(\OPT\cap T_{S})$.
If $\opt_{L}\ge\frac{1}{3}\opt$ then the algorithm due to Lemma~\ref{lem:boxed-solution-uniform}
yields a solution with profit at least $(1-O(\epsilon))(\opt_{L}+\frac{1}{2}\opt_{S})\ge(1-O(\epsilon))\frac{2}{3}\opt$.
On the other hand, if $\opt_{L}<\frac{1}{3}\opt$ then $\opt_{S}\ge\frac{2}{3}\opt$
and we invoke the algorithm from \cite{MW15_SAP} that computes a
$(1+\epsilon)$-approximation in polynomial time for instances in
which all tasks are sufficiently small which is the case due to Lemma~\ref{lem:gap}.


\end{document}